\providecommand{\tabularnewline}{\\}
\numberwithin{equation}{section}
\newcommand{\lyxaddress}[1]{
	\par {\raggedright #1
	\vspace{1.4em}
	\noindent\par}
}
\theoremstyle{plain}
\newtheorem{thm}{\protect\theoremname}
\theoremstyle{plain}
\newtheorem{prop}[thm]{\protect\propositionname}
\newenvironment{proof}[1][\protect\proofname]{\par
	\normalfont\topsep6\p@\@plus6\p@\relax
	\trivlist
	\itemindent\parindent
	\item[\hskip\labelsep\scshape #1]\ignorespaces
}{%
	\endtrivlist\@endpefalse
}
\providecommand{\proofname}{Proof}
\theoremstyle{definition}
\newtheorem{defn}[thm]{\protect\definitionname}
\theoremstyle{plain}
\newtheorem{cor}[thm]{\protect\corollaryname}
\theoremstyle{plain}
\newtheorem{lem}[thm]{\protect\lemmaname}
\date{}
\providecommand{\corollaryname}{Corollary}
\providecommand{\definitionname}{Definition}
\providecommand{\lemmaname}{Lemma}
\providecommand{\propositionname}{Proposition}
\providecommand{\theoremname}{Theorem}
\begin{document}
\title{Derivations for the MPS overlap formulas of rational spin chains}
\author{Tamas Gombor$^{a,b}$}
\maketitle

\lyxaddress{\begin{center}
$^{a}$MTA-ELTE \textquotedblleft Momentum\textquotedblright{} Integrable
Quantum Dynamics Research Group,\\
 Eötvös Loránd University, Pázmány Péter sétány 1/A, 1117 Budapest,
Hungary\\
$^{b}$HUN-REN Wigner Research Centre for Physics, \\
Konkoly-Thege Miklós u. 29-33, 1121 Budapest, Hungary
\par\end{center}}
\begin{abstract}
We derive a universal formula for the overlaps between integrable
matrix product states (MPS) and Bethe eigenstates in $\mathfrak{gl}_{N}$
symmetric spin chains. This formula expresses the normalized overlap
as a product of a MPS-independent Gaudin-determinant ratio and a MPS-dependent
scalar factor constructed from eigenvalues of commuting operators,
defined via the $K$-matrix associated with the MPS. Our proof is
fully representation-independent and relies solely on algebraic Bethe
Ansatz techniques and the $KT$-relation. We also propose a generalization
of the overlap formula to $\mathfrak{so}_{N}$ and $\mathfrak{sp}_{N}$
spin chains, supported by algebra embeddings and low-rank isomorphisms.
These results significantly broaden the class of integrable initial
states for which exact overlap formulas are available, with implications
for quantum quenches and defect CFTs.
\end{abstract}

\section{Introduction}

Overlaps between integrable boundary states and Bethe eigenstates
in quantum spin chains have emerged as an important building block
in the study of integrable systems. The integrable boundary states
allow exact computations of overlaps with Bethe eigenstates, often
yielding compact expressions involving Gaudin-like determinant ratios
and scalar prefactors. The structure of these overlaps is universal
across a wide class of models, including those with $\mathfrak{gl}_{N}$,
$\mathfrak{so}_{N}$, and $\mathfrak{sp}_{N}$ symmetries, and they
are relevant in both condensed matter and high-energy contexts.

In recent years, the study of non-equilibrium dynamics in integrable
models has gained significant momentum, driven both by experimental
advances \cite{Hackerm_ller_2010,Schneider2012FermionicTA,PhysRevLett.122.090601}
and theoretical developments \cite{Calabrese_2016,Bastianello_2022}.
A key focus has been on quantum quenches, where a system is initialized
in a specific state and its subsequent evolution is tracked. The Quench
Action method has emerged as a powerful framework for analyzing such
dynamics, particularly in determining the long-time steady states
\cite{Wouters_2014,Caux:2013ra,Essler:2016ufo}. Central to this approach
is the computation of overlaps between the initial state and the eigenstates
of the post-quench Hamiltonian. While the method has been successfully
applied to various integrable systems \cite{PhysRevLett.116.070408,Piroli:2018ksf,Rylands:2022naf,Rylands:2022gev},
much of the progress has relied on initial states with simple entanglement
structures, such as two-site product states, due to the tractability
of their overlap calculations.

In the context of the AdS/CFT correspondence, integrability methods
of the boundary states have provided powerful tools for analyzing
correlation functions in the presence of defects \cite{deLeeuw:2017cop,Kristjansen:2024dnm}.
A central development has been the realization that one-point functions
in defect configurations of the $\mathcal{N}=4$ super Yang-Mills
(SYM) and ABJM theories can be expressed as overlaps between multi-particle
Bethe states and special boundary or matrix product states (MPS) \cite{Buhl-Mortensen:2015gfd,deLeeuw:2015hxa,Kristjansen:2021abc,Gombor:2022aqj}.
This approach was initially applied to domain wall defects, such as
the D3-D5 brane setup, where tree-level one-point functions were linked
to MPS overlaps \cite{DeLeeuw:2018cal} and later extended to finite
coupling via integrable bootstrap \cite{Gombor:2020kgu,Gombor:2020auk,Komatsu:2020sup}.
These techniques have also been generalized to other configurations,
Wilson and \textquoteright t Hooft lines, where similar overlap structures
emerge \cite{Kristjansen:2023ysz,Kristjansen:2024map,Jiang:2023cdm,Gombor:2024api}.
Moreover, overlaps have proven essential in computing certain three-point
functions involving determinant and single trace operators, particularly
in the $AdS_{5}/CFT_{4}$ and $AdS_{4}/CFT_{3}$ dualities \cite{Jiang:2019zig,Jiang:2019xdz,Yang:2021hrl}.
More recent applications include the study of correlation functions
on the Coulomb branch of $\mathcal{N}=4$ SYM \cite{Ivanovskiy:2024vel},
and the surface defects of Gukov-Witten type in $\mathcal{N}=4$ SYM
\cite{Holguin:2025bfe,Chalabi:2025nbg}. The integrable conformal
defects in $\mathcal{N}=4$ SYM was also classified \cite{deLeeuw:2024qki}.
These developments underscore the universality of overlap structures
and their growing relevance in holographic context.

We can distinguish two types of boundary states: the simpler tensor
product states and the matrix product states. In the latter, an extra
vector space (boundary space) is introduced, and the boundary state
is constructed from the product of matrices acting on this space.
If this vector space is trivial (one-dimensional), then the MPS reduces
to a simple tensor product state. Boundary states can be related to
integrable boundary conditions \cite{Ghoshal:1993tm,Piroli:2017sei,Pozsgay:2018dzs}.
It has been observed that the overlaps of tensor product states with
Bethe states are proportional to the ratio of so-called Gaudin determinants
\cite{Brockmann_2014}. In early works, such overlap formulas were
proven for XXX and XXZ spin chains \cite{Foda:2015nfk,Jiang:2020sdw}.
These derivations heavily depended on the representation of the Hilbert
space and the specific form of the tensor product state, and it was
not clear how to generalize them to systems with multiple quasi-particles
(nested systems). In \cite{Gombor:2021uxz}, the so-called $KT$-relation
was introduced, whose main advantage is that it enables the calculation
of overlaps using purely algebraic Bethe Ansatz tools. As a result,
the derivations are universal, meaning they do not depend on the specific
representations of the Hilbert space. The $KT$-relation and the corresponding
overlap proof can be generalized to $\mathfrak{gl}_{N}$ symmetric
spin chains \cite{Gombor:2021hmj,Gombor:2023bez}. These papers proved
the overlap functions for all $\mathfrak{gl}_{N}$ symmetric spin
chains (independent of representation), for all integrable tensor
product states, which is a rather general result.

In this paper, we set an even more ambitious goal. We aim to determine
a universal overlap function for all integrable MPSs of all rational
spin chains. We prove these formulas for a broad class of integrable
MPSs of $\mathfrak{gl}_{N}$ spin chains. In the remaining cases,
we provide strong arguments supporting the validity of our formula.
The universal overlap formula takes the following form
\begin{equation}
\frac{\langle\mathrm{MPS}|\bar{u}\rangle}{\sqrt{\langle\bar{u}|\bar{u}\rangle}}=\underbrace{{\color{blue}\sum_{\ell=1}^{d_{B}}\beta_{\ell}\prod_{\nu=1}^{n_{+}}\prod_{k=1}^{r_{\nu}^{+}}\tilde{\mathcal{F}}_{\ell}^{(\nu)}(u_{k}^{\nu})}}_{\langle\mathrm{MPS}|\text{ dependent}}\times\underbrace{{\color{red}\sqrt{\frac{\det G^{+}}{\det G^{-}}}}}_{|\bar{u}\rangle\text{ dependent}},\label{eq:onOV-1}
\end{equation}
where $|\bar{u}\rangle$ is the Bethe state and $u_{k}^{\nu}$ are
the Bethe roots. The Gaudin determinants $\det G^{\pm}$ depend only
on the Bethe Ansatz equations. The quantities $\beta_{\ell}$ and
$\tilde{\mathcal{F}}_{\ell}^{(\nu)}(u)$ are eigenvalues of commuting
operators $\mathbf{B}$ and $\mathbf{F}^{(\nu)}(u)$. These operators
can be defined from the elements of the $K$-matrix associated with
$\langle\mathrm{MPS}|$. Assuming that a similar universal formula
exists for $\mathfrak{so}_{N}$ and $\mathfrak{sp}_{N}$ spin chains,
we determine the form of the operators $\mathbf{B}$ and $\mathbf{F}^{(\nu)}(u)$.
The calculation is based on the algebra embeddings $\mathfrak{sl}_{\left\lfloor \frac{N}{2}\right\rfloor }\subset\mathfrak{so}_{N}$,
$\mathfrak{sl}_{n}\subset\mathfrak{sp}_{2n}$ and the isometries of
low-rank cases: $\mathfrak{sp}_{2}\cong\mathfrak{sl}_{2}$, $\mathfrak{so}_{3}\cong\mathfrak{sl}_{2}$,
$\mathfrak{so}_{4}\cong\mathfrak{sl}_{2}\oplus\mathfrak{sl}_{2}$.
The universal overlap formula and the definitions of the corresponding
$\mathbf{B}$ and $\mathbf{F}^{(\nu)}(u)$ operators were already
published in a previous letter \cite{Gombor:2024iix} without proof.
In the current paper, we provide the proof for the $\mathfrak{gl}_{N}$
case and argue for the orthogonal and symplectic cases.

The paper follows the structure below. In the next section, we summarize
the necessary definitions of $\mathfrak{gl}_{N}$ symmetric spin chains.
In Section \ref{sec:Integrable-matrix-product}, we define the integrable
MPSs and the corresponding $K$-matrices for $\mathfrak{gl}_{N}$
spin chains. We show how the $KT$-relation can be used for the systematic
calculation of overlaps and introduce the nested $K$-matrices required
for this. In Section \ref{sec:Derivations-of-the-overlaps}, we present
theorems concerning the nested $K$-matrices and overlaps, including
the formula (\ref{eq:onOV-1}), which is the main result. The proofs
are found in the Appendix. In Section \ref{sec:Other-reflection-algebras},
we extend the definition of the operators $\mathbf{B}$ and $\mathbf{F}^{(\nu)}(u)$
to all $\mathfrak{gl}_{N}$ MPSs for which the formulas in Section
\ref{sec:Derivations-of-the-overlaps} are not applicable. In Section
\ref{sec:Orthogonal}, we extend the definition of the operators $\mathbf{B}$
and $\mathbf{F}^{(\nu)}(u)$ to $\mathfrak{so}_{N}$ and $\mathfrak{sp}_{N}$
symmetric spin chains as well.

\section{Definitions of the $\mathfrak{gl}_{N}$ symmetric spin chains\label{sec:Definitions}}

Let $T_{i,j}(z)$ be the generators of the Yangian $Y(N)$ algebra
\cite{Molev:1994rs}. They satisfy the $RTT$-relation
\begin{equation}
R_{1,2}(u-v)T_{1}(u)T_{2}(v)=T_{2}(v)T_{1}(u)R_{1,2}(u-v).\label{eq:RTT}
\end{equation}
The $\mathfrak{gl}_{N}$ symmetric $R$ matrix is
\begin{equation}
R(u)=1+\frac{1}{u}P,\qquad P=\sum_{i,j=1}^{N}e_{i,j}\otimes e_{j,i},\label{eq:Rm}
\end{equation}
where $e_{i,j}\in\mathrm{End}(\mathbb{C}^{N})$-s are the unit matrices
with the only nonzero entry equals to 1 at the intersection of the
$i$-th row and $j$-th column. The $RTT$-relation can be expressed
with the entries
\begin{equation}
\left[T_{i,j}(u),T_{k,l}(v)\right]=\frac{1}{u-v}\left(T_{k,j}(v)T_{i,l}(u)-T_{k,j}(u)T_{i,l}(v)\right).
\end{equation}
In this paper we will assume the following dependence of the Yangian
generators on the spectral parameter
\begin{equation}
T_{i,j}(u)=\delta_{i,j}+\sum_{\ell=0}^{\infty}u^{-\ell-1}T_{i,j}[\ell].
\end{equation}
The generators $E_{i,j}\equiv T_{j,i}[0]$ satisfy the $\mathfrak{gl}_{N}$
Lie-algebra relation
\begin{equation}
\left[E_{i,j},E_{k,l}\right]=\delta_{k,j}E_{i,l}-\delta_{i,l}E_{k,j}.
\end{equation}

Let $\mathcal{V}_{\Lambda}$ and $E_{i,j}^{\Lambda}\in\mathrm{End}(\mathcal{V}_{\Lambda})$
be vector space and a corresponding irreducible highest weight representation
of $\mathfrak{gl}_{N}$ for which
\begin{equation}
\begin{split}E_{i,j}^{\Lambda}|0_{\Lambda}\rangle & =0,\quad i<j,\\
E_{i,i}^{\Lambda}|0_{\Lambda}\rangle & =\Lambda_{i}|0_{\Lambda}\rangle,
\end{split}
\end{equation}
where $\Lambda=(\Lambda_{1},\dots,\Lambda_{N})$ is an $N$-tuple
of scalars and $|0_{\Lambda}\rangle\in\mathcal{V}_{\Lambda}$ is the
highest weight state. For finite dimensional representations $\Lambda_{i}-\Lambda_{i+1}\in\mathbb{N}$
for $i=1,\dots,N-1$. The Lax operators $L_{i,j}^{\Lambda}(u)\in\mathrm{End}(\mathcal{V}_{\Lambda})$
are the evaluation representations of $Y(N)$
\begin{equation}
L_{i,j}^{\Lambda}(u)=\delta_{i,j}+\frac{1}{u}E_{j,i}^{\Lambda}.\label{eq:Lax}
\end{equation}
The Lax-operators satisfy the $RTT$-relation. From the co-product
property of the Yangian, we can define tensor product representations
on $\mathcal{H}=\mathcal{V}_{\Lambda^{(1)}}\otimes\dots\otimes\mathcal{V}_{\Lambda^{(J)}}$.
The $\mathcal{H}$ is the quantum space. The monodromy matrix is the
corresponding representation of the Yangian
\begin{equation}
T_{0}^{\bar{\Lambda},\bar{\xi}}(u)=L_{0,J}^{\Lambda^{(J)}}(u-\xi_{J})\dots L_{0,1}^{\Lambda^{(1)}}(u-\xi_{1}).
\end{equation}
 The monodromy matrix is a \emph{highest weight representation} of
the Yangian, i.e.,
\begin{equation}
\begin{split}T_{i,j}^{\bar{\Lambda},\bar{\xi}}(u)|0\rangle & =0,\quad i>j,\\
T_{i,i}^{\bar{\Lambda},\bar{\xi}}(u)|0\rangle & =\lambda_{i}(u)|0\rangle,
\end{split}
\end{equation}
where $|0\rangle=|0_{\Lambda^{(1)}}\rangle\otimes\dots\otimes|0_{\Lambda^{(J)}}\rangle$
is the pseudo-vacuum and $\lambda_{i}(u)$-s are the pseudo-vacuum
eigenvalues which can be expressed as
\begin{equation}
\lambda_{i}(u)=\prod_{j=1}^{J}\frac{u-\xi_{j}+\Lambda_{i}^{(j)}}{u-\xi_{j}}.
\end{equation}
For simplicity, we will omit the upper index from now on and use the
notation $T_{i,j}(u)$ for arbitrary highest-weight representations
of the Yangian algebra.

The crossed monodromy matrix is defined as an inverse
\begin{equation}
\sum_{k=1}^{N}\widehat{T}_{k,i}(z)T_{k,j}(z)=\lambda_{1}(z)\lambda_{1}(-z)\delta_{i,j}.\label{eq:Thatdef}
\end{equation}
The crossed monodromy matrices satisfy the following algebra
\begin{equation}
\begin{split}R_{1,2}(u-v)\widehat{T}_{1}(u)\widehat{T}_{2}(v) & =\widehat{T}_{2}(v)\widehat{T}_{1}(u)R_{1,2}(u-v),\\
\widehat{R}_{1,2}(u-v)\widehat{T}_{1}(u)T_{2}(v) & =T_{2}(v)\widehat{T}_{1}(u)\widehat{R}_{1,2}(u-v).
\end{split}
\end{equation}
It follows from the first equation that the crossed monodromy matrix
is also a representation of the Yangian algebra. In the second equation
we defined the crossed $R$-matrix
\begin{equation}
\widehat{R}_{1,2}(u)=R_{1,2}^{t_{1}}(-u),\label{eq:Rb}
\end{equation}
where $t_{1}$ is the partial transposition corresponding to the space
$1$. The crossed monodromy matrices are \emph{lowest weight representations}
of the Yangians, i.e.
\begin{equation}
\begin{split}\widehat{T}_{i,j}(u)|0\rangle & =0,\quad i<j,\\
\widehat{T}_{i,i}(u)|0\rangle & =\hat{\lambda}_{i}(u)|0\rangle,
\end{split}
\end{equation}
where $|0\rangle$ is the highest weight vector (pseudo-vacuum) of
the monodromy matrix $T_{i,j}$ and the crossed eigenvalues are \cite{Liashyk:2018egk}\footnote{In this paper, we use a different convention for the crossed monodromy
matrix than \cite{Liashyk:2018egk}. The connection is $\widehat{T}_{i,j}\leftrightarrow\widehat{T}_{N+1-i,N+1-j}$.}
\begin{equation}
\hat{\lambda}_{i}(u)=\frac{\lambda_{1}(u)\lambda_{1}(-u)}{\lambda_{i}(u-(i-1))}\prod_{k=1}^{i-1}\frac{\lambda_{k}(u-k)}{\lambda_{k}(u-(k-1))}.\label{eq:lamhat}
\end{equation}
We also introduce the $\alpha$-functions
\begin{equation}
\alpha_{i}(u)=\frac{\lambda_{i}(u)}{\lambda_{i+1}(u)},\qquad\hat{\alpha}_{i}(u)=\frac{\hat{\lambda}_{i}(u)}{\hat{\lambda}_{i+1}(u)}.
\end{equation}
The crossed $\alpha$-functions can expressed from the original ones
as
\begin{equation}
\hat{\alpha}_{i}(u)=1/\alpha_{i}(u-i).
\end{equation}

The transfer matrices can be defined in the usual way 
\begin{equation}
\mathcal{T}(u)=\sum_{i=1}^{N}T_{i,i}(u),\qquad\widehat{\mathcal{T}}(u)=\sum_{i=1}^{N}\widehat{T}_{i,i}(u).
\end{equation}
These are commutative operators
\begin{equation}
\left[\mathcal{T}(u),\mathcal{T}(v)\right]=\left[\mathcal{T}(u),\widehat{\mathcal{T}}(v)\right]=\left[\widehat{\mathcal{T}}(u),\widehat{\mathcal{T}}(v)\right]=0.
\end{equation}
The eigenvectors $\mathbb{B}(\bar{t})$ are called Bethe vectors.
The Bethe vectors depend on the Bethe roots $t_{k}^{\nu}$ for $\nu=1,\dots,N-1$
and $k=1,\dots,r_{\nu}$. We define the sets $\bar{t}^{\nu}=\{t_{k}^{\nu}\}_{k=1}^{r_{\nu}}$
and the set of sets $\bar{t}=\{\bar{t}^{\nu}\}_{\nu=1}^{N-1}$. The
$r_{\nu}$ quantum numbers are the cardinalities of the sets $\bar{t}^{\nu}$.
The Bethe roots satisfy the Bethe equations
\begin{equation}
\alpha_{\mu}(t_{k}^{\mu}):=\frac{\lambda_{\mu}(t_{k}^{\mu})}{\lambda_{\mu+1}(t_{k}^{\mu})}=\frac{f(t_{k}^{\mu},\bar{t}_{k}^{\mu})}{f(\bar{t}_{k}^{\mu},t_{k}^{\mu})}\frac{f(\bar{t}^{\mu+1},t_{k}^{\mu})}{f(t_{k}^{\mu},\bar{t}^{\mu-1})},
\end{equation}
where we introduced the following shorthand notations
\begin{equation}
\begin{split}f(u,v) & =\frac{u-v+1}{u-v},\\
f(u,\bar{t}^{i}) & =\prod_{k=1}^{r_{i}}f(u,t_{k}^{i}),\quad f(\bar{t}^{i},u)=\prod_{k=1}^{r_{i}}f(t_{k}^{i},u),\quad f(\bar{t}^{i},\bar{t}^{j})=\prod_{k=1}^{r_{i}}f(t_{k}^{i},\bar{t}^{j}).
\end{split}
\end{equation}
The transfer matrix eigenvalues are \cite{Liashyk:2018egk}
\begin{align}
\tau(u|\bar{t}) & =\sum_{i=1}^{N}\lambda_{i}(u)f(\bar{t}^{i},u)f(u,\bar{t}^{i-1}),\label{eq:eig}\\
\hat{\tau}(u|\bar{t}) & =\sum_{i=1}^{N}\hat{\lambda}_{i}(u)f(\bar{t}^{i-1}+(i-1),u)f(u,\bar{t}^{i}+i).
\end{align}
We call the Bethe vector on-shell if the Bethe roots satisfy the Bethe
equations (when the Bethe vector is an eigenvector), and off-shell
for generic Bethe roots. In the literature recursive definitions are
available for the off-shell Bethe vectors. For the overlap calculations
we also have to know how the monodromy matrix entries act on the off-shell
Bethe vectors. Fortunately it was also previously derived in \cite{Hutsalyuk:2020dlw}. 

One can also define the left eigenvectors $\mathbb{C}(\bar{t})$ of
the transfer matrix and the square of the norm of the on-shell Bethe
states satisfies the Gaudin hypothesis \cite{Hutsalyuk:2017way}
\begin{equation}
\mathbb{C}(\bar{t})\mathbb{B}(\bar{t})=\frac{\prod_{\nu=1}^{N-1}\prod_{k\neq l}f(t_{l}^{\nu},t_{k}^{\nu})}{\prod_{\nu=1}^{N-2}f(\bar{t}^{\nu+1},\bar{t}^{\nu})}\det G,\label{eq:norm}
\end{equation}
where $G$ is the Gaudin matrix given by
\begin{equation}
G_{j,k}^{(\mu,\nu)}=-\frac{\partial\log\Phi_{j}^{(\mu)}}{\partial t_{k}^{\nu}}.
\end{equation}

\section{Integrable matrix product states\label{sec:Integrable-matrix-product}}

In this section, we generalize the previously introduced $KT$-relation
to MPSs. This essentially means that certain quantities which were
previously scalars will now become matrices in the so-called boundary
space.

\subsection{KT-relations}

We generalize the previously defined $KT$-relation \cite{Gombor:2021hmj,Gombor:2023bez}
with boundary degrees of freedom as
\begin{equation}
\sum_{k=1}^{N}\sum_{c=1}^{d_{B}}K_{i,k}^{a,c}(z)\langle\psi_{c,b}|T_{k,j}(z)=\sum_{k=1}^{N}\sum_{c=1}^{d_{B}}\langle\psi_{a,c}|\bar{T}_{i,k}(-z)K_{k,j}^{c,b}(z).
\end{equation}
The $\langle\psi_{a,b}|$ are elements of the dual of the quantum
space $\mathcal{H}$, i.e., $\langle\psi_{a,b}|\in\mathcal{H}^{*}$
for $a,b=1,\dots,d_{B}$. The $KT$-relations have two types: for
the \textbf{uncrossed} $KT$-relation $\bar{T}\equiv T$ and for the
\textbf{crossed} $KT$-relation $\bar{T}\equiv\widehat{T}$. 

We can introduce a boundary vector space $\mathcal{H}_{B}=\mathbb{C}^{d_{B}}$
and we can collect the states and the coefficients to matrices in
the boundary space 
\begin{equation}
\begin{split}\langle\Psi| & :=\sum_{a,b=1}^{d_{B}}\langle\psi_{a,b}|\otimes e_{a,b}^{B}\in\mathcal{H}^{*}\otimes\mathrm{End}(\mathcal{H}_{B}),\\
\mathbf{K}_{i,j}(z) & :=\sum_{a,b=1}^{d_{B}}K_{i,j}^{a,b}(z)e_{a,b}^{B}\in\mathrm{End}(\mathcal{H}_{B}),
\end{split}
\end{equation}
where $e_{a,b}^{B}\in\mathrm{End}(\mathcal{H}_{B})$ are the unit
matrices of the boundary space for $a,b=1,\dots,d_{B}$. Using these
notations the $KT$-relation simplifies as
\begin{equation}
\mathbf{K}_{0}(z)\langle\Psi|T_{0}(z)=\langle\Psi|\bar{T}_{0}(-z)\mathbf{K}_{0}(z),
\end{equation}
where $T(z),\bar{T}(z)\in\mathrm{End}(\mathbb{C}^{N})\otimes\mathrm{End}(\mathcal{H})$
are the monodromy matrices, the $\mathbf{K}(z)=\sum_{i,j=1}^{N}e_{i,j}\otimes\in\mathbf{K}_{i,j}(z)\in\mathrm{End}(\mathbb{C}^{N})\otimes\mathrm{End}(\mathcal{H}_{B})$
is the $K$-matrix and $\langle\Psi|\in\mathcal{H}^{*}\otimes\mathrm{End}(\mathcal{H}_{B})$
is the boundary state.

We also define the matrix product state $\langle\mathrm{MPS}|\in\mathcal{H}^{*}$
as
\begin{equation}
\langle\mathrm{MPS}|=\sum_{\ell=1}^{d_{B}}\langle\psi_{\ell,\ell}|=\mathrm{Tr}_{\mathcal{H}_{B}}\left(\langle\Psi|\right).
\end{equation}
Our goal is to calculate on-shell overlaps $\langle\mathrm{MPS}|\mathbb{B}(\bar{t})$.

\subsection{Reflection algebras}

The compatibility of the $KT$-relation with the $RTT$-relation (we
do the calculation $\langle\Psi|T_{1}(u)T_{2}(v)\to\langle\Psi|\bar{T}_{1}(-u)\bar{T}_{2}(-v)$
in two different orders) requires the equation \cite{Gombor:2021hmj}
\begin{equation}
R_{1,2}(v-u)\mathbf{K}_{1}(u)\bar{R}_{1,2}(-u-v)\mathbf{K}_{2}(v)=\mathbf{K}_{2}(v)\bar{R}_{1,2}(-u-v)\mathbf{K}_{1}(u)R_{1,2}(v-u).\label{eq:refl}
\end{equation}
where we used the notation
\begin{equation}
\bar{R}(u)\equiv\begin{cases}
R(u), & \text{for the uncrossed case},\\
\widehat{R}(u), & \text{for the crossed case}.
\end{cases}
\end{equation}
The $KT$-relation and the reflection equation is invariant under
the renormalization of the $K$-matrix $\mathbf{K}_{1}(u)\to f(u)\mathbf{K}_{1}(u)$
for any scalar function $f(u)$.

In the crossed case the reflection equation with components is
\begin{align}
\left[\mathbf{K}_{i,j}(u),\mathbf{K}_{k,l}(v)\right] & =\frac{1}{u-v}\left(\mathbf{K}_{k,j}(u)\mathbf{K}_{i,l}(v)-\mathbf{K}_{k,j}(v)\mathbf{K}_{i,l}(u)\right)\nonumber \\
 & -\frac{1}{u+v}\left(\mathbf{K}_{i,k}(u)\mathbf{K}_{j,l}(v)-\mathbf{K}_{k,i}(v)\mathbf{K}_{l,j}(u)\right)\label{eq:crBY}\\
 & +\frac{1}{u^{2}-v^{2}}\left(\mathbf{K}_{k,i}(u)\mathbf{K}_{j,l}(v)-\mathbf{K}_{k,i}(v)\mathbf{K}_{j,l}(u)\right).\nonumber 
\end{align}
With a proper normalization, the $K$-matrix has the asymptotic limit
\begin{equation}
\mathbf{K}_{i,j}(u)=\mathbf{A}_{i,j}+\mathcal{O}(u^{-1}).
\end{equation}
Taking the $u\to\infty$ limit of the reflection equation becomes
\begin{equation}
\left[\mathbf{A}_{i,j},\mathbf{K}_{k,l}(v)\right]=0.
\end{equation}
We also assume that the $K$-matrix is irreducible, i.e. the set of
matrices in the boundary space $\{\mathbf{K}_{i,j}(u)\}_{i,j=1}^{N}$
do not have non-trivial invariant subspace. It leads to $\mathbf{A}_{i,j}=\mathcal{U}_{i,j}\mathbf{1}$,
where $\mathcal{U}_{i,j}\in\mathbb{C}$ for $i,j=1,\dots,N$. Multiplying
the reflection equation with $(u+v)$ and taking the $v=-u$ limit,
we obtain 
\begin{equation}
\begin{split}0 & =-\left(\mathbf{K}_{i,k}(u)\mathbf{K}_{j,l}(-u)-\mathbf{K}_{k,i}(-u)\mathbf{K}_{l,j}(u)\right)\\
 & +\frac{1}{2u}\left(\mathbf{K}_{k,i}(u)\mathbf{K}_{j,l}(-u)-\mathbf{K}_{k,i}(-u)\mathbf{K}_{j,l}(u)\right).
\end{split}
\end{equation}
Taking the $u\to\infty$ limit
\begin{equation}
\mathcal{U}_{i,k}\mathcal{U}_{j,l}=\mathcal{U}_{k,i}\mathcal{U}_{l,j},
\end{equation}
therefore $\mathcal{U}_{i,k}=\pm\mathcal{U}_{k,i}$. In summary, the
asymptotic limit of the $K$-matrix is 
\begin{equation}
\mathbf{K}_{i,j}(u)=\mathcal{U}_{i,j}\mathbf{1}+\mathcal{O}(u^{-1}).\label{eq:seriestwY}
\end{equation}
The reflection equation with the series expansion (\ref{eq:seriestwY})
defines the twisted Yangian algebras $Y^{\pm}(N)$ ($Y^{+}(N)$ for
symmetric and $Y^{-}(N)$ for anti-symmetric $\mathcal{U}$). \footnote{In the literature (e.g., \cite{MolevBook}), these algebras are often
referred to as extended twisted Yangians, and the definitions of twisted
Yangians include an additional symmetry property. This extra condition
partially fixes the normalization of the $K$-matrix, meaning that
the symmetry equation is not invariant under the rescaling $\mathbf{K}(u)\to\mu(u)\mathbf{K}(u)$
for an arbitrary function $\mu(u)$. However, our $KT$-relation is
invariant under such rescaling, and we do not use the symmetry property
in our calculations. Therefore, in this paper, we work with the extended
versions, and for simplicity, we omit the term \textquotedblleft extended.\textquotedblright{}}

For the uncrossed case the explicit form of the reflection equation
is
\begin{align}
\left[\mathbf{K}_{i,j}(u),\mathbf{K}_{k,l}(v)\right] & =\frac{1}{u-v}\left(\mathbf{K}_{k,j}(u)\mathbf{K}_{i,l}(v)-\mathbf{K}_{k,j}(v)\mathbf{K}_{i,l}(u)\right)\nonumber \\
 & +\frac{1}{u+v}\sum_{n=1}^{N}\left(\delta_{j,k}\mathbf{K}_{i,n}(u)\mathbf{K}_{n,l}(v)-\delta_{i,l}\mathbf{K}_{k,n}(v)\mathbf{K}_{n,j}(u)\right)\\
 & -\frac{1}{u^{2}-v^{2}}\delta_{i,j}\sum_{n=1}^{N}\left(\mathbf{K}_{k,n}(u)\mathbf{K}_{n,l}(v)-\mathbf{K}_{k,n}(v)\mathbf{K}_{n,l}(u)\right).\nonumber 
\end{align}
Taking the $u\to\infty$ limit of the reflection equation we can easily
show that the the asymptotic limit of the $K$-matrix is 
\begin{equation}
\mathbf{K}_{i,j}(u)=\mathcal{U}_{i,j}\mathbf{1}+\mathcal{O}(u^{-1}).
\end{equation}
Taking first the $v=-u$ limit and later the $u\to\infty$ limit we
obtain that
\begin{equation}
\sum_{n=1}^{N}\left(\delta_{j,k}\mathcal{U}_{i,n}\mathcal{U}_{n,l}-\delta_{i,l}\mathcal{U}_{k,n}\mathcal{U}_{n,j}\right)=0.
\end{equation}
The solution of this equation is 
\begin{equation}
\sum_{n=1}^{N}\mathcal{U}_{i,n}\mathcal{U}_{n,l}=a\delta_{i,l},\label{eq:UUa}
\end{equation}
where $a\in\mathbb{C}$. For $a=0$ we call the $K$-matrix singular.
For non-singular $K$-matrices we can always choose the normalization
as $a=1$. The possible eigenvalues of the matrix $\mathcal{U}$ are
$\pm1$. Let the number of $-1$ be $M$. The reflection equation
and the asymptotic expansion with $M$ number of $-1$ eigenvalues
defines the reflection algebra $\mathcal{B}(N,M)$. \footnote{Similar to twisted Yangians, these algebras are commonly referred
to as extended reflection algebras \cite{MOLEV:2002}, and the definition
of reflection algebras includes an additional unitarity condition.
However, since we do not use this condition, we also omit the term
\textquotedblleft extended\textquotedblright{} here for the sake of
simplicity.} 

\subsection{Matrix product states from co-product}

Let us introduce a tensor product quantum space as $\mathcal{H}=\mathcal{H}^{(1)}\otimes\mathcal{H}^{(2)}$.
The corresponding monodromy matrices are $\bar{T}_{i,j}^{(1)}(u)\in\mathrm{End}(\mathcal{H}^{(1)}),\bar{T}_{i,j}^{(2)}(u)\in\mathrm{End}(\mathcal{H}^{(2)})$
and
\begin{equation}
\bar{T}_{0}(u)=\bar{T}_{0}^{(2)}(u)\bar{T}_{0}^{(1)}(u)\in\mathrm{End}(\mathbb{C}^{N})\otimes\mathrm{End}(\mathcal{H}),\label{eq:comon}
\end{equation}
Let $\left\langle \Psi^{(1)}\right|\in\mathcal{H}^{(1)}\otimes\mathcal{H}_{B}$
and $\left\langle \Psi^{(2)}\right|\in\mathcal{H}^{(2)}\otimes\mathcal{H}_{B}$
be boundary states with the same $K$-matrix i.e. they satisfy the
same $KT$-relation
\begin{equation}
\mathbf{K}_{0}(u)\langle\Psi^{(i)}|T_{0}^{(i)}(u)=\langle\Psi^{(i)}|\bar{T}_{0}^{(i)}(-u)\mathbf{K}_{0}(u),
\end{equation}
for $i=1,2$. 
\begin{prop}
\label{lem:co-prod}The co-vector 
\begin{equation}
\langle\Psi|=\langle\Psi^{(2)}|\langle\Psi^{(1)}|\label{eq:Psi_Factor}
\end{equation}
is a boundary state in the tensor product quantum space $\mathcal{H}$
with the same $K$-matrix i.e. it satisfies the $KT$-relation
\begin{equation}
\mathbf{K}_{0}(u)\langle\Psi|T_{0}(u)=\langle\Psi|\bar{T}_{0}(-u)\mathbf{K}_{0}(u).\label{eq:coKT}
\end{equation}
\end{prop}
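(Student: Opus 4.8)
The plan is to verify the $KT$-relation \eqref{eq:coKT} for the factorized co-vector \eqref{eq:Psi_Factor} by a direct computation, pushing the monodromy matrix through the boundary state one tensor factor at a time and using the co-product decomposition \eqref{eq:comon}. First I would write out $\mathbf{K}_{0}(u)\langle\Psi|T_{0}(u)$, substitute $\langle\Psi|=\langle\Psi^{(2)}|\langle\Psi^{(1)}|$ and $T_{0}(u)=T_{0}^{(2)}(u)T_{0}^{(1)}(u)$, and note that since $T_{0}^{(2)}(u)$ acts only on $\mathcal{H}^{(2)}$ and $\langle\Psi^{(1)}|$ lives in $\mathcal{H}^{(1)}\otimes\mathcal{H}_{B}$, the operator $T_{0}^{(2)}(u)$ commutes past $\langle\Psi^{(1)}|$ (they act on disjoint quantum-space factors, and the auxiliary space $\mathbb{C}^{N}\otimes\mathcal{H}_{B}$ bookkeeping is handled by keeping all auxiliary indices explicit). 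This lets me reach $\mathbf{K}_{0}(u)\langle\Psi^{(2)}|T_{0}^{(2)}(u)\langle\Psi^{(1)}|T_{0}^{(1)}(u)$.

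Next I would apply the $KT$-relation for $\langle\Psi^{(2)}|$ to the leftmost block: $\mathbf{K}_{0}(u)\langle\Psi^{(2)}|T_{0}^{(2)}(u)=\langle\Psi^{(2)}|\bar{T}_{0}^{(2)}(-u)\mathbf{K}_{0}(u)$. The subtlety is that the resulting $\mathbf{K}_{0}(u)$ now sits between $\langle\Psi^{(2)}|$ (more precisely, to its right) and the block $\langle\Psi^{(1)}|T_{0}^{(1)}(u)$; since $\mathbf{K}_{0}(u)\in\mathrm{End}(\mathbb{C}^{N})\otimes\mathrm{End}(\mathcal{H}_{B})$ acts trivially on $\mathcal{H}^{(2)}$, I can slide it to the left of nothing relevant in $\mathcal{H}^{(2)}$ and bring it to act directly on $\langle\Psi^{(1)}|T_{0}^{(1)}(u)$, obtaining $\langle\Psi^{(2)}|\bar{T}_{0}^{(2)}(-u)\,\mathbf{K}_{0}(u)\langle\Psi^{(1)}|T_{0}^{(1)}(u)$. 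Then I apply the $KT$-relation for $\langle\Psi^{(1)}|$ to the second block to get $\langle\Psi^{(2)}|\bar{T}_{0}^{(2)}(-u)\langle\Psi^{(1)}|\bar{T}_{0}^{(1)}(-u)\mathbf{K}_{0}(u)$. Finally, recombining $\bar{T}_{0}^{(2)}(-u)\bar{T}_{0}^{(1)}(-u)=\bar{T}_{0}(-u)$ via the co-product \eqref{eq:comon} (which holds for both $T$ and $\widehat{T}$, since $\widehat{T}$ is again a Yangian representation and the crossed monodromy of a tensor product is the product of crossed monodromies in the appropriate order) and commuting $\langle\Psi^{(1)}|$ back to the right through $\bar{T}_{0}^{(2)}(-u)$, I recover $\langle\Psi|\bar{T}_{0}(-u)\mathbf{K}_{0}(u)$, which is exactly \eqref{eq:coKT}.

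The main obstacle, and the only point requiring genuine care rather than routine rearrangement, is the ordering of the auxiliary $\mathbb{C}^{N}$ and boundary $\mathcal{H}_{B}$ factors: $\mathbf{K}_{0}(u)$, $T_{0}^{(1)}(u)$ and $T_{0}^{(2)}(u)$ all share the auxiliary space $\mathbb{C}^{N}$ (index $0$), so they do \emph{not} commute with each other, and one must check that the two $KT$-relations are applied in an order compatible with the nesting of $\bar{T}_{0}(u)=\bar{T}_{0}^{(2)}(u)\bar{T}_{0}^{(1)}(u)$ — i.e.\ the ``outer'' factor $\langle\Psi^{(2)}|$ must be paired with the ``outer'' monodromy $\bar{T}_{0}^{(2)}$. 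Working entirely in components (with explicit sums over the shared auxiliary index and over $\mathcal{H}_{B}$) as in the component form of the $KT$-relation makes this bookkeeping transparent: each application of a $KT$-relation contracts one auxiliary index and produces the next one in the correct position, so the chain closes. One should also note that the boundary space $\mathcal{H}_{B}$ is common to both $\langle\Psi^{(1)}|$ and $\langle\Psi^{(2)}|$ and is \emph{not} doubled — this is what makes \eqref{eq:Psi_Factor} a well-defined element of $\mathcal{H}^{*}\otimes\mathrm{End}(\mathcal{H}_{B})$ and is precisely the hypothesis that the two boundary states share the same $K$-matrix.
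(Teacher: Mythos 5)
Your proposal is correct and follows essentially the same route as the paper's proof: substitute the factorizations of $\langle\Psi|$ and $T_{0}(u)$, regroup so each $\langle\Psi^{(i)}|$ pairs with its own $T_{0}^{(i)}(u)$, apply the $KT$-relation first on $\mathcal{H}^{(2)}$ and then on $\mathcal{H}^{(1)}$, and recombine via the co-product. Your added remarks on auxiliary-space ordering and the shared boundary space are sound but do not change the argument.
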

\begin{proof}
Let us start with the lhs
\begin{equation}
\mathbf{K}_{0}(u)\langle\Psi|T_{0}(u)=\mathbf{K}_{0}(u)\left(\langle\Psi^{(2)}|\langle\Psi^{(1)}|\right)\left(T_{0}^{(2)}(u)T_{0}^{(1)}(u)\right)=\mathbf{K}_{0}(u)\left(\langle\Psi^{(2)}|T_{0}^{(2)}(u)\right)\left(\langle\Psi^{(1)}|T_{0}^{(1)}(u)\right),
\end{equation}
where we used the definition (\ref{eq:Psi_Factor}) and the co-product
property of the monodromy matrix (\ref{eq:comon}). Now we can use
the $KT$-relation of space $\mathcal{H}^{(2)}:$
\begin{equation}
\mathbf{K}_{0}(u)\langle\Psi|T_{0}(u)=\left(\langle\Psi^{(2)}|\bar{T}_{0}^{(2)}(-u)\right)\mathbf{K}_{0}(u)\left(\langle\Psi^{(1)}|T_{0}^{(1)}(u)\right).
\end{equation}
Now on $\mathcal{H}^{(1)}:$
\begin{equation}
\begin{split}\mathbf{K}_{0}(u)\langle\Psi|T_{0}(u) & =\left(\langle\Psi^{(2)}|\bar{T}_{0}^{(2)}(-u)\right)\left(\langle\Psi^{(1)}|\bar{T}_{0}^{(1)}(-u)\right)\mathbf{K}_{0}(u)\\
 & =\left(\langle\Psi^{(2)}|\langle\Psi^{(1)}|\right)\left(\bar{T}_{0}^{(2)}(-u)\bar{T}_{0}^{(1)}(-u)\right)\mathbf{K}_{0}(u).
\end{split}
\end{equation}
Using (\ref{eq:comon}) we just proved (\ref{eq:coKT}).
\end{proof}
Using the co-product property, we can construct integrable MPSs of
arbitrary length from the \textquotedblleft elementary\textquotedblright{}
solutions of the $KT$-equation. In the following, we present an example
where only the defining representations are used. We begin with the
reflection equation (\ref{eq:refl}). After change of variables it
is equivalent to
\begin{equation}
\mathbf{K}_{0}(u)\bar{R}_{0,1}(-u-\theta)\mathbf{K}_{1}(\theta)R_{0,1}(u-\theta)=R_{0,1}(u-\theta)\mathbf{K}_{1}(\theta)\bar{R}_{0,1}(-u-\theta)\mathbf{K}_{0}(u).
\end{equation}
We can use an equivalent form 

\begin{equation}
\mathbf{K}_{0}(u)\psi_{2,1}(\theta)\left[\bar{R}_{0,2}^{t_{2}}(-u-\theta)R_{0,1}(u-\theta)\right]=\psi_{2,1}(\theta)\left[R_{0,2}^{t_{2}}(u-\theta)\bar{R}_{0,1}(-u-\theta)\right]\mathbf{K}_{0}(u),
\end{equation}
where
\begin{equation}
\psi_{1,2}(\theta)=\sum_{i,j=1}^{N}\langle i,j|\otimes\mathbf{K}_{i,j}(\theta).
\end{equation}
The state $\psi_{2,1}(\theta)$ satisfy the $KT$-relation for length
two spin chain with monodromy matrices
\begin{equation}
T_{0}(u)=\bar{R}_{0,2}^{t_{2}}(-u-\theta)R_{0,1}(u-\theta),\quad\bar{T}_{0}(u)=R_{0,2}^{t_{2}}(-u-\theta)\bar{R}_{0,1}(u-\theta).
\end{equation}
Using the co-product property we can built boundary state for length
$2J$ as
\begin{equation}
\begin{split}\langle\Psi| & =\psi_{2J,2J-1}(\theta_{J})\dots\psi_{4,3}(\theta_{2})\psi_{2,1}(\theta_{1}),\\
T_{0}(u) & =\bar{R}_{0,2J}^{t_{2J}}(-u-\theta_{J})R_{0,2J-1}(u-\theta_{J})\dots\bar{R}_{0,2}^{t_{2}}(-u-\theta_{1})R_{0,1}(u-\theta_{1}).
\end{split}
\end{equation}
For the even sites we have
\begin{equation}
\bar{R}_{0,2j}^{t_{2j}}(-u-\theta_{j})=\begin{cases}
R_{0,2j}(u+\theta_{j}), & \text{for the crossed case,}\\
\widehat{R}_{0,2j}(u+\theta_{j}), & \text{for the uncrossed case,}
\end{cases}
\end{equation}
Using fusion, we can generalize the boundary states to any finite-dimensional
representation of $\mathfrak{gl}_{N}$. The details are provided in
Appendix \ref{sec:Boundary-states-gen-rep}. For a finite-dimensional
representation $\Lambda$, there is a two-site state $\psi_{2,1}^{\Lambda}(\theta)$.
Using the co-product property, we can construct a boundary state
\begin{equation}
\langle\Psi|=\psi_{2J,2J-1}^{\Lambda^{(J)}}(\theta_{J})\dots\psi_{4,3}^{\Lambda^{(2)}}(\theta_{2})\psi_{2,1}^{\Lambda^{(1)}}(\theta_{1}).
\end{equation}
This satisfies the $KT$-relation on a spin chain where the monodromy
matrix is defined as
\begin{equation}
T_{0}(u)=\left[\bar{L}_{0,2J}^{\Lambda^{(J)}}(-u-\theta_{J})\right]^{t_{2J}}L_{0,2J-1}^{\Lambda^{(J)}}(u-\theta_{J})\dots\left[\bar{L}_{0,2}^{\Lambda^{(1)}}(-u-\theta_{1})\right]^{t_{2}}L_{0,1}^{\Lambda^{(1)}}(u-\theta_{1}).
\end{equation}
The pseudo-vacuum eigenvalue is
\begin{equation}
\lambda_{k}(u)=\begin{cases}
\prod_{j=1}^{J}\lambda_{k}^{(j)}(u-\theta_{j})\hat{\lambda}_{k}^{(j)}(-u-\theta_{j}), & \text{for the crossed case,}\\
\prod_{j=1}^{J}\lambda_{k}^{(j)}(u-\theta_{j})\lambda_{N+1-k}^{(j)}(-u-\theta_{j}), & \text{for the uncrossed case,}
\end{cases}
\end{equation}
where
\begin{equation}
\lambda_{k}^{(j)}(u)=\frac{u+\Lambda_{k}^{(j)}}{u}.
\end{equation}
The pseudo-vacuum eigenvalues satisfy the symmetry conditions
\begin{equation}
\lambda_{k}(u)=\begin{cases}
\hat{\lambda}_{k}(-u), & \text{for the crossed case,}\\
\lambda_{N+1-k}(-u), & \text{for the uncrossed case.}
\end{cases}\label{eq:lamProp}
\end{equation}
 The symmetry properties of the $\alpha$-s are
\begin{equation}
\alpha_{k}(u)=\frac{\lambda_{k}(u)}{\lambda_{k+1}(u)}=\begin{cases}
\frac{\hat{\lambda}_{k}(-u)}{\hat{\lambda}_{k+1}(-u)}=\hat{\alpha}_{k}(-u)=\frac{1}{\alpha_{k}(-u-k)}, & \text{for the crossed case,}\\
\frac{\lambda_{N+1-k}(-u)}{\lambda_{N-k}(-u)}=\frac{1}{\alpha_{N-k}(-u)}, & \text{for the uncrossed case.}
\end{cases}\label{eq:symProp}
\end{equation}

\subsection{Integrability and pair structure}

Assuming that the $K$-matrices are invertible, we can product the
$KT$-relation with the inverse: 
\begin{equation}
\langle\Psi|T_{0}(z)=\mathbf{K}_{0}(z)^{-1}\langle\Psi|\bar{T}_{0}(-z)\mathbf{K}_{0}(z).
\end{equation}
Taking the trace in the auxiliary and the boundary space we obtain
that
\begin{equation}
\langle\mathrm{MPS}|\mathcal{T}(z)=\langle\mathrm{MPS}|\bar{\mathcal{T}}(-z).
\end{equation}
For on-shell states we have
\begin{equation}
\left(\tau(u|\bar{t})-\bar{\tau}(-u|\bar{t})\right)\langle\mathrm{MPS}|\mathbb{B}(\bar{t})=0.
\end{equation}
The non-vanishing overlap $\langle\mathrm{MPS}|\mathbb{B}(\bar{t})\neq0$
requires that
\begin{equation}
\tau(u|\bar{t})=\bar{\tau}(-u|\bar{t}).\label{eq:intcondtau-1}
\end{equation}

For the uncrossed case, the eigenvalue $\tau(-u|\bar{t})$ can be
written as
\begin{equation}
\begin{split}\tau(-u|\bar{t}) & =\sum_{i=1}^{N}\lambda_{i}(-u)f(\bar{t}^{i},-u)f(-u,\bar{t}^{i-1})=\\
 & =\sum_{i=1}^{N}\lambda_{i}(u)f(-\bar{t}^{N-i},u)f(u,-\bar{t}^{N+1-i})=\tau(u|\pi^{a}(\bar{t})),
\end{split}
\end{equation}
where we used the symmetry property (\ref{eq:lamProp}) and introduced
the following map of Bethe roots:
\begin{equation}
\pi^{a}(\bar{t}^{1},\bar{t}^{2},\dots,\bar{t}^{N-1})=\left(-\bar{t}^{N-1},-\bar{t}^{N-2},\dots,-\bar{t}^{1}\right).\label{eq:chpair-1}
\end{equation}

For the crossed case, the eigenvalue $\hat{\tau}(-u|\bar{t})$ can
be written as
\begin{equation}
\begin{split}\hat{\tau}(-u|\bar{t}) & =\sum_{i=1}^{N}\hat{\lambda}_{i}(-u)f(\bar{t}^{i-1}+(i-1),-u)f(-u,\bar{t}^{i}+i)=\\
 & =\sum_{i=1}^{N}\lambda_{i}(u)f(-\bar{t}^{i}-i,u)f(u,-\bar{t}^{i-1}-(i-1))=\tau(u|\pi^{c}(\bar{t})),
\end{split}
\end{equation}
where we used that the symmetry property (\ref{eq:lamProp}) and introduced
the following map of Bethe roots:
\begin{equation}
\pi^{c}(\bar{t}^{1},\bar{t}^{2},\dots,\bar{t}^{N-1})=\left(-\bar{t}^{1}-1,-\bar{t}^{2}-2,\dots,-\bar{t}^{N-1}-(N-1)\right).\label{eq:chpair}
\end{equation}
Substituting back to (\ref{eq:intcondtau-1}) the condition for non-vanishing
overlaps read as
\begin{equation}
\tau(u|\bar{t})=\tau(u|\pi^{a/c}(\bar{t})).
\end{equation}
Notice that the set $\pi^{a/c}(\bar{t})$ also satisfies the Bethe
equations i.e. the vector $\mathbb{B}(\pi^{a/c}(\bar{t}))$ is also
an on-shell Bethe vector. Since different Bethe vectors have different
eigenvalues, we just obtain that the Bethe roots have to satisfy the
selection rule
\begin{equation}
\bar{t}=\pi^{a/c}(\bar{t}),\label{eq:ps}
\end{equation}
for the non-vanishing overlaps. We call the Bethe roots with condition
(\ref{eq:ps}) as Bethe roots with \emph{pair structure}. 

For \emph{achiral pair structure}, i.e. when $\bar{t}=\pi^{a}(\bar{t})$,
every sets $\bar{t}^{\nu}$ satisfy the condition $\bar{t}^{\nu}=-\bar{t}^{N-\nu}$.
For \emph{chiral pair structure}, i.e. when $\bar{t}=\pi^{c}(\bar{t})$,
every sets $\bar{t}^{\nu}$ satisfy the condition $\bar{t}^{\nu}=-\bar{t}^{\nu}-\nu$.

\subsection{Recursion for off-shell overlaps}

Following \cite{Gombor:2021hmj} we can obtain a method for the evaluation
of off-shell overlaps. We use the recurrence relation and action formulas
for the Bethe states \cite{Hutsalyuk:2017tcx,Hutsalyuk:2020dlw}.
See the details in appendix \ref{sec:Reqursions-for-BS}. In the following
we just sketch the recursion for the overlaps and we concentrate on
the number of Bethe roots and we drop the numerical coefficients which
are irrelevant for this purpose. The recurrence equation of the Bethe
vectors reads as 
\begin{equation}
\mathbb{B}^{r_{1},\dots,r_{N-1}}=\sum_{k=2}^{N}T_{1,k}\sum\mathbb{B}^{r_{1}-1,\dots,,r_{k-1}-1,r_{k},\dots,r_{N-1}}(\dots),
\end{equation}
and the action formulas are
\begin{equation}
T_{i,j}\mathbb{B}^{r_{1},r_{2},\dots,r_{N-1}}=\begin{cases}
\sum(\dots)\mathbb{B}^{r_{1},\dots r_{i}+1,r_{i+1}+1,\dots,r_{j-1}+1,r_{j},\dots r_{N-1}}, & i\leq j,\\
\sum(\dots)\mathbb{B}^{r_{1},\dots r_{j}-1,r_{j+1}-1,\dots,r_{i-1}-1,r_{i},\dots r_{N-1}}, & i>j,
\end{cases}\label{eq:actionT}
\end{equation}
and
\begin{equation}
\begin{split}\widehat{T}_{i,j}\mathbb{B}^{r_{1},r_{2},\dots,r_{N-1}} & =\begin{cases}
\sum(\dots)\mathbb{B}^{r_{1},\dots r_{j}+1,r_{j+1}+1,\dots,r_{i-1}+1,r_{i},\dots r_{N-1}}, & j\leq i,\\
\sum(\dots)\mathbb{B}^{r_{1},\dots r_{i}-1,r_{i+1}-1,\dots,r_{j-1}-1,r_{j},\dots r_{N-1}}, & j>i.
\end{cases}\end{split}
\label{eq:actionTh}
\end{equation}
We can see that the diagonal elements $T_{i,i}$ and $\widehat{T}_{i,i}$
do not change the number of Bethe roots. For $i>j$, $T_{i,j}$ and
$\widehat{T}_{j,i}$ decrease the number of certain roots by one,
while $T_{j,i}$ and $\widehat{T}_{i,j}$ increase the number of those
same roots by one.

\subsubsection{The crossed overlaps}

We use the $(1,k)$ component of the crossed $KT$-relation
\begin{equation}
\left\langle \Psi\right|T_{1,k}(u)=\sum_{j=1}^{N}\mathbf{K}_{1,1}^{-1}(u)\left\langle \Psi\right|\widehat{T}_{1,j}(-u)\mathbf{K}_{j,k}(u)-\sum_{i=2}^{N}\mathbf{K}_{1,1}^{-1}(u)\mathbf{K}_{1,i}(u)\left\langle \Psi\right|T_{i,k}(u),\label{eq:crossrec}
\end{equation}
which can be used to change the creation operators $T_{1,k}$ to the
operators $\widehat{T}_{1,j}$ and $T_{i,k}$ where $i,k=2,\dots,N$
and $j=1,\dots,N$. For $j>1$, the operators $\widehat{T}_{1,j}$
decrease the number of the first Bethe roots by one and the operators
$\widehat{T}_{1,1}(-u)$ and $T_{i,k}(u)$ do not change it for $i,k\geq2$.
Using the recurrence relation (\ref{eq:crossrec}) and the action
formulas (\ref{eq:actionT}), (\ref{eq:actionTh}) we obtain a recursion
\begin{equation}
\left\langle \Psi\right|\mathbb{B}^{r_{1},\dots}=\sum(\dots)\left\langle \Psi\right|\mathbb{B}^{r_{1}-1,\dots}+\sum(\dots)\left\langle \Psi\right|\mathbb{B}^{r_{1}-2,\dots}.
\end{equation}
When we changed the creation operators to the diagonal and the annihilation
ones we assumed that the inverse $\mathbf{K}_{1,1}^{-1}(u)$ exists
which is possible only for $Y^{+}(N)$ $K$-matrices but it is never
true for $Y^{-}(N)$. This can be seen from the asymptotic expansion
(\ref{eq:seriestwY}). In the symmetric case, it is possible that
$\mathcal{U}_{1,1}\neq0$, but in the anti-symmetric case $\mathcal{U}_{1,1}=0$.
If $\mathcal{U}_{1,1}\neq0$ then the operator $\mathbf{K}_{1,1}^{-1}(u)$
can be computed order by order in an expansion in powers of $u^{-1}$.
In the following we concentrate on $Y^{+}(N)$ and deal the other
case $Y^{-}(N)$ later. 

Repeating these steps we can express the general overlap with a sum
of overlaps without the first Bethe roots
\begin{equation}
\left\langle \Psi\right|\mathbb{B}^{r_{1},\dots}=\sum(\dots)\left\langle \Psi\right|\mathbb{B}^{0,\dots}.
\end{equation}
We can see that we reduced the original $\mathfrak{gl}(N)$ Bethe
state to $\mathfrak{gl}(N-1)$ ones. The action of operators $\left\{ T_{a,b}\right\} _{a,b=2}^{N}$
and $\left\{ \widehat{T}_{a,b}\right\} _{a,b=2}^{N}$ do not lead
out of the subspace generated by the Bethe vectors $\mathbb{B}^{0,\dots}$
. Naively, the crossed $KT$-relation is not closed for the indexes
$a,b=2,\dots,N$ since
\begin{equation}
\sum_{c=2}^{N}\mathbf{K}_{a,c}\left\langle \Psi\right|T_{c,b}+\mathbf{K}_{a,1}\left\langle \Psi\right|T_{1,b}=\sum_{c=2}^{N}\left\langle \Psi\right|\widehat{T}_{a,c}\mathbf{K}_{c,b}+\left\langle \Psi\right|\widehat{T}_{a,1}\mathbf{K}_{1,b}.\label{eq:nestin1}
\end{equation}
We can see that the operators $T_{1,b}$ and $\widehat{T}_{a,1}$
create the first type of Bethe roots therefore we have to eliminate
them from the equation (\ref{eq:nestin1}). Let us get the components
$(1,b)$ , $(a,1)$ and $(1,1)$ of the crossed $KT$-relation 
\begin{align}
 & \sum_{c=2}^{N}\mathbf{K}_{1,c}\left\langle \Psi\right|T_{c,b}+\mathbf{K}_{1,1}\left\langle \Psi\right|T_{1,b}\cong\left\langle \Psi\right|\widehat{T}_{1,1}\mathbf{K}_{1,b},\label{eq:nestin2}\\
 & \mathbf{K}_{a,1}\left\langle \Psi\right|T_{1,1}\cong\sum_{c=2}^{N}\left\langle \Psi\right|\widehat{T}_{a,c}\mathbf{K}_{c,1}+\left\langle \Psi\right|\widehat{T}_{a,1}\mathbf{K}_{1,1},\label{eq:nestin3}\\
 & \mathbf{K}_{1,1}\left\langle \Psi\right|T_{1,1}\cong\left\langle \Psi\right|\widehat{T}_{1,1}\mathbf{K}_{1,1},\label{eq:nestin4}
\end{align}
where $\cong$ denotes the equality in the subspace generated by the
Bethe vectors $\mathbb{B}^{0,\dots}$. Combining the equations (\ref{eq:nestin1}),
(\ref{eq:nestin2}), (\ref{eq:nestin3}) and (\ref{eq:nestin4}) we
obtain a new crossed $KT$-relation on the subspace generated by the
Bethe states $\mathbb{B}^{0,\dots}$
\begin{equation}
\sum_{c=2}^{N}\mathbf{K}_{a,c}^{(2)}\left\langle \Psi\right|T_{c,b}\cong\sum_{c=2}^{N}\left\langle \Psi\right|\widehat{T}_{a,c}\mathbf{K}_{c,b}^{(2)},
\end{equation}
where
\begin{equation}
\mathbf{K}_{a,b}^{(2)}(z)=\mathbf{K}_{a,b}(z)-\mathbf{K}_{a,1}(z)\mathbf{K}_{1,1}^{-1}(z)\mathbf{K}_{1,b}(z),
\end{equation}
for $a,b=2,\dots,N$. Repeating the previous method we can eliminate
the sets of Bethe roots $\bar{t}^{\nu}$, therefore we obtain a recursion
for the overlaps and in the end of the day we obtain an explicit formula
for the overlap. In the $(k+1)$-th step of the nesting we have a
$\mathfrak{gl}(N-k)$ $K$-matrix:
\begin{equation}
\mathbf{K}_{a,b}^{(k+1)}(z)=\mathbf{K}_{a,b}^{(k)}(z)-\mathbf{K}_{a,k}^{(k)}(z)\left[\mathbf{K}_{k,k}^{(k)}(z)\right]^{-1}\mathbf{K}_{k,b}^{(k)}(z),\label{eq:nestedKT}
\end{equation}
which satisfy the $KT$-relations
\begin{equation}
\sum_{c=k}^{N}\mathbf{K}_{a,c}^{(k)}(u)\left\langle \Psi\right|T_{c,b}(u)\cong\sum_{c=k}^{N}\left\langle \Psi\right|\widehat{T}_{a,c}(-u)\mathbf{K}_{c,b}^{(k)}(u),
\end{equation}
for $a,b=k,\dots,N$ where $\cong$ denotes the equality in the subspace
generated by the Bethe vectors $\mathbb{B}^{0,\dots,0,r_{k},\dots,r_{N-1}}$. 

\subsubsection{The uncrossed overlaps\label{subsec:The-uncrossed-overlaps}}

We can use similar recursion for the uncrossed overlaps. Now we use
the $(N,k)$ component of the uncrossed $KT$-relation
\begin{equation}
\left\langle \Psi\right|T_{1,k}(u)=\sum_{j=1}^{N}\mathbf{K}_{N,1}^{-1}(u)\left\langle \Psi\right|T_{N,j}(-u)\mathbf{K}_{j,k}(u)-\sum_{i=2}^{N}\mathbf{K}_{N,1}^{-1}(u)\mathbf{K}_{N,i}(u)\left\langle \Psi\right|T_{i,k}(u).
\end{equation}
Analogous way, we can use this equation to decrease the numbers of
the first Bethe roots while the cardinality of $\bar{t}^{N-1}$ is
not increased. Using the action formulas we obtain a recursion
\begin{equation}
\left\langle \Psi\right|\mathbb{B}^{r_{1},\dots,r_{N-1}}=\sum(\dots)\mathbb{B}^{\tilde{r}_{1},\dots,\tilde{r}_{N-1}}.
\end{equation}
where $r_{1}>\tilde{r}_{1}$ and $r_{N-1}\geq\tilde{r}_{N-1}$. We
can also use the $(k,1)$ component of the uncrossed $KT$-relation
\begin{equation}
\left\langle \Psi\right|T_{k,N}(-u)=\sum_{i=1}^{N}\mathbf{K}_{k,i}(u)\left\langle \Psi\right|T_{i,1}(u)\mathbf{K}_{N,1}^{-1}(u)-\sum_{j=1}^{N-1}\left\langle \Psi\right|T_{k,j}(-u)\mathbf{K}_{j,1}(u)\mathbf{K}_{N,1}^{-1}(u).
\end{equation}
Analogous way, we can use this equation to decrease the cardinality
of $\bar{t}^{N-1}$ while the cardinality of $\bar{t}^{1}$ is not
increased. Using the action formulas we obtain a recursion
\begin{equation}
\left\langle \Psi\right|\mathbb{B}^{r_{1},\dots,r_{N-1}}=\sum(\dots)\mathbb{B}^{\tilde{r}_{1},\dots,\tilde{r}_{N-1}}.
\end{equation}
where $r_{1}\geq\tilde{r}_{1}$ and $r_{N-1}>\tilde{r}_{N-1}$. 

Repeating these steps we can express the general overlap with a sum
of overlaps without the first and last types of Bethe roots
\begin{equation}
\left\langle \Psi\right|\mathbb{B}^{r_{1},\dots,r_{N-1}}=\sum(\dots)\left\langle \Psi\right|\mathbb{B}^{0,\dots,0}.
\end{equation}
We can see that we reduced the original $\mathfrak{gl}(N)$ Bethe
state to $\mathfrak{gl}(N-2)$ ones. Naively, the uncrossed $KT$-relation
is not closed for this $\mathfrak{gl}(N-2)$ subsector since
\begin{equation}
\sum_{c=2}^{N-1}\mathbf{K}_{a,c}\left\langle \Psi\right|T_{c,b}+\mathbf{K}_{a,1}\left\langle \Psi\right|T_{1,b}+\mathbf{K}_{a,N}\left\langle \Psi\right|T_{N,b}=\sum_{c=2}^{N-1}\left\langle \Psi\right|\bar{T}_{a,c}\mathbf{K}_{c,b}+\left\langle \Psi\right|\bar{T}_{a,1}\mathbf{K}_{1,b}+\left\langle \Psi\right|\bar{T}_{a,N}\mathbf{K}_{N,b},\label{eq:nestin1-1}
\end{equation}
where $T_{i,j}\equiv T_{i,j}(u)$ and $\bar{T}_{i,j}\equiv T_{i,j}(-u)$.
We can see that the operators $T_{1,b}$ and $T_{a,N}$ create the
first and last types of Bethe roots therefore we have to eliminate
them from the equation (\ref{eq:nestin1-1}). We use the components
$(N,b)$, $(a,1)$ and $(N,1)$ of the uncrossed $KT$-relation 
\begin{align}
 & \sum_{c=2}^{N-1}\mathbf{K}_{N,c}\left\langle \Psi\right|T_{c,b}+\mathbf{K}_{N,1}\left\langle \Psi\right|T_{1,b}\cong\left\langle \Psi\right|\bar{T}_{N,N}\mathbf{K}_{N,b},\label{eq:nestin2-1}\\
 & \mathbf{K}_{a,1}\left\langle \Psi\right|T_{1,1}\cong\sum_{c=2}^{N-1}\left\langle \Psi\right|\bar{T}_{a,c}\mathbf{K}_{c,1}+\left\langle \Psi\right|\bar{T}_{a,N}\mathbf{K}_{N,1},\label{eq:nestin3-1}\\
 & \mathbf{K}_{N,1}\left\langle \Psi\right|T_{1,1}\cong\left\langle \Psi\right|\bar{T}_{N,N}\mathbf{K}_{N,1},\label{eq:nestin4-1}
\end{align}
where $\cong$ denotes the equality in the subspace generated by the
Bethe vectors $\mathbb{B}^{0,\dots,0}$. Combining the equations (\ref{eq:nestin1-1}),
(\ref{eq:nestin2-1}), (\ref{eq:nestin3-1}) and (\ref{eq:nestin4-1})
we obtain a new uncrossed $KT$-relation on the subspace generated
by the Bethe states $\mathbb{B}^{0,\dots,0}$
\begin{equation}
\sum_{c=2}^{N}\mathbf{K}_{a,c}^{(2)}(u)\left\langle \Psi\right|T_{c,b}(u)\cong\sum_{c=2}^{N}\left\langle \Psi\right|T_{a,c}(-u)\mathbf{K}_{c,b}^{(2)}(u),
\end{equation}
where
\begin{equation}
\mathbf{K}_{a,b}^{(2)}(z)=\mathbf{K}_{a,b}(z)-\mathbf{K}_{a,1}(z)\mathbf{K}_{N,1}^{-1}(z)\mathbf{K}_{N,b}(z),
\end{equation}
for $a,b=2,\dots,N-1$. Repeating the previous method we can eliminate
the sets of Bethe roots $\bar{t}^{\nu}$, therefore we obtain a recursion
for the overlaps and in the end of the day we obtain an explicit formula
for the overlap. In the $(k+1)$-th step of the nesting we have a
$\mathfrak{gl}(N-2k)$ $K$-matrix:
\begin{equation}
\mathbf{K}_{a,b}^{(k+1)}(z)=\mathbf{K}_{a,b}^{(k)}(z)-\mathbf{K}_{a,k}^{(k)}(z)\left[\mathbf{K}_{N+1-k,k}^{(k)}(z)\right]^{-1}\mathbf{K}_{N+1-k,b}^{(k)}(z),\label{eq:nestedKT-1}
\end{equation}
which satisfy the KT-relations
\begin{equation}
\sum_{c=k}^{N+1-k}\mathbf{K}_{a,c}^{(k)}(u)\left\langle \Psi\right|T_{c,b}(u)\cong\sum_{c=k}^{N+1-k}\left\langle \Psi\right|T_{a,c}(-u)\mathbf{K}_{c,b}^{(k)}(u),
\end{equation}
for $a,b=k,\dots,N+1-k$ where $\cong$ denotes the equality in the
subspace generated by the Bethe vectors $\mathbb{B}^{\dots,0,r_{k},\dots,r_{N-k},0\dots}$. 

We can examine the leading order of the nested $K$-matrices in the
asymptotic limit. It is easy to see that
\begin{equation}
\mathbf{K}_{i,j}^{(k)}(u)=\mathcal{U}_{i,j}^{(k)}\mathbf{1}+\mathcal{O}(u^{-1}).
\end{equation}
From the recursive equations (\ref{eq:nestedKT-1}), if (\ref{eq:UUa})
holds, then the result follows
\begin{equation}
\begin{split}\sum_{n=k}^{N+1-k}\mathcal{U}_{i,n}^{(k)}\mathcal{U}_{n,l}^{(k)} & =\sum_{n=1}^{N}\mathcal{U}_{i,n}\mathcal{U}_{n,l},\\
\sum_{n=k}^{N+1-k}\mathcal{U}_{n,n}^{(k)} & =\sum_{n=1}^{N}\mathcal{U}_{n,n}.
\end{split}
\label{eq:Urec}
\end{equation}
If the original $K$-matrix is a $\mathcal{B}(N,M)$ representation,
then $\mathcal{U}^{(k)}$ is an $N-2k+2$ dimensional matrix with
$M-k+1$ eigenvalues equal to $-1$ and $N-M-k+1$ eigenvalues equal
to $+1$. Based on this, at step $M+1$, $\mathcal{U}^{(M+1)}$ becomes
the identity matrix. 

During the calculation we assumed that the inverse $\left[\mathbf{K}_{N+1-k,k}^{(k)}(z)\right]^{-1}$
exists. As long as the components $\mathcal{U}_{N+1-k,k}^{(k)}$ are
non-zero, the matrices $\left[\mathbf{K}_{N+1-k,k}^{(k)}(z)\right]^{-1}$
exist. As long as the matrix $\mathcal{U}^{(k)}$ is not the identity,
$\mathcal{U}_{N+1-k,k}^{(k)}\neq0$ is guaranteed to hold after appropriate
rotations. As we saw earlier, $\mathcal{U}^{(k)}$ has $M-k+1$ eigenvalues
equal to $-1$, and the last inverse we need is at $k=\left\lfloor \frac{N}{2}\right\rfloor $,
meaning the necessary inverses exist if
\[
M-\left\lfloor \frac{N}{2}\right\rfloor +1\geq1,
\]
i.e., there is at least one eigenvalue equal to $-1$ at step $k=\left\lfloor \frac{N}{2}\right\rfloor $.

In summary, the necessary nested $K$-matrices exist only in the case
of $\mathcal{B}(N,\left\lfloor \frac{N}{2}\right\rfloor )$. In the
next section, we will precisely derive the overlaps for this case.

\section{Derivations of the overlaps\label{sec:Derivations-of-the-overlaps}}

For $Y^{+}(N)$ or $\mathcal{B}(N,\left\lfloor \frac{N}{2}\right\rfloor )$
$K$-matrices, with the recursions of the previous section we can
eliminate all the Bethe roots and in the end the off-shell overlaps
are expressed by a function of the Bethe roots $t_{k}^{\nu}$, the
$\alpha$-functions $\alpha_{\nu},$ and components of the $K$-matrices
$\mathbf{K}_{i,j}$ and the vacuum overlap $\mathbf{B}=\langle\Psi|0\rangle$,
i.e. we can introduce the off-shell overlap functions as
\begin{equation}
\mathbf{S}_{\bar{\alpha},\mathbf{B}}(\bar{t}):=\langle\Psi|\mathbb{B}(\bar{t}).
\end{equation}
In the calculations, we use only the recursion and action formulas
for off-shell Bethe vectors and the $KT$-relation. These formulas
do not depend on the specific representations used in the quantum
space. In a specific case, $\alpha_{k}(t_{j}^{k})$ depends on the
discrete variables $J$, $\Lambda^{(j)}$ and the continuous variables
$\theta_{j}$ (where $J$ is the number of inhomogeneities and representations).
For sufficiently large $J$, the variables $t_{j}^{k}$ and $\alpha_{k}(t_{j}^{k})$
can be considered independent. The overlap also depends on the vacuum
overlap $\mathbf{B}$ which is a matrix in the boundary space. This
matrix also depends on the discrete variables $J$, $\Lambda^{(j)}$,
and the continuous variables $\theta_{j}$. Let the eigenvalues of
$\mathbf{B}$ be $\beta_{\ell}$ for $\ell=1,\dots,d_{B}$. For sufficiently
large $J$, these can also be considered independent variables.

The derivations are essentially the same as those presented in the
earlier paper \cite{Gombor:2021hmj}. In that earlier work, the case
$d_{B}=1$ was studied, meaning that the additional difficulty now
arises from the fact that quantities which were previously scalars
are now matrices. Some parts of the derivations can be applied to
matrices without difficulty. The parts that require special attention
in the case $d_{B}>1$ are discussed in the next subsection \ref{subsec:Theorems-for-K}.
There, we present various theorems that describe the algebra of the
nested $K$-matrices and define commuting subalgebras. In subsections
\ref{subsec:Sum-formulas} and \ref{subsec:On-shell-limit}, we present
the theorems necessary for deriving the on-shell overlaps. These are
essentially the same as those discussed in the $d_{B}=1$ case, with
the difference that the scalar quantities depending on the $K$-matrix
are now replaced by the commuting operators introduced in \ref{subsec:Theorems-for-K}.

\subsection{Theorems for the $K$-matrices\label{subsec:Theorems-for-K}}

\subsubsection{Crossed $K$-matrices}

The nesting of the previous section defines the series of operators
\[
\begin{array}{ccccccccc}
Y^{+}(N) & \to & Y^{+}(N-1) & \to & \dots & \to & Y^{+}(2) & \to & Y^{+}(1)\\
\mathbf{K}^{(1)}\equiv\mathbf{K} & \to & \mathbf{K}^{(2)} & \to & \dots & \to & \mathbf{K}^{(N-1)} & \to & \mathbf{K}^{(N)}\\
\downarrow &  & \downarrow &  & \dots &  & \downarrow &  & \downarrow\\
\mathbf{G}^{(1)} &  & \mathbf{G}^{(2)} &  & \dots &  & \mathbf{G}^{(N-1)} &  & \mathbf{G}^{(N)}
\end{array}
\]
In the following we give the precise definitions and theorems for
these nested operators.
\begin{defn}
\label{def:The-nested--K}The nested $K$-matrices are
\begin{equation}
\mathbf{K}_{a,b}^{(k+1)}(z):=\mathbf{K}_{a,b}^{(k)}(z)-\mathbf{K}_{a,k}^{(k)}(z)\left[\mathbf{K}_{k,k}^{(k)}(z)\right]^{-1}\mathbf{K}_{k,b}^{(k)}(z),\label{eq:crNestedK}
\end{equation}
for $a,b=k+1,\dots,N$ and 
\begin{equation}
\mathbf{G}^{(k)}(u):=\mathbf{K}_{k,k}^{(k)}(u),
\end{equation}
for $k=1,\dots,N$.
\end{defn}
\begin{thm}
\label{thm:crossed-nested-K}The crossed nested $K$-matrices $\mathbf{K}^{(k)}(u-(k-1)/2)$
are representations of the twisted Yangian $Y^{+}(N+1-k)$ and the
G-operators satisfy the relations
\begin{align}
\begin{split}\left[\mathbf{G}^{(k)}(u_{2}),\mathbf{K}_{a,b}^{(k+1)}(u_{1})\right] & =0,\\
\left[\mathbf{G}^{(k)}(u_{2}),\mathbf{G}^{(k)}(u_{1})\right] & =0,\\
\left[\mathbf{G}^{(k)}(u_{2}),\mathbf{B}\right] & =0,
\end{split}
\label{eq:felcs}
\end{align}
for $k=1,\dots,N$ and $k<a,b\leq N$.
\end{thm}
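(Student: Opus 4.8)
The plan is to prove Theorem \ref{thm:crossed-nested-K} by induction on the nesting level $k$, exploiting the structure of the crossed reflection equation \eqref{eq:crBY}. The base case $k=1$ is immediate: $\mathbf{K}^{(1)}=\mathbf{K}$ is by assumption a representation of $Y^{+}(N)$, so $\mathbf{K}^{(1)}(u)$ already satisfies \eqref{eq:crBY}, and the shift by $(k-1)/2=0$ is trivial. For the inductive step, suppose $\mathbf{K}^{(k)}(u-(k-1)/2)$ is a $Y^{+}(N+1-k)$ representation, i.e. the matrices $\{\mathbf{K}^{(k)}_{a,b}(u)\}_{a,b=k}^{N}$ satisfy the crossed reflection equation with the indices running over $\{k,\dots,N\}$ and with the appropriate spectral shift. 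The Schur-complement-type formula \eqref{eq:crNestedK}, $\mathbf{K}^{(k+1)}_{a,b}=\mathbf{K}^{(k)}_{a,b}-\mathbf{K}^{(k)}_{a,k}[\mathbf{K}^{(k)}_{k,k}]^{-1}\mathbf{K}^{(k)}_{k,b}$, is precisely the one-step nested version of the reduction that already appeared in \eqref{eq:nestedKT}; the task is to show algebraically that this combination again closes into a reflection algebra of rank reduced by one.

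The key technical input is a lemma—to be extracted from the componentwise relations \eqref{eq:crBY}—stating that within the $Y^{+}(N+1-k)$ algebra generated by $\mathbf{K}^{(k)}$, the element $\mathbf{G}^{(k)}(u)=\mathbf{K}^{(k)}_{k,k}(u)$ together with its inverse behaves like a ``boundary-space scalar'' relative to the Schur complement: concretely, that the combinations $\mathbf{K}^{(k+1)}_{a,b}$ for $a,b>k$ commute past $\mathbf{G}^{(k)}(v)$ for all $v$, and that $\mathbf{G}^{(k)}(u)$ commutes with itself at different spectral parameters. These are exactly the first two lines of \eqref{eq:felcs}. I would derive them by specializing \eqref{eq:crBY} to $i=j=k$ (to get $[\mathbf{G}^{(k)}(u),\mathbf{G}^{(k)}(v)]=0$ after the off-diagonal terms cancel using $u+v$ and $u-v$ poles cleanly when both indices equal $k$—note the antisymmetric/symmetric $\mathcal{U}$ structure is what makes the residues match), and then to $i=k$, $k<j,l$ or $k<i$, $j=k$ to express $[\mathbf{K}^{(k)}_{k,b}(u),\mathbf{K}^{(k)}_{a,b'}(v)]$ in a form that, after multiplying by $[\mathbf{G}^{(k)}]^{-1}$ on the appropriate side and using $[\mathbf{G}^{(k)}(u),[\mathbf{G}^{(k)}(v)]^{-1}]=0$, collapses the cross-terms in $[\mathbf{K}^{(k+1)}_{a,b}(u),\mathbf{G}^{(k)}(v)]$. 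This is the standard ``nested reflection algebra'' computation, the $d_B>1$ subtlety being only that one must never commute two $\mathbf{G}$'s or a $\mathbf{G}$ with a $[\mathbf{G}]^{-1}$ across each other carelessly—but the self-commutativity just established licenses exactly these moves.

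Given that lemma, the closure of $\mathbf{K}^{(k+1)}$ into $Y^{+}(N-k)$ (with the extra half-unit shift, matching $\mathbf{K}^{(k+1)}(u-k/2)$) follows by the same substitution that is classical for Schur complements of $RTT$- and reflection-type matrices: one writes the reflection equation for $\mathbf{K}^{(k)}$ on the full index set $\{k,\dots,N\}$, isolates the block where all four indices exceed $k$, and substitutes $\mathbf{K}^{(k)}_{a,b}=\mathbf{K}^{(k+1)}_{a,b}+\mathbf{K}^{(k)}_{a,k}[\mathbf{G}^{(k)}]^{-1}\mathbf{K}^{(k)}_{k,b}$; the ``$k$-row/$k$-column'' contributions reorganize—using the $(k,b)$, $(a,k)$ and $(k,k)$ components of the same reflection equation, exactly as \eqref{eq:nestin2}--\eqref{eq:nestin4} did at the operatorial level—into precisely the reflection equation for $\mathbf{K}^{(k+1)}$ with the spectral arguments shifted by $-1/2$. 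The shift bookkeeping is what turns $Y^{+}(N+1-k)$ at argument $u-(k-1)/2$ into $Y^{+}(N-k)$ at $u-k/2$; this must be tracked carefully because the crossed $R$-matrix $\widehat{R}(u)=R^{t_1}(-u)$ and the reflection equation \eqref{eq:refl} involve both $u-v$ and $-u-v$, and eliminating the $k$-th index in the crossed setting is what produces the unit shift (compare the shift $\bar{t}^{i}\to\bar{t}^{i}+i$ appearing in $\hat{\tau}$ in \eqref{eq:eig}).

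Finally, $[\mathbf{G}^{(k)}(u),\mathbf{B}]=0$ is the easy one: $\mathbf{B}=\langle\Psi|0\rangle$ is built from the $u\to\infty$ data, where $\mathbf{K}^{(k)}_{i,j}(u)\to\mathcal{U}^{(k)}_{i,j}\mathbf{1}$ becomes proportional to the identity on $\mathcal{H}_B$ (as shown around \eqref{eq:Urec}), so any leading-order boundary-space operator commutes with $\mathbf{G}^{(k)}$ at every $u$ once we know $[\mathbf{G}^{(k)}(u),\mathbf{G}^{(k)}(v)]=0$ and that $\mathbf{B}$ lies in the algebra generated by the asymptotic $K$-matrix entries; alternatively one notes directly that $\mathbf{B}$ commutes with all $\mathbf{K}_{i,j}(u)$ (hence with every nested $\mathbf{K}^{(k)}_{i,j}$, in particular $\mathbf{G}^{(k)}$) because $\langle\Psi|0\rangle$ intertwines the trivial pseudovacuum action, which is scalar. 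I expect the main obstacle to be neither the commutation lemma nor the closure per se, but the careful propagation of the half-integer spectral shifts through the crossed reflection equation at each nesting step—getting the argument of $Y^{+}(N+1-k)$ to be $u-(k-1)/2$ rather than, say, $u$ or $u-(k-1)$ requires pinning down exactly where the crossed $R$-matrix contributes its shift, and a sign or half-unit error there would propagate to all subsequent statements (in particular to the definition of the $\tilde{\mathcal F}^{(\nu)}_\ell$ in \eqref{eq:onOV-1}).
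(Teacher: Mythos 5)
Your treatment of the first two commutators in \eqref{eq:felcs} matches the paper's route: the paper's Lemmas on $\mathbf{K}_{1,1}$ derive $[\mathbf{K}_{1,1}(u),\mathbf{K}_{1,1}(v)]=0$ from the $((1,1),(1,1))$ component of \eqref{eq:crBY} and $[\mathbf{K}_{1,1}(v),\mathbf{K}^{(2)}_{a,b}(u)]=0$ from the $((1,1),(1,b))$, $((1,1),(a,1))$, $((1,1),(a,b))$ components, exactly as you sketch. For the closure into $Y^{+}(N-k)$, however, you propose a direct substitution of the Schur complement into the reflection equation and hope the cross terms reorganize; the paper instead uses a double-inversion argument: invert the reflection equation (using $\widehat{R}(u)\widehat{R}(-u+N)=1$), observe that the lower-right corner block $\tilde{\mathbf{D}}$ of $\mathbf{K}^{-1}$ automatically satisfies the reduced reflection equation, identify $\tilde{\mathbf{D}}=(\mathbf{D}-\mathbf{C}\mathbf{A}^{-1}\mathbf{B})^{-1}$, and invert once more. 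This is where the shift $(k-1)/2$ actually comes from---the dimension appearing in the crossed unitarity relation changes from $N$ to $N-k+1$---and it is precisely the point you flag as ``the main obstacle'' without resolving it. Your substitution route is not wrong in principle, but as written it does not produce the shift, and you give no mechanism for it.

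The genuine gap is in the third commutator, $[\mathbf{G}^{(k)}(u),\mathbf{B}]=0$. Both of your suggested arguments fail. The operator $\mathbf{B}=\langle\Psi|0\rangle$ is \emph{not} built from the asymptotic ($u\to\infty$) data of the $K$-matrix; it is a nontrivial boundary-space operator (for the fused two-site construction it is a product of Sklyanin determinants $\mathbf{k}^{1,\dots,k}_{1,\dots,k}(\theta_j-l)$). And $\mathbf{B}$ does \emph{not} commute with all $\mathbf{K}_{i,j}(u)$: the one-excitation overlap \eqref{eq:oneroot} is exactly $\alpha_1(z)\mathbf{B}\mathbf{K}_{1,1}^{-1}(z)\mathbf{K}_{1,2}(z)-\mathbf{K}_{1,1}^{-1}(z)\mathbf{K}_{1,2}(z)\mathbf{B}$, which would vanish identically for all representations if your claim held. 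The correct argument is not purely algebraic in the reflection algebra at all---it requires the $KT$-relation. One takes the $(1,1)$ component of the crossed $KT$-relation, applies it to the pseudo-vacuum using $T_{i,j}(z)|0\rangle=\widehat{T}_{j,i}(z)|0\rangle=0$ for $i>j$, obtaining $\mathbf{K}_{1,1}(z)\mathbf{B}\,\lambda_1(z)=\mathbf{B}\,\mathbf{K}_{1,1}(z)\,\hat{\lambda}_1(-z)$, and then invokes the symmetry property $\lambda_1(z)=\hat{\lambda}_1(-z)$ from \eqref{eq:lamProp}, which holds for boundary states constructed via the coproduct; the same argument applied to the nested $KT$-relations gives $[\mathbf{K}^{(k)}_{k,k}(z),\mathbf{B}]=0$ for all $k$. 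Your proposal never invokes the $KT$-relation or the symmetry of the pseudo-vacuum eigenvalues, so this part of the theorem is not proved.
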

\begin{cor}
The operators $\mathbf{G}^{(k)}(u)$ generate a Cartan subalgebra
of $Y^{+}(N)$

\begin{equation}
\left[\mathbf{G}^{(k)}(u_{2}),\mathbf{G}^{(l)}(u_{1})\right]=0,
\end{equation}
for $k,l=1,\dots,N$.
\end{cor}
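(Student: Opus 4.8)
The plan is to prove the slightly stronger statement that $\mathbf{G}^{(k)}(u_{2})$ commutes with \emph{every} entry $\mathbf{K}_{a,b}^{(l)}(u_{1})$ of the nested $K$-matrices with $l>k$ and $a,b\geq l$, and then to specialize to $a=b=l$. It suffices to treat $k<l$: the diagonal case $k=l$ is already contained in Theorem~\ref{thm:crossed-nested-K}, and the case $k>l$ follows from $[\mathbf{G}^{(k)}(u_{2}),\mathbf{G}^{(l)}(u_{1})]=-[\mathbf{G}^{(l)}(u_{1}),\mathbf{G}^{(k)}(u_{2})]$ together with the $k<l$ result applied with the roles of $k$ and $l$ exchanged. (Recall also that for $Y^{+}(N)$ all the nested $K$-matrices, and hence the inverses below, exist, as established in the preceding section.)

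I would argue by finite induction on $l$, running from $l=k+1$ up to $l=N$. The base case $l=k+1$ is exactly the first line of \eqref{eq:felcs}, namely $[\mathbf{G}^{(k)}(u_{2}),\mathbf{K}_{a,b}^{(k+1)}(u_{1})]=0$ for all $k<a,b\leq N$. For the inductive step, assume $\mathbf{G}^{(k)}(u_{2})$ commutes with $\mathbf{K}_{a,b}^{(m)}(u_{1})$ for some fixed $m$ with $k+1\leq m<N$ and all $a,b\geq m$, and apply the recursion \eqref{eq:crNestedK}:
\begin{equation*}
\mathbf{K}_{a,b}^{(m+1)}(u_{1})=\mathbf{K}_{a,b}^{(m)}(u_{1})-\mathbf{K}_{a,m}^{(m)}(u_{1})\bigl[\mathbf{K}_{m,m}^{(m)}(u_{1})\bigr]^{-1}\mathbf{K}_{m,b}^{(m)}(u_{1}),\qquad a,b\geq m+1.
\end{equation*}
Every $\mathbf{K}^{(m)}$-entry on the right has both indices $\geq m$, so by the inductive hypothesis $\mathbf{G}^{(k)}(u_{2})$ commutes with each of them; in particular it commutes with the invertible operator $\mathbf{K}_{m,m}^{(m)}(u_{1})=\mathbf{G}^{(m)}(u_{1})$, hence with its inverse, using the elementary fact that if $[X,Y]=0$ and $Y$ is invertible then $XY^{-1}=Y^{-1}YXY^{-1}=Y^{-1}XYY^{-1}=Y^{-1}X$. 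Therefore $\mathbf{G}^{(k)}(u_{2})$ commutes with the whole right-hand side, i.e.\ with $\mathbf{K}_{a,b}^{(m+1)}(u_{1})$ for all $a,b\geq m+1$, which closes the induction. Taking $l>k$ and $a=b=l$ gives $[\mathbf{G}^{(k)}(u_{2}),\mathbf{G}^{(l)}(u_{1})]=0$, and combining with the $k=l$ and $k>l$ cases above yields the claim for all $k,l$.

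There is no real obstacle here; the corollary is essentially bookkeeping on top of Theorem~\ref{thm:crossed-nested-K}. The only point deserving explicit mention is the appearance of $[\mathbf{G}^{(m)}(u_{1})]^{-1}$ in the recursion, which is dispatched by the observation that commutation is inherited by inverses. Finally, to justify the phrase ``Cartan subalgebra'' one notes that the $\mathbf{G}^{(k)}(u)$ form $N$ mutually commuting one-parameter families inside $Y^{+}(N)$, and, by Theorem~\ref{thm:crossed-nested-K}, each $\mathbf{G}^{(k)}(u-(k-1)/2)$ is the top generating series $\mathbf{K}_{k,k}^{(k)}$ of a $Y^{+}(N+1-k)$ subalgebra, so they realise the expected maximal commutative family of the twisted Yangian.
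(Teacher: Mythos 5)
Your proof is correct and is essentially the argument the paper intends: the corollary is stated without proof as an immediate consequence of Theorem~\ref{thm:crossed-nested-K}, and your finite induction on the nesting level via the recursion (\ref{eq:crNestedK}), together with the observation that commutation is inherited by inverses and the antisymmetry of the commutator for $k>l$, is exactly the bookkeeping the authors leave to the reader.
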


\subsubsection{Uncrossed $K$-matrices}

In the uncrossed case we introduce 
\begin{equation}
n=\left\lfloor \frac{N}{2}\right\rfloor .
\end{equation}
The nesting of the uncrossed $K$-matrices defines the series 
\[
\begin{array}{ccccccccc}
\mathcal{B}(2n+1,n) & \to & \mathcal{B}(2n-1,n-1) & \to & \dots & \to & \mathcal{B}(3,1) & \to & \mathcal{B}(1,0)\\
\mathbf{K}^{(1)}\equiv\mathbf{K} & \to & \mathbf{K}^{(2)} & \to & \dots & \to & \mathbf{K}^{(n)} & \to & \mathbf{K}^{(n+1)}\\
\downarrow &  & \downarrow &  & \dots &  & \downarrow &  & \downarrow\\
\mathbf{G}^{(1)} &  & \mathbf{G}^{(2)} &  & \dots &  & \mathbf{G}^{(n)} &  & \mathbf{G}^{(n+1)}
\end{array}
\]
for $N=2n+1$ and
\[
\begin{array}{ccccccccc}
\mathcal{B}(2n,n) & \to & \mathcal{B}(2n-2,n-1) & \to & \dots & \to & \mathcal{B}(2,1) & \to & \mathcal{B}(1,0)\\
\mathbf{K}^{(1)}\equiv\mathbf{K} & \to & \mathbf{K}^{(2)} & \to & \dots & \to & \mathbf{K}^{(n)} & \to & \mathbf{K}^{(n+1)}\\
\downarrow &  & \downarrow &  & \dots &  & \downarrow &  & \downarrow\\
\mathbf{G}^{(1)} &  & \mathbf{G}^{(2)} &  & \dots &  & \mathbf{G}^{(n)} &  & \mathbf{G}^{(n+1)}
\end{array}
\]
for $N=2n$.
\begin{defn}
The nested $K$-matrices are
\begin{equation}
\mathbf{K}_{a,b}^{(k+1)}(z):=\mathbf{K}_{a,b}^{(k)}(z)-\mathbf{K}_{a,k}^{(k)}(z)\left[\mathbf{K}_{N+1-k,k}^{(k)}(z)\right]^{-1}\mathbf{K}_{N+1-k,b}^{(k)}(z),
\end{equation}
for $a,b=k+1,\dots,N-k$ and 
\begin{equation}
\mathbf{G}^{(k)}(u)=\mathbf{K}_{N+1-k,k}^{(k)}(u),
\end{equation}
for $k=1,\dots,n+1$.
\end{defn}
\begin{thm}
\label{thm:uncrossed-nestedK}The uncrossed nested $K$-matrices $\mathbf{K}^{(k)}(u)$
are representations the reflection algebras $\mathcal{B}(N+2-2k,n+1-k)$
and the G-operators satisfy the relations
\begin{align}
\left[\mathbf{G}^{(k)}(u_{2}),\mathbf{K}_{a,b}^{(k+1)}(u_{1})\right] & =0,\nonumber \\
\left[\mathbf{G}^{(k)}(u_{2}),\mathbf{G}^{(k)}(u_{1})\right] & =0,\label{eq:felcsUcr}\\
\left[\mathbf{G}^{(k)}(u_{2}),\mathbf{B}\right] & =0,\nonumber 
\end{align}
for $k=1,\dots,n+1$ and $k<a,b<N+1-k$.
\end{thm}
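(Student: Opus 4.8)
The plan is to prove Theorem \ref{thm:uncrossed-nestedK} by induction on the nesting level $k$, in close parallel with the crossed case of Theorem \ref{thm:crossed-nested-K} but keeping careful track of the symmetric/antisymmetric structure of the reflection algebra $\mathcal{B}(N,M)$. The base case $k=1$ is the hypothesis: $\mathbf{K}^{(1)}=\mathbf{K}$ is by assumption a representation of $\mathcal{B}(2n+1,n)$ or $\mathcal{B}(2n,n)$, i.e. of $\mathcal{B}(N,n)$ with $n=\lfloor N/2\rfloor$. For the inductive step, assuming $\mathbf{K}^{(k)}(u)$ satisfies the uncrossed reflection equation for $\mathcal{B}(N+2-2k,n+1-k)$, I would show that the ``dressed'' combination
\[
\mathbf{K}^{(k+1)}_{a,b}(z)=\mathbf{K}^{(k)}_{a,b}(z)-\mathbf{K}^{(k)}_{a,k}(z)\bigl[\mathbf{K}^{(k)}_{N+1-k,k}(z)\bigr]^{-1}\mathbf{K}^{(k)}_{N+1-k,b}(z)
\]
again satisfies the uncrossed reflection equation, now for the smaller algebra $\mathcal{B}(N-2k,n-k)$ (i.e. $\mathcal{B}((N+2-2k)-2,(n+1-k)-1)$), and that $\mathbf{G}^{(k)}(u)=\mathbf{K}^{(k)}_{N+1-k,k}(u)$ commutes with everything in the dressed algebra.

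The technical core is a direct but lengthy computation with the uncrossed reflection equation in components (the second displayed $\mathcal{B}$-relation in the excerpt). The strategy I would use is the one familiar from the ``Sklyanin determinant / nesting'' manipulations: index $k$ and its mirror $N+1-k$ play the role of the distinguished ``boundary'' pair, and one shows that the subset of relations involving only $a,b\in\{k+1,\dots,N-k\}$ closes on the dressed matrix $\mathbf{K}^{(k+1)}$ once the components $\mathbf{K}^{(k)}_{k,b}$, $\mathbf{K}^{(k)}_{a,k}$, $\mathbf{K}^{(k)}_{N+1-k,b}$, $\mathbf{K}^{(k)}_{a,N+1-k}$ are eliminated using the $(N+1-k,b)$, $(a,k)$ and $(N+1-k,k)$ components of the reflection equation. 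Concretely, I would mirror equations \eqref{eq:nestin1-1}--\eqref{eq:nestin4-1} but now at the level of the abstract algebra $\mathcal{B}(N+2-2k,n+1-k)$ rather than inside the quotient $\cong$; the algebraic identities are the same, only the interpretation changes. One must verify along the way that $\mathbf{K}^{(k)}_{N+1-k,k}(z)$ is invertible: by \eqref{eq:Urec} and the eigenvalue count stated just before Section \ref{sec:Derivations-of-the-overlaps}, the leading coefficient $\mathcal{U}^{(k)}_{N+1-k,k}$ can be arranged nonzero by a rotation in the auxiliary space as long as $\mathcal{U}^{(k)}$ is not the identity, which holds for all $k\le n$; the case $k=n+1$ is trivial since $\mathcal{B}(1,0)$ has a one-dimensional auxiliary space.

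For the commutation relations \eqref{eq:felcsUcr}, once the closure of the dressed reflection equation is established, $\bigl[\mathbf{G}^{(k)}(u_2),\mathbf{K}^{(k+1)}_{a,b}(u_1)\bigr]=0$ and $\bigl[\mathbf{G}^{(k)}(u_2),\mathbf{G}^{(k)}(u_1)\bigr]=0$ follow from inspecting the appropriate components of the $\mathcal{B}(N+2-2k,n+1-k)$ reflection equation: the diagonal-type element $\mathbf{K}^{(k)}_{N+1-k,k}$ is precisely the combination whose commutator with the ``interior'' entries and with itself is forced to vanish, exactly as $\mathbf{K}^{(k)}_{k,k}$ does in the crossed case of Theorem \ref{thm:crossed-nested-K}. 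The statement $\bigl[\mathbf{G}^{(k)}(u_2),\mathbf{B}\bigr]=0$ follows by taking $u_1\to\infty$ in the first relation and using $\mathbf{B}=\langle\Psi|0\rangle$ together with the fact, derived in Section \ref{sec:Integrable-matrix-product}, that the asymptotic coefficients $\mathbf{A}_{i,j}$ of the $K$-matrix are central (proportional to $\mathbf{1}$); since each nested $K$-matrix has the same structure, $\mathbf{G}^{(k)}(\infty)\propto\mathbf{1}$ commutes with $\mathbf{B}$ trivially, and one propagates down through the tower.

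The main obstacle I anticipate is not conceptual but combinatorial: verifying that the uncrossed reflection equation really closes on the interior index range $\{k+1,\dots,N-k\}$ requires eliminating \emph{two} distinguished indices ($k$ and its mirror $N+1-k$) simultaneously, rather than the single index $k$ of the crossed case, so the bookkeeping of which terms cancel is heavier and the $\delta_{i,j}$-type source terms in the uncrossed relation (absent in the crossed one) must be tracked to confirm they reorganize into the source terms of the smaller reflection algebra. I would handle this by first doing the computation abstractly for $k=1$ to isolate the cancellation pattern, then observing that the inductive step is formally identical after relabeling $N\to N+2-2k$. It may also be cleanest to package the nesting as a single ``Sklyanin-type'' factorization $\mathbf{K}^{(k)} = \mathbf{L}^{(k)}\,\mathbf{K}^{(k+1)}\,\tilde{\mathbf{L}}^{(k)}$ plus a decoupled $\mathbf{G}^{(k)}$ block and quote the standard reflection-algebra factorization lemma, but since the excerpt develops everything from the $KT$-relation I expect the paper to give the direct componentwise verification, deferred to the Appendix.
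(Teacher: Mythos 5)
Your overall architecture (induction on $k$, reduction to the single step $k=1$ after relabelling $N\to N+2-2k$, invertibility of $\mathbf{K}^{(k)}_{N+1-k,k}$ via the asymptotic analysis, and the identification of $\mathbf{G}^{(k)}$ as the element whose commutators close) matches the paper, and your treatment of the first two relations in \eqref{eq:felcsUcr} is essentially the paper's Lemmas on $\mathbf{K}_{N,1}$: one combines the $\bigl((N,1),(N,b)\bigr)$, $\bigl((N,1),(a,1)\bigr)$ and $\bigl((N,1),(a,b)\bigr)$ components of the reflection equation to show that $\bigl[\mathbf{K}_{N,1}(u),\mathbf{K}_{a,1}(v)\mathbf{K}_{N,1}^{-1}(v)\mathbf{K}_{N,b}(v)\bigr]$ reproduces the right-hand side of the $\bigl((N,1),(a,b)\bigr)$ commutator. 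However, for the central claim --- that the dressed matrix $\mathbf{K}^{(2)}$ again satisfies an uncrossed reflection equation --- the paper does \emph{not} perform the direct componentwise elimination you propose. It instead uses a fusion/quantum-minor argument (following Jing et al.): the antisymmetrized product $R_{1,2}(-1)\mathbf{K}_1(u+1)R_{1,2}(-2u-1)\mathbf{K}_2(u)$ defines quantum minors $\mathbf{k}_{a_1,a_2}^{b_1,b_2}$ with $\mathbf{k}_{N,a}^{1,b}(u)=\mathbf{K}_{N,1}(u+1)\mathbf{K}^{(2)}_{a,b}(u)$, and a reflection equation for these minors (obtained from a four-space fused $R$-matrix, restricted to the $(N,1)$ components) descends to the $\mathfrak{gl}_{N-2}$ reflection equation for $\mathbf{K}^{(2)}$ after stripping off the commuting prefactor $\mathbf{G}^{(1)}$. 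This is worth noting because your plan to ``mirror \eqref{eq:nestin1-1}--\eqref{eq:nestin4-1} at the level of the abstract algebra'' is not a literal transplant: those equations hold only in the quotient $\cong$ (modulo operators acting nontrivially on the Bethe subspace), whereas here you need an exact operator identity, so the cancellation pattern must be re-established from scratch. The fusion route makes that closure essentially automatic, which is precisely what it buys over the direct computation.

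There is one genuine gap: your argument for $\bigl[\mathbf{G}^{(k)}(u_2),\mathbf{B}\bigr]=0$. Taking $u_1\to\infty$ in $\bigl[\mathbf{G}^{(k)}(u_2),\mathbf{K}^{(k+1)}_{a,b}(u_1)\bigr]=0$ only yields the trivial statement that $\mathbf{G}^{(k)}(u_2)$ commutes with a multiple of the identity, and the centrality of $\mathbf{G}^{(k)}(\infty)$ says nothing about $\mathbf{G}^{(k)}(u)$ at finite $u$. This relation cannot be extracted from the reflection equation alone; it requires the $KT$-relation. The paper obtains it by evaluating the $(N,1)$ component of the $KT$-relation on the pseudo-vacuum, where the triangularity $T_{i,j}(z)|0\rangle=0$ for $i>j$ collapses the sums to
\begin{equation*}
\mathbf{K}_{N,1}(z)\,\mathbf{B}\,\lambda_{1}(z)=\mathbf{B}\,\mathbf{K}_{N,1}(z)\,\lambda_{N}(-z),
\end{equation*}
and the symmetry property $\lambda_{1}(z)=\lambda_{N}(-z)$ from \eqref{eq:lamProp} then forces the commutator to vanish; the same argument applied to the nested $KT$-relations handles all $k$. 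You should replace your asymptotic argument with this one.
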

\begin{cor}
The operators $\mathbf{G}^{(k)}(u)$ generate a Cartan subalgebra
of $\mathcal{B}(N,n)$

\begin{equation}
\left[\mathbf{G}^{(k)}(u_{2}),\mathbf{G}^{(l)}(u_{1})\right]=0,
\end{equation}
for $k,l=1,\dots,n+1$.
\end{cor}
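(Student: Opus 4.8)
The plan is to reduce the claim to the commutation relations already contained in Theorem \ref{thm:uncrossed-nestedK}, so that essentially no new computation is needed. Without loss of generality assume $k \le l$. The diagonal case $k=l$ is exactly the second relation of (\ref{eq:felcsUcr}), and the case $k>l$ follows from the case $k<l$ by antisymmetry of the commutator; hence it suffices to treat $k<l$.

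The key structural observation is that $\mathbf{G}^{(l)}(u_1)=\mathbf{K}^{(l)}_{N+1-l,l}(u_1)$ lies in the (associative, inversion-closed) subalgebra generated by the entries $\mathbf{K}^{(k+1)}_{a,b}(u_1)$ with $k+1\le a,b\le N-k$. Indeed, iterating the defining recursion one nesting step at a time, at stage $m$ (for $k+1\le m\le l-1$) the entry $\mathbf{K}^{(m+1)}_{a,b}$ is a sum of products of the entries $\mathbf{K}^{(m)}_{a,b}$, $\mathbf{K}^{(m)}_{a,m}$, $\mathbf{K}^{(m)}_{N+1-m,b}$ and the inverse $\bigl[\mathbf{K}^{(m)}_{N+1-m,m}\bigr]^{-1}$, which exists in the $\mathcal{B}(N,\lfloor N/2\rfloor)$ situation as discussed at the end of Section \ref{sec:Integrable-matrix-product}. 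All index pairs occurring here lie in $\{m,\dots,N+1-m\}$, and since $m\ge k+1$ this set is contained in $\{k+1,\dots,N-k\}$. By induction on $m$ from $k+1$ up to $l$, every entry $\mathbf{K}^{(m)}_{a,b}(u_1)$ with valid indices — in particular $\mathbf{K}^{(l)}_{N+1-l,l}(u_1)=\mathbf{G}^{(l)}(u_1)$ — is built from the generators $\mathbf{K}^{(k+1)}_{a,b}(u_1)$, $k+1\le a,b\le N-k$, using only addition, multiplication, and inversion of invertible elements.

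Now $\mathbf{G}^{(k)}(u_2)$ commutes with each of those generators: the index range $k+1\le a,b\le N-k$ coincides with the window $k<a,b<N+1-k$ appearing in the first relation of (\ref{eq:felcsUcr}). Commutation with a fixed operator is preserved under sums and products, and also under inversion, since $[\mathbf{G},X]=0$ with $X$ invertible gives $X^{-1}\mathbf{G}=X^{-1}\mathbf{G}XX^{-1}=\mathbf{G}X^{-1}$. Therefore $\mathbf{G}^{(k)}(u_2)$ commutes with $\mathbf{G}^{(l)}(u_1)$, which is the assertion.

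The only slightly delicate point is the index bookkeeping in the induction: one must check that none of the entries produced by the nesting from stage $k+1$ onward ever leaves the window $(k,N+1-k)$ in which Theorem \ref{thm:uncrossed-nestedK} guarantees commutation with $\mathbf{G}^{(k)}$, and this is precisely what the inequality $m\ge k+1$ secures at every stage. No extra hypothesis on the $K$-matrix is needed beyond the $\mathcal{B}(N,\lfloor N/2\rfloor)$ condition already isolated in Section \ref{sec:Integrable-matrix-product}, which is exactly what makes all the intermediate inverses available. (The crossed corollary is obtained by the same argument applied to Theorem \ref{thm:crossed-nested-K}, using the shifted matrices $\mathbf{K}^{(k)}(u-(k-1)/2)$.)
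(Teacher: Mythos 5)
Your argument is the intended one: the paper states this corollary without an explicit proof, and the implicit justification is exactly your reduction --- for $k<l$ the operator $\mathbf{G}^{(l)}(u_{1})$ is a word (sums, products, inverses) in the entries of $\mathbf{K}^{(k+1)}(u_{1})$, all of which commute with $\mathbf{G}^{(k)}(u_{2})$ by the first relation of (\ref{eq:felcsUcr}), and commutation with a fixed operator survives sums, products and inversion. Your index bookkeeping is correct for all pairs with $l\le n$, and also for $l=n+1$ when $N=2n+1$, since there $\mathbf{G}^{(n+1)}=\mathbf{K}^{(n+1)}_{n+1,n+1}$ sits inside the nonempty window $\{n+1\}$ covered by the theorem. (For $l=n+1$, $N=2n$ and $k<n$ the argument also goes through, with the minor caveat that $\mathbf{G}^{(n+1)}=\mathbf{K}^{(n+1)}_{n,n+1}$ lies outside the stated index range of $\mathbf{K}^{(n+1)}$ but is still the Schur complement built from entries of $\mathbf{K}^{(n)}$ with indices in $\{n,n+1\}\subseteq\{k+1,\dots,N-k\}$.)

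There is, however, one pair your induction does not reach: $(k,l)=(n,n+1)$ when $N=2n$. There the window $k<a,b<N+1-k$ in (\ref{eq:felcsUcr}) becomes $n<a,b<n+1$, i.e.\ empty, so the base of your induction supplies no generators at all; and $\mathbf{G}^{(n+1)}$ is built from the individual entries $\mathbf{K}^{(n)}_{n,n}(u_{1})$, $\mathbf{K}^{(n)}_{n,n+1}(u_{1})$, $\mathbf{K}^{(n)}_{n+1,n+1}(u_{1})$ of the $2\times2$ matrix $\mathbf{K}^{(n)}$, which do \emph{not} individually commute with $\mathbf{G}^{(n)}(u_{2})=\mathbf{K}^{(n)}_{n+1,n}(u_{2})$ --- the $((N,1),(1,1))$ component of the reflection equation already shows $[\mathbf{K}_{N,1}(u),\mathbf{K}_{1,1}(v)]\neq0$ for $N=2$. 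The commutation of $\mathbf{G}^{(n)}$ with the full combination $\mathbf{G}^{(n+1)}$ is still true, but it requires the extra input from the quantum minor (\ref{eq:qminor}) of the $\mathcal{B}(2,1)$ matrix $\mathbf{K}^{(n)}$: by the identity $\mathbf{k}_{N,a}^{1,b}(u)=\mathbf{K}_{N,1}(u+1)\mathbf{K}^{(2)}_{a,b}(u)$ from the proof of Theorem \ref{thm:uncrossed-nestedK} this minor factorizes as $\mathbf{G}^{(n)}(u+1)\mathbf{G}^{(n+1)}(u)$, and since it commutes with the entries of $\mathbf{K}^{(n)}$ (the Sklyanin-determinant argument the paper uses again in Section \ref{subsec:Ym2-overlaps}), the relation $[\mathbf{G}^{(n)}(u_{2}),\mathbf{G}^{(n)}(u+1)]=0$ from the second line of (\ref{eq:felcsUcr}) closes this last case. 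You should either add this step or restrict your uniform argument to $l\le n$ for even $N$ and treat the final node separately.
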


\subsection{Off-shell overlap formulas\label{subsec:Sum-formulas}}

In this subsection we show theorems for the off-shell overlaps $\mathbf{S}_{\alpha,\mathbf{B}}(\bar{t})$.
The derivations can be found in the Appendix \ref{sec:Proofs-for-overlaps}.

The off-shell overlap functions will be written as sums over partitions
of the sets of Bethe roots. A partition $\bar{t}^{s}\vdash\{\bar{t}_{\textsc{i}}^{s},\bar{t}_{\textsc{ii}}^{s}\}$
corresponds to a decomposition into (possibly empty) disjoint subsets
$\bar{t}_{\textsc{i}}^{s},\bar{t}_{\textsc{ii}}^{s}$ such that $\bar{t}^{s}=\bar{t}_{\textsc{i}}^{s}\cup\bar{t}_{\textsc{ii}}^{s}$
and $\bar{t}_{\textsc{i}}^{s}\cap\bar{t}_{\textsc{ii}}^{s}=\emptyset$. 
\begin{thm}
\label{thm:The-sum-formula}The sum formula can be written as
\begin{equation}
\mathbf{S}_{\bar{\alpha},\mathbf{B}}(\bar{t})=\sum_{\mathrm{part}(\bar{t})}\frac{\prod_{\nu=1}^{N-1}f(\bar{t}_{\textsc{ii}}^{\nu},\bar{t}_{\textsc{i}}^{\nu})}{\prod_{\nu=1}^{N-2}f(\bar{t}_{\textsc{ii}}^{\nu+1},\bar{t}_{\textsc{i}}^{\nu})}\bar{\mathbf{Z}}(\bar{t}_{\textsc{ii}})\mathbf{B}\mathbf{Z}(\bar{t}_{\textsc{i}})\prod_{\nu=1}^{N-1}\alpha_{\nu}(\bar{t}_{\textsc{i}}^{\nu}),
\end{equation}
where the sum goes over partitions $\bar{t}^{s}\vdash\{\bar{t}_{\textsc{i}}^{s},\bar{t}_{\textsc{ii}}^{s}\}$.
The highest coefficients $\mathbf{Z}$ and \textup{$\bar{\mathbf{Z}}$
satisfy
\begin{equation}
\mathbf{Z}^{\mathbf{K}}(\bar{t})=(-1)^{\#\bar{t}}\left(\prod_{s=1}^{N-2}f(\bar{t}^{s+1},\bar{t}^{s})\right)^{-1}\left[\bar{\mathbf{Z}}^{\mathbf{K}^{T}}(\pi^{c}(\bar{t}))\right]^{t_{B}},\label{eq:ZZBcr}
\end{equation}
}\textup{\emph{in the crossed case and }}\textup{
\begin{equation}
\mathbf{Z}^{\mathbf{K}}(\bar{t})=\left[\bar{\mathbf{Z}}^{\mathbf{K}^{\Pi}}(\pi^{a}(\bar{t}))\right]^{t_{B}},
\end{equation}
}\textup{\emph{in the uncrossed case, where $^{T}$ denotes the transposition
on the auxiliary and boundary spaces and}}\\
\textup{\emph{ $\mathbf{K}_{i,j}^{\Pi}=\left(\mathbf{K}_{N+1-j,N+1-i}\right)^{t_{B}}$.}}\textup{}
\end{thm}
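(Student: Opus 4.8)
The plan is to establish Theorem~\ref{thm:The-sum-formula} by a nested induction that mirrors the recursive elimination of Bethe roots described in Section~\ref{sec:Integrable-matrix-product}, using the recursion and action formulas for off-shell Bethe vectors together with the $KT$-relation. The first step is to prove the sum formula itself: starting from the $KT$-relation in the form that expresses the creation operator $T_{1,k}$ in terms of the diagonal/annihilation operators and the nested $K$-matrices, I would feed this into the recurrence relation for $\mathbb{B}^{r_1,\dots}$ and track how each application peels off one Bethe root of type~$1$. Iterating until $r_1=0$ reduces the $\mathfrak{gl}_N$ overlap to a sum of $\mathfrak{gl}_{N-1}$ (crossed) or $\mathfrak{gl}_{N-2}$ (uncrossed) overlaps governed by the nested $K$-matrix $\mathbf{K}^{(2)}$; the combinatorial bookkeeping of which subsets of roots get absorbed at each stage is exactly what produces the partition sum $\bar{t}^\nu \vdash \{\bar{t}^\nu_{\textsc{i}},\bar{t}^\nu_{\textsc{ii}}\}$, with the subset $\bar{t}_{\textsc{i}}$ collecting the roots ``created back'' (hence carrying the $\alpha_\nu(\bar{t}^\nu_{\textsc{i}})$ factors) and $\bar{t}_{\textsc{ii}}$ the ones passed through. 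The $f$-function ratio $\prod f(\bar{t}^\nu_{\textsc{ii}},\bar{t}^\nu_{\textsc{i}}) / \prod f(\bar{t}^{\nu+1}_{\textsc{ii}},\bar{t}^\nu_{\textsc{i}})$ arises from collecting the numerical coefficients in the action formulas \eqref{eq:actionT}--\eqref{eq:actionTh}, and the operator $\mathbf{B}=\langle\Psi|0\rangle$ sits between the two highest coefficients because it is what the recursion hits once all roots are removed. One must check at each level that the $\mathbf{G}^{(k)}$ operators factor out cleanly — this is where Theorem~\ref{thm:crossed-nested-K} and Theorem~\ref{thm:uncrossed-nestedK} enter, guaranteeing that the nested $K$-matrices again satisfy a $KT$-relation of the same shape so the induction closes.

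Once the sum formula holds, the identification of the highest coefficients $\mathbf{Z}$ and $\bar{\mathbf{Z}}$ follows from the structure of the recursion in two complementary ways. Running the recursion ``forward'' (eliminating roots via creation-to-annihilation substitutions) produces the highest coefficient $\mathbf{Z}$; running it using the crossed (resp. $\Pi$-conjugated) form of the $KT$-relation produces $\bar{\mathbf{Z}}$ but with the monodromy matrix replaced by $\bar{T}(-u)$, which is exactly what converts $\bar t$ into $\pi^c(\bar t)$ (resp. $\pi^a(\bar t)$) by the calculations leading to \eqref{eq:chpair-1} and \eqref{eq:chpair}. The transposition $^{t_B}$ on the boundary space appears because in the crossed $KT$-relation the roles of the left and right $\mathbf{K}$-factors are swapped relative to the uncrossed one, so the boundary-space matrix product is reversed; and the $^T$ on the auxiliary space plus the replacement $\mathbf{K}\to\mathbf{K}^T$ (resp. $\mathbf{K}\to\mathbf{K}^\Pi$ with $\mathbf{K}^\Pi_{i,j}=(\mathbf{K}_{N+1-j,N+1-i})^{t_B}$) tracks how the crossed $R$-matrix $\widehat R = R^{t_1}(-u)$ and the pseudo-vacuum symmetry \eqref{eq:lamProp} act on the auxiliary indices. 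The overall scalar $(-1)^{\#\bar t}\big(\prod_s f(\bar t^{s+1},\bar t^s)\big)^{-1}$ in the crossed case is a normalization artifact that I would fix by comparing the two recursions on a convenient reference configuration (e.g. a single root, or the all-empty state where $\mathbf{Z}(\emptyset)=\bar{\mathbf{Z}}(\emptyset)=\mathbf{1}$).

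The main obstacle, compared to the scalar $d_B=1$ case treated in \cite{Gombor:2021hmj}, is noncommutativity in the boundary space: the nested $K$-matrix entries $\mathbf{K}^{(k)}_{a,b}(z)$ are operators on $\mathcal{H}_B$ that do not commute, so every step where the $d_B=1$ argument freely reordered scalar factors must be re-examined to confirm that the ordering is forced (and correct) by the derivation rather than chosen. In particular, the placement of $\bar{\mathbf{Z}}(\bar t_{\textsc{ii}})$ to the \emph{left} of $\mathbf{B}$ and $\mathbf{Z}(\bar t_{\textsc{i}})$ to the \emph{right} must be shown to be the unique ordering produced by the recursion, and the appearance of $^{t_B}$ in \eqref{eq:ZZBcr} — rather than a plain transpose-free relation — is precisely the signature that the two recursions build the boundary-space product in opposite orders. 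The commutation relations \eqref{eq:felcs} and \eqref{eq:felcsUcr} are what make this tractable: they ensure the $\mathbf{G}^{(k)}$ that get divided out at each nesting step commute with everything relevant and can be safely pulled through, so the only genuinely noncommutative structure that survives is the single factored product $\bar{\mathbf{Z}}\,\mathbf{B}\,\mathbf{Z}$. I would therefore organize the proof so that all uses of noncommutativity are localized to the tracking of this triple product, and defer the routine (representation-independent, order-irrelevant) parts of the computation to the Appendix as the paper indicates.
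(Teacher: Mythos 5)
Your outline of the first step (extracting a partition-sum expansion from the $KT$-relation and the action formulas) matches Lemma~\ref{lem:sum}, but there is a genuine gap at the heart of the theorem: you assert that the weight of each partition factorizes as $\frac{\prod f(\bar{t}_{\textsc{ii}}^{\nu},\bar{t}_{\textsc{i}}^{\nu})}{\prod f(\bar{t}_{\textsc{ii}}^{\nu+1},\bar{t}_{\textsc{i}}^{\nu})}\,\bar{\mathbf{Z}}(\bar{t}_{\textsc{ii}})\,\mathbf{B}\,\mathbf{Z}(\bar{t}_{\textsc{i}})$ because ``the $f$-function ratio arises from collecting the numerical coefficients in the action formulas'' and ``$\mathbf{B}$ sits between the two highest coefficients because it is what the recursion hits once all roots are removed.'' The recursion alone only yields some weight $\mathbf{W}_{\mathbf{B}}(\bar{t}_{\textsc{i}}|\bar{t}_{\textsc{ii}})$ that could a priori depend jointly on both subsets; the nontrivial content of the theorem is that it splits into a universal $f$-prefactor times two HCs, each depending on only one subset, with $\mathbf{B}$ inserted in a specific position. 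The paper proves this (Lemma~\ref{lem:weigths}) not from the recursion but from a co-product argument: decompose $\mathcal{H}=\mathcal{H}^{(1)}\otimes\mathcal{H}^{(2)}$, use the co-product formula \eqref{eq:Bcoprod} for Bethe vectors together with Proposition~\ref{lem:co-prod} for the boundary state, and choose the pseudo-vacuum eigenvalues $\lambda^{(1)},\lambda^{(2)}$ (free parameters in the generalized model) so that exactly one term survives on each side. This forces $\mathbf{W}_{\mathbf{B}}(\bar{t}_{\textsc{i}}|\bar{t}_{\textsc{ii}})\propto\mathbf{W}_{\mathbf{B}^{(2)}}(\emptyset|\bar{t}_{\textsc{ii}})\mathbf{W}_{\mathbf{B}^{(1)}}(\bar{t}_{\textsc{i}}|\emptyset)$, and varying the decomposition together with $\mathbf{B}=\mathbf{B}^{(2)}\mathbf{B}^{(1)}$ then pins down both the $\mathbf{B}$-dependence and the operator ordering in the boundary space. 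Without this (or an equivalent) argument the factorization claim is unsupported.

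A second, smaller problem: in the crossed case you propose to fix the prefactor $(-1)^{\#\bar{t}}\bigl(\prod_{s}f(\bar{t}^{s+1},\bar{t}^{s})\bigr)^{-1}$ ``by comparing the two recursions on a convenient reference configuration.'' That factor is not a normalization constant --- it is a nontrivial function of all the Bethe roots, and it originates from the relation $\check{\mathbb{B}}(\bar{t})=A(\bar{t})\mathbb{B}(\mu(\bar{t}))$ between Bethe vectors built from the crossed monodromy matrix and the original ones, combined with the reflection Lemma~\ref{lem:BtB} for $VT^{t}(-u)V$, each established by induction on the recursion \eqref{eq:rec2}. A single reference configuration cannot determine it. Your identification of the roles of $\mathbf{K}^{T}$, $\mathbf{K}^{\Pi}$ and the boundary transposition $^{t_{B}}$ is otherwise in the right spirit and matches the paper's strategy of rearranging the $KT$-relation for the transposed boundary state $\langle\Psi^{t_{B}}|$.
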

The HC-s have pole in the pair structure limit. In the crossed and
the uncrossed cases the pair structure limits are $t_{l}^{\nu}\to-t_{k}^{\nu}-\nu$
and $t_{l}^{N-\nu}\to-t_{k}^{\nu}$, respectively. We can use a common
notation $t_{l}^{\tilde{\nu}}\to-t_{k}^{\nu}-\nu\tilde{c}$ for the
pair structure limits, where 
\begin{equation}
(\tilde{\nu},\tilde{c})=\begin{cases}
(\nu,1), & \text{for the crossed case},\\
(N-\nu,0), & \text{for the uncrossed case}.
\end{cases}
\end{equation}
 
\begin{thm}
\label{thm:HC-poles}The HC-s have poles at $t_{l}^{\tilde{\nu}}\to-t_{k}^{\nu}-\nu\tilde{c}$:
\begin{equation}
\mathbf{Z}(\bar{t})\to\frac{-1}{t_{l}^{\tilde{\nu}}+t_{k}^{\nu}+\nu\tilde{c}}\frac{f(t_{k}^{\nu},\bar{\tau}^{\nu})f(t_{l}^{\tilde{\nu}},\bar{\tau}^{\tilde{\nu}})}{f(t_{k}^{\nu},\bar{t}^{\nu-1})f(t_{l}^{\tilde{\nu}},\bar{t}^{\tilde{\nu}-1})}\mathbf{F}^{(\nu)}(t_{k}^{\nu})\mathbf{Z}(\bar{\tau})+reg.,\label{eq:poleHC-1-1}
\end{equation}
and
\begin{equation}
\bar{\mathbf{Z}}(\bar{t})\to\frac{1}{t_{l}^{\tilde{\nu}}+t_{k}^{\nu}+\nu\tilde{c}}\frac{f(\bar{\tau}^{\nu},t_{k}^{\nu})f(\bar{\tau}^{\tilde{\nu}},t_{l}^{\tilde{\nu}})}{f(\bar{t}^{\nu+1},t_{k}^{\nu})f(\bar{t}^{\tilde{\nu}+1},t_{l}^{\tilde{\nu}})}\bar{\mathbf{Z}}(\bar{\tau})\mathbf{F}^{(\nu)}(t_{k}^{\nu})+reg.,\label{eq:poleHC-1}
\end{equation}
where $\bar{\tau}=\bar{t}\backslash\{t_{k}^{\nu},t_{l}^{\tilde{\nu}}\}$
and 
\begin{equation}
\mathbf{F}^{(\nu)}(z)=(-1)^{\tilde{c}}\left[\mathbf{G}^{(\nu)}(z)\right]^{-1}\mathbf{G}^{(\nu+1)}(z).
\end{equation}
\end{thm}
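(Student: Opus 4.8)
The plan is to prove both residue formulas by induction on the total number of Bethe roots $\#\bar t$, running the nested recursion that defines the highest coefficients. Since Theorem \ref{thm:The-sum-formula} already ties $\mathbf{Z}$ and $\bar{\mathbf{Z}}$ together through boundary-space transposition combined with the involution $\pi^{a/c}$, it is enough to establish the pole of $\mathbf{Z}(\bar t)$ in (\ref{eq:poleHC-1-1}); the companion statement (\ref{eq:poleHC-1}) then follows by applying (\ref{eq:ZZBcr}) (or its uncrossed analogue) to both sides, once one records the elementary compatibility $\mathbf{G}^{(\nu),\mathbf{K}^{T}}(z)=\bigl[\mathbf{G}^{(\nu),\mathbf{K}}(z)\bigr]^{t_{B}}$ (and the corresponding identity for $\mathbf{K}^{\Pi}$), which is immediate from Definition \ref{def:The-nested--K} because transposition commutes with the nesting step. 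Alternatively, and perhaps more transparently, one can run the two inductions in parallel, as the recursions for $\mathbf{Z}$ and $\bar{\mathbf{Z}}$ are mirror images of one another.

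Recall that the nesting of Section \ref{sec:Integrable-matrix-product} expresses $\mathbf{Z}^{\mathbf{K}}(\bar t)$ in terms of $\mathbf{Z}^{\mathbf{K}^{(2)}}$ of the reduced $\mathfrak{gl}(N-1)$ (crossed) resp.\ $\mathfrak{gl}(N-2)$ (uncrossed) problem obtained by eliminating the $\bar t^{1}$, and in the uncrossed case also the $\bar t^{N-1}$, roots. As in the scalar computation of \cite{Gombor:2021hmj}, this produces a closed recursion for the highest coefficient: $\mathbf{Z}^{\mathbf{K}}(\bar t)$ becomes a sum over the ways of routing the outermost roots, each term being a product of $f$-functions with operator combinations of the form $\bigl[\mathbf{G}^{(1)}(z)\bigr]^{-1}\mathbf{K}^{(1)}_{1,b}(z)$ evaluated at those roots, times $\mathbf{Z}^{\mathbf{K}^{(2)}}$ of the remaining roots. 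The one genuinely new feature compared with $d_{B}=1$ is that these coefficients are operators on $\mathcal{H}_{B}$; but by Theorems \ref{thm:crossed-nested-K} and \ref{thm:uncrossed-nestedK} and their corollaries the $\mathbf{G}^{(k)}(z)$ mutually commute and commute with $\mathbf{B}$ and with all $\mathbf{K}^{(l)}$ for $l>k$, so the operator orderings thrown up by the recursion are unambiguous and $\mathbf{F}^{(\nu)}$ can indeed be factored out of the residue, with the side on which it appears dictated by the order in which the nested $\mathbf{K}$-matrices enter the recursion.

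For the base of the induction on the nesting level, $\nu=1$, I would read off the pole directly. A singularity at $t_{l}^{\tilde 1}\to -t_{k}^{1}-\tilde c$ can only come from the $\tfrac{1}{u+v}$- and $\tfrac{1}{u^{2}-v^{2}}$-type terms of the reflection equation --- equivalently, from the crossed/uncrossed $KT$-relation used to peel off the first roots, at the point where the argument of one $\mathbf{K}$ becomes the negative (suitably shifted) of another. The associated residue is a product of $\mathbf{K}^{(1)}$-entries which, after the substitution prescribed by Definition \ref{def:The-nested--K}, collapses to $(-1)^{\tilde c}\bigl[\mathbf{G}^{(1)}(t_{k}^{1})\bigr]^{-1}\mathbf{G}^{(2)}(t_{k}^{1})=\mathbf{F}^{(1)}(t_{k}^{1})$, while the accompanying rational factors assemble into the $f$-ratio of (\ref{eq:poleHC-1-1}). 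For $\nu>1$ there is no pole at the first nesting step: the outermost roots are carried through untouched and one applies the inductive hypothesis to $\mathbf{Z}^{\mathbf{K}^{(2)}}$, using that the nesting shifts level labels by one, so that $\mathbf{F}^{(\nu)}$ of the original problem is $\mathbf{F}^{(\nu-1)}$ of the reduced one. Propagating the residue is then a matter of checking that the $f$-prefactors generated by the recursion combine with the inductive residue into exactly the ratio in (\ref{eq:poleHC-1-1}), with the correct shifted sets $\bar\tau^{\nu-1}$, $\bar\tau^{\nu}$, $\bar\tau^{\tilde\nu}$.

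I expect this last matching to be the main obstacle: one has to verify that the many partition terms of the recursion produce no spurious poles at the pair-structure point, so that the pole is simple, and that the surviving residue carries precisely the stated prefactor. Treating the crossed and uncrossed cases uniformly via the $(\tilde\nu,\tilde c)$ notation adds a layer of case analysis, and one must keep invoking existence of the inverses $[\mathbf{G}^{(k)}]^{-1}$, guaranteed for the $Y^{+}(N)$ and $\mathcal{B}(N,\lfloor N/2\rfloor)$ classes by Section \ref{sec:Integrable-matrix-product}. Once the scalar identities of \cite{Gombor:2021hmj} are in hand, however, the extra work is organizational rather than conceptual: the commutativity corollaries reduce the matrix case to bookkeeping of operator order, and the proof should come out as a careful induction on $\#\bar t$ in which every rearrangement is legitimized by Theorems \ref{thm:crossed-nested-K}--\ref{thm:uncrossed-nestedK}.
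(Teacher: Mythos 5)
Your proposal follows essentially the same route as the paper: derive a recursion for the highest coefficients from the crossed/uncrossed $KT$-relation and the action formulas, prove the simple pole by induction (base case $\nu=1$ read off from the first nesting step, $\nu>1$ by descending to the nested $K$-matrix where $\mathbf{F}^{(\nu)}$ becomes $\mathbf{F}^{(\nu-1)}$ of the reduced problem), handle the matrix-valued coefficients via the commutativity of the $\mathbf{G}$-operators from Theorems \ref{thm:crossed-nested-K} and \ref{thm:uncrossed-nestedK}, and obtain the companion formula for the other HC from the transposition relation of Theorem \ref{thm:The-sum-formula}. The only cosmetic difference is that the paper works primarily with $\bar{\mathbf{Z}}$ rather than $\mathbf{Z}$ and imports the scalar $f$-factor bookkeeping wholesale from the $d_{B}=1$ computation, exactly as you anticipate.
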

We have the commuting set of operators

\begin{equation}
\left[\mathbf{F}^{(k)}(u),\mathbf{F}^{(l)}(v)\right]=0,\quad\left[\mathbf{F}^{(k)}(u),\mathbf{B}\right]=0,
\end{equation}
therefore we can diagonalize the matrices simultaneously:
\begin{equation}
\left(\mathbf{F}^{(k)}(u)\right)_{n,m}=\delta_{n,m}\mathcal{F}_{n}^{(k)}(u),\quad\left(\mathbf{B}\right)_{n,m}=\delta_{n,m}\beta_{n}.
\end{equation}
The off-shell overlap have the sum formula
\begin{equation}
\begin{split}\langle\mathrm{MPS}|\mathbb{B}(\bar{t}) & =\sum_{\mathrm{part}}\frac{\prod_{\nu=1}^{N-1}f(\bar{t}_{\textsc{ii}}^{\nu},\bar{t}_{\textsc{i}}^{\nu})}{\prod_{\nu=1}^{N-2}f(\bar{t}_{\textsc{ii}}^{\nu+1},\bar{t}_{\textsc{i}}^{\nu})}\mathrm{Tr}\left[\bar{\mathbf{Z}}(\bar{t}_{\textsc{ii}})\mathbf{B}\mathbf{Z}(\bar{t}_{\textsc{i}})\right]\prod_{\nu}\alpha_{\nu}(\bar{t}_{\textsc{i}}^{\nu})\\
 & =\sum_{\ell=1}^{d_{B}}\beta_{\ell}\sum_{\mathrm{part}}\frac{\prod_{\nu=1}^{N-1}f(\bar{t}_{\textsc{ii}}^{\nu},\bar{t}_{\textsc{i}}^{\nu})}{\prod_{\nu=1}^{N-2}f(\bar{t}_{\textsc{ii}}^{\nu+1},\bar{t}_{\textsc{i}}^{\nu})}\left(\mathbf{Z}(\bar{t}_{\textsc{i}})\bar{\mathbf{Z}}(\bar{t}_{\textsc{ii}})\right)_{\ell,\ell}\prod_{\nu}\alpha_{\nu}(\bar{t}_{\textsc{i}}^{\nu}).
\end{split}
\end{equation}
We can define the expressions
\begin{equation}
S_{\bar{\alpha}}^{(\ell)}(\bar{t}):=\sum_{\mathrm{part}}\frac{\prod_{\nu=1}^{N-1}f(\bar{t}_{\textsc{ii}}^{\nu},\bar{t}_{\textsc{i}}^{\nu})}{\prod_{\nu=1}^{N-2}f(\bar{t}_{\textsc{ii}}^{\nu+1},\bar{t}_{\textsc{i}}^{\nu})}\left(\mathbf{Z}(\bar{t}_{\textsc{i}})\bar{\mathbf{Z}}(\bar{t}_{\textsc{ii}})\right)_{\ell,\ell}\prod_{\nu}\alpha_{\nu}(\bar{t}_{\textsc{i}}^{\nu}),
\end{equation}
for which
\begin{equation}
\langle\mathrm{MPS}|\mathbb{B}(\bar{t})=\sum_{\ell=1}^{d_{B}}\beta_{\ell}S_{\bar{\alpha}}^{(\ell)}(\bar{t}).
\end{equation}

\begin{thm}
\label{thm:Spair}The expressions $S_{\bar{\alpha}}^{(n)}(\bar{t})$
has the following pair structure limit
\begin{equation}
\lim_{t_{l}^{\tilde{\nu}}\to-t_{k}^{\nu}-\nu\tilde{c}}S_{\bar{\alpha}}^{(\ell)}(\bar{t})=\mathcal{F}_{\ell}^{(\nu)}(t_{k}^{\nu})X_{k}^{\nu}\frac{f(\bar{\tau}^{\nu},t_{k}^{\nu})f(\bar{\tau}^{\tilde{\nu}},t_{l}^{\tilde{\nu}})}{f(\bar{t}^{\nu+1},t_{k}^{\nu})f(\bar{t}^{\tilde{\nu}+1},t_{l}^{\tilde{\nu}})}S_{\bar{\alpha}^{mod}}^{(\ell)}(\bar{\tau})+\tilde{S},\label{eq:Spair}
\end{equation}
where $\bar{\tau}=\bar{t}\backslash\{t_{k}^{\nu},t_{l}^{\tilde{\nu}}\}$
and
\begin{equation}
X_{k}^{\nu}=-\frac{d}{dt_{k}^{\nu}}\log\alpha(t_{k}^{\nu}),\label{eq:Xdef}
\end{equation}
the $\tilde{S}$ does not depend on \textup{$X_{k}^{\nu}$.}\textup{\emph{
The modified $\alpha$-s are}}\textup{
\begin{equation}
\alpha_{\mu}^{mod}(u)=\alpha_{\mu}(u)\left(\frac{f(t_{k}^{\nu},u)}{f(u,t_{k}^{\nu})}\right)^{\delta_{\nu,\mu}}\left(\frac{f(-t_{k}^{\nu}-\nu\tilde{c},u)}{f(u,-t_{k}^{\nu}-\nu\tilde{c})}\right)^{\delta_{\tilde{\nu},\mu}}\frac{f(u,t_{k}^{\nu})^{\delta_{\nu,\mu-1}}}{f(t_{k}^{\nu},u)^{\delta_{\nu,\mu+1}}}\frac{f(u,-t_{k}^{\nu}-\nu\tilde{c})^{\delta_{\tilde{\nu},\mu-1}}}{f(-t_{k}^{\nu}-\nu\tilde{c},u)^{\delta_{\tilde{\nu},\mu+1}}}.\label{eq:amod}
\end{equation}
}
\end{thm}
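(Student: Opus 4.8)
The plan is to reduce the statement to the scalar ($d_{B}=1$) identity already established in \cite{Gombor:2021hmj} and then run that argument componentwise. First I would diagonalize simultaneously the commuting family $\{\mathbf{F}^{(k)}(u)\}_{k}$ together with $\mathbf{B}$, which is legitimate by the corollaries to Theorems \ref{thm:crossed-nested-K} and \ref{thm:uncrossed-nestedK} (the $\mathbf{G}^{(k)}$, hence the $\mathbf{F}^{(k)}$, generate a Cartan subalgebra and commute with $\mathbf{B}$); in such a basis $\mathbf{F}^{(\nu)}(z)$ acts on the $\ell$-th component as the scalar $\mathcal{F}_{\ell}^{(\nu)}(z)$. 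Since $S_{\bar{\alpha}}^{(\ell)}(\bar{t})$ is the weighted partition sum of the $(\ell,\ell)$ entry of $\mathbf{Z}(\bar{t}_{\textsc{i}})\bar{\mathbf{Z}}(\bar{t}_{\textsc{ii}})$, the whole computation takes place inside this fixed component.

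Next I would organize the partition sum according to where the distinguished pair $\{t_{k}^{\nu},t_{l}^{\tilde{\nu}}\}$ lands: four classes $(\textsc{i},\textsc{i})$, $(\textsc{ii},\textsc{ii})$, $(\textsc{i},\textsc{ii})$, $(\textsc{ii},\textsc{i})$, according to whether $t_{k}^{\nu}$ resp. $t_{l}^{\tilde{\nu}}$ is put in $\bar{t}_{\textsc{i}}$ or $\bar{t}_{\textsc{ii}}$. As $t_{l}^{\tilde{\nu}}\to -t_{k}^{\nu}-\nu\tilde{c}$ the classes $(\textsc{i},\textsc{ii})$ and $(\textsc{ii},\textsc{i})$ stay regular, while by Theorem \ref{thm:HC-poles} the class $(\textsc{i},\textsc{i})$ develops a simple pole through $\mathbf{Z}$ and the class $(\textsc{ii},\textsc{ii})$ through $\bar{\mathbf{Z}}$. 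Using \eqref{eq:poleHC-1-1} and \eqref{eq:poleHC-1} I would extract these poles: in both classes the residue produces $\mathbf{F}^{(\nu)}(t_{k}^{\nu})$ on the correct side ($\mathbf{F}^{(\nu)}\mathbf{Z}(\bar{\tau}_{\textsc{i}})$ out of $\mathbf{Z}$, $\bar{\mathbf{Z}}(\bar{\tau}_{\textsc{ii}})\mathbf{F}^{(\nu)}$ out of $\bar{\mathbf{Z}}$), so on the $\ell$-th component it collapses to the scalar $\mathcal{F}_{\ell}^{(\nu)}(t_{k}^{\nu})$ multiplying $(\mathbf{Z}(\bar{\tau}_{\textsc{i}})\bar{\mathbf{Z}}(\bar{\tau}_{\textsc{ii}}))_{\ell,\ell}$ with $\bar{\tau}=\bar{t}\setminus\{t_{k}^{\nu},t_{l}^{\tilde{\nu}}\}$. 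Re-summing each class over the partitions of $\bar{\tau}$, the partition-\emph{dependent} residue factors $f(t_{k}^{\nu},\bar{\tau}_{\textsc{i}}^{\nu})/f(t_{k}^{\nu},\bar{t}_{\textsc{i}}^{\nu-1})$ and their $t_{l}^{\tilde{\nu}}$-analogues are exactly what turns the original weights $\prod_{\mu}\alpha_{\mu}(\bar{t}_{\textsc{i}}^{\mu})$ into $\prod_{\mu}\alpha_{\mu}^{mod}(\bar{t}_{\textsc{i}}^{\mu})$ with the modification \eqref{eq:amod}; thus class $(\textsc{i},\textsc{i})$ contributes $\tfrac{-1}{t_{l}^{\tilde{\nu}}+t_{k}^{\nu}+\nu\tilde{c}}\,\mathcal{F}_{\ell}^{(\nu)}(t_{k}^{\nu})\,C_{\textsc{i}}\,S_{\bar{\alpha}^{mod}}^{(\ell)}(\bar{\tau})+\mathrm{reg.}$ and class $(\textsc{ii},\textsc{ii})$ contributes $\tfrac{+1}{t_{l}^{\tilde{\nu}}+t_{k}^{\nu}+\nu\tilde{c}}\,\mathcal{F}_{\ell}^{(\nu)}(t_{k}^{\nu})\,C_{\textsc{ii}}\,S_{\bar{\alpha}^{mod}}^{(\ell)}(\bar{\tau})+\mathrm{reg.}$, with $C_{\textsc{i}},C_{\textsc{ii}}$ partition-independent.

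Then comes the cancellation of the pole and the emergence of the derivative. The prefactors differ only in that the class with $t_{k}^{\nu},t_{l}^{\tilde{\nu}}\in\bar{t}_{\textsc{i}}$ carries the explicit vacuum weights $\alpha_{\nu}(t_{k}^{\nu})\alpha_{\tilde{\nu}}(t_{l}^{\tilde{\nu}})$, whereas the class with them in $\bar{t}_{\textsc{ii}}$ does not, and the accompanying $f$-ratio is arranged — this is precisely the content of the choice \eqref{eq:amod} — so that the residual $f$-parts of $C_{\textsc{i}}$ and $C_{\textsc{ii}}$ agree at the pair point; by the pair-structure symmetry \eqref{eq:symProp} one also has $\alpha_{\nu}(t_{k}^{\nu})\alpha_{\tilde{\nu}}(t_{l}^{\tilde{\nu}})\to 1$ there. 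Hence the would-be singular combination $(\textsc{i},\textsc{i})+(\textsc{ii},\textsc{ii})$ has a numerator vanishing to first order in $t_{l}^{\tilde{\nu}}+t_{k}^{\nu}+\nu\tilde{c}$, which kills the simple pole; expanding to first order and using $\alpha_{\nu}(t_{k}^{\nu})\alpha_{\tilde{\nu}}'(-t_{k}^{\nu}-\nu\tilde{c})=\tfrac{d}{dt_{k}^{\nu}}\log\alpha_{\nu}(t_{k}^{\nu})=-X_{k}^{\nu}$ (again from \eqref{eq:symProp}) yields the finite term $\mathcal{F}_{\ell}^{(\nu)}(t_{k}^{\nu})\,X_{k}^{\nu}\,\tfrac{f(\bar{\tau}^{\nu},t_{k}^{\nu})f(\bar{\tau}^{\tilde{\nu}},t_{l}^{\tilde{\nu}})}{f(\bar{t}^{\nu+1},t_{k}^{\nu})f(\bar{t}^{\tilde{\nu}+1},t_{l}^{\tilde{\nu}})}\,S_{\bar{\alpha}^{mod}}^{(\ell)}(\bar{\tau})$, exactly as in \eqref{eq:Spair}. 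The remaining pieces — the $f'$-contributions of the first-order expansion, the regular parts of $(\textsc{i},\textsc{i})$ and $(\textsc{ii},\textsc{ii})$, and the whole of $(\textsc{i},\textsc{ii})$ and $(\textsc{ii},\textsc{i})$ — assemble into $\tilde{S}$; since no derivative of any $\alpha$ was ever taken in producing them, $\tilde{S}$ depends on $\alpha_{\nu}(t_{k}^{\nu})$ at most through its value, never through $X_{k}^{\nu}$.

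The main obstacle, exactly as in the $d_{B}=1$ derivation, is the $f$-function bookkeeping of the previous two paragraphs: one must verify that the partition-dependent parts of the $\mathbf{Z}$- and $\bar{\mathbf{Z}}$-residues combine with the original $f$-weights to reproduce $\alpha^{mod}$ \emph{precisely} as in \eqref{eq:amod}, and that the partition-independent leftover $f$-factors collapse to the clean ratio displayed in \eqref{eq:Spair}; the crossed and uncrossed variants must be handled uniformly, which the $(\tilde{\nu},\tilde{c})$ notation already permits. By contrast the genuinely new $d_{B}>1$ ingredient is mild: it suffices that $\mathbf{F}^{(\nu)}(t_{k}^{\nu})$ is extracted on the correct side of each highest coefficient and, being simultaneously diagonalizable with $\mathbf{B}$ and with all the other $\mathbf{F}^{(k)}$, acts as the number $\mathcal{F}_{\ell}^{(\nu)}$ on the $\ell$-th component (this commutativity also ensures the $\ell$-labelling is preserved when the relation is iterated, so that $S_{\bar{\alpha}^{mod}}^{(\ell)}(\bar{\tau})$ makes sense), after which the scalar identity of \cite{Gombor:2021hmj} applies componentwise and gives \eqref{eq:Spair}.
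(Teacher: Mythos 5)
Your proposal is correct and follows essentially the same route as the paper's own proof in Appendix \ref{sec:Proofs-for-overlaps}: isolate the two partition classes in which the pair $\{t_{k}^{\nu},t_{l}^{\tilde{\nu}}\}$ sits in the same subset, extract the simple poles via Theorem \ref{thm:HC-poles} so that $\mathbf{F}^{(\nu)}(t_{k}^{\nu})$ appears, absorb the partition-dependent residue factors into $\alpha^{mod}$, and convert the vanishing combination $\bigl(1-\alpha_{\nu}(t_{k}^{\nu})\alpha_{\tilde{\nu}}(t_{l}^{\tilde{\nu}})\bigr)/(t_{l}^{\tilde{\nu}}+t_{k}^{\nu}+\nu\tilde{c})$ into $X_{k}^{\nu}$ using \eqref{eq:symProp}. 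The only cosmetic difference is that you diagonalize $\mathbf{F}^{(\nu)}$ and $\mathbf{B}$ at the outset, whereas the paper carries the matrices through, takes the trace at the end, and invokes the linear independence of the $\beta_{\ell}$ to pass to the componentwise statement — the two orderings are equivalent given the commutation relations \eqref{eq:felcs} and \eqref{eq:felcsUcr}.
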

The definition of $\alpha_{\mu}^{mod}$ is compatible with the on-shell
limit. In the original overlap function, the on-shell Bethe roots
satisfy the Bethe equations 
\begin{equation}
\alpha_{\mu}(t_{k}^{\mu})=\frac{f(t_{k}^{\mu},\bar{t}_{k}^{\mu})}{f(\bar{t}_{k}^{\mu},t_{k}^{\mu})}\frac{f(\bar{t}^{\mu+1},t_{k}^{\mu})}{f(t_{k}^{\mu},\bar{t}^{\mu-1})}.\label{eq:BE}
\end{equation}
On the right-hand side of (\ref{eq:Spair}), we have overlaps where
the Bethe roots $t_{k}^{\nu}$ and $t_{l}^{\tilde{\nu}}$ have been
omitted. The remaining Bethe roots must satisfy the same equations
on-shell as before. It is clear that with the original $\alpha$-functions,
the Bethe equations are not satisfied for the set $\bar{\tau}$:
\begin{equation}
\alpha_{\mu}(\tau_{k}^{\mu})\neq\frac{f(\tau_{k}^{\mu},\bar{\tau}_{k}^{\mu})}{f(\bar{\tau}_{k}^{\mu},\tau_{k}^{\mu})}\frac{f(\bar{\tau}^{\mu+1},\tau_{k}^{\mu})}{f(\tau_{k}^{\mu},\bar{\tau}^{\mu-1})}.
\end{equation}
However, with the modified $\alpha$-functions, they are satisfied.
To see this, separate the terms on the right-hand side of the Bethe
equations (\ref{eq:BE}) according to the decomposition $\bar{t}=\bar{\tau}\cup\{t_{k}^{\nu},t_{l}^{\tilde{\nu}}\}$:
\begin{equation}
\frac{f(\tau_{k}^{\mu},\bar{t}_{k}^{\mu})}{f(\bar{t}_{k}^{\mu},\tau_{k}^{\mu})}\frac{f(\bar{t}^{\mu+1},\tau_{k}^{\mu})}{f(\tau_{k}^{\mu},\bar{t}^{\mu-1})}=\frac{f(\tau_{k}^{\mu},\bar{\tau}_{k}^{\mu})}{f(\bar{\tau}_{k}^{\mu},\tau_{k}^{\mu})}\frac{f(\bar{\tau}^{\mu+1},\tau_{k}^{\mu})}{f(\tau_{k}^{\mu},\bar{\tau}^{\mu-1})}\left(\frac{f(\tau_{k}^{\mu},t_{k}^{\nu})}{f(t_{k}^{\nu},\tau_{k}^{\mu})}\right)^{\delta_{\nu,\mu}}\left(\frac{f(\tau_{k}^{\mu},t_{l}^{\tilde{\nu}})}{f(t_{l}^{\tilde{\nu}},\tau_{k}^{\mu})}\right)^{\delta_{\tilde{\nu},\mu}}\frac{f(t_{k}^{\nu},\tau_{k}^{\mu})^{\delta_{\nu,\mu+1}}}{f(\tau_{k}^{\mu},t_{k}^{\nu})^{\delta_{\nu,\mu-1}}}\frac{f(t_{l}^{\tilde{\nu}},\tau_{k}^{\mu})^{\delta_{\tilde{\nu},\mu+1}}}{f(\tau_{k}^{\mu},t_{l}^{\tilde{\nu}})^{\delta_{\tilde{\nu},\mu-1}}}.
\end{equation}
Substituting back into the original Bethe equation (\ref{eq:BE}),
we obtain the result
\begin{equation}
\alpha_{\mu}^{mod}(\tau_{k}^{\mu})=\frac{f(\tau_{k}^{\mu},\bar{\tau}_{k}^{\mu})}{f(\bar{\tau}_{k}^{\mu},\tau_{k}^{\mu})}\frac{f(\bar{\tau}^{\mu+1},\tau_{k}^{\mu})}{f(\tau_{k}^{\mu},\bar{\tau}^{\mu-1})},
\end{equation}
where $\alpha_{\mu}^{mod}$ is defined according to (\ref{eq:amod}),
after the substitution $t_{l}^{\tilde{\nu}}=-t_{k}^{\nu}-\nu\tilde{c}$.

\subsection{On-shell limit\label{subsec:On-shell-limit}}

We saw that the on-shell overlaps are non-vanishing only if the Bethe
roots have pair structure. In the crossed case the Bethe roots have
chiral pair structure. For simplicity, we assume that the numbers
$r_{\nu}$ are even \footnote{The case of odd values is discussed in Section S-VIII of \cite{Gombor:2024iix}.}.
In the chiral pair structure, we can decompose the sets $\bar{t}^{\nu}\vdash\{\bar{t}^{+,\nu},\bar{t}^{-,\nu}\}$,
where $\#\bar{t}^{+,\nu}=\#\bar{t}^{-,\nu}=r_{\nu}^{+}=r_{\nu}/2$.
In the pair structure limit, $t_{k}^{-,\nu}=-t_{k}^{+,\nu}-\nu$ for
$k=1,\dots,r_{\nu}^{+}$. 

In the uncrossed case the Bethe roots have achiral pair structure.
For odd $N$, i.e., when $N=2n+1$, we define the notations $\bar{t}^{+,\nu}=\bar{t}^{\nu}$
and $\bar{t}^{-,\nu}=\bar{t}^{N-\nu}$, where $\#\bar{t}^{+,\nu}=\#\bar{t}^{-,\nu}=r_{\nu}^{+}=r_{\nu}$
for $\nu=1,\dots,n$. In the pair structure limit, $t_{k}^{-,\nu}=-t_{k}^{+,\nu}$
for $k=1,\dots,r_{\nu}^{+}$. 

If $N=2n$, then for simplicity we assume that the number $r_{n}$
is even. In this case, we can decompose the set $\bar{t}^{n}\vdash\{\bar{t}^{+,n},\bar{t}^{-,n}\}$
where $\#\bar{t}^{+,n}=\#\bar{t}^{-,n}=r_{n}^{+}=r_{n}/2$. For the
other nodes, we introduce the earlier notation $\bar{t}^{+,\nu}=\bar{t}^{\nu}$
and $\bar{t}^{-,\nu}=\bar{t}^{N-\nu}$ for $\nu=1,\dots,n-1$. In
the pair structure limit, $t_{k}^{-,\nu}=-t_{k}^{+,\nu}$ for $k=1,\dots,r_{\nu}^{+}$
and $\nu=1,\dots,n$. 

In both the chiral and achiral cases, we introduce the set of sets
$\bar{t}^{\pm}=\left\{ \bar{t}^{\pm,\nu}\right\} _{\nu=1}^{n_{+}}$,
where
\[
n_{+}=\begin{cases}
N-1, & \text{for crossed cases,}\\
n, & \text{for uncrossed cases.}
\end{cases}
\]

The Theorem \ref{thm:Spair} is sufficient to prove the form of the
overlaps involving Gaudin determinants.
\begin{defn}
The Gaudin matrices are defined by
\begin{align}
G_{j,k}^{\pm,(\mu,\nu)} & =-\left(\frac{\partial}{\partial t_{k}^{+,\nu}}\pm\frac{\partial}{\partial t_{k}^{-,\nu}}\right)\log\Phi_{j}^{+,(\mu)}\Biggr|_{\bar{t}^{-}=\pi^{a/c}\left(\bar{t}^{+}\right)},
\end{align}
where $\mu,\nu=1,\dots,n_{+}$, $j,k=1,\dots,r_{\nu}^{+}$ and
\begin{equation}
\Phi_{k}^{+,(\mu)}=\alpha_{\mu}(t_{k}^{+,\mu})\frac{f(\bar{t}_{k}^{\mu},t_{k}^{+,\mu})}{f(t_{k}^{+,\mu},\bar{t}_{k}^{\mu})}\frac{f(t_{k}^{+,\mu},\bar{t}^{\mu-1})}{f(\bar{t}^{\mu+1},t_{k}^{+,\mu})}.
\end{equation}
\end{defn}
\begin{cor}
The Gaudin-determinant factorizes as
\begin{equation}
\det G=\det G^{+}\det G^{-},
\end{equation}
for Bethe roots with pair structure $\bar{t}=\pi^{a/c}\left(\bar{t}\right)$.
\end{cor}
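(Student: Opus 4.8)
The plan is to obtain $\det G=\det G^{+}\det G^{-}$ as the block-diagonalization of the full Gaudin matrix induced by the $\mathbb{Z}_{2}$ symmetry that the pair structure forces on the Bethe equations; this is the same mechanism as in the classical computation of overlaps for tensor product states \cite{Brockmann_2014}, and it is essentially a linear-algebra statement once the symmetry is made precise. First I would reorganize the rows and columns of $G$ according to the pair structure. With $R:=\sum_{\nu=1}^{n_{+}}r_{\nu}^{+}$, the $2R$ Bethe roots split into the representatives $t_{k}^{+,\nu}$ and their partners $t_{k}^{-,\nu}$, which on the constraint surface satisfy $t_{k}^{-,\nu}=-t_{k}^{+,\nu}-\nu\tilde{c}$ (and, in the uncrossed case, $\bar{t}^{-,\nu}$ sits at node $N-\nu$); the logarithmic Bethe equations $\log\Phi_{j}^{(\mu)}$ are relabelled the same way into $\log\Phi_{j}^{+,(\mu)}$ and $\log\Phi_{j}^{-,(\mu)}$, with $\Phi_{j}^{+,(\mu)}$ as in the definition of $G^{\pm}$ above.

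The key input is the reflection symmetry of the Bethe equations. Using the symmetry properties (\ref{eq:symProp}) of the $\alpha$-functions together with the behaviour of the $f$-products under $\bar{t}\mapsto\pi^{a/c}(\bar{t})$ --- tracking carefully the punctured self-interaction ratio $f(\bar{t}_{j}^{\mu},t_{j}^{\mu})/f(t_{j}^{\mu},\bar{t}_{j}^{\mu})$ and the two neighbour-node factors $f(t_{j}^{\mu},\bar{t}^{\mu-1})$ and $f(\bar{t}^{\mu+1},t_{j}^{\mu})$ --- one shows that on the pair-structure locus $\Phi_{j}^{-,(\mu)}$ equals $\bigl(\Phi_{j}^{+,(\mu)}\bigr)^{-1}$ with its argument reflected according to $t_{j}^{+,\mu}\mapsto-t_{j}^{+,\mu}-\mu\tilde{c}$. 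Passing to logarithms and differentiating (the chain-rule sign from $t\mapsto-t$ cancelling the sign coming from $\Phi^{-}=1/\Phi^{+}$) this yields, on the locus,
\begin{equation}
G=-\begin{pmatrix}A & B\\ B & A\end{pmatrix},
\end{equation}
where $A$ collects the derivatives of $\log\Phi^{+}$ with respect to the $+$ roots and $B$ those with respect to the $-$ roots, both evaluated at $\bar{t}^{-}=\pi^{a/c}(\bar{t}^{+})$; the equality of the two diagonal blocks (and of the two off-diagonal ones) is precisely the imposed $\mathbb{Z}_{2}$ symmetry.

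Then I would apply the constant invertible transformation sending each pair $(t_{k}^{+,\nu},t_{k}^{-,\nu})$ to its half-sum and half-difference, \emph{simultaneously} on the rows and on the columns; because the same matrix acts on both sides its Jacobian factors cancel and $\det G$ is unchanged, while $G$ becomes $-\mathrm{diag}(A+B,\,A-B)$. Comparing with the definition of $G^{\pm}$ one reads off $-(A+B)=G^{+}$ (the block built from $\partial/\partial t_{k}^{+,\nu}+\partial/\partial t_{k}^{-,\nu}$) and $-(A-B)=G^{-}$, hence $\det G=\det(A+B)\det(A-B)=\det G^{+}\det G^{-}$, using $(-1)^{R}(-1)^{R}=1$. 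I expect the only real obstacle to be the second paragraph: verifying the identity $\Phi^{-}=1/\Phi^{+}\big|_{\mathrm{reflected}}$ on the constraint surface, which requires checking that the shift $-\mu\tilde{c}$ and, in the uncrossed case, the node relabelling $\mu\leftrightarrow N-\mu$ dovetail with the $\alpha$-symmetries $\alpha_{\mu}(u)=1/\alpha_{\mu}(-u-\mu)$ resp.\ $\alpha_{\mu}(u)=1/\alpha_{N-\mu}(-u)$ and with the redistribution of the $f$-factors among the nodes $\mu-1,\mu,\mu+1$; once that holds, the remaining steps are routine.
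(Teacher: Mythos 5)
Your proof is correct and is precisely the standard argument this corollary rests on: the paper does not spell out a proof (it treats the factorization as a known consequence of the $\mathbb{Z}_{2}$ symmetry of the Bethe equations under $\pi^{a/c}$, going back to \cite{Brockmann_2014} and the earlier works \cite{Gombor:2021uxz,Gombor:2021hmj}), and your two steps --- establishing $\Phi_{j}^{-,(\mu)}=1/\Phi_{j}^{+,(\mu)}\circ\sigma$ from the $\alpha$-symmetries (\ref{eq:symProp}) and the covariance of the $f$-factors under the reflection, then passing to the half-sum/half-difference basis simultaneously in rows and columns --- are exactly what is needed. The identity $\Phi^{-}=1/\Phi^{+}\big|_{\mathrm{reflected}}$ that you flag as the only real work does hold, since the full set of roots is invariant under the reflection so the punctured product over $\bar{t}_{j}^{\mu}$ maps onto the punctured product over the partner root, and the neighbour-node factors at $\mu\mp1$ are exchanged consistently with the inversion.
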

After performing the pair structure limit, the elementary overlap
functions $S^{(\ell)}$ depend only on the variables $t_{k}^{+,\mu}$,
$\alpha_{k}^{+,\mu}\equiv\alpha_{\mu}(t_{k}^{+,\mu})$ and $X_{k}^{+,\mu}$.
In the generalized model, these are independent variables. After the
pair structure limit, we can also perform the on-shell limit via
\begin{equation}
\alpha_{k}^{+,\mu}\to\frac{f(t_{k}^{+,\mu},\bar{t}_{k}^{\mu})}{f(\bar{t}_{k}^{\mu},t_{k}^{+,\mu})}\frac{f(\bar{t}^{\mu+1},t_{k}^{+,\mu})}{f(t_{k}^{+,\mu},\bar{t}^{\mu-1})}
\end{equation}
substitution. As a result, we obtain functions that depend only on
the variables $t_{k}^{+,\mu}$ and $X_{k}^{+,\mu}$ :
\begin{equation}
S_{\bar{\alpha}}^{(n)}(\bar{t})\to S^{(n)}(\bar{t}^{+}|\bar{X}^{+}).
\end{equation}

\begin{thm}
\label{thm:onshell}The crossed elementary overlaps are
\begin{equation}
S^{(\ell)}(\bar{t}^{+}|\bar{X}^{+})=\prod_{\nu=1}^{N-1}\mathcal{F}_{\ell}^{(\nu)}(\bar{t}^{+,\nu})\times\prod_{\nu=1}^{N-1}\prod_{k\neq l}f(t_{l}^{+,\nu},t_{k}^{+,\nu})\prod_{k<l}f(t_{l}^{+,\nu},-t_{k}^{+,\nu}-\nu c)f(-t_{l}^{+,\nu}-\nu c,t_{k}^{+,\nu})\det G^{+}.
\end{equation}
The uncrossed elementary overlaps are
\begin{equation}
S^{(\ell)}(\bar{t}^{+}|\bar{X}^{+})=\prod_{\nu=1}^{n}\mathcal{F}_{\ell}^{(\nu)}(\bar{t}^{+,\nu})\times\frac{\prod_{\nu=1}^{n}\prod_{k\neq l}f(t_{l}^{+,\nu},t_{k}^{+,\nu})}{\prod_{\nu=1}^{n-1}f(\bar{t}^{+,\nu+1},\bar{t}^{+,\nu})\prod_{k\leq l}f(-t_{l}^{+,\frac{N-1}{2}},t_{k}^{+,\frac{N-1}{2}})}\det G^{+},\label{eq:onshell2}
\end{equation}
for $N=2n+1$ and
\begin{equation}
S^{(\ell)}(\bar{t}^{+}|\bar{X}^{+})=\prod_{\nu=1}^{n}\mathcal{F}_{\ell}^{(\nu)}(\bar{t}^{+,\nu})\times\frac{\prod_{\nu=1}^{n}\prod_{k\neq l}f(t_{l}^{+,\nu},t_{k}^{+,\nu})\prod_{k<l}f(-t_{l}^{+,\frac{N}{2}},t_{k}^{+,\frac{N}{2}})f(t_{l}^{+,\frac{N}{2}},-t_{k}^{+,\frac{N}{2}})}{\prod_{\nu=1}^{n-1}f(\bar{t}^{+,\nu+1},\bar{t}^{+,\nu})f(-\bar{t}^{+,\frac{N}{2}},\bar{t}^{+,\frac{N}{2}-1})}\det G^{+},\label{eq:onshell3}
\end{equation}
for $N=2n$.
\end{thm}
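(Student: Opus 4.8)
The plan is to obtain the closed-form on-shell overlaps above from the off-shell sum formula of Theorem~\ref{thm:The-sum-formula} and, decisively, from the pair-structure recursion of Theorem~\ref{thm:Spair}, running the same induction as in the $d_B=1$ treatment of \cite{Gombor:2021hmj}. The only genuinely new feature is that the $K$-matrix data are now matrices; but since $\{\mathbf F^{(\nu)}(u)\}$ and $\mathbf B$ form a commuting family and have been simultaneously diagonalized, the elementary overlap $S^{(\ell)}_{\bar\alpha}(\bar t)$ depends on them only through the scalar eigenvalues $\mathcal F_\ell^{(\nu)}(u)$, so the computation is word for word the scalar one with $\mathcal F^{(\nu)}\mapsto\mathcal F_\ell^{(\nu)}$. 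I would work throughout in the generalized model, where $t_k^{+,\mu}$, $\alpha_k^{+,\mu}$ and $X_k^{+,\mu}$ are algebraically independent, and induct on the total number $\sum_{\nu=1}^{n_+}r_\nu^+$ of positive roots. The base case $\bar t=\emptyset$ is immediate: $\mathbf Z(\emptyset)=\bar{\mathbf Z}(\emptyset)=\mathbf 1$, the empty Gaudin determinant is $1$, and both sides reduce to $1$.

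The structural input is that $S^{(\ell)}(\bar t^+\,|\,\bar X^+)$ is of degree at most one in each $X_k^{+,\mu}$. Indeed, the off-shell overlap of Theorem~\ref{thm:The-sum-formula} contains no $X$-variables — it is multilinear in the $\alpha_\mu(t_k^\mu)$ — and each $X_k^{+,\mu}$ is produced exactly once, via L'Hospital's rule, when the corresponding pair is brought to the pair-structure point according to Theorem~\ref{thm:Spair}; the precise pole structure of the highest coefficients from Theorem~\ref{thm:HC-poles} guarantees that no $(X_k^{+,\mu})^2$ terms appear. Hence $S^{(\ell)}$ is multilinear in $\{X_k^{+,\mu}\}$. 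The right-hand sides of the statement are multilinear in the same variables, since the $X_k^{+,\mu}$ enter $G^{\pm}$ only on the diagonal through the term $-\partial_{t}\log\alpha_\mu$ of $\Phi^{+,(\mu)}$, so that $\det G^+=\prod_{\mu,k}X_k^{+,\mu}+(\text{lower order in }X)$ with top coefficient exactly $1$. It therefore suffices to match, for every subset $P$ of the set of pairs, the coefficient of $\prod_{(k,\mu)\in P}X_k^{+,\mu}$ on both sides.

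For $P\neq\emptyset$ this is provided directly by Theorem~\ref{thm:Spair}: choosing a pair $(k,\nu)\in P$ and differentiating, $\partial_{X_k^{+,\nu}}S^{(\ell)}=\mathcal F_\ell^{(\nu)}(t_k^{+,\nu})\,\bigl(\text{ratio of }f\text{-functions}\bigr)\,S^{(\ell)}_{\bar\alpha^{mod}}(\bar\tau)$, and — using, as shown after Theorem~\ref{thm:Spair}, that $\bar\tau$ is on-shell for the modified $\alpha$-functions — the induction hypothesis identifies $S^{(\ell)}_{\bar\alpha^{mod}}(\bar\tau)$ with the claimed closed form for the reduced set. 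On the other side, $\partial_{X_k^{+,\nu}}$ of the claimed expression is computed by a cofactor expansion of $\det G^+$ along the row and column labelled $(k,\nu)$ together with the accompanying $f$-product; the reduced Gaudin matrix produced by this expansion is precisely the Gaudin matrix of $\bar\tau$ with its modified Bethe equations, so the two sides agree up to exactly the same $f$-function rearrangement as in the $d_B=1$ case \cite{Gombor:2021hmj}. The remaining $P=\emptyset$ coefficient, i.e. the parts of $S^{(\ell)}$ and of the right-hand side that are independent of every $X_k^{+,\mu}$, is then pinned down by the controlled analytic behaviour of $S^{(\ell)}$ as a rational function of the $t$'s: its only poles are at the pair-structure loci with residues fixed by Theorems~\ref{thm:HC-poles} and~\ref{thm:Spair}, and as a rapidity tends to infinity a pair decouples and both sides reduce to the $\bigl(\sum r_\nu^+-1\bigr)$-root case handled by induction; this closes the induction.

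I expect the main obstacle to be the $f$-function bookkeeping of the third paragraph, and in particular the correct treatment of the \emph{self-paired} nodes — $\nu=\tfrac{N-1}{2}$ in the uncrossed odd case and $\nu=\tfrac{N}{2}$ in the uncrossed even case — where the two members of a pair sit at the same nesting level, the Gaudin block carries the $\pm$ structure of $G_{j,k}^{\pm,(\mu,\nu)}$, and the extra factors $\prod_{k\le l}f(-t_l^{+,\frac{N-1}{2}},t_k^{+,\frac{N-1}{2}})$, respectively $\prod_{k<l}f(-t_l^{+,\frac{N}{2}},t_k^{+,\frac{N}{2}})f(t_l^{+,\frac{N}{2}},-t_k^{+,\frac{N}{2}})$, of \eqref{eq:onshell2}--\eqref{eq:onshell3} originate. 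A secondary point is making the multilinearity of $S^{(\ell)}$ in the $X$-variables fully rigorous, which is exactly where the pole structure of Theorem~\ref{thm:HC-poles} is used.
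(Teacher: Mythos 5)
Your overall strategy is the paper's: the multilinearity in the $X_k^{+,\mu}$, the symmetry, the base case, and the matching of the coefficient of each non-empty monomial $\prod_{(k,\mu)\in P}X_k^{+,\mu}$ via Theorem~\ref{thm:Spair} and the modified Bethe equations are exactly properties (i)--(iv) of the Korepin criterion that the paper verifies for the normalized function $\mathcal{N}^{(\bar r^+)}=S^{(\ell)}/(\text{prefactor})$, after which Lemma~\ref{lem:Korepin} (Proposition 4.1 of Hutsalyuk et al.) yields $\det G^+$. The reduction to the scalar case via simultaneous diagonalization of $\mathbf F^{(\nu)}$ and $\mathbf B$ is also how the paper proceeds.

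The genuine gap is your treatment of the $P=\emptyset$ coefficient, i.e.\ the claim that the $X$-independent part of $S^{(\ell)}$ matches that of the right-hand side. You propose to pin it down by ``no poles beyond the pair-structure loci, residues fixed by Theorems~\ref{thm:HC-poles} and~\ref{thm:Spair}, and decay at infinity.'' But Theorem~\ref{thm:Spair} only controls the $X_k^{\nu}$-proportional piece of the pair-structure limit; the remainder $\tilde S$ is explicitly left undetermined (all that is asserted is that it does not depend on $X_k^{\nu}$), so you have no control over the residues, poles, or asymptotics of $S^{(\ell)}(\bar t^+|\bar 0)$ as a function of the remaining $t^{+}$ variables, and the Liouville-type argument does not close. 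What is actually needed is the Korepin property (v), $S^{(\ell)}(\bar t^+|\bar 0)=0$, and the paper proves it by a different mechanism: in the generalized model the on-shell limit $\alpha_k^{\nu}\to\mathcal A_k^{\nu}(\bar t)$ of the full overlap $\sum_{\ell}\beta_{\ell}S^{(\ell)}_{\bar\alpha}(\bar t)$ vanishes for \emph{any} root configuration (selection rule), hence, by linear independence of the $\beta_{\ell}$, each $S^{(\ell)}_{\bar\alpha}$ vanishes on shell; the on-shell and pair-structure limits fail to commute only through terms proportional to $\alpha'_{\nu}(t_k^{+,\nu})$, i.e.\ to $X_k^{+,\nu}$, so when all $X_k^{+,\nu}=0$ the limits commute and $S^{(\ell)}(\bar t^+|\bar 0)=0$ follows. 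You need this limit-interchange argument (or an equivalent substitute); without it your induction does not determine the constant term, and in particular your property (iii) base case also relies on it.
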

\begin{cor}
The normalized on-shell overlaps are
\begin{equation}
\frac{\langle\mathrm{MPS}|\mathbb{B}(\bar{t})}{\sqrt{\mathbb{C}(\bar{t})\mathbb{B}(\bar{t})}}=\sum_{\ell=1}^{d_{B}}\beta_{\ell}\prod_{\nu=1}^{n_{+}}\tilde{\mathcal{F}}_{\ell}^{(\nu)}(\bar{t}^{+,\nu})\sqrt{\frac{\det G^{+}}{\det G^{-}}},\label{eq:onOV}
\end{equation}
where
\begin{equation}
\tilde{\mathcal{F}}_{\ell}^{(\nu)}(u)=\begin{cases}
\frac{\mathcal{F}_{\ell}^{(\nu)}(u)}{\sqrt{f(-u-\nu,u)f(u,-u-\nu)}}, & \text{for the chiral case,}\\
\mathcal{F}_{\ell}^{(\nu)}(u)\left(\frac{1}{\sqrt{f(-u,u)}}\right)^{\delta_{\nu,\frac{N-1}{2}}}\left(\frac{1}{\sqrt{f(-u,u)f(u,-u)}}\right)^{\delta_{\nu,\frac{N}{2}}}, & \text{for the achiral case.}
\end{cases}
\end{equation}
\end{cor}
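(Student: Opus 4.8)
The plan is to obtain (\ref{eq:onOV}) by dividing the explicit on-shell overlap of Theorem~\ref{thm:onshell} by the Gaudin norm (\ref{eq:norm}), using the factorization $\det G=\det G^{+}\det G^{-}$. Starting from $\langle\mathrm{MPS}|\mathbb{B}(\bar{t})=\sum_{\ell=1}^{d_{B}}\beta_{\ell}S_{\bar\alpha}^{(\ell)}(\bar{t})$ and performing the pair-structure and on-shell limits, Theorem~\ref{thm:onshell} gives each $S^{(\ell)}(\bar{t}^{+}|\bar{X}^{+})$ as $\prod_{\nu}\mathcal{F}_{\ell}^{(\nu)}(\bar{t}^{+,\nu})$ times an $\ell$-independent product of $f$-factors and $\det G^{+}$. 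So the whole content is to verify, for each of the three cases (crossed; uncrossed with $N=2n+1$; uncrossed with $N=2n$), that this $\ell$-independent product divided by $\sqrt{\mathbb{C}(\bar{t})\mathbb{B}(\bar{t})}$ reduces to $\sqrt{\det G^{+}/\det G^{-}}$ times the scalar shift that turns $\mathcal{F}_{\ell}^{(\nu)}$ into $\tilde{\mathcal{F}}_{\ell}^{(\nu)}$; the $\beta_{\ell}$-sum then passes through trivially.

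The core computation is the evaluation of the norm on pair-structured roots. I substitute $\bar{t}^{-}=\pi^{a/c}(\bar{t}^{+})$, that is $t_{k}^{-,\nu}=-t_{k}^{+,\nu}-\nu\tilde{c}$, into (\ref{eq:norm}) and split every $\prod_{k\neq l}f(t_{l}^{\nu},t_{k}^{\nu})$ and every $f(\bar{t}^{\nu+1},\bar{t}^{\nu})$ into $+/+$, $+/-$, $-/+$, $-/-$ blocks. For the single-node products $\prod_{k\neq l}f(t_{l}^{\nu},t_{k}^{\nu})$, the $+/+$ and $-/-$ blocks coincide under $b\mapsto -b-\nu\tilde{c}$ and together give the square of $\prod_{k\neq l}f(t_{l}^{+,\nu},t_{k}^{+,\nu})$, while the $+/-$ and $-/+$ cross-blocks are symmetric in their two indices, so off the diagonal they form the perfect square of the cross $f$-factors written in Theorem~\ref{thm:onshell}, and the diagonal ($k=l$) self-pairing leftovers are exactly the $f(-u-\nu,u)f(u,-u-\nu)$ (resp. $f(-u,u)$, $f(-u,u)f(u,-u)$) that are absorbed into the denominator of $\tilde{\mathcal{F}}_{\ell}^{(\nu)}$. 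For the chain products $f(\bar{t}^{\nu+1},\bar{t}^{\nu})$ one checks term by term that in the chiral case the $+/+$ and $-/-$ blocks are mutual inverses and likewise the $+/-$ and $-/+$ blocks, so $\prod_{\nu=1}^{N-2}f(\bar{t}^{\nu+1},\bar{t}^{\nu})=1$ and no chain factor survives; in the achiral case the same splitting leaves precisely the residual $\prod_{\nu}f(\bar{t}^{+,\nu+1},\bar{t}^{+,\nu})$ (together with the middle-node factors) that already sits in the uncrossed formula of Theorem~\ref{thm:onshell}. Combining with $\det G=\det G^{+}\det G^{-}$ and taking the square root, the bare $f$-products of $S^{(\ell)}$ cancel against $\sqrt{\mathbb{C}(\bar{t})\mathbb{B}(\bar{t})}$ up to exactly the $\mathcal{F}\to\tilde{\mathcal{F}}$ shift, and the Gaudin part becomes $\det G^{+}/\sqrt{\det G^{+}\det G^{-}}=\sqrt{\det G^{+}/\det G^{-}}$.

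I expect the main obstacle to be purely combinatorial bookkeeping rather than anything conceptual: one must track, in each of the three cases, which blocks square up, which reduce to $1$, and which survive as the normalization shift, and in particular handle the middle node $\nu=(N-1)/2$ or $\nu=N/2$ — where a root set is paired with itself rather than with a partner node — and the boundary node $\nu=n_{+}$, where the index ranges of the $+$ and $-$ sets overlap differently from the bulk. One should also check that the achiral and chiral limits are compatible with the $X_{k}^{+,\nu}$-dependence recorded in Theorem~\ref{thm:Spair}, so that the on-shell substitution for $\alpha_{k}^{+,\mu}$ is applied consistently before the ratio is formed. No input beyond Theorem~\ref{thm:onshell}, the norm formula (\ref{eq:norm}) and the Gaudin-factorization corollary is needed, so once the accounting is fixed the identity is a finite verification in each case.
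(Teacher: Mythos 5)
Your proposal is correct and follows exactly the route the paper intends: the corollary is obtained by dividing the explicit on-shell expressions of Theorem \ref{thm:onshell} by the square root of the Gaudin norm (\ref{eq:norm}), using $\det G=\det G^{+}\det G^{-}$, and checking that the $f$-products in the norm split into perfect squares of the $f$-products appearing in $S^{(\ell)}$ plus the diagonal self-pairing factors that define $\tilde{\mathcal{F}}_{\ell}^{(\nu)}$. Your block-by-block accounting (the $+/+$ and $-/-$ blocks squaring up, the cross blocks pairing symmetrically off the diagonal, the chain factors cancelling in the chiral case and surviving as written in the achiral case) is the whole content, and it is carried out correctly.
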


\subsection{Redefinition of the Bethe roots}

In applications, the most common convention is the one where the Bethe
equations contain product of phases. To align with this, we introduce
a new convention for the Bethe roots:
\begin{equation}
t_{k}^{\nu}=\begin{cases}
iu_{k}^{\nu}-\nu/2, & \text{for the crossed case},\\
iu_{k}^{\nu}+\frac{N-2\nu}{4}, & \text{for the uncrossed case}.
\end{cases}
\end{equation}
The Bethe equations with these new variables are
\begin{equation}
\tilde{\alpha}_{\mu}(u_{k}^{\mu})=-\prod_{j=1}^{r_{\mu}}\frac{u_{k}^{\mu}-u_{j}^{\mu}-i}{u_{k}^{\mu}-u_{j}^{\mu}+i}\prod_{j=1}^{r_{\mu+1}}\frac{u_{k}^{\mu}-u_{j}^{\mu+1}+i/2}{u_{k}^{\mu}-u_{j}^{\mu+1}-i/2}\prod_{j=1}^{r_{\mu-1}}\frac{u_{k}^{\mu}-u_{j}^{\mu-1}+i/2}{u_{k}^{\mu}-u_{j}^{\mu-1}-i/2},
\end{equation}
where
\begin{equation}
\tilde{\alpha}_{\mu}(u)=\begin{cases}
\alpha_{\mu}(iu-\mu/2), & \text{for the crossed case},\\
\alpha_{\mu}(iu+\frac{N-2\mu}{4}), & \text{for the uncrossed case}.
\end{cases}
\end{equation}
Let us also introduce a new notation for the Bethe vectors
\begin{equation}
|\bar{u}\rangle=\begin{cases}
\mathbb{B}(\left\{ i\bar{u}^{\nu}-\nu/2\right\} _{\nu=1}^{N-1}), & \text{for the crossed case},\\
\mathbb{B}(\left\{ i\bar{u}^{\nu}+\frac{N-2\nu}{4}\right\} _{\nu=1}^{N-1}), & \text{for the uncrossed case},
\end{cases}
\end{equation}
for which the on-shell overlap formulas have the form
\begin{equation}
\frac{\langle\mathrm{MPS}|\bar{u}\rangle}{\sqrt{\langle\bar{u}|\bar{u}\rangle}}=\sum_{\ell=1}^{d_{B}}\beta_{\ell}\prod_{\nu=1}^{n_{+}}\tilde{\mathcal{F}}_{\ell}^{(\nu)}(\bar{u}^{+,\nu})\sqrt{\frac{\det G^{+}}{\det G^{-}}},
\end{equation}
where
\begin{equation}
\tilde{\mathcal{F}}_{\ell}^{(\nu)}(u)=\begin{cases}
\mathcal{F}_{\ell}^{(\nu)}(iu-\nu/2)\sqrt{\frac{u^{2}}{u^{2}+1/4}}, & \text{for the chiral case,}\\
\mathcal{F}_{\ell}^{(\nu)}(iu+\frac{N-2\nu}{4})\left(\sqrt{\frac{u-i/4}{u+i/4}}\right)^{\delta_{\nu,\frac{N-1}{2}}}\left(\sqrt{\frac{u^{2}}{u^{2}+1/4}}\right)^{\delta_{\nu,\frac{N}{2}}}, & \text{for the achiral case.}
\end{cases}\label{eq:Ftdef}
\end{equation}

\section{Other reflection algebras of $\mathfrak{gl}_{N}$ spin chains\label{sec:Other-reflection-algebras}}

The formulas from the previous section can be applied to the $Y^{+}(N)$
and $\mathcal{B}(N,\left\lfloor \frac{N}{2}\right\rfloor )$ $K$-matrices.
From now on, we assume that the universal overlap formula (\ref{eq:onOV-1})
also exists for the other $\mathfrak{gl}_{N}$ reflection algebras.
In this section, we extend the definition of the $\mathbf{F}^{(\nu)}$
operators to the remaining cases as well.

\subsection{Generalization for $Y^{-}(2n)$\label{subsec:Generalization-Ym}}

First, we extend the $F$-operators of the $Y^{+}(N)$ algebra to
the $Y^{-}(N)$ case. We begin by solving the recursive equation (\ref{eq:crNestedK})
for the nested $K$-matrix, from which we obtain the $G$-operators
in closed form. Additionally, we formulate a new recursion for the
$G$-operators, which we apply to the $Y^{-}(N)$ $K$-matrices. 

\subsubsection{Quasi-determinants\label{subsec:Quasi-determinants}}

The $G$-operators have several equivalent definitions. To express
these, it is useful to introduce new notations. Let $X=\left(X_{i,j}\right)_{i,j=1}^{N}$
be an $N\times N$ matrix over a ring. Let $1\leq M<N$ and introduce
the following block notations,
\begin{equation}
X=\left(\begin{array}{cc}
A & B\\
C & D
\end{array}\right),
\end{equation}
where $A$ is an $M\times M$, $B$ an $M\times(N-M)$, $C$ a $(N-M)\times M$
and $D$ a $(N-M)\times(N-M)$ matrix over the ring. Assume that $A$
is invertible, and then let 
\begin{equation}
X^{(M+1)}\equiv\left(\begin{array}{cc}
A & B\\
C & \boxed{D}
\end{array}\right):=D-CA^{-1}B
\end{equation}
be a $(N-M)\times(N-M)$ matrix. This matrix can also be defined in
another equivalent way. Let $\mathcal{I}$ denote the inversion operation,
i.e., $\mathcal{I}(X):=X^{-1}$, and let $\Pi_{M}$ be a projection
such that
\begin{equation}
\Pi_{M}(X)=\Pi_{M}\left(\left(\begin{array}{cc}
A & B\\
C & D
\end{array}\right)\right):=D.
\end{equation}
Combining these, we introduce the following operation.
\begin{equation}
\omega_{M}:=\mathcal{I}\circ\Pi_{M}\circ\mathcal{I}.
\end{equation}
It is easy to verify that the inverse matrix can be expressed in the
following way:
\begin{equation}
X^{-1}=\left(\begin{array}{cc}
(A-BD^{-1}C)^{-1} & -A^{-1}B(D-CA^{-1}B)^{-1}\\
-D^{-1}C(A-BD^{-1}C)^{-1} & (D-CA^{-1}B)^{-1}
\end{array}\right),
\end{equation}
therefore
\begin{equation}
X^{(M+1)}=\left(\begin{array}{cc}
A & B\\
C & \boxed{D}
\end{array}\right)=\omega_{M}(X).
\end{equation}
Clearly, $\omega_{M_{1}+M_{2}}=\omega_{M_{1}}\circ\omega_{M_{2}}$,
that is,
\begin{equation}
X^{(M_{1}+M_{2})}=\omega_{M_{1}}(X^{(M_{2})}).\label{eq:omId}
\end{equation}

These definitions can be applied to the nested $K$-matrices. The
recursive step can be written using the new notations as follows
\[
\mathbf{K}^{(s+1)}=\left(\begin{array}{cc}
\mathbf{K}_{s,s}^{(s)} & \begin{array}{cc}
\mathbf{K}_{s,s+1}^{(s)} & \dots\end{array}\\
\begin{array}{c}
\mathbf{K}_{s+1,s}^{(s)}\\
\vdots
\end{array} & \boxed{\begin{array}{cc}
\mathbf{K}_{s+1,s+1}^{(s)} & \ldots\\
\vdots & \ddots
\end{array}}
\end{array}\right)=\omega_{1}(\mathbf{K}^{(s)}).
\]
Using the identity (\ref{eq:omId}), the recursive equations can be
solved
\[
\mathbf{K}^{(s)}=\omega_{s-1}(\mathbf{K})=\left(\begin{array}{cccc}
\mathbf{K}_{1,1} & \dots & \mathbf{K}_{1,s-1} & \begin{array}{ccc}
\mathbf{K}_{1,s} & \mathbf{K}_{1,s+1} & \dots\end{array}\\
\vdots & \ddots & \vdots & \vdots\\
\mathbf{K}_{s-1,1} & \dots & \mathbf{K}_{s-1,s-1} & \begin{array}{ccc}
\mathbf{K}_{s-1,s} & \mathbf{K}_{s-1,s+1} & \dots\end{array}\\
\begin{array}{c}
\mathbf{K}_{s,1}\\
\mathbf{K}_{s+1,1}\\
\vdots
\end{array} & \dots & \begin{array}{c}
\mathbf{K}_{s,s-1}\\
\mathbf{K}_{s+1,s-1}\\
\vdots
\end{array} & \boxed{\begin{array}{ccc}
\mathbf{K}_{s,s} & \mathbf{K}_{s,s+1} & \ldots\\
\mathbf{K}_{s+1,s} & \mathbf{K}_{s+1,s+1} & \ldots\\
\vdots & \vdots & \ddots
\end{array}}
\end{array}\right),
\]
or simply by using components
\begin{equation}
\mathbf{K}_{i,j}^{(s)}=\left(\begin{array}{cccc}
\mathbf{K}_{1,1} & \dots & \mathbf{K}_{1,s-1} & \mathbf{K}_{1,j}\\
\vdots & \ddots & \vdots & \vdots\\
\mathbf{K}_{s-1,1} & \dots & \mathbf{K}_{s-1,s-1} & \mathbf{K}_{s-1,j}\\
\mathbf{K}_{i,1} & \dots & \mathbf{K}_{i,s-1} & \boxed{\mathbf{K}_{i,j}}
\end{array}\right),\label{eq:Kmqdet}
\end{equation}
where $i,j=s,\dots,N$. Based on this, the $G$-operators can also
be expressed in closed form.
\begin{equation}
\mathbf{G}^{(s)}=\left(\begin{array}{cccc}
\mathbf{K}_{1,1} & \dots & \mathbf{K}_{1,s-1} & \mathbf{K}_{1,s}\\
\vdots & \ddots & \vdots & \vdots\\
\mathbf{K}_{s-1,1} & \dots & \mathbf{K}_{s-1,s-1} & \mathbf{K}_{s-1,s}\\
\mathbf{K}_{s,1} & \dots & \mathbf{K}_{s,s-1} & \boxed{\mathbf{K}_{s,s}}
\end{array}\right).\label{eq:GG}
\end{equation}
The definitions above assume that the matrices
\begin{equation}
\left(\begin{array}{ccc}
\mathbf{K}_{1,1} & \dots & \mathbf{K}_{1,s}\\
\vdots & \ddots & \vdots\\
\mathbf{K}_{s,1} & \dots & \mathbf{K}_{s,s}
\end{array}\right)
\end{equation}
are invertible. This holds for the $Y^{+}(N)$ algebra, but for the
$Y^{-}(2n)$ algebra, the inverse exists only for even $s$. 

For odd $K$-matrices, there exists another recursion. We can introduce
another series of subalgebras
\[
\begin{array}{ccccccccc}
Y^{+}(2n) & \to & Y^{+}(2n-2) & \to & \dots & \to & Y^{+}(4) & \to & Y^{+}(2)\\
\mathbf{K}^{(1)}\equiv\mathbf{K} & \to & \mathbf{K}^{(3)} & \to & \dots & \to & \mathbf{K}^{(2n-3)} & \to & \mathbf{K}^{(2n-1)}\\
\downarrow &  & \downarrow &  & \dots &  & \downarrow &  & \downarrow\\
\mathbf{k}^{(1)} &  & \mathbf{k}^{(2)} &  & \dots &  & \mathbf{k}^{(n-1)} &  & \mathbf{k}^{(n)}
\end{array}
\]
In this case, the recurrence equation is
\begin{equation}
\mathbf{K}^{(2k+1)}=\omega_{2}\left(\mathbf{K}^{(2k-1)}\right),
\end{equation}
which explicitly gives

\begin{equation}
\mathbf{K}_{a,b}^{(2k+1)}(z):=\mathbf{K}_{a,b}^{(2k-1)}(z)-\sum_{\alpha,\beta=2k-1}^{2k}\mathbf{K}_{a,\alpha}^{(2k-1)}(z)\widehat{\mathbf{K}}_{\alpha,\beta}^{(2k-1)}(z)\mathbf{K}_{\beta,b}^{(2k-1)}(z),
\end{equation}
for $a,b=2k+1,\dots,2n$, where $\widehat{\mathbf{K}}_{\alpha,\beta}^{(2k-1)}(z)$
is the inverse of a $2\times2$ block
\begin{equation}
\sum_{\gamma=2k-1}^{2k}\mathbf{K}_{\alpha,\gamma}^{(2k-1)}(z)\widehat{\mathbf{K}}_{\gamma,\beta}^{(2k-1)}(z)=\delta_{\alpha,\beta}\mathbf{1}.
\end{equation}
We also define the $2\times2$ $K$-matrices as
\begin{equation}
\mathbf{k}_{\alpha,\beta}^{(k)}(u):=\mathbf{K}_{\alpha+2k-2,\beta+2k-2}^{(2k-1)}(u),\label{eq:defk}
\end{equation}
for $\alpha,\beta=1,2$.

The nested $K$-matrices $\mathbf{K}^{(2k-1)}$ are the same as before,
i.e., they satisfy the reflection equation (\ref{eq:refl}). The $2\times2$
matrices $\mathbf{k}_{\alpha,\beta}^{(k)}(u-(k-1))$ for $\alpha,\beta=1,2$
generate $Y^{+}(2)$ subalgebras. The element $\mathbf{k}_{1,1}^{(k)}=\mathbf{K}_{2k-1,2k-1}^{(2k-1)}$,
which by definition is the $\mathbf{G}^{(2k-1)}$ operator. To compute
$\mathbf{G}^{(2k)}(u)$ we can use the formula (\ref{eq:GG}) and
the identity $\omega_{2k-1}=\omega_{1}\circ\omega_{2k-2}$, that is,
\begin{align}
\mathbf{G}^{(2k)} & =\omega_{2k-1}\circ\left(\begin{array}{cccc}
\mathbf{K}_{1,1} & \dots & \mathbf{K}_{1,2k-1} & \mathbf{K}_{1,2k}\\
\vdots & \ddots & \vdots & \vdots\\
\mathbf{K}_{2k-1,1} & \dots & \mathbf{K}_{2k-1,2k-1} & \mathbf{K}_{2k-1,2k}\\
\mathbf{K}_{2k,1} & \dots & \mathbf{K}_{2k,2k-1} & \mathbf{K}_{2k,2k}
\end{array}\right)=\omega_{1}\circ\omega_{2k-2}\circ\left(\begin{array}{cccc}
\mathbf{K}_{1,1} & \dots & \mathbf{K}_{1,2k-1} & \mathbf{K}_{1,2k}\\
\vdots & \ddots & \vdots & \vdots\\
\mathbf{K}_{2k-1,1} & \dots & \mathbf{K}_{2k-1,2k-1} & \mathbf{K}_{2k-1,2k}\\
\mathbf{K}_{2k,1} & \dots & \mathbf{K}_{2k,2k-1} & \mathbf{K}_{2k,2k}
\end{array}\right)\nonumber \\
 & =\omega_{1}\circ\left(\begin{array}{cc}
\mathbf{K}_{2k-1,2k-1}^{(2k-1)} & \mathbf{K}_{2k-1,2k}^{(2k-1)}\\
\mathbf{K}_{2k,2k-1}^{(2k-1)} & \mathbf{K}_{2k,2k}^{(2k-1)}
\end{array}\right)=\left(\begin{array}{cc}
\mathbf{k}_{1,1}^{(k)} & \mathbf{k}_{1,2}^{(k)}\\
\mathbf{k}_{2,1}^{(k)} & \boxed{\mathbf{k}_{2,2}^{(k)}}
\end{array}\right).
\end{align}
In summary, the $G$-operators have an alternative expression
\begin{equation}
\begin{split}\mathbf{G}^{(2k-1)}(z)= & \mathbf{k}_{1,1}^{(k)}(z),\\
\mathbf{G}^{(2k)}(z)= & \mathbf{k}_{2,2}^{(k)}(z)-\mathbf{k}_{2,1}^{(k)}(z)\left[\mathbf{k}_{1,1}^{(k)}(z)\right]^{-1}\mathbf{k}_{1,2}^{(k)}(z).
\end{split}
\label{eq:GYm}
\end{equation}

\subsubsection{The $Y^{-}(2)$ overlaps\label{subsec:Ym2-overlaps}}

In the case of $Y^{-}(2)$ $K$-matrices, the asymptotic expansion
begins with
\begin{equation}
\mathbf{K}_{i,j}(u)=\epsilon_{i,j}\mathbf{1}+\mathcal{O}(u^{-1}),
\end{equation}
where $i,j=1,2$ and $\epsilon_{i,j}=-\epsilon_{j,i}$, $\epsilon_{1,2}=1$.
It is clear that the $\mathbf{K}_{1,1}(u)$ component is not invertible,
so the previous overlap formula cannot be applied. However, for finite-dimensional
irreducible $K$-matrices, there exists a continuous deformation $\mathbf{K}\to\tilde{\mathbf{K}}$,
that still solves the reflection equation and for which the $\tilde{\mathbf{K}}_{1,1}$
component is invertible. For this deformed $K$-matrix, the overlap
formula derived in the previous section can be applied. Taking the
zero limit of the deformation parameter yields the overlap formula
for the original $K$-matrix.

This deformation is based on the theorem: every finite-dimensional
irreducible representation of $Y^{-}(2)$ arises from a representation
of $Y(2)$ via the embedding $Y^{-}(2)\hookrightarrow Y(2)$ \cite{Molev:1997wp}.
That is, for every finite-dimensional irreducible $Y^{-}(2)$ $K$-matrix,
there exists a Lax operator (a $Y(2)$ representation) such that
\begin{equation}
\mathbf{K}_{i,j}(u)=\sum_{k,l}\mathbf{L}_{k,i}(u)\epsilon_{k,l}\mathbf{L}_{l,j}(-u).
\end{equation}
The $\mathbf{K}_{1,1}(u)$ component is 
\begin{equation}
\mathbf{K}_{1,1}(u)=\mathbf{L}_{1,1}(u)\mathbf{L}_{2,1}(-u)-\mathbf{L}_{2,1}(u)\mathbf{L}_{1,1}(-u).
\end{equation}
Since $Y(2)$ contains an $\mathfrak{sl}_{2}$ subalgebra, the boundary
space is also an $\mathfrak{sl}_{2}$ representation. The operator
$\mathbf{L}_{1,1}$ preserves the $\mathfrak{sl}_{2}$ quantum number,
while $\mathbf{L}_{2,1}$ lowers it by one, meaning that for finite-dimensional
representations, $\mathbf{K}_{1,1}(u)$ is a nilpotent operator.

There is a special scalar solution of the reflection equation ($d_{B}=1$)
\begin{equation}
K(u|\mathfrak{b})=\left(\begin{array}{cc}
\mathfrak{b}(u+1/2) & 1\\
-1 & 0
\end{array}\right),
\end{equation}
where $\mathfrak{b}\in\mathbb{C}$ is a scalar parameter. From this
scalar representation and the Lax operator $\mathbf{L}$, we can construct
a deformed $K$-matrix:
\begin{equation}
\tilde{\mathbf{K}}_{0}(u|\mathfrak{b})=\mathbf{L}_{0}^{t_{0}}(u)K_{0}(u|\mathfrak{b})\mathbf{L}_{0}(-u),
\end{equation}
that is
\begin{align}
\tilde{\mathbf{K}}_{i,j}(u|\mathfrak{b}) & =\sum_{k,l}\mathbf{L}_{k,i}(u)\epsilon_{k,l}\mathbf{L}_{l,j}(-u)+\mathfrak{b}(u+1/2)\mathbf{L}_{1,i}(u)\mathbf{L}_{1,j}(-u)\nonumber \\
 & =\mathbf{K}_{i,j}(u)+\mathfrak{b}(u+1/2)\mathbf{L}_{1,i}(u)\mathbf{L}_{1,j}(-u).
\end{align}
This matrix satisfies the reflection equation, the component $\tilde{\mathbf{K}}_{1,1}(u|\mathfrak{b})$
is invertible, and $\tilde{\mathbf{K}}_{i,j}(u|0)=\mathbf{K}_{i,j}(u)$. 

Let us examine the component $\tilde{\mathbf{K}}_{1,1}$, that is,
the first $G$-operator:
\[
\tilde{\mathbf{G}}^{(1)}(u|\mathfrak{b})=\tilde{\mathbf{K}}_{1,1}(u|\mathfrak{b})=\mathbf{K}_{1,1}(u)+\mathfrak{b}(u+1/2)\mathbf{L}_{1,1}(u)\mathbf{L}_{1,1}(-u).
\]
Since the operator $\mathbf{K}_{1,1}$ lowers the $\mathfrak{sl}_{2}$
spin and $\mathbf{L}_{1,1}(u)\mathbf{L}_{1,1}(-u)$ does not change
it, in a specific basis, the former is an upper triangular matrix
and the latter is a diagonal matrix, therefore the eigenvalues of
$\tilde{\mathbf{G}}^{(1)}$ coincide with those of $\mathfrak{b}(u+1/2)\mathbf{L}_{1,1}(u)\mathbf{L}_{1,1}(-u)$.

Let us continue with the second $G$-operator.
\begin{equation}
\tilde{\mathbf{G}}^{(2)}(u|\mathfrak{b})=\tilde{\mathbf{K}}_{2,2}(u|\mathfrak{b})-\tilde{\mathbf{K}}_{2,1}(u|\mathfrak{b})\left[\tilde{\mathbf{K}}_{1,1}(u|\mathfrak{b})\right]^{-1}\tilde{\mathbf{K}}_{1,2}(u|\mathfrak{b}).
\end{equation}
It can be shown that this operator can be expressed using the Sklyanin
determinant (see (\ref{eq:SklyaninMin}) for the definition of $\tilde{\mathbf{k}}_{1,2}^{1,2}$)
\begin{equation}
\tilde{\mathbf{k}}_{1,2}^{1,2}(u|\mathfrak{b})=\tilde{\mathbf{G}}^{(1)}(u+1|\mathfrak{b})\tilde{\mathbf{G}}^{(2)}(u|\mathfrak{b}).
\end{equation}
The Sklyanin determinant, in turn, factorizes
\begin{equation}
\tilde{\mathbf{k}}_{1,2}^{1,2}(u|\mathfrak{b})=\mathbf{l}_{1,2}^{1,2}(-u)\mathbf{l}_{1,2}^{1,2}(u+1)=\mathbf{k}_{1,2}^{1,2}(u),
\end{equation}
where the quantum determinant is defined as
\begin{equation}
R_{1,2}(-1)\mathbf{L}_{1}(u-1)\mathbf{L}_{2}(u)=\sum_{a_{i},b_{i}=1}^{2}e_{a_{1},b_{1}}\otimes e_{a_{2},b_{2}}\otimes\mathbf{l}_{a_{1},a_{2}}^{b_{1},b_{2}}(u).
\end{equation}
(The proof matches Theorem 2.5.3 in \cite{MolevBook}), meaning that
the Sklyanin determinant does not depend on the deformation parameter
$\mathfrak{b}$. Based on this, the dependence of the eigenvalues
of the $F$-operator 
\begin{equation}
\mathbf{F}^{(1)}(u|\mathfrak{b})=\left[\tilde{\mathbf{G}}^{(1)}(u|\mathfrak{b})\right]^{-1}\tilde{\mathbf{G}}^{(2)}(u|\mathfrak{b})
\end{equation}
on $\mathfrak{b}$ is extremely simple
\begin{equation}
\mathcal{F}_{\ell}(u|\mathfrak{b})=\mathfrak{b}^{-2}\mathcal{F}_{\ell}(u),
\end{equation}
where the functions $\mathcal{F}_{\ell}(u)$ are the eigenvalues of
the $F$-operator:
\begin{equation}
\mathbf{F}(u):=\left[\tilde{\mathbf{G}}^{(1)}(u|1)\right]^{-1}\tilde{\mathbf{G}}^{(2)}(u|1).\label{eq:FYm}
\end{equation}

The boundary state must also be deformed. The boundary state corresponding
to the deformed $K$-matrix, $\langle\tilde{\Psi}(\mathfrak{b})|$
can be obtained based on the description in Appendix \ref{sec:Boundary-states-gen-rep}.
The vacuum overlap in the deformed case is
\begin{equation}
\tilde{\mathbf{B}}(\mathfrak{b})=\langle\tilde{\Psi}(\mathfrak{b})|0\rangle=\prod_{j=1}^{J}\prod_{l=1}^{\Lambda_{1}^{(j)}-\Lambda_{2}^{(j)}}\tilde{\mathbf{K}}_{1,1}(\theta_{j}-l|\mathfrak{b})=\prod_{j=1}^{J}\prod_{l=1}^{k_{j}}\tilde{\mathbf{G}}^{(1)}(\theta_{j}-l|\mathfrak{b}).
\end{equation}
Since the eigenvalues of $\tilde{\mathbf{G}}^{(1)}$ are linear in
$\mathfrak{b}$, the dependence of the eigenvalues of the operator
$\tilde{\mathbf{B}}$ on $\mathfrak{b}$ is
\begin{equation}
\tilde{\beta}_{\ell}(\mathfrak{b})=\mathfrak{b}^{\frac{\Lambda_{1}-\Lambda_{2}}{2}}\beta_{\ell},
\end{equation}
where the $\beta_{\ell}$ are the eigenvalues of the operator
\begin{equation}
\mathbf{B}:=\langle\tilde{\Psi}(1)|0\rangle.\label{eq:BYm}
\end{equation}
Applying the overlap formula:
\begin{equation}
\frac{\langle\mathrm{MPS}(\mathfrak{b})|\mathbb{B}(\bar{t})}{\sqrt{\mathbb{C}(\bar{t})\mathbb{B}(\bar{t})}}=\mathfrak{b}^{\frac{\Lambda_{1}-\Lambda_{2}}{2}-r_{1}}\sum_{\ell=1}^{d_{B}}\beta_{\ell}\tilde{\mathcal{F}}_{\ell}(\bar{t}^{+})\sqrt{\frac{\det G^{+}}{\det G^{-}}}.
\end{equation}
Now we can take the $\mathfrak{b}\to0$ limit. The Bethe states exists
only when $2r_{1}\leq\Lambda_{1}-\Lambda_{2}$ therefore the limit
is non-vanishing only when 
\begin{equation}
r_{1}=\frac{\Lambda_{1}-\Lambda_{2}}{2},\label{eq:selrule}
\end{equation}
and the $Y^{-}(2)$ overlap is
\begin{equation}
\frac{\langle\mathrm{MPS}|\mathbb{B}(\bar{t})}{\sqrt{\mathbb{C}(\bar{t})\mathbb{B}(\bar{t})}}=\delta_{2r_{1},\Lambda_{1}-\Lambda_{2}}\sum_{\ell=1}^{d_{B}}\beta_{\ell}\tilde{\mathcal{F}}_{\ell}(\bar{t}^{+})\sqrt{\frac{\det G^{+}}{\det G^{-}}},
\end{equation}
where the $\beta_{\ell}$ and $\mathcal{F}_{\ell}$ are the eigenvalues
of the operators (\ref{eq:BYm}) and (\ref{eq:FYm}).

The selection rule (\ref{eq:selrule}) means that the $Y^{-}(2)$
MPSs are $\mathfrak{gl}_{2}$ singlet states.

\subsubsection{Proposal for $Y^{-}(2n)$ overlaps}

In this section, we combine the results of \ref{subsec:Quasi-determinants}
and \ref{subsec:Ym2-overlaps}. The definitions discussed in Section
\ref{sec:Derivations-of-the-overlaps} cannot be used in the $Y^{-}(2n)$
case because those definitions involve the inverse of the $\mathbf{K}_{1,1}$
matrix, which is not invertible in this case. However, in \ref{subsec:Quasi-determinants}
we saw that there exists another equivalent definition for which the
nested $K$-matrices can be defined without requiring $\mathbf{K}_{1,1}$
to be invertible.

Based on Subsection \ref{subsec:Quasi-determinants}, we have the
series of subalgebras
\[
\begin{array}{ccccccccc}
Y^{-}(2n) & \to & Y^{-}(2n-2) & \to & \dots & \to & Y^{-}(4) & \to & Y^{-}(2)\\
\mathbf{K}^{(1)}\equiv\mathbf{K} & \to & \mathbf{K}^{(3)} & \to & \dots & \to & \mathbf{K}^{(2n-3)} & \to & \mathbf{K}^{(2n-1)}\\
\downarrow &  & \downarrow &  & \dots &  & \downarrow &  & \downarrow\\
\mathbf{k}^{(1)} &  & \mathbf{k}^{(2)} &  & \dots &  & \mathbf{k}^{(n-1)} &  & \mathbf{k}^{(n)}
\end{array}
\]
Based on the definition in (\ref{eq:defk}) the $K$-matrices $\mathbf{k}^{(s)}(u-(s-1))$
are representations of the $Y^{-}(2)$. For these, the $G$-operators
defined by (\ref{eq:GYm}) do not exist. However, we can apply the
method described in \ref{subsec:Ym2-overlaps} i.e., we first deform
the $\mathbf{k}^{(s)}$ matrices so that the $G$-operators become
definable. The deformation is carried out as follows
\begin{equation}
\begin{split}\mathbf{k}_{i,j}^{(s)}(u) & =\sum_{k,l}\mathbf{L}_{k,i}^{(s)}(u+(s-1))\epsilon_{k,l}\mathbf{L}_{l,j}^{(s)}(-u-(s-1)),\\
\tilde{\mathbf{k}}_{i,j}^{(s)}(u|\mathfrak{b}_{s}) & =\mathbf{k}_{i,j}^{(s)}(u)+\mathfrak{b}_{s}(u+(s-1/2))\mathbf{L}_{1,i}(u+(s-1))\mathbf{L}_{1,j}(-u-(s-1)).
\end{split}
\label{eq:defYm}
\end{equation}
For these deformed matrices, we can define the $G$-operators as
\begin{equation}
\begin{split}\tilde{\mathbf{G}}^{(2s-1)}(u|\mathfrak{b}_{s}) & =\tilde{\mathbf{k}}_{1,1}^{(s)}(u|\mathfrak{b}_{s}),\\
\tilde{\mathbf{G}}^{(2s)}(u|\mathfrak{b}_{s}) & =\tilde{\mathbf{k}}_{2,2}^{(s)}(u|\mathfrak{b}_{s})-\tilde{\mathbf{k}}_{2,1}^{(s)}(u|\mathfrak{b}_{s})\left[\tilde{\mathbf{k}}_{1,1}^{(s)}(u|\mathfrak{b}_{s})\right]^{-1}\tilde{\mathbf{k}}_{1,2}^{(s)}(u|\mathfrak{b}_{s}).
\end{split}
\end{equation}
Based on the previous reasoning, the eigenvalues of the $\tilde{\mathbf{G}}^{(2s-1)}$
operators are proportional, while the eigenvalues of the $\tilde{\mathbf{G}}^{(2s)}$
operators are inversely proportional to $\mathfrak{b}_{s}$. Thus,
the dependence of the eigenvalues of the $F$-operators
\begin{equation}
\begin{split}\mathbf{F}^{(2s-1)}(u|\mathfrak{b}_{s}) & =\left[\tilde{\mathbf{G}}^{(2s-1)}(u|\mathfrak{b}_{s})\right]^{-1}\tilde{\mathbf{G}}^{(2s)}(u|\mathfrak{b}_{s}),\\
\mathbf{F}^{(2s)}(u|\mathfrak{b}_{s},\mathfrak{b}_{s+1}) & =\left[\tilde{\mathbf{G}}^{(2s)}(u|\mathfrak{b}_{s})\right]^{-1}\tilde{\mathbf{G}}^{(2s+1)}(u|\mathfrak{b}_{s+1}),
\end{split}
\end{equation}
on $\mathfrak{b}_{s}$ is as follows
\begin{equation}
\begin{split}\mathcal{F}_{\ell}^{(2s-1)}(u|\mathfrak{b}_{s})= & \mathfrak{b}_{s}^{-2}\mathcal{F}_{\ell}^{(2s-1)}(u),\\
\mathcal{F}_{\ell}^{(2s)}(u|\mathfrak{b}_{s},\mathfrak{b}_{s+1})= & \mathfrak{b}_{s}\mathfrak{b}_{s+1}\mathcal{F}_{\ell}^{(2s)}(u),
\end{split}
\end{equation}
where $\mathcal{F}_{\ell}^{(k)}$ are the eigenvalues of the $\mathbf{F}^{(k)}$
operators in the limit $\mathfrak{b}_{s}\to1$. 

The deformed vacuum overlap is given by the formula in (\ref{eq:vacuumOv})
meaning that the dependence of the vacuum eigenvalues on $\mathfrak{b}_{s}$
is
\begin{equation}
\tilde{\beta}_{\ell}(\mathfrak{b}_{1},\dots,\mathfrak{b}_{n})=\prod_{s=1}^{n}\mathfrak{b}_{s}^{\frac{\Lambda_{2s-1}-\Lambda_{2s}}{2}}\beta_{\ell},
\end{equation}
where $\beta_{\ell}$ is the eigenvalue of the vacuum overlap in the
limit $\mathfrak{b}_{s}\to1$. Applying the overlap formula:
\[
\frac{\langle\mathrm{MPS}|\mathbb{B}(\bar{t})}{\sqrt{\mathbb{C}(\bar{t})\mathbb{B}(\bar{t})}}=\lim_{\mathfrak{b}_{s}\to0}\prod_{s=1}^{n}\mathfrak{b}_{s}^{\frac{\Lambda_{2s-1}-\Lambda_{2s}}{2}-r_{2s-1}+\frac{r_{2s-2}+r_{2s}}{2}}\sum_{\ell=1}^{d_{B}}\beta_{\ell}\tilde{\mathcal{F}}_{\ell}(\bar{t}^{+})\sqrt{\frac{\det G^{+}}{\det G^{-}}}.
\]
Now we can take the limit $\mathfrak{b}_{s}\to0$. The limit is non-vanishing
only when 
\begin{equation}
r_{2s-1}=\frac{\Lambda_{2s-1}-\Lambda_{2s}+r_{2s-2}+r_{2s}}{2},\label{eq:selrule-1}
\end{equation}
for $s=1,\dots,n$ and the $Y^{-}(2n)$ overlap is
\begin{equation}
\frac{\langle\mathrm{MPS}|\mathbb{B}(\bar{t})}{\sqrt{\mathbb{C}(\bar{t})\mathbb{B}(\bar{t})}}=\sum_{\ell=1}^{d_{B}}\beta_{\ell}\tilde{\mathcal{F}}_{\ell}(\bar{t}^{+})\sqrt{\frac{\det G^{+}}{\det G^{-}}}.
\end{equation}

\subsection{Generalization for $\mathcal{B}(N,M)$}

In the case of uncrossed $\mathcal{B}(N,M)$ $K$-matrices, the nesting
procedure described in \ref{subsec:The-uncrossed-overlaps} cannot
be continued beyond step $M$. The nesting in this case proceeds as
follows
\[
\begin{array}{ccccccccc}
\mathcal{B}(N,M) & \to & \mathcal{B}(2N-2,M-1) & \to & \dots & \to & \mathcal{B}(2N-2M+2,1) & \to & \mathcal{B}(2N-2M,0)\\
\mathbf{K}^{(1)}\equiv\mathbf{K} & \to & \mathbf{K}^{(2)} & \to & \dots & \to & \mathbf{K}^{(M)} & \to & \mathbf{K}^{(M+1)}\\
\downarrow &  & \downarrow &  & \dots &  & \downarrow\\
\mathbf{G}^{(1)} &  & \mathbf{G}^{(2)} &  & \dots &  & \mathbf{G}^{(M)}
\end{array}
\]
It is useful to first examine the case $M=0$ separately and then
generalize to arbitrary $M$.

\subsubsection{The $\mathcal{B}(N,0)$ overlaps}

For the $\mathcal{B}(N,0)$ $K$-matrices, the asymptotic expansion
begins with
\begin{equation}
\mathbf{K}_{i,j}(u)=\delta_{i,j}\mathbf{1}+\mathcal{O}(u^{-1}),
\end{equation}
where $i,j=1,\dots,N$. It is clear that the component $\mathbf{K}_{N,1}(u)$
is not invertible, so the previously used overlap formula cannot be
applied. For $N=2$, the uncrossed $\mathcal{B}(2,0)$ algebra is
isomorphic to the crossed $Y^{-}(2)$ algebra. In this case, for finite-dimensional
irreducible $K$-matrices, there exists a continuous deformation $\mathbf{K}\to\tilde{\mathbf{K}}$,
that still solves the reflection equation and makes the matrix elements
in the overlap formula invertible. The following generalizes this
procedure to arbitrary $N$. 

The deformation is based on the assumption that for every finite-dimensional
$\mathcal{B}(N,0)$ $K$-matrix, there exists a Lax operator (a representation
of $Y(N)$) such that
\begin{equation}
\mathbf{K}_{i,j}(u)=\sum_{k=1}^{N}\mathbf{L}'_{i,k}(u)\mathbf{L}_{k,j}(-u),
\end{equation}
where
\begin{equation}
\sum_{k=1}^{N}\mathbf{L}'_{i,k}(u)\mathbf{L}_{k,j}(u)=\delta_{i,j}\mathbf{1}.
\end{equation}
This is certainly true for $N=2$, but for $N>2$, it remains a conjecture.
The component 
\begin{equation}
\mathbf{K}_{N,1}(u)=\sum_{k=1}^{N}\mathbf{L}'_{N,k}(u)\mathbf{L}_{k,1}(-u),
\end{equation}
is a nilpotent operator.

The reflection equation has a special scalar solution (with $d_{B}=1$)
\begin{equation}
K_{i,j}(u|\bar{\mathfrak{b}})=\delta_{i,j}+u\sum_{i=1}^{N/2}\mathfrak{b}_{i}e_{N+1-i,i}.
\end{equation}
From this scalar representation and the Lax operator $\mathbf{L}$,
we can construct a deformed $K$-matrix:
\begin{equation}
\tilde{\mathbf{K}}_{0}(u|\bar{\mathfrak{b}})=\mathbf{L}'_{0}(u)K_{0}(u|\bar{\mathfrak{b}})\mathbf{L}_{0}(-u),
\end{equation}
that is
\begin{equation}
\tilde{\mathbf{K}}_{i,j}(u|\bar{\mathfrak{b}})=\mathbf{K}_{i,j}(u)+u\sum_{k=1}^{N/2}\mathfrak{b}_{k}\mathbf{L}'_{i,N+1-k}(u)\mathbf{L}_{k,j}(-u).
\end{equation}
This matrix satisfies the reflection equation, the component $\tilde{\mathbf{K}}_{N,1}(u|\bar{\mathfrak{b}})$
is invertible, and $\tilde{\mathbf{K}}_{i,j}(u|0)=\mathbf{K}_{i,j}(u)$. 

Let us now examine the component $\tilde{\mathbf{K}}_{N,1}$, i.e.,
the first $G$-operator:
\begin{align}
\tilde{\mathbf{G}}^{(1)}(u|\bar{\mathfrak{b}}) & =\tilde{\mathbf{K}}_{N,1}(u|\bar{\mathfrak{b}})=\mathfrak{b}_{1}u\mathbf{L}'_{N,N}(u)\mathbf{L}_{1,1}(-u)+\mathbf{K}_{N,1}(u)+u\sum_{k=2}^{N/2}\mathfrak{b}_{k}\mathbf{L}'_{N,N+1-k}(u)\mathbf{L}_{k,1}(-u)\nonumber \\
 & =\mathfrak{b}_{1}u\mathbf{L}'_{N,N}(u)\mathbf{L}_{1,1}(-u)+\check{\mathbf{K}}_{N,1}(u).
\end{align}
The eigenvalues of $\tilde{\mathbf{G}}^{(1)}$ coincide with those
of $\mathfrak{b}_{1}u\mathbf{L}'_{N,N}(u)\mathbf{L}_{1,1}(-u)$. Let
us continue with the second $G$-operator
\begin{equation}
\tilde{\mathbf{G}}^{(2)}(u|\bar{\mathfrak{b}})=\tilde{\mathbf{K}}_{2,2}(u|\bar{\mathfrak{b}})-\tilde{\mathbf{K}}_{2,1}(u|\bar{\mathfrak{b}})\left[\tilde{\mathbf{K}}_{1,1}(u|\bar{\mathfrak{b}})\right]^{-1}\tilde{\mathbf{K}}_{1,2}(u|\bar{\mathfrak{b}}).
\end{equation}
It can be shown that this operator can be expressed using the quantum
minor (see (\ref{eq:qminor}))
\begin{equation}
\tilde{\mathbf{k}}_{\bar{1},\bar{2}}^{1,2}(u|\bar{\mathfrak{b}})=\tilde{\mathbf{G}}^{(1)}(u+1|\bar{\mathfrak{b}})\tilde{\mathbf{G}}^{(2)}(u|\bar{\mathfrak{b}}),
\end{equation}
where $\bar{i}=N+1-i$. The quantum minor, however, factorizes
\begin{equation}
\tilde{\mathbf{k}}_{\bar{1},\bar{2}}^{1,2}(u|\bar{\mathfrak{b}})=\mathfrak{b}_{1}\mathfrak{b}_{2}\mathbf{l}_{1,2}^{1,2}(-u)\hat{\mathbf{l}}_{\bar{1},\bar{2}}^{\bar{1},\bar{2}}(u),
\end{equation}
meaning that the eigenvalues of $\tilde{\mathbf{G}}^{(2)}$ are proportional
to $\mathfrak{b}_{2}$. This can be generalized: the eigenvalues of
$\tilde{\mathbf{G}}^{(k)}$ are proportional to $\mathfrak{b}_{k}$
for $k\leq N/2$. Based on this, the dependence of the eigenvalues
of the operators
\begin{equation}
\mathbf{F}^{(k)}(u|\bar{\mathfrak{b}})=\left[\tilde{\mathbf{G}}^{(k)}(u|\bar{\mathfrak{b}})\right]^{-1}\tilde{\mathbf{G}}^{(k+1)}(u|\bar{\mathfrak{b}})
\end{equation}
on $\mathfrak{b}$ is extremely simple for $k<N/2$:
\begin{equation}
\mathcal{F}_{\ell}^{(k)}(u|\bar{\mathfrak{b}})=\frac{\mathfrak{b}_{k+1}}{\mathfrak{b}_{k}}\mathcal{F}_{\ell}^{(k)}(u)
\end{equation}
where the functions $\mathcal{F}_{\ell}(u)$ are the eigenvalues of
the operator
\begin{equation}
\mathbf{F}^{(k)}(u):=\mathbf{F}^{(k)}(u|\bar{\mathfrak{b}})\Biggr|_{\mathfrak{b}_{k}=1}.\label{eq:FYm-1}
\end{equation}
If $N=2n+1$, then the eigenvalues of $\tilde{\mathbf{G}}^{(n+1)}(u|\bar{\mathfrak{b}})$
do not depend on the $\mathfrak{b}_{k}$ parameters. If $N=2n$, then
the eigenvalues of $\tilde{\mathbf{G}}^{(n+1)}(u|\bar{\mathfrak{b}})$
are inversely proportional to $\mathfrak{b}_{n}$, i.e.,
\begin{equation}
\mathcal{F}_{\ell}^{(n)}(u|\bar{\mathfrak{b}})=\begin{cases}
\frac{1}{\mathfrak{b}_{n}}\mathcal{F}_{\ell}^{(n)}(u), & \text{if }N=2n+1,\\
\frac{1}{\mathfrak{b}_{n}^{2}}\mathcal{F}_{\ell}^{(n)}(u), & \text{if }N=2n.
\end{cases}
\end{equation}

The boundary state must also be deformed. The boundary state $\langle\tilde{\Psi}(\mathfrak{b})|$
corresponding to the deformed $K$-matrix can be obtained based on
the description in Appendix \ref{sec:Boundary-states-gen-rep}. Since
the eigenvalues of $\tilde{\mathbf{G}}^{(k)}$ are linear in $\mathfrak{b}_{k}$
the dependence of the eigenvalues of the $\tilde{\mathbf{B}}$ operator
on $\mathfrak{b}$ is as follows:
\begin{equation}
\tilde{\beta}_{\ell}(\bar{\mathfrak{b}})=\prod_{k=1}^{n}\mathfrak{b}_{k}^{\Lambda_{k}}\beta_{\ell},
\end{equation}
where the $\beta_{\ell}$ are the eigenvalues of the operator
\begin{equation}
\mathbf{B}:=\langle\tilde{\Psi}(\bar{\mathfrak{b}})|0\rangle\Biggr|_{\mathfrak{b}_{k}=1}.\label{eq:BYm-1}
\end{equation}
Applying the overlap formula:
\begin{equation}
\frac{\langle\mathrm{MPS}(\bar{\mathfrak{b}})|\mathbb{B}(\bar{t})}{\sqrt{\mathbb{C}(\bar{t})\mathbb{B}(\bar{t})}}=\prod_{k=1}^{n}\mathfrak{b}_{k}^{\Lambda_{k}-r_{k}+r_{k-1}}\sum_{\ell=1}^{d_{B}}\beta_{\ell}\tilde{\mathcal{F}}_{\ell}(\bar{t}^{+})\sqrt{\frac{\det G^{+}}{\det G^{-}}}.
\end{equation}
Now we can take the $\mathfrak{b}\to0$ limit. The limit is non-vanishing
only when 
\begin{equation}
\Lambda_{k}=r_{k}-r_{k-1},
\end{equation}
or equivalently
\begin{equation}
r_{k}=\sum_{l=1}^{k}\Lambda_{l}.\label{eq:selRuleBN0}
\end{equation}
In summary, the $\mathcal{B}(N,0)$ overlap is
\begin{equation}
\frac{\langle\mathrm{MPS}|\mathbb{B}(\bar{t})}{\sqrt{\mathbb{C}(\bar{t})\mathbb{B}(\bar{t})}}=\sum_{\ell=1}^{d_{B}}\beta_{\ell}\prod_{\nu=1}^{n}\tilde{\mathcal{F}}_{\ell}^{(\nu)}(\bar{t}^{+,\nu})\sqrt{\frac{\det G^{+}}{\det G^{-}}},
\end{equation}
with the selection rule (\ref{eq:selRuleBN0}).

\subsubsection{General $M$}

For general $M$, we define the nested $K$-matrices and $G$-operators
according to the procedure described in \ref{sec:Derivations-of-the-overlaps},
continuing until we reach the matrix $\mathbf{K}^{(M+1)}$, which
is a representation of $\tilde{\mathcal{B}}(N-2M,0)$. This matrix
is then deformed as described above, and the $G$-operators are computed
from the deformed matrices, following the same procedure:
\[
\begin{array}{ccccccccccccc}
\mathcal{B}(N,M) & \to & \dots & \to & \mathcal{B}(2N-2M+2,1) & \to & \mathcal{B}(2N-2M,0)\\
\mathbf{K}^{(1)}\equiv\mathbf{K} & \to & \dots & \to & \mathbf{K}^{(M)} & \to & \mathbf{K}^{(M+1)} & \rightarrow & \tilde{\mathbf{K}}^{(M+1)} & \to & \tilde{\mathbf{K}}^{(M+2)} & \to & \dots\\
\downarrow &  &  &  & \downarrow &  &  &  & \downarrow &  & \downarrow\\
\mathbf{G}^{(1)} &  & \dots &  & \mathbf{G}^{(M)} &  &  &  & \mathbf{G}^{(M+1)} &  & \mathbf{G}^{(M+2)} &  & \dots
\end{array}
\]
Specifically, the deformed $K$-matrix is given by
\begin{equation}
\begin{split}\mathbf{K}_{i,j}^{(M+1)}(u) & =\sum_{l=M+1}^{N-M}\mathbf{L}'_{i,l}(u)\mathbf{L}_{l,j}(-u),\\
\tilde{\mathbf{K}}_{i,j}^{(M+1)}(u) & =\mathbf{K}_{i,j}^{(M+1)}(u)+u\sum_{l=M+1}^{n}\mathbf{L}'_{i,N+1-l}(u)\mathbf{L}_{l,j}(-u),
\end{split}
\end{equation}
where $i,j=M+1,\dots,N-M$. The G-operators $\mathbf{G}^{(s)}$ are
derived from the matrix $\tilde{\mathbf{K}}^{(M+1)}$ in the usual
way for $s\geq M+1$-re. Non-zero overlaps require the following selection
rule:
\begin{equation}
r_{s}=r_{M}+\sum_{l=M+1}^{s}\Lambda_{l},
\end{equation}
for $s=M+1,\dots,n$.

\section{Generalization for the orthogonal and symplectic spin chains\label{sec:Orthogonal}}

In this section, we generalize the overlap formulas to orthogonal
and symplectic spin chains. Here, $n$ denotes the rank of the symmetry
algebra, i.e., $\mathfrak{sp}_{N}=\mathfrak{sp}_{2n}$ and
\[
\mathfrak{so}_{N}=\begin{cases}
\mathfrak{so}_{2n}, & \text{if }N\text{ is even},\\
\mathfrak{so}_{2n+1}, & \text{if }N\text{ is odd}.
\end{cases}
\]
We introduce the following index sets
\begin{equation}
I=\begin{cases}
\left\{ -n,\dots,,-2,-1,1,2,\dots,n\right\} , & \text{for }N=2n,\\
\left\{ -n,\dots,,-2,-1,0,1,2,\dots,n\right\} , & \text{for }N=2n+1,
\end{cases}
\end{equation}
and in summations, our convention is
\begin{equation}
\sum_{j=-n}^{n}a_{j}\equiv\sum_{j\in I}a_{j}.
\end{equation}

Assuming that the universal form of the overlaps exists (\ref{eq:onOV-1})
for both the orthogonal and symplectic cases, we determine the unknown
one-particle overlap functions $\tilde{\mathcal{F}}_{\ell}^{(\nu)}(u)$
for each node in the above subsections.

\subsection{$RTT$- and $KT$-relations}

The $R$-matrix for orthogonal and symplectic spin chains is given
by:
\begin{equation}
R(u)=\mathbf{1}+\frac{1}{u}\mathbf{P}-\frac{1}{u+\kappa_{N}}\mathbf{Q}\in\mathrm{End}(\mathbb{C}^{N}\otimes\mathbb{C}^{N}),
\end{equation}
where
\begin{equation}
\mathbf{P}=\sum_{i,j=-n}^{n}e_{i,j}\otimes e_{j,i},\quad\mathbf{Q}=\sum_{i,j=-n}^{n}\theta_{i}\theta_{j}e_{i,j}\otimes e_{-i,-j},
\end{equation}
and $\kappa_{N}=\frac{N\mp2}{2}$, where the minus and plus signs
correspond to the orthogonal and symplectic cases, respectively. In
the orthogonal case, $\theta_{i}=+1$ for all $i$, and in the symplectic
case, $\theta_{i}=\mathrm{sgn}(i)$. The $RTT$-relation defines the
Yangian algebras $Y(\mathfrak{g}_{N})$
\begin{equation}
R_{1,2}(u-v)T_{1}(u)T_{2}(v)=T_{2}(v)T_{1}(u)R_{1,2}(u-v).
\end{equation}
From this relation, it also follows that
\begin{equation}
\sum_{k=-n}^{n}\theta_{j}\theta_{k}T_{i,k}(u)T_{-j,-k}(u-\kappa_{N})=\delta_{i,j}\gamma(u),\label{eq:inv}
\end{equation}
where $\gamma(u)$ is a central element of the $Y(\mathfrak{g}_{N})$
algebra. This also implies that for the $\mathfrak{g}_{N}$ symmetric
spin chains, the crossed $KT$-relation is equivalent to the uncrossed
one, since $\widehat{T}(u)\sim T(u-\kappa_{N})$. 

The highest weight representations can be defined analogously as before,
meaning there exists a highest weight vector for which
\begin{equation}
\begin{split}T_{i,j}(u)\text{|}0\rangle & =0,\quad i>j,\\
T_{i,i}(u)\text{|}0\rangle & =\lambda_{i}(u)\text{|}0\rangle.
\end{split}
\end{equation}
Just like in the $\mathfrak{gl}_{N}$ case, the algebra $Y(\mathfrak{g}_{N})$
also contains a $\mathfrak{g}_{N}$ subalgebra, and the pseudo-vacuum
is the highest weight vector of this $\mathfrak{g}_{N}$ subalgebra
as well. These $\mathfrak{gl}_{N}$ weights can be obtained from the
asymptotic limit of the pseudo-vacuum eigenvalues
\begin{equation}
\lambda_{-i}(u)=1+\frac{1}{u}\Lambda_{-i+n+1}+\mathcal{O}(u^{-2}),
\end{equation}
for $i=1,\dots,n$.

For these representations, we can apply the inversion relation (\ref{eq:inv})
on the pseudo-vacuum 
\begin{equation}
\sum_{k=-n}^{n}\theta_{-n}\theta_{-k}T_{-n,-k}(u)T_{n,k}(u-\kappa_{N})\text{|}0\rangle=T_{-n,-n}(u)T_{n,n}(u-\kappa_{N})\text{|}0\rangle=\gamma(u)\text{|}0\rangle,
\end{equation}
i.e.,
\begin{equation}
\gamma(u)=\lambda_{-n}(u)\lambda_{n}(u-\kappa_{N}).
\end{equation}

The Bethe Ansatz equations are the following
\begin{equation}
\tilde{\alpha}_{a}(u_{j}^{a})=-\prod_{b=1}^{n}\prod_{k=1}^{r_{b}}\frac{u_{j}^{a}-u_{k}^{b}-\frac{i}{2}C_{a,b}}{u_{j}^{a}-u_{k}^{b}+\frac{i}{2}C_{a,b}},
\end{equation}
where $C_{a,b}$ is the symmetric Cartan matrix, $C_{a,b}=\vec{\rho}_{a}\cdot\vec{\rho}_{b}$,
with $\vec{\rho}_{a}\in\mathbb{R}^{n}$ being the simple roots for
$a=1,\dots,n$. In both the orthogonal and symplectic cases,
\begin{equation}
\rho_{a}=(\underbrace{0,\dots,0}_{a-1},1,-1,\underbrace{0,\dots,0}_{n-a-1}),
\end{equation}
for $a=1,\dots,n-1$, and the remaining root is
\begin{equation}
\rho_{n}=\begin{cases}
(0,\dots,0,0,1), & \text{for }\mathfrak{so}_{2n+1},\\
(0,\dots,0,0,2), & \text{for }\mathfrak{sp}_{2n},\\
(0,\dots,0,1,1), & \text{for }\mathfrak{so}_{2n}.
\end{cases}
\end{equation}

As previously established, for orthogonal and symplectic spin chains,
there is only one type of $KT$-relation. For simplicity, we use the
uncrossed convention:
\begin{equation}
\mathbf{K}_{0}(u)\langle\Psi|T_{0}(u)=\langle\Psi|T_{0}(-u)\mathbf{K}_{0}(u).
\end{equation}
From the $KT$-relation, it also follows that
\[
\lambda_{k}(u)=\lambda_{-k}(-u).
\]
As before, the $KT$-relation is compatible with associativity and
the $RTT$-relation if the $K$-matrix satisfies the reflection equation

\begin{equation}
R(v-u)\mathbf{K}_{1}(u)R(-u-v)\mathbf{K}_{2}(v)=\mathbf{K}_{2}(v)R(-u-v)\mathbf{K}_{1}(u)R(v-u).
\end{equation}

The $K$-matrices can be classified from the asymptotic limit of the
reflection equation. Using previous calculations, the asymptotic limit
of the $K$-matrix is
\begin{equation}
\mathbf{K}_{i,j}(u)=\mathcal{U}_{i,j}\otimes\mathbf{1}+\mathcal{O}(u^{-1}),
\end{equation}
where
\[
\mathcal{U}_{i,j}=\pm\theta_{i}\theta_{j}\mathcal{U}_{-j,-i},\quad\sum_{j=-n}^{n}\mathcal{U}_{i,j}\mathcal{U}_{j,k}=a\delta_{i,k},
\]
where $a\in\mathbb{C}$. If $a=0$, then the $K$-matrix is called
singular. If $a\neq0$, then the $K$-matrix can be normalized such
that $a=1$. From now on, we deal with non-singular $K$-matrices,
i.e., where $a=1$. From the second condition, it follows that the
eigenvalues of the $\mathcal{U}$ matrix are $\pm1$. If $\mathcal{U}_{i,j}=+\theta_{i}\theta_{j}\mathcal{U}_{-j,-i}$
then the symmetry algebras of the possible $\mathcal{U}$ matrices
are $\mathfrak{g}_{M}\oplus\mathfrak{g}_{N-M}$. The corresponding
$K$-matrices generate the reflection algebra $Y(\mathfrak{g}_{N},\mathfrak{g}_{M}\oplus\mathfrak{g}_{N-M})$
\cite{Guay_2016}. If $\mathcal{U}_{i,j}=-\theta_{i}\theta_{j}\mathcal{U}_{-j,-i}$
then $N$ is always even, and the symmetry algebra of the possible
$\mathcal{U}$ matrices is $\mathfrak{gl}_{n}$, with the corresponding
reflection algebra $Y(\mathfrak{g}_{2n},\mathfrak{gl}_{n})$ \cite{Guay_2016}.

In the following, we present examples of these algebras. In the case
of $Y(\mathfrak{so}_{N},\mathfrak{so}_{M}\oplus\mathfrak{so}_{N-M})$,
\begin{equation}
\mathcal{U}=\sum_{j=n-M+1}^{n}(e_{j,-j}+e_{-j,j})+\sum_{j=-n+M}^{n-M}e_{j,j}.\label{eq:soNsoM}
\end{equation}
In the case of $Y(\mathfrak{so}_{2n},\mathfrak{gl}_{n})$,
\begin{equation}
\mathcal{U}=\sum_{j=1}^{n/2}x_{j}(e_{n-2j+1,2j-2-n}-e_{n-2j+2,2j-1-n})+\sum_{j=1}^{n}(e_{j,j}-e_{-j,-j}).\label{eq:sogl}
\end{equation}
In the case of $Y(\mathfrak{sp}_{2n},\mathfrak{sp}_{2m}\oplus\mathfrak{sp}_{2n-2m})$,
\begin{align}
\mathcal{U} & =\sum_{j=1}^{m}x_{j}(e_{n-2j+1,2j-2-n}-e_{n-2j+2,2j-1-n})+\sum_{j=-n}^{n}\mathfrak{s}_{j}e_{j,j},\\
\mathfrak{s}_{j} & =\begin{cases}
+1, & |j|\leq n-2m,\\
(-1)^{n-j}, & |j|>n-2m.
\end{cases}\nonumber 
\end{align}
In the case of $Y(\mathfrak{sp}_{2n},\mathfrak{gl}_{n})$,
\begin{equation}
\mathcal{U}=\sum_{j=1}^{n}x_{j}e_{j,-j}+\sum_{j=1}^{n}(e_{j,j}-e_{-j,-j}).\label{eq:spgl}
\end{equation}

For better clarity, we provide a few explicit examples. In the case
of $Y(\mathfrak{so}_{4},\mathfrak{gl}_{2})$,
\begin{equation}
\mathcal{U}=\left(\begin{array}{cc|cc}
-1 & 0 & 0 & 0\\
0 & -1 & 0 & 0\\
\hline x_{1} & 0 & +1 & 0\\
0 & -x_{1} & 0 & +1
\end{array}\right),\label{eq:Uso4gl2}
\end{equation}
and in the case of $Y(\mathfrak{so}_{6},\mathfrak{gl}_{3})$,
\begin{equation}
\mathcal{U}=\left(\begin{array}{ccc|ccc}
-1 & 0 & 0 & 0 & 0 & 0\\
0 & -1 & 0 & 0 & 0 & 0\\
0 & 0 & -1 & 0 & 0 & 0\\
\hline 0 & 0 & 0 & +1 & 0 & 0\\
x_{1} & 0 & 0 & 0 & +1 & 0\\
0 & -x_{1} & 0 & 0 & 0 & +1
\end{array}\right),
\end{equation}
and in the case of $Y(\mathfrak{so}_{8},\mathfrak{gl}_{4})$,
\begin{equation}
\mathcal{U}=\left(\begin{array}{cccc|cccc}
-1 & 0 & 0 & 0 & 0 & 0 & 0 & 0\\
0 & -1 & 0 & 0 & 0 & 0 & 0 & 0\\
0 & 0 & -1 & 0 & 0 & 0 & 0 & 0\\
0 & 0 & 0 & -1 & 0 & 0 & 0 & 0\\
\hline 0 & 0 & x_{2} & 0 & +1 & 0 & 0 & 0\\
0 & 0 & 0 & -x_{2} & 0 & +1 & 0 & 0\\
x_{1} & 0 & 0 & 0 & 0 & 0 & +1 & 0\\
0 & -x_{1} & 0 & 0 & 0 & 0 & 0 & +1
\end{array}\right).
\end{equation}
In the case of $Y(\mathfrak{sp}_{8},\mathfrak{sp}_{2}\oplus\mathfrak{sp}_{6})$,
\begin{equation}
\mathcal{U}=\left(\begin{array}{cccc|cccc}
-1 & 0 & 0 & 0 & 0 & 0 & 0 & 0\\
0 & +1 & 0 & 0 & 0 & 0 & 0 & 0\\
0 & 0 & +1 & 0 & 0 & 0 & 0 & 0\\
0 & 0 & 0 & +1 & 0 & 0 & 0 & 0\\
\hline 0 & 0 & 0 & 0 & +1 & 0 & 0 & 0\\
0 & 0 & 0 & 0 & 0 & +1 & 0 & 0\\
x_{1} & 0 & 0 & 0 & 0 & 0 & +1 & 0\\
0 & -x_{1} & 0 & 0 & 0 & 0 & 0 & -1
\end{array}\right),
\end{equation}
and in the case of $Y(\mathfrak{sp}_{8},\mathfrak{sp}_{4}\oplus\mathfrak{sp}_{4})$,
\begin{equation}
\mathcal{U}=\left(\begin{array}{cccc|cccc}
-1 & 0 & 0 & 0 & 0 & 0 & 0 & 0\\
0 & +1 & 0 & 0 & 0 & 0 & 0 & 0\\
0 & 0 & -1 & 0 & 0 & 0 & 0 & 0\\
0 & 0 & 0 & +1 & 0 & 0 & 0 & 0\\
\hline 0 & 0 & x_{2} & 0 & +1 & 0 & 0 & 0\\
0 & 0 & 0 & -x_{2} & 0 & -1 & 0 & 0\\
x_{1} & 0 & 0 & 0 & 0 & 0 & +1 & 0\\
0 & -x_{1} & 0 & 0 & 0 & 0 & 0 & -1
\end{array}\right).
\end{equation}

\subsection{The $Y(n)$ subalgebra}

The algebra $Y(\mathfrak{g}_{N})$ contains a subalgebra $Y(n)$,
and the corresponding generators are denoted
\begin{equation}
T_{n+1-i,n+1-j}^{\mathfrak{gl}_{n}}(u)=T_{-i,-j}(u+\frac{\kappa_{N}}{2}),\label{eq:Tgl}
\end{equation}
where $i,j=1,\dots,n$. From this, it also follows that
\begin{equation}
\lambda_{1}^{\mathfrak{gl}_{n}}(u)=\lambda_{-n}(u+\frac{\kappa_{N}}{2})=\lambda_{n}(-u-\frac{\kappa_{N}}{2}).\label{eq:lamgl}
\end{equation}
The Bethe vectors corresponding to this subalgebra are those for which
$r_{n}=0$, meaning that in this sector, the Bethe vectors are generated
by the generators $T_{-i,-j}$ for $1\leq j<i\leq n$, i.e,
\begin{equation}
\mathbb{B}(\left\{ \bar{t}^{\nu}\right\} _{\nu=1}^{n-1},\emptyset)=\mathcal{P}(T_{-i<-j})|0\rangle,
\end{equation}
where $\mathcal{P}(T_{-i<-j})$ is a polynomial in the generators
$T_{-i,-j}$ for $1\leq j<i\leq n$. It follows that the Bethe vectors
in the $\mathfrak{gl}_{n}$ sector are annihilated by the generators
$T_{i,-j}$ if $0\leq i,j\leq n$ and $i\neq-j$. Based on this, we
introduce an equivalence relation in the $\mathfrak{gl}_{n}$ sector:
\begin{equation}
A\cong B,\quad\Longleftrightarrow\quad T_{i,-j}(A-B)=0,
\end{equation}
for some $0\leq i,j\leq n$ if $i\neq-j$. The inversion relation
(\ref{eq:inv}) simplifies in the $\mathfrak{gl}_{n}$ sector 
\begin{equation}
\sum_{k=1}^{n}T_{-i,-k}(u+\frac{\kappa_{N}}{2})T_{j,k}(u-\frac{\kappa_{N}}{2})\cong\delta_{i,j}\lambda_{-n}(u+\frac{\kappa_{N}}{2})\lambda_{n}(u-\frac{\kappa_{N}}{2}),
\end{equation}
when $i,j=1,\dots,n$. Here, we used that $\theta_{-j}\theta_{-k}=1$
for $i,j=1,\dots,n$. Using the correspondences (\ref{eq:inv}) and
(\ref{eq:lamgl}) we get
\begin{equation}
\sum_{k=1}^{n}T_{i,k}^{\mathfrak{gl}_{n}}(u)T_{n+1-j,n+1-k}(u-\frac{\kappa_{N}}{2})\cong\delta_{i,j}\lambda_{1}^{\mathfrak{gl}_{n}}(u)\lambda_{1}^{\mathfrak{gl}_{n}}(-u).
\end{equation}
We can see that this matches the inversion relation (\ref{eq:Thatdef})
used for the $Y(n)$ algebras , i.e.,
\begin{equation}
\widehat{T}_{j,k}^{\mathfrak{gl}_{n}}(u)=T_{n+1-j,n+1-k}(u-\frac{\kappa_{N}}{2}).
\end{equation}
The $KT$-relation in the $\mathfrak{gl}_{n}$ sector also simplifies
accordingly
\begin{equation}
\sum_{k=1}^{n}\mathbf{K}_{n+1-i,k-n-1}(u+\frac{\kappa_{N}}{2})\langle\Psi|T_{k,j}^{\mathfrak{gl}_{n}}(u)\cong\sum_{k=1}^{n}\langle\Psi|\widehat{T}_{i,k}^{\mathfrak{gl}_{n}}(-u)\mathbf{K}_{n+1-k,j-n-1}(u+\frac{\kappa_{N}}{2}).
\end{equation}
This is the $KT$-relation for the $\mathfrak{gl}_{n}$ spin chain,
and the $K$-matrix in the $\mathfrak{gl}_{n}$ sector under the $\mathfrak{gl}_{n}$
convention is 
\begin{equation}
\mathbf{K}_{i,j}^{\mathfrak{gl}_{n}}(u)=\mathbf{K}_{n+1-i,j-n-1}(u+\frac{\kappa_{N}}{2}),\label{eq:Kcon}
\end{equation}
where $i,j=1,\dots,n$.

Based on this, we can determine the relationship between the one-particle
overlap functions $\tilde{\mathcal{F}}_{k}^{\mathfrak{g}_{N},(s)}$
and $\tilde{\mathcal{F}}_{k}^{\mathfrak{gl}_{n},(s)}$ for $s=1,\dots,n-1$.
For this, we use the formula of the $\mathfrak{gl}_{n}$ $G$-operators
(\ref{eq:GG}). Based on this, in the $\mathfrak{g}_{N}$ convention,
the $G$-operators are
\begin{equation}
\mathbf{G}^{(s)}=\left(\begin{array}{cccc}
\mathbf{K}_{n,-n} & \dots & \mathbf{K}_{n,-n+s-2} & \mathbf{K}_{n,-n+s-1}\\
\vdots & \ddots & \vdots & \vdots\\
\mathbf{K}_{n-s+2,-n} & \dots & \mathbf{K}_{n-s+2,-n+s-2} & \mathbf{K}_{n-s+2,-n+s-1}\\
\mathbf{K}_{n-s+1,-n} & \dots & \mathbf{K}_{n-s+1,-n+s-2} & \boxed{\mathbf{K}_{n-s+1,-n+s-1}}
\end{array}\right).\label{eq:Ggn}
\end{equation}
Using the correspondence (\ref{eq:Kcon}), we can easily see that
\begin{equation}
\mathbf{G}^{\mathfrak{gl}_{n},(s)}(u)=\mathbf{G}^{(s)}(u+\frac{\kappa_{N}}{2}),
\end{equation}
for $s=1,\dots,n$. Based on this, we can define the $\mathfrak{g}_{N}$
$F$-operators for $s=1,\dots,n-1$ in the usual way
\begin{equation}
\mathbf{F}^{(s)}(u):=\left[\mathbf{G}^{(s)}(u)\right]^{-1}\mathbf{G}^{(s+1)}(u)=\mathbf{F}^{\mathfrak{gl}_{n},(s)}(u-\frac{\kappa_{N}}{2}).
\end{equation}
The one-particle overlap functions in the $\mathfrak{\ensuremath{gl}_{n}}$
sector are given by (\ref{eq:Ftdef})
\begin{equation}
\tilde{\mathcal{F}}_{\ell}^{(s)}(u)=\mathcal{F}_{\ell}^{\mathfrak{gl}_{n},(s)}(iu-s/2)\sqrt{\frac{u^{2}}{u^{2}+1/4}}=\mathcal{F}_{\ell}^{(s)}(iu+\frac{\kappa_{N}}{2}-s/2)\sqrt{\frac{u^{2}}{u^{2}+1/4}},
\end{equation}
for $s=1,\dots,n-1$.

\subsection{Nested $\mathfrak{g}_{N}$ $K$-matrices}

Using equation (\ref{eq:Kmqdet}) we can define nested $K$-matrices
in the $\mathfrak{\ensuremath{gl}_{n}}$ sector. Rewriting the quasi-determinant
in the $\mathfrak{g}_{N}$ convention, we obtain
\begin{equation}
\mathbf{K}_{i,j}^{(s)}=\left(\begin{array}{cccc}
\mathbf{K}_{n,-n} & \dots & \mathbf{K}_{n,-n+s-2} & \mathbf{K}_{n,j}\\
\vdots & \ddots & \vdots & \vdots\\
\mathbf{K}_{n-s+2,-n} & \dots & \mathbf{K}_{n-s+2,-n+s-2} & \mathbf{K}_{n-s+2,j}\\
\mathbf{K}_{i,-n} & \dots & \mathbf{K}_{i,-n+s-2} & \boxed{\mathbf{K}_{i,j}}
\end{array}\right).\label{eq:KembgN}
\end{equation}
This defines a $\mathfrak{gl}_{n-s+1}$ $K$-matrix for $i,-j=1,\dots,n-s+1$,
according to the correspondence (\ref{eq:Kcon}). This algebra embedding
can be extended to the full $\mathfrak{g}_{N}$ algebra, meaning that
(\ref{eq:KembgN}) defines a $\mathfrak{g}_{N-2s+2}$ $K$-matrix
for $i,j=-n+s-1,\dots,n-s+1$. The proof can be carried out based
on \cite{Jing_2018} or the proof of Theorem \ref{thm:uncrossed-nestedK}.

There are recursive definitions equivalent to quasi-determinants for
nested $K$-matrices. One such option is the following:
\begin{equation}
\mathbf{K}_{a,b}^{(s+1)}(u)=\mathbf{K}_{a,b}^{(s)}(u)-\mathbf{K}_{a,-n+s-1}^{(s)}(u)\left[\mathbf{K}_{n+1-s,-n+s-1}^{(s)}(u)\right]^{-1}\mathbf{K}_{n+1-s,b}^{(s)}(u),\label{eq:soNrec0}
\end{equation}
where $a,b=-n+s,\dots,n-s$. The $G$-operators can be expressed in
the following way.
\begin{equation}
\mathbf{G}^{(s)}(u)=\mathbf{K}_{n+1-s,-n-1+s}^{(s)}(u),
\end{equation}
for $s=1,\dots,n-1$. 

According to section \ref{subsec:Quasi-determinants}, another recursive
definition for odd-indexed $K$-matrices is:
\begin{equation}
\mathbf{K}_{a,b}^{(2s+1)}(u)=\mathbf{K}_{a,b}^{(2s-1)}(u)-\sum_{\alpha,\beta=n-2s+1}^{n-2s+2}\mathbf{K}_{a,-\alpha}^{(2s-1)}(u)\widehat{\mathbf{K}}_{-\alpha,\beta}^{(2s-1)}(u)\mathbf{K}_{a,\beta}^{(2s-1)}(u),\label{eq:soNrec}
\end{equation}
where $\widehat{\mathbf{K}}$ is a $2\times2$ inverse matrix defined
as follows
\[
\sum_{\alpha,\beta=n-2s+1}^{n-2s+2}\widehat{\mathbf{K}}_{-\alpha,\beta}^{(2s-1)}(u)\mathbf{K}_{\beta,-\gamma}^{(2s-1)}(u)=\delta_{\alpha,\gamma}.
\]
These matrices $\mathbf{K}^{(2s-1)}$ exist for $s=1,\dots,\left\lfloor \frac{n}{2}\right\rfloor $.
We can select $k=\left\lfloor \frac{n}{2}\right\rfloor $ number of
$\mathfrak{gl}_{2}$ $K$-matrices
\begin{equation}
\mathbf{k}_{\alpha,\beta}^{(s)}(u):=\mathbf{K}_{n-2s+3-\alpha,-n+2s-3+\beta}^{(2s-1)}(u),\label{eq:Ym2k}
\end{equation}
for $\alpha,\beta=1,2$. Using these, the $G$-operators can be defined
as
\begin{equation}
\begin{split}\mathbf{G}^{(2s-1)}(u) & =\mathbf{k}_{1,1}^{(s)}(u),\\
\mathbf{G}^{(2s)}(u) & =\mathbf{k}_{2,2}^{(s)}(u)-\mathbf{k}_{2,1}^{(s)}(u)\left[\mathbf{k}_{1,1}^{(s)}(u)\right]^{-1}\mathbf{k}_{1,2}^{(s)}(u),
\end{split}
\label{eq:Gop0}
\end{equation}
for $s=1,\dots,k$.

We can examine the leading order of nested $K$-matrices in the asymptotic
limit
\begin{equation}
\mathbf{K}_{i,j}^{(s)}(u)=\mathcal{U}_{i,j}^{(s)}\mathbf{1}+\mathcal{O}(u^{-1}).
\end{equation}
From the recursive equations (\ref{eq:soNrec0}), it is easily derived
that
\begin{equation}
\begin{split}\sum_{k=-n-1+s}^{n+1-s}\mathcal{U}_{i,k}^{(s)}\mathcal{U}_{k,l}^{(s)} & =\sum_{k=-n}^{n}\mathcal{U}_{i,k}\mathcal{U}_{k,l},\\
\sum_{k=-n-1+s}^{n+1-s}\mathcal{U}_{k,k}^{(s)} & =\sum_{k=-n}^{n}\mathcal{U}_{k,k}.
\end{split}
\label{eq:Urec-1}
\end{equation}
We can also see that for the algebras $Y(\mathfrak{so}_{2n},\mathfrak{so}_{M}\oplus\mathfrak{so}_{2n-M})$
and $Y(\mathfrak{sp}_{2n},\mathfrak{gl}_{n})$, the operator $\mathbf{K}_{n,-n}^{-1}$
exists, meaning that the first type of recursion (\ref{eq:soNrec0})
can be applied. This follows from the asymptotic limit of the $K$-matrices
(\ref{eq:soNsoM}), (\ref{eq:spgl}). Since in these cases $\mathcal{U}_{n,-n}\neq0$,
the inverse of the operator $\mathbf{K}_{n,-n}(u)$ can be defined
order by order in the expansion in $u^{-1}$. Based on this, we obtain
the following recursions for these reflection algebras.

\[
\begin{array}{ccccccc}
Y(\mathfrak{so}_{2n},\mathfrak{so}_{M}\oplus\mathfrak{so}_{2n-M}) & \to & Y(\mathfrak{so}_{2n-2},\mathfrak{so}_{M-1}\oplus\mathfrak{so}_{2n-M-1}) & \to & \dots & \to & Y(\mathfrak{so}_{2n-2M},\mathfrak{so}_{2n-2M})\\
Y(\mathfrak{sp}_{2n},\mathfrak{gl}_{n}) & \to & Y(\mathfrak{sp}_{2n-2},\mathfrak{gl}_{n-1}) & \to & \dots & \to & Y(\mathfrak{sp}_{2},\mathfrak{gl}_{1})
\end{array}
\]
In the case of $\mathfrak{so}_{2n}$, the recursion can be terminated
when $M=n$ or $M=n-1$. In these cases, we have
\[
\begin{array}{ccccccc}
Y(\mathfrak{so}_{2n},\mathfrak{so}_{n}\oplus\mathfrak{so}_{n}) & \to & Y(\mathfrak{so}_{2n-2},\mathfrak{so}_{n-1}\oplus\mathfrak{so}_{n-1}) & \to & \dots & \to & Y(\mathfrak{so}_{4},\mathfrak{so}_{2}\oplus\mathfrak{so}_{2})\\
Y(\mathfrak{so}_{2n},\mathfrak{so}_{n-1}\oplus\mathfrak{so}_{n+1}) & \to & Y(\mathfrak{so}_{2n-2},\mathfrak{so}_{n-2}\oplus\mathfrak{so}_{n}) & \to & \dots & \to & Y(\mathfrak{so}_{4},\mathfrak{so}_{3})
\end{array}
\]

We observe that for the algebras $Y(\mathfrak{so}_{2n},\mathfrak{gl}_{n})$
and $Y(\mathfrak{sp}_{2n},\mathfrak{sp}_{2m}\oplus\mathfrak{sp}_{2n-2m})$
the inverse $\mathbf{K}_{n,-n}^{-1}$ does not exist, so the first
recursion is not applicable. However, the second recursion (\ref{eq:soNrec})
is applicable. Based on this, we obtain the following recursions for
these reflection algebras.
\[
\begin{array}{ccccccc}
Y(\mathfrak{so}_{4k},\mathfrak{gl}_{2k}) & \to & Y(\mathfrak{so}_{4k-4},\mathfrak{gl}_{2k-2}) & \to & \dots & \to & Y(\mathfrak{so}_{4},\mathfrak{gl}_{2})\\
Y(\mathfrak{so}_{4k+2},\mathfrak{gl}_{2k+1}) & \to & Y(\mathfrak{so}_{4k-2},\mathfrak{gl}_{2k-1}) & \to & \dots & \to & Y(\mathfrak{so}_{6},\mathfrak{gl}_{3})\\
Y(\mathfrak{sp}_{2n},\mathfrak{sp}_{2m}\oplus\mathfrak{sp}_{2n-2m}) & \to & Y(\mathfrak{sp}_{2n-4},\mathfrak{sp}_{2m-2}\oplus\mathfrak{sp}_{2n-2m-2}) & \to & \dots & \to & Y(\mathfrak{sp}_{2n-4m},\mathfrak{sp}_{2n-4m})
\end{array}
\]

\subsection{Small rank algebras}

In this subsection, we determine the one-particle overlaps in low-rank
cases based on the correspondence $\mathfrak{sp}_{2}\cong\mathfrak{sl}_{2}$,
$\mathfrak{so}_{3}\cong\mathfrak{sl}_{2}$, $\mathfrak{so}_{4}\cong\mathfrak{sl}_{2}\oplus\mathfrak{sl}_{2}$.

\subsubsection{The $\mathfrak{sp}_{2}$ case}

The algebras $Y(\mathfrak{sp}_{2})\cong Y(2)$ are equivalent, since
the $R$-matrices satisfy the relation
\begin{equation}
\frac{2u+2}{2u+1}R^{\mathfrak{sp}_{2}}(2u)=R^{\mathfrak{gl}_{2}}(u),
\end{equation}
meaning that the $\mathfrak{sp}_{2}$ $K$-matrices can be obtained
from $\mathfrak{gl}_{2}$ $K$-matrices in the following way
\begin{equation}
\mathbf{K}^{\mathfrak{sp}_{2}}(u)=\mathbf{K}^{\mathfrak{gl}_{2}}(u/2),
\end{equation}
Based on this, the relation between the $F$-operators is:
\begin{equation}
\mathbf{F}^{\mathfrak{sp}_{2}}(u)=\mathbf{F}^{\mathfrak{gl}_{2}}(u/2).
\end{equation}
The Bethe Ansatz equations in the $\mathfrak{sp}_{2}$ case are
\begin{equation}
\tilde{\alpha}(u_{j})=\prod_{k\neq j}\frac{u_{j}-u_{k}-2i}{u_{j}-u_{k}+2i},
\end{equation}
meaning that the normalization of the Bethe roots differs between
the two conventions: $u_{j}^{\mathfrak{sp}_{2}}=2u_{j}^{\mathfrak{gl}_{2}}$.
The overlap function in the $\mathfrak{gl}_{2}$ convention is
\begin{align}
\tilde{\mathcal{F}}_{\ell}^{\mathfrak{gl}_{2}}(u_{j}^{\mathfrak{gl}_{2}}) & =\mathcal{F}_{\ell}^{\mathfrak{gl}_{2}}(iu_{j}^{\mathfrak{gl}_{2}})\sqrt{\frac{(u_{j}^{\mathfrak{gl}_{2}})^{2}}{(u_{j}^{\mathfrak{gl}_{2}})^{2}+1/4}}=\mathcal{F}_{\ell}^{\mathfrak{gl}_{2}}(iu_{j}^{\mathfrak{sp}_{2}}/2)\sqrt{\frac{(u_{j}^{\mathfrak{sp}_{2}})^{2}}{(u_{j}^{\mathfrak{sp}_{2}})^{2}+1}}\nonumber \\
 & =\mathcal{F}_{\ell}^{\mathfrak{sp}_{2}}(iu_{j}^{\mathfrak{sp}_{2}})\sqrt{\frac{(u_{j}^{\mathfrak{sp}_{2}})^{2}}{(u_{j}^{\mathfrak{sp}_{2}})^{2}+1}}.
\end{align}
We can see that in the $\mathfrak{sp}_{2}$ convention, the overlap
function takes the form
\begin{equation}
\tilde{\mathcal{F}}_{\ell}^{\mathfrak{sp}_{2}}(u)=\mathcal{F}_{\ell}^{\mathfrak{sp}_{2}}(iu)\sqrt{\frac{u^{2}}{u^{2}+1/4}}.
\end{equation}

The definitions of the $\mathfrak{sp}_{2}$ $G$-operators are
\begin{equation}
\begin{split}\mathbf{G}^{(1)} & =\mathbf{K}_{1,-1}.\\
\mathbf{G}^{(2)} & =\left(\begin{array}{cc}
\mathbf{K}_{1,-1} & \mathbf{K}_{1,1}\\
\mathbf{K}_{-1,-1} & \boxed{\mathbf{K}_{-1,1}}
\end{array}\right).
\end{split}
\end{equation}

\subsubsection{The $\mathfrak{so}_{3}$ case}

The $\mathfrak{so}_{3}$ $K$-matrices can be obtained from $\mathfrak{gl}_{2}$
$K$-matrices via fusion. Let the $\mathfrak{gl}_{2}$ $K$-matrix
be $\mathbf{k}_{i,j}$. The $\mathfrak{so}_{3}$ $K$-matrix can be
obtained as
\begin{multline}
R_{1,2}(1)\mathbf{k}_{1}(u-\frac{1}{2})R_{1,2}(-2u)\mathbf{k}_{2}(u+\frac{1}{2})=\mathbf{k}_{2}(u+\frac{1}{2})R_{1,2}(-2u)\mathbf{k}_{2}(u-\frac{1}{2})R_{1,2}(1)=\\
=\sum_{a_{i},b_{i}=1}^{2}e_{a_{1},b_{1}}\otimes e_{a_{2},b_{2}}\otimes\mathbf{k}_{a_{1},a_{2}}^{b_{1},b_{2}}(u),
\end{multline}
where $\mathbf{k}_{a_{1},a_{2}}^{b_{1},b_{2}}$ is symmetric in both
lower and upper indices. The $\mathfrak{so}_{3}$ $K$-matrix $\mathbf{K}_{i,j}$
can be obtained using the index correspondence $(1,1)\equiv-1$, $(1,2)\to0$,
$(2,2)\to+1$, i.e.,
\begin{equation}
\sqrt{2}^{|i|+|j|}\mathbf{K}_{i,j}(u)=\mathbf{k}_{a_{1},a_{2}}^{b_{1},b_{2}}(2u),
\end{equation}
where $(a_{1},a_{2})\equiv i$ and $(b_{1},b_{2})\equiv j$. The components
needed for the $F$- and $G$-operators are
\begin{equation}
\begin{split}\mathbf{K}_{1,-1}(u) & =\mathbf{k}_{2,1}^{+}\mathbf{k}_{2,1}^{-},\quad\mathbf{K}_{1,0}(u)=\sqrt{2}\mathbf{k}_{2,1}^{-}\mathbf{k}_{2,2}^{+},\\
\mathbf{K}_{0,-1}(u) & =\frac{1}{\sqrt{2}}\left(\frac{4u-1}{4u}\mathbf{k}_{2,1}^{-}\mathbf{k}_{1,1}^{+}+\mathbf{k}_{1,1}^{-}\mathbf{k}_{2,1}^{+}-\frac{1}{4u}\mathbf{k}_{2,2}^{-}\mathbf{k}_{2,1}^{+}\right),\\
\mathbf{K}_{0,0}(u) & =\frac{4u-1}{4u}\mathbf{k}_{2,1}^{-}\mathbf{k}_{1,2}^{+}+\mathbf{k}_{1,1}^{-}\mathbf{k}_{2,2}^{+}-\frac{1}{4u}\mathbf{k}_{2,2}^{-}\mathbf{k}_{2,2}^{+},
\end{split}
\end{equation}
where $\mathbf{k}_{i,j}^{\pm}\equiv\mathbf{k}_{i,j}(2u\pm\frac{1}{2})$.
Let us compute the following expression
\begin{equation}
\mathbf{K}_{0,-1}(u)\left[\mathbf{K}_{1,-1}(u)\right]^{-1}\mathbf{K}_{1,0}(u)=\frac{4u-1}{4u}\mathbf{k}_{2,1}^{-}\mathbf{k}_{1,1}^{+}\left(\mathbf{k}_{2,1}^{+}\right)^{-1}\mathbf{k}_{2,2}^{+}+\mathbf{k}_{1,1}^{-}\mathbf{k}_{2,2}^{+}-\frac{1}{4u}\mathbf{k}_{2,2}^{-}\mathbf{k}_{2,2}^{+}.
\end{equation}
We combine this formula with $\mathbf{K}_{0,0}$:
\begin{equation}
\mathbf{K}_{0,0}(u)-\mathbf{K}_{0,-1}(u)\left[\mathbf{K}_{1,-1}(u)\right]^{-1}\mathbf{K}_{1,0}(u)=\frac{4u-1}{4u}\mathbf{k}_{2,1}^{-}\left(\mathbf{k}_{1,2}^{+}-\mathbf{k}_{1,1}^{+}\left(\mathbf{k}_{2,1}^{+}\right)^{-1}\mathbf{k}_{2,2}^{+}\right).
\end{equation}
We can see that the $\mathfrak{gl}_{2}$ $G$-operator appears on
the right-hand side. Based on this, we define the $\mathfrak{so}_{3}$
$F$-operator as follows
\[
\mathbf{F}^{\mathfrak{so}_{3}}(u):=\left[\mathbf{K}_{1,-1}(u)\right]^{-1}\left(\mathbf{K}_{0,0}(u)-\mathbf{K}_{0,-1}(u)\left[\mathbf{K}_{1,-1}(u)\right]^{-1}\mathbf{K}_{1,0}(u)\right)=\frac{4u-1}{4u}\mathbf{F}^{\mathfrak{gl}_{2}}(2u+1/2).
\]
The Bethe Ansatz equations in the $\mathfrak{so}_{3}$ case are
\[
\tilde{\alpha}(u_{j})=\prod_{k\neq j}\frac{u_{j}-u_{k}-i/2}{u_{j}-u_{k}+i/2},
\]
meaning $u_{j}^{\mathfrak{so}_{3}}=u_{j}^{\mathfrak{gl}_{2}}/2$.
The overlap function in the $\mathfrak{gl}_{2}$ convention is
\begin{align}
\tilde{\mathcal{F}}_{\ell}^{\mathfrak{gl}_{2}}(u_{j}^{\mathfrak{gl}_{2}}) & =\mathcal{F}_{\ell}^{\mathfrak{gl}_{2}}(iu_{j}^{\mathfrak{gl}_{2}})\sqrt{\frac{(u_{j}^{\mathfrak{gl}_{2}})^{2}}{(u_{j}^{\mathfrak{gl}_{2}})^{2}+1/4}}=\mathcal{F}_{\ell}^{\mathfrak{gl}_{2}}(2iu_{j}^{\mathfrak{so}_{3}})\sqrt{\frac{(u_{j}^{\mathfrak{so}_{3}})^{2}}{(u_{j}^{\mathfrak{so}_{3}})^{2}+1/16}}\nonumber \\
 & =\mathcal{F}_{\ell}^{\mathfrak{so}_{3}}(iu_{j}^{\mathfrak{so}_{3}}-1/4)\frac{u_{j}^{\mathfrak{so}_{3}}+i/4}{u_{j}^{\mathfrak{so}_{3}}+i/2}\sqrt{\frac{(u_{j}^{\mathfrak{so}_{3}})^{2}}{(u_{j}^{\mathfrak{so}_{3}})^{2}+1/16}}.
\end{align}
We can see that in the $\mathfrak{so}_{3}$ convention, the overlap
function takes the following form
\begin{equation}
\tilde{\mathcal{F}}_{\ell}^{\mathfrak{so}_{3}}(u)=\mathcal{F}_{\ell}^{\mathfrak{so}_{3}}(iu-1/4)\frac{u+i/4}{u+i/2}\sqrt{\frac{u^{2}}{u^{2}+1/16}}.
\end{equation}

\subsubsection{The $\mathfrak{so}_{4}$ case}

The algebras $Y(\mathfrak{so}_{4})\cong Y(2)\oplus Y(2)$ are equivalent,
meaning the $\mathfrak{so}_{4}$ $R$-matrix has a tensor product
form:
\begin{equation}
R^{\mathfrak{so}_{4}}(u)=\frac{u}{u+1}R^{\mathfrak{gl}_{2}}(u)\otimes R^{\mathfrak{gl}_{2}}(u).
\end{equation}
The $\mathfrak{so}_{4}$ reflection equation has factorizable solutions
\begin{equation}
\mathbf{K}(u)=\mathbf{k}^{L}(u)\otimes\mathbf{k}^{R}(u),
\end{equation}
where $\mathbf{k}^{L/R}$ are solutions to the uncrossed $\mathfrak{gl}_{2}$
reflection equation. In this case, we have two non-interacting $\mathfrak{gl}_{2}$
spin chains, meaning that the overlap is the product of the overlaps
of the two $\mathfrak{gl}_{2}$ subsystems
\begin{equation}
\frac{\langle\mathrm{MPS}|\bar{u}^{L},\bar{u}^{R}\rangle}{\sqrt{\langle\bar{u}^{L},\bar{u}^{R}|\bar{u}^{L},\bar{u}^{R}\rangle}}=\frac{\langle\mathrm{MPS}_{L}|\bar{u}^{L}\rangle}{\sqrt{\langle\bar{u}^{L}|\bar{u}^{L}\rangle}}\frac{\langle\mathrm{MPS}_{R}|\bar{u}^{R}\rangle}{\sqrt{\langle\bar{u}^{R}|\bar{u}^{R}\rangle}},
\end{equation}
where $\langle\mathrm{MPS}_{L/R}|$ are the MPSs corresponding to
the $\mathbf{k}^{L/R}(u)$ $K$-matrices. The $K$-matrix written
explicitly is
\begin{equation}
\mathbf{K}=\left(\begin{array}{cccc}
\mathbf{k}_{1,1}^{L}\mathbf{k}_{1,1}^{R} & \mathbf{k}_{1,1}^{L}\mathbf{k}_{1,2}^{R} & \mathbf{k}_{1,2}^{L}\mathbf{k}_{1,1}^{R} & \mathbf{k}_{1,2}^{L}\mathbf{k}_{1,2}^{R}\\
\mathbf{k}_{1,1}^{L}\mathbf{k}_{2,1}^{R} & \mathbf{k}_{1,1}^{L}\mathbf{k}_{2,2}^{R} & \mathbf{k}_{1,2}^{L}\mathbf{k}_{2,1}^{R} & \mathbf{k}_{1,2}^{L}\mathbf{k}_{2,2}^{R}\\
\mathbf{k}_{2,1}^{L}\mathbf{k}_{1,1}^{R} & \mathbf{k}_{2,1}^{L}\mathbf{k}_{1,2}^{R} & \mathbf{k}_{2,2}^{L}\mathbf{k}_{1,1}^{R} & \mathbf{k}_{2,2}^{L}\mathbf{k}_{1,2}^{R}\\
\mathbf{k}_{2,1}^{L}\mathbf{k}_{2,1}^{R} & \mathbf{k}_{2,1}^{L}\mathbf{k}_{2,2}^{R} & \mathbf{k}_{2,2}^{L}\mathbf{k}_{2,1}^{R} & \mathbf{k}_{2,2}^{L}\mathbf{k}_{2,2}^{R}
\end{array}\right),
\end{equation}
We define the $\mathfrak{so}_{4}$ $G$-operators as follows:
\begin{equation}
\begin{split}\mathbf{G}^{(R)}(u) & :=\mathbf{K}_{1,-1}(u)-\mathbf{K}_{1,-2}(u)\mathbf{K}_{2,-2}^{-1}(u)\mathbf{K}_{2,-1}(u)=\mathbf{k}_{2,1}^{L}(\mathbf{k}_{1,2}^{R}-\mathbf{k}_{1,1}^{R}\left[\mathbf{k}_{2,1}^{R}\right]^{-1}\mathbf{k}_{2,2}^{R}),\\
\mathbf{G}^{(L)}(u) & :=\mathbf{K}_{-1,1}(u)-\mathbf{K}_{-1,-2}(u)\mathbf{K}_{2,-2}^{-1}(u)\mathbf{K}_{2,1}(u)=\mathbf{k}_{2,1}^{R}(\mathbf{k}_{1,2}^{L}-\mathbf{k}_{1,1}^{L}\left[\mathbf{k}_{2,1}^{L}\right]^{-1}\mathbf{k}_{2,2}^{L}).
\end{split}
\end{equation}
It is evident that the $\mathfrak{gl}_{2}$ $G$-operators appear
on the right-hand side. We define the $\mathfrak{so}_{4}$ $F$-operators
as follows
\begin{equation}
\mathbf{F}^{L/R}(u):=\left[\mathbf{K}_{2,-2}(u)\right]^{-1}\mathbf{G}^{(L/R)}(u)=\mathbf{F}^{\mathfrak{gl}_{2},L/R}(u).
\end{equation}
We can see that the $\mathfrak{so}_{4}$ $F$-operators are equal
to the $\mathfrak{gl}_{2}$ $F$-operators. In these calculations,
we assumed that $\mathbf{K}_{2,-2}(u)$ is invertible, which implies
that $\mathbf{k}_{2,1}^{L}$ and $\mathbf{k}_{2,1}^{R}$ are also
invertible, which holds for $\mathcal{B}(2,1)$ $K$-matrices. Therefore,
the above formulas are valid for the $Y(\mathfrak{so}_{4},\mathfrak{so}_{2}\oplus\mathfrak{so}_{2})\cong\mathcal{B}(2,1)\oplus\mathcal{B}(2,1)$
$K$-matrices. 

We can also say something about the case $Y(\mathfrak{so}_{4},\mathfrak{gl}_{2})\cong\mathcal{B}(2,1)\oplus\mathcal{B}(2,0)$.
In this case, $\mathbf{K}_{2,-2}(u)$ is not invertible (see also
(\ref{eq:Uso4gl2})), but $\mathbf{K}_{2,-1}(u)=\mathbf{k}_{2,1}^{L}\mathbf{k}_{2,2}^{R}$
is! Using this, $\mathbf{F}^{\mathfrak{gl}_{2},L}(u)$ can be expressed
as
\begin{equation}
\mathbf{F}^{L}(u):=\left[\mathbf{K}_{2,-1}(u)\right]^{-1}\left(\mathbf{K}_{-1,2}(u)-\mathbf{K}_{-1,-1}(u)\left[\mathbf{K}_{2,-1}(u)\right]^{-1}\mathbf{K}_{2,2}(u)\right)=\mathbf{F}^{\mathfrak{gl}_{2},L}(u).\label{eq:Fsogl}
\end{equation}

The $\mathfrak{so}_{4}$ reflection algebra also has achiral solutions,
meaning that the $K$-matrix does not factorize. The achiral solutions
are representations of the $Y(\mathfrak{so}_{4},\mathfrak{so}_{3})$
reflection algebra. These $K$-matrices can be obtained from representations
of the $Y(2)$ algebra as follows:
\begin{equation}
\mathbf{K}(u)=(\widehat{\mathbf{L}}(u)\otimes\mathbf{1})\check{R}(2u)(\mathbf{L}(u)\otimes\mathbf{1}),\label{eq:so4K}
\end{equation}
where $\mathbf{L}(u)$ is a $Y(2)$ Lax-operator and $\widehat{\mathbf{L}}(u)$
is the usual inverse-transposed Lax-operator. In the above equation,
we used the notation $\check{R}(u)=PR(u)$. In this case, the two
non-interacting $\mathfrak{gl}_{2}$ systems are coupled through the
boundary state. This results in an achiral pair structure, i.e., $\bar{u}^{L}=-\bar{u}^{R}$.
In this case, the Gaudin determinants are equal:
\begin{equation}
\det G^{+}=\det G^{-}=\det G^{R}=\det G^{L},\quad\rightarrow\quad\frac{\det G^{+}}{\det G^{-}}=1.
\end{equation}
The overlap can be reduced to a matrix element between Bethe vectors
of a $Y(2)$ transfer matrix, i.e., the overlap can be written as
\begin{equation}
\frac{\langle\mathrm{MPS}|\bar{u}^{L},\bar{u}^{R}\rangle}{\sqrt{\langle\bar{u}^{L},\bar{u}^{R}|\bar{u}^{L},\bar{u}^{R}\rangle}}=\sum_{\ell=1}^{d_{B}}\beta_{\ell}\mathcal{\tilde{F}}_{\ell}^{\mathfrak{gl}_{2}}(\bar{u}^{R}),
\end{equation}
where $\tilde{\mathcal{F}}_{\ell}^{\mathfrak{gl}_{2}}(u)=\mathcal{F}_{\ell}^{\mathfrak{gl}_{2}}(iu)$
are the eigenvalues of the $F$-operator
\begin{equation}
\mathbf{F}^{\mathfrak{gl}_{2}}(u)=\left(\mathbf{L}_{2,1}(u)\right)^{-1}\left(\mathbf{L}_{1,2}(u)-\mathbf{L}_{1,1}(u)\left(\mathbf{L}_{2,1}(u)\right)^{-1}\mathbf{L}_{2,2}(u)\right).
\end{equation}
Now we express $\mathbf{F}^{\mathfrak{gl}_{2}}$ in terms of the matrix
elements of the $\mathfrak{so}_{4}$ $K$-matrix (\ref{eq:so4K})
The components needed for the $G$- and $F$-operators are
\begin{equation}
\begin{split}\mathbf{K}_{2,-2}(u) & =\widehat{\mathbf{L}}_{2,1}(u)\mathbf{L}_{2,1}(u),\quad\mathbf{K}_{2,1}(u)=\widehat{\mathbf{L}}_{2,1}(u)\mathbf{L}_{2,2}(u),\\
\mathbf{K}_{1,-2}(u) & =\frac{2u-1}{2u}\widehat{\mathbf{L}}_{2,1}(u)\mathbf{L}_{1,1}(u)-\frac{1}{2u}\widehat{\mathbf{L}}_{2,2}(u)\mathbf{L}_{2,1}(u),\\
\mathbf{K}_{1,1}(u) & =\frac{2u-1}{2u}\widehat{\mathbf{L}}_{2,1}(u)\mathbf{L}_{1,2}(u)-\frac{1}{2u}\widehat{\mathbf{L}}_{2,2}(u)\mathbf{L}_{2,2}(u).
\end{split}
\end{equation}
The achiral $\mathfrak{so}_{4}$ $G$-operator is defined as follows:
\begin{align}
\mathbf{G}^{(2)}(u) & :=\mathbf{K}_{1,1}(u)-\mathbf{K}_{1,-2}(u)\left[\mathbf{K}_{2,-2}(u)\right]^{-1}\mathbf{K}_{2,1}(u)\nonumber \\
 & =\frac{2u-1}{2u}\widehat{\mathbf{L}}_{2,1}(u)\left(\mathbf{L}_{1,2}(u)-\mathbf{L}_{1,1}(u)\left(\mathbf{L}_{2,1}(u)\right)^{-1}\mathbf{L}_{2,2}(u)\right).
\end{align}
The achiral $\mathfrak{so}_{4}$ $F$-operator is defined as:
\begin{equation}
\mathbf{F}^{\mathfrak{so}_{4}}(u):=\left[\mathbf{K}_{2,-2}(u)\right]^{-1}\left(\mathbf{K}_{1,1}(u)-\mathbf{K}_{1,-2}(u)\left[\mathbf{K}_{2,-2}(u)\right]^{-1}\mathbf{K}_{2,1}(u)\right)=\frac{u-1/2}{u}\mathbf{F}^{\mathfrak{gl}_{2}}(u),
\end{equation}
meaning that the one-particle overlap functions are
\begin{equation}
\tilde{\mathcal{F}}_{\ell}^{\mathfrak{so}_{4}}(u)=\mathcal{F}_{\ell}^{\mathfrak{so}_{4}}(iu)\frac{u}{u+i/2}.
\end{equation}

\subsection{$F$-operators without extra selections rules}

In the previous sections, we determined the $F$-operators and functions
for the $\mathfrak{gl}_{n}$ sector and for the remaining nodes separately.
Based on these, we can now obtain the full overlap functions. The
results are summarized below.

For the $\mathfrak{so}_{2n+1}$ models
\begin{equation}
\tilde{\mathcal{F}}_{\ell}^{(s)}(u)=\begin{cases}
\mathcal{F}_{\ell}^{(s)}(iu-\frac{1}{4}+\frac{n-s}{2})\sqrt{\frac{u^{2}}{u^{2}+1/4}}, & \text{for }s=1,\dots,n-1,\\
\mathcal{F}_{\ell}^{(n)}(iu-1/4)\frac{u+i/4}{u+i/2}\sqrt{\frac{u^{2}}{u^{2}+1/16}}, & \text{for }s=n.
\end{cases}
\end{equation}

For the $\mathfrak{sp}_{2n}$ models
\begin{equation}
\tilde{\mathcal{F}}_{\ell}^{(s)}(u)=\begin{cases}
\mathcal{F}_{\ell}^{(s)}(iu+\frac{n+1-s}{2})\sqrt{\frac{u^{2}}{u^{2}+1/4}}, & \text{for }s=1,\dots,n-1,\\
\mathcal{F}_{\ell}^{(n)}(iu)\sqrt{\frac{u^{2}}{u^{2}+1}}, & \text{for }s=n.
\end{cases}
\end{equation}

For the chiral $\mathfrak{so}_{2n}$ models
\begin{equation}
\tilde{\mathcal{F}}_{\ell}^{(s)}(u)=\begin{cases}
\mathcal{F}_{\ell}^{(s)}(iu+\frac{n-1-s}{2})\sqrt{\frac{u^{2}}{u^{2}+1/4}}, & \text{for }s=1,\dots,n-2,\\
\mathcal{F}_{\ell}^{(s)}(iu)\sqrt{\frac{u^{2}}{u^{2}+1}}, & \text{for }s=n-1,n.
\end{cases}
\end{equation}

For the achiral $\mathfrak{so}_{2n}$ models
\begin{equation}
\tilde{\mathcal{F}}_{\ell}^{(s)}(u)=\begin{cases}
\mathcal{F}_{\ell}^{(s)}(iu+\frac{n-1-s}{2})\sqrt{\frac{u^{2}}{u^{2}+1/4}}, & \text{for }s=1,\dots,n-2,\\
\mathcal{F}_{\ell}^{(n-1)}(iu)\frac{u}{u+i/2}, & \text{for }s=n-1.
\end{cases}
\end{equation}
The $F$-operators are defined via the $G$-operators:
\begin{equation}
\mathbf{F}^{(s)}(u)=\left[\mathbf{G}^{(s)}(u)\right]^{-1}\mathbf{G}^{(s+1)}(u).\label{eq:Fdef}
\end{equation}
In the cases of $\mathfrak{so}_{2n+1}$ and $\mathfrak{sp}_{2n}$
this applies for $s=1,\dots,n$ and in the case of $\mathfrak{so}_{2n}$
for $s=1,\dots,n-1$. In the chiral $\mathfrak{so}_{2n}$ case, the
remaining $F$-operator is:
\begin{equation}
\mathbf{F}^{(n)}(u)=\left[\mathbf{G}^{(n-1)}(u)\right]^{-1}\mathbf{G}^{(n+1)}(u).
\end{equation}
The $G$-operators can be defined through nested $K$-matrices using
equation (\ref{eq:soNrec0}). For $\mathfrak{so}_{2n+1}$ and $\mathfrak{sp}_{2n}$,
the matrices $\mathbf{K}^{(s)}$ are defined for $s=1,\dots,n$ and
for $\mathfrak{so}_{2n}$, for $s=1,\dots,n-1$. These define the
$G$-operators for the corresponding values of $s$ as
\begin{equation}
\mathbf{G}^{(s)}(u)=\mathbf{K}_{n+1-s,-n-1+s}^{(s)}(u).
\end{equation}
The remaining $G$-operators are defined as follows.

For the $\mathfrak{so}_{2n+1}$ models
\begin{equation}
\mathbf{G}^{(n+1)}(u):=\mathbf{K}_{0,0}^{(n)}(u)-\mathbf{K}_{0,-1}^{(n)}(u)\left[\mathbf{K}_{1,-1}^{(n)}(u)\right]^{-1}\mathbf{K}_{1,0}^{(n)}(u).
\end{equation}

For the $\mathfrak{sp}_{2n}$ models

\begin{equation}
\mathbf{G}^{(n+1)}(u):=\mathbf{K}_{-1,1}^{(n)}(u)-\mathbf{K}_{-1,-1}^{(n)}(u)\left[\mathbf{K}_{1,-1}^{(n)}(u)\right]^{-1}\mathbf{K}_{1,1}^{(n)}(u),
\end{equation}

For the chiral $\mathfrak{so}_{2n}$ models
\begin{equation}
\begin{split}\mathbf{G}^{(n)}(u) & :=\mathbf{K}_{1,-1}^{(n-1)}(u)-\mathbf{K}_{1,-2}^{(n-1)}(u)\left[\mathbf{K}_{2,-2}^{(n-1)}(u)\right]^{-1}\mathbf{K}_{2,-1}^{(n-1)}(u),\\
\mathbf{G}^{(n+1)}(u) & :=\mathbf{K}_{-1,1}^{(n-1)}(u)-\mathbf{K}_{-1,-2}^{(n-1)}(u)\left[\mathbf{K}_{2,-2}^{(n-1)}(u)\right]^{-1}\mathbf{K}_{2,1}^{(n-1)}(u).
\end{split}
\end{equation}

For the achiral $\mathfrak{so}_{2n}$ models
\begin{equation}
\mathbf{G}^{(n)}(u):=\mathbf{K}_{1,1}^{(n-1)}(u)-\mathbf{K}_{1,-2}^{(n-1)}(u)\left[\mathbf{K}_{2,-2}^{(n-1)}(u)\right]^{-1}\mathbf{K}_{2,1}^{(n-1)}(u).\label{eq:Gn}
\end{equation}

The $G$-operators can also be expressed using quasi-determinants.
In both chiral and achiral cases, equation (\ref{eq:Ggn}) defines
the $\mathbf{G}^{(s)}$ operators for $s=1,\dots,n$ or $s=1,\dots,n-1$.
The missing $G$-operators are as follows.

For the $\mathfrak{so}_{2n+1}$ models
\begin{equation}
\mathbf{G}^{(n+1)}=\left(\begin{array}{cccc}
\mathbf{K}_{n,-n} & \dots & \mathbf{K}_{n,-1} & \mathbf{K}_{n,0}\\
\vdots & \ddots & \vdots & \vdots\\
\mathbf{K}_{1,-n} & \dots & \mathbf{K}_{1,-1} & \mathbf{K}_{1,0}\\
\mathbf{K}_{0,-n} & \dots & \mathbf{K}_{0,-1} & \boxed{\mathbf{K}_{0,0}}
\end{array}\right),
\end{equation}

For the $\mathfrak{sp}_{2n}$ models
\begin{equation}
\mathbf{G}^{(n+1)}=\left(\begin{array}{cccc}
\mathbf{K}_{n,-n} & \dots & \mathbf{K}_{n,-1} & \mathbf{K}_{n,1}\\
\vdots & \ddots & \vdots & \vdots\\
\mathbf{K}_{1,-n} & \dots & \mathbf{K}_{1,-1} & \mathbf{K}_{1,1}\\
\mathbf{K}_{-1,-n} & \dots & \mathbf{K}_{-1,-1} & \boxed{\mathbf{K}_{-1,1}}
\end{array}\right),
\end{equation}

For the chiral $\mathfrak{so}_{2n}$ models
\begin{equation}
\mathbf{G}^{(n+1)}=\left(\begin{array}{cccc}
\mathbf{K}_{n,-n} & \dots & \mathbf{K}_{n,-2} & \mathbf{K}_{n,1}\\
\vdots & \ddots & \vdots & \vdots\\
\mathbf{K}_{2,-n} & \dots & \mathbf{K}_{2,-2} & \mathbf{K}_{2,1}\\
\mathbf{K}_{-1,-n} & \dots & \mathbf{K}_{-1,-2} & \boxed{\mathbf{K}_{-1,1}}
\end{array}\right).
\end{equation}

For the achiral $\mathfrak{so}_{2n}$ models
\begin{equation}
\mathbf{G}^{(n)}=\left(\begin{array}{cccc}
\mathbf{K}_{n,-n} & \dots & \mathbf{K}_{n,-2} & \mathbf{K}_{n,1}\\
\vdots & \ddots & \vdots & \vdots\\
\mathbf{K}_{2,-n} & \dots & \mathbf{K}_{2,-2} & \mathbf{K}_{2,1}\\
\mathbf{K}_{1,-n} & \dots & \mathbf{K}_{1,-2} & \boxed{\mathbf{K}_{1,1}}
\end{array}\right).\label{eq:achiralG}
\end{equation}

These constructions assume the existence of inverses for certain operators.
This is valid for the following reflection algebras
\begin{itemize}
\item $Y(\mathfrak{so}_{2n+1},\mathfrak{so}_{n}\oplus\mathfrak{so}_{n+1})$
\item $Y(\mathfrak{sp}_{2n},\mathfrak{gl}_{n})$
\item $Y(\mathfrak{so}_{2n},\mathfrak{so}_{n}\oplus\mathfrak{so}_{n})$,
which lead to chiral pair structures
\item $Y(\mathfrak{so}_{2n},\mathfrak{so}_{n-1}\oplus\mathfrak{so}_{n+1})$,
which lead to achiral pair structures
\end{itemize}
For the remaining reflection algebras, the above recursive equations
cannot be applied directly. However, the techniques described in section
\ref{sec:Other-reflection-algebras} still work, meaning that after
suitable deformations, the $F$-operators can be defined. From here
on, each case is treated separately. A common feature in these cases
is the need for certain deformations of the nested $K$-matrices.
These are continuous deformations, and in the final overlap functions,
the zero limit of the deformation parameter must be taken. In this
limit, the overlap vanishes for certain quantum numbers, leading to
additional selection rules.

\subsection{$F$-operators with extra selections rules}

\subsubsection{$Y(\mathfrak{so}_{2n},\mathfrak{gl}_{n})$}

In this case, the $K$-matrix of the $\mathfrak{gl}_{n}$ sector is
a representation of $Y^{-}(n)$ (see equations (\ref{eq:Kcon}) and
(\ref{eq:sogl})), so we apply the trick described in section \ref{subsec:Generalization-Ym}.
The previous $\mathbf{K}^{(2s)}$ matrices do not exist, but the $\mathbf{K}^{(2s-1)}$
matrices do. For these matrices, we can use the recursion (\ref{eq:soNrec}).
From these, we select $k=\left\lfloor \frac{n}{2}\right\rfloor $
number of $Y^{-}(2)$ $K$-matrices (\ref{eq:Ym2k}). The $G$-operator
formulas (\ref{eq:Gop0}) cannot be applied directly. Therefore, we
deform these $Y^{-}(2)$ $K$-matrices in the way described in section
\ref{subsec:Generalization-Ym}, i.e.,
\begin{equation}
\begin{split}\mathbf{k}_{i,j}^{(s)}(u) & =\sum_{k,l=1}^{2}\mathbf{L}_{k,i}^{(s)}(u-\kappa_{N-4s+4})\epsilon_{k,l}\mathbf{L}_{l,j}^{(s)}(-u),\\
\tilde{\mathbf{k}}_{i,j}^{(s)}(u) & =\mathbf{k}_{i,j}^{(s)}(u)+(u+1/2-\kappa_{N-4s+4}/2)\mathbf{L}_{1,i}^{(s)}(u-\kappa_{N-4s+4})\mathbf{L}_{1,j}^{(s)}(-u),
\end{split}
\label{eq:kdef}
\end{equation}
and for these, the $G$-operators are well-defined
\begin{equation}
\begin{split}\mathbf{G}^{(2s-1)}(u) & =\tilde{\mathbf{k}}_{1,1}^{(s)}(u),\\
\mathbf{G}^{(2s)}(u) & =\tilde{\mathbf{k}}_{2,2}^{(s)}(u)-\tilde{\mathbf{k}}_{2,1}^{(s)}(u)\left[\tilde{\mathbf{k}}_{1,1}^{(s)}(u)\right]^{-1}\tilde{\mathbf{k}}_{1,2}^{(s)}(u),
\end{split}
\label{eq:GYm2}
\end{equation}
for $s=1,\dots,k$. These define the $F$-operators $\mathbf{F}^{(s)}$
in the usual way for $s=1,\dots,2k-1$. From here on, it is practical
to treat even and odd $n$ separately.

\paragraph{$Y(\mathfrak{so}_{4k},\mathfrak{gl}_{2k})$}

In this case, the nesting proceeds as follows.
\[
\begin{array}{ccccccc}
Y(\mathfrak{so}_{4k},\mathfrak{gl}_{2k}) & \to & Y(\mathfrak{so}_{4k-4},\mathfrak{gl}_{2k-2}) & \to & \dots & \to & Y(\mathfrak{so}_{4},\mathfrak{gl}_{2})\\
\mathbf{K}^{(1)}\equiv\mathbf{K} & \to & \mathbf{K}^{(3)} & \to & \dots & \to & \mathbf{K}^{(2k-1)}\\
\downarrow &  & \downarrow &  & \dots &  & \downarrow\\
\mathbf{k}^{(1)} &  & \mathbf{k}^{(2)} &  & \dots &  & \mathbf{k}^{(k)}
\end{array}
\]
We see that in the final step, we obtain a $Y(\mathfrak{so}_{4},\mathfrak{gl}_{2})$
$K$-matrix, which has a chiral pair structure. In this case, we have
$n=2k$ $F$-operators, of which the first $2k-1$ were already given.
We still need to define the missing $\mathbf{F}^{(2k)}$ operator.
In the final step, the $K$-matrix is a representation of $Y(\mathfrak{so}_{4},\mathfrak{gl}_{2})\cong\mathcal{B}(2,1)\oplus\mathcal{B}(2,0)$.
The $F$-operator corresponding to the $\mathcal{B}(2,0)$ subalgebra
was already given, that was $\mathbf{F}^{(2k-1)}$. The $F$-operator
corresponding to the $\mathcal{B}(2,1)$ subalgebra will be the missing
$\mathbf{F}^{(2k)}$, and it can be determined using equation (\ref{eq:Fsogl}),
i.e.,
\begin{equation}
\mathbf{F}^{(2k)}(u)=\left[\mathbf{K}_{2,-1}^{(2k-1)}(u)\right]^{-1}\left(\mathbf{K}_{-1,2}^{(2k-1)}(u)-\mathbf{K}_{-1,-1}^{(2k-1)}(u)\left[\mathbf{K}_{2,-1}^{(2k-1)}(u)\right]^{-1}\mathbf{K}_{2,2}^{(2k-1)}(u)\right).
\end{equation}
Since the $K$-matrix of the $\mathfrak{gl}_{n}$ sector is a representation
of $Y^{-}(n)$, there are extra selection rules (see (\ref{eq:selrule-1})).
The overlap is non-zero only if
\begin{equation}
\begin{split}r_{2s-1} & =\frac{\Lambda_{2s-1}-\Lambda_{2s}+r_{2s-2}+r_{2s}}{2},\quad\text{for }s=1,\dots,\frac{n}{2}-1,\\
r_{n-1} & =\frac{\Lambda_{n-1}-\Lambda_{n}+r_{n-2}}{2}.
\end{split}
\end{equation}

\paragraph{$Y(\mathfrak{so}_{4k+2},\mathfrak{gl}_{2k+1})$}

In this case, the nesting proceeds as follows.
\[
\begin{array}{ccccccc}
Y(\mathfrak{so}_{4k+2},\mathfrak{gl}_{2k+1}) & \to & Y(\mathfrak{so}_{4k-2},\mathfrak{gl}_{2k-1}) & \to & \dots & \to & Y(\mathfrak{so}_{6},\mathfrak{gl}_{3})\\
\mathbf{K}^{(1)}\equiv\mathbf{K} & \to & \mathbf{K}^{(3)} & \to & \dots & \to & \mathbf{K}^{(2k-1)}\\
\downarrow &  & \downarrow &  & \dots &  & \downarrow\\
\mathbf{k}^{(1)} &  & \mathbf{k}^{(2)} &  & \dots &  & \mathbf{k}^{(k)}
\end{array}
\]
We see that in the final step, we obtain a $Y(\mathfrak{so}_{6},\mathfrak{gl}_{3})$
$K$-matrix. This algebra is equivalent to the $\mathcal{B}(4,1)$
reflection algebra, for which the pair structure is achiral. In the
achiral case, we need $n=2k+1$ $G$-operators, of which the first
$2k$ were already given, and $\mathbf{G}^{(2k+1)}(u)$ is still missing.
The definition (\ref{eq:Gn}) cannot be used, but (\ref{eq:achiralG})
can. Additionally, we can use another equivalent definition:
\begin{equation}
\mathbf{G}^{(2k+1)}(u)=\mathbf{K}_{1,1}^{(2k-1)}(u)-\sum_{\alpha,\beta=2}^{3}\mathbf{K}_{1,-\alpha}^{(2k-1)}(u)\widehat{\mathbf{K}}_{-\alpha,\beta}^{(2k-1)}(u)\mathbf{K}_{\beta,1}^{(2k-1)}(u).
\end{equation}
Since the $K$-matrix of the $\mathfrak{gl}_{n-1}$ sector is a representation
of $Y^{-}(n-1)$, there are extra selection rules (see (\ref{eq:selrule-1})).
The overlap is non-zero only if
\begin{equation}
\begin{split}r_{2s-1} & =\frac{\Lambda_{2s-1}-\Lambda_{2s}+r_{2s-2}+r_{2s}}{2},\quad\text{for }s=1,\dots,\frac{n-3}{2},\\
r_{n-2} & =\frac{\Lambda_{n-2}-\Lambda_{n-1}+r_{n-3}+r_{n-1}+r_{n}}{2}.
\end{split}
\end{equation}

\subsubsection{$Y(\mathfrak{so}_{2n},\mathfrak{o}_{M}\oplus\mathfrak{o}_{2n-M})$}

In this case, the nesting proceeds as follows:
\[
\begin{array}{ccccccc}
Y(\mathfrak{so}_{2n},\mathfrak{o}_{M}\oplus\mathfrak{o}_{2n-M}) & \to & Y(\mathfrak{so}_{2n-2},\mathfrak{o}_{M-1}\oplus\mathfrak{o}_{2n-M-1}) & \to & \dots & \to & Y(\mathfrak{so}_{2n-2M},\mathfrak{so}_{2n-2M})\\
\mathbf{K}^{(1)}\equiv\mathbf{K} & \to & \mathbf{K}^{(2)} & \to & \dots & \to & \mathbf{K}^{(M+1)}\\
\downarrow &  & \downarrow &  & \dots\\
\mathbf{G}^{(1)} &  & \mathbf{G}^{(2)} &  & \dots
\end{array}
\]
In the $M+1$-th step, we obtain a $Y(\mathfrak{so}_{2n-2M},\mathfrak{so}_{2n-2M})$
$K$-matrix. This means that the MPS is a singlet state with respect
to the $\mathfrak{so}_{2n-2M}$ subalgebra, i.e., the overlap is non-zero
only if
\begin{equation}
\begin{split}r_{s} & =r_{M}+\sum_{k=M+1}^{s}\Lambda_{k},\quad\text{for }s=M+1,\dots,n-2,\\
r_{n-1} & =\frac{1}{2}r_{M}+\frac{1}{2}\sum_{k=M+1}^{n-1}\Lambda_{k}-\frac{1}{2}\Lambda_{n},\\
r_{n} & =\frac{1}{2}r_{M}+\frac{1}{2}\sum_{k=M+1}^{n-1}\Lambda_{k}+\frac{1}{2}\Lambda_{n}.
\end{split}
\end{equation}
The corresponding $\mathbf{K}_{j,-j}^{(M+1)}$ components are not
invertible, so the nesting cannot continue, and the operators $\mathbf{G}^{(M+1)},\dots$
are not defined. However, the trick described for the $\mathcal{B}(N,M)$
reflection algebras can be applied here as well, i.e., we deform the
factorized form of the $K$-matrix with a singular scalar $K$-matrix.
This singular solution is
\begin{equation}
K(u)=1+u\sum_{j=1}^{\tilde{n}/2}(e_{\tilde{n}-2j+1,2j-2-\tilde{n}}-e_{\tilde{n}-2j+2,2j-1-\tilde{n}}),\label{eq:Ksing}
\end{equation}
where $\tilde{n}=n-M$. Assuming that the $Y(\mathfrak{so}_{2n-2M},\mathfrak{so}_{2n-2M})$
matrix can be factorized into
\begin{equation}
\mathbf{K}_{a,b}^{(M+1)}(u)=\sum_{c=-\tilde{n}}^{\tilde{n}}\mathbf{L}_{-c,-a}(u-\kappa_{N-2M})\mathbf{L}_{c,b}(-u),\label{eq:Kfactor}
\end{equation}
the deformed $K$-matrix is
\begin{align}
\tilde{\mathbf{K}}_{a,b}^{(1)}(u) & =\mathbf{K}_{a,b}^{(M+1)}(u)+\label{eq:Kdef}\\
 & +u\sum_{l=1}^{k}\left(\mathbf{L}_{-(\tilde{n}-2l+1),-a}(u-\kappa_{N-2M})\mathbf{L}_{-\tilde{n}+2l-2,b}(-u)-\mathbf{L}_{-(\tilde{n}-2l+2),-a}(u-\kappa_{N-2M})\mathbf{L}_{-\tilde{n}+2l-1,b}(-u)\right).\nonumber 
\end{align}
It is clear that this deformed $K$-matrix forms a $Y^{-}(\tilde{n})$
subalgebra in the $\mathfrak{gl}_{\tilde{n}}$ sub-sector, i.e., the
same nesting can be carried out as for the $Y(\mathfrak{so}_{2\tilde{n}},\mathfrak{gl}_{\tilde{n}})$
algebra. We can define the nested $K$-matrices in the manner of (\ref{eq:soNrec}).
We can select $k=\left\lfloor \frac{\tilde{n}}{2}\right\rfloor $
number of $Y^{-}(2)$ $K$-matrices $\mathbf{k}^{(s)}$ for $s=1,\dots,k$.
These can be deformed in the manner of (\ref{eq:kdef}), and for these,
the $G$-operators are well-defined
\begin{equation}
\begin{split}\mathbf{G}^{(M+2s-1)}(u) & =\tilde{\mathbf{k}}_{1,1}^{(s)}(u),\\
\mathbf{G}^{(M+2s)}(u) & =\tilde{\mathbf{k}}_{2,2}^{(s)}(u)-\tilde{\mathbf{k}}_{2,1}^{(s)}(u)\left[\tilde{\mathbf{k}}_{1,1}^{(s)}(u)\right]^{-1}\tilde{\mathbf{k}}_{1,2}^{(s)}(u),
\end{split}
\label{eq:GYm2-1}
\end{equation}
for $s=1,\dots,k$. 

If $n-M$ is \textbf{even}, i.e., $\tilde{n}=2k$, then the pair structure
is \textbf{chiral}. The $G$-operators $\mathbf{G}^{(s)}(u)$ previously
defined for $s=1,\dots,n$ immediately define the first $n-1$ $F$-operators.
The missing $F$-operator is
\begin{equation}
\mathbf{F}^{(n)}(u)=\left[\tilde{\mathbf{K}}_{2,-1}^{(2k-1)}(u)\right]^{-1}\left(\tilde{\mathbf{K}}_{-1,2}^{(2k-1)}(u)-\tilde{\mathbf{K}}_{-1,-1}^{(2k-1)}(u)\left[\tilde{\mathbf{K}}_{2,-1}^{(2k-1)}(u)\right]^{-1}\tilde{\mathbf{K}}_{2,2}^{(2k-1)}(u)\right).
\end{equation}

If $n-M$ is \textbf{odd}, i.e., $\tilde{n}=2k+1$, then the pair
structure is \textbf{achiral}. The $G$-operators $\mathbf{G}^{(s)}(u)$
previously defined for $s=1,\dots,n-1$. The missing $G$-operator
is 
\begin{equation}
\mathbf{G}^{(n)}(u)=\tilde{\mathbf{K}}_{1,1}^{(2k-1)}(u)-\sum_{\alpha,\beta=2}^{3}\tilde{\mathbf{K}}_{1,-\alpha}^{(2k-1)}(u)\widehat{\mathbf{K}}_{-\alpha,\beta}^{(2k-1)}(u)\tilde{\mathbf{K}}_{\beta,1}^{(2k-1)}(u).
\end{equation}

\subsubsection{$Y(\mathfrak{so}_{2n+1},\mathfrak{o}_{M}\oplus\mathfrak{o}_{2n+1-M})$}

This case is very similar to the earlier $Y(\mathfrak{so}_{2n},\mathfrak{o}_{M}\oplus\mathfrak{o}_{2n-M})$
case. Now, the nesting proceeds as follows.
\[
\begin{array}{ccccccc}
Y(\mathfrak{so}_{2n+1},\mathfrak{o}_{M}\oplus\mathfrak{o}_{2n+1-M}) & \to & Y(\mathfrak{so}_{2n-1},\mathfrak{o}_{M-1}\oplus\mathfrak{o}_{2n-M}) & \to & \dots & \to & Y(\mathfrak{so}_{2n+1-2M},\mathfrak{so}_{2n+1-2M})\\
\mathbf{K}^{(1)}\equiv\mathbf{K} & \to & \mathbf{K}^{(2)} & \to & \dots & \to & \mathbf{K}^{(M+1)}\\
\downarrow &  & \downarrow &  & \dots\\
\mathbf{G}^{(1)} &  & \mathbf{G}^{(2)} &  & \dots
\end{array}
\]
In the $M+1$-th step, we obtain a $Y(\mathfrak{so}_{2n+1-2M},\mathfrak{so}_{2n+1-2M})$
$K$-matrix. This means that the MPS is a singlet state with respect
to the $\mathfrak{so}_{2n+1-2M}$ subalgebra, i.e., the overlap is
non-zero only if
\begin{equation}
r_{s}=r_{M}+\sum_{k=M+1}^{s}\Lambda_{k},
\end{equation}
for $s=M+1,\dots,n$.

To compute the $F$-operators, we again use the trick described in
the previous section: we deform the factorized form of the $K$-matrix
(\ref{eq:Kfactor}) with a singular scalar $K$-matrix (\ref{eq:Ksing})
as in (\ref{eq:Kdef}), where $\tilde{n}=n-M$. It is evident that
the deformed $K$-matrix again forms a $Y^{-}(\tilde{n})$ subalgebra
in the $\mathfrak{gl}_{\tilde{n}}$ sub-sector. Then we can define
the nested $K$-matrices using (\ref{eq:soNrec}). We select $k=\left\lfloor \frac{\tilde{n}}{2}\right\rfloor $
number of $Y^{-}(2)$ $K$-matrices $\mathbf{k}^{(s)}$ for $s=1,\dots,k$.
These can be deformed using (\ref{eq:kdef}), and the $G$-operators
are well-defined for them (\ref{eq:GYm2-1}). 

If $n-M$ is \textbf{even}, i.e., $\tilde{n}=2k$, then the $G$-operators
$\mathbf{G}^{(s)}(u)$ have already been defined for $s=1,\dots,n$.
The missing $G$-operator is
\begin{equation}
\mathbf{G}^{(n+1)}(u)=\tilde{\mathbf{K}}_{0,0}^{(2k-1)}(u)-\sum_{\alpha,\beta=1}^{2}\tilde{\mathbf{K}}_{0,-\alpha}^{(2k-1)}(u)\widehat{\mathbf{K}}_{-\alpha,\beta}^{(2k-1)}(u)\tilde{\mathbf{K}}_{\beta,0}^{(2k-1)}(u).
\end{equation}

If $n-M$ is \textbf{odd}, i.e., $\tilde{n}=2k+1$, then the $G$-operators
$\mathbf{G}^{(s)}(u)$ have already been defined for $s=1,\dots,n-1$.
Additionally, we need $\mathbf{G}^{(n)}$ and $\mathbf{G}^{(n+1)}$.
The $K$-matrix $\tilde{\mathbf{K}}^{(2k+1)}$ is a representation
of $Y(\mathfrak{so}_{3},\mathfrak{so}_{3})$. This must again be deformed
using the following singular $\mathfrak{so}_{3}$ $K$-matrix:
\begin{equation}
K(u)=\left(\begin{array}{ccc}
1 & 0 & 0\\
u & 1 & 0\\
-\frac{1}{2}u(u+\frac{1}{4}) & -u & 1
\end{array}\right).
\end{equation}
The deformed $\mathfrak{so}_{3}$ $K$-matrix $\check{\mathbf{K}}$
is obtained as
\begin{equation}
\begin{split}\tilde{\mathbf{K}}_{a,b}^{(k+1)}(u) & =\sum_{c=-1}^{1}\mathbf{L}_{-c,-a}(u-\kappa_{3})\mathbf{L}_{c,b}(-u),\\
\check{\mathbf{K}}_{a,b}^{(k+1)}(u) & =\tilde{\mathbf{K}}_{a,b}^{(k+1)}(u)-\frac{1}{2}u(u+\frac{1}{4})\mathbf{L}_{-1,-a}(u-\kappa_{3})\mathbf{L}_{-1,b}(-u)\\
 & +u\left(\mathbf{L}_{0,-a}(u-\kappa_{3})\mathbf{L}_{-1,b}(-u)-\mathbf{L}_{-1,-a}(u-\kappa_{3})\mathbf{L}_{0,b}(-u)\right),
\end{split}
\label{eq:KO4k1}
\end{equation}
and the remaining $G$-operators $\mathbf{G}^{(n)}$, $\mathbf{G}^{(n+1)}$
are given by
\begin{equation}
\begin{split}\mathbf{G}^{(n)}(u) & =\check{\mathbf{K}}_{1,-1}^{(k+1)}(u),\\
\mathbf{G}^{(n+1)}(u) & =\check{\mathbf{K}}_{0,0}^{(k+1)}(u)-\check{\mathbf{K}}_{0,-1}^{(k+1)}(u)\left[\check{\mathbf{K}}_{1,-1}^{(k+1)}(u)\right]^{-1}\check{\mathbf{K}}_{1,0}^{(k+1)}(u).
\end{split}
\label{eq:G4k1}
\end{equation}

\subsubsection{$Y(\mathfrak{sp}_{2n},\mathfrak{sp}_{2m}\oplus\mathfrak{sp}_{2n-2m})$}

This case is very similar to $Y(\mathfrak{so}_{2n},\mathfrak{gl}_{n})$
(see the asymptotic expansions). In this case, the nesting proceeds
as follows.
\[
\begin{array}{ccccccc}
Y(\mathfrak{sp}_{2n},\mathfrak{sp}_{2m}\oplus\mathfrak{sp}_{2n-2m}) & \to & Y(\mathfrak{sp}_{2n-4},\mathfrak{sp}_{2m-2}\oplus\mathfrak{sp}_{2n-2m-2}) & \to & \dots & \to & Y(\mathfrak{sp}_{2n-4m},\mathfrak{sp}_{2n-4m})\\
\mathbf{K}^{(1)}\equiv\mathbf{K} & \to & \mathbf{K}^{(3)} & \to & \dots & \to & \mathbf{K}^{(2m+1)}\\
\downarrow &  & \downarrow &  & \dots\\
\mathbf{k}^{(1)} &  & \mathbf{k}^{(2)} &  & \dots
\end{array}
\]
In the $m+1$-th step, we obtain a $Y(\mathfrak{sp}_{2n-4m},\mathfrak{sp}_{2n-4m})$
$K$-matrix. This means that the MPS is a singlet state with respect
to the $\mathfrak{sp}_{2n-4m}$ subalgebra. The $K$-matrix of the
$\mathfrak{gl}_{2m}$ sector is a representation of $Y^{-}(2m)$,
so there are selection rules similar to (\ref{eq:selrule-1}). Based
on these, the overlap is non-zero only if
\begin{equation}
\begin{split}r_{2s-1} & =\frac{\Lambda_{2s-1}-\Lambda_{2s}+r_{2s-2}+r_{2s}}{2},\quad\text{for }s=1,\dots,m,\\
r_{s} & =r_{2m}+\sum_{k=2m+1}^{s}\Lambda_{k},\quad\text{for }s=2m+1,\dots,n-1,\\
r_{n} & =\frac{1}{2}r_{2m}+\frac{1}{2}\sum_{k=2m+1}^{n}\Lambda_{k}.
\end{split}
\end{equation}

The nested $K$-matrices $\mathbf{K}^{(3)},\dots,\mathbf{K}^{(2m+1)}$
can be obtained using the recursion (\ref{eq:soNrec}). From these,
we select $m$ number of $Y^{-}(2)$ $K$-matrices (\ref{eq:Ym2k}).
The $G$-operator formulas (\ref{eq:Gop0}) cannot be applied directly.
Therefore, we deform these $Y^{-}(2)$ $K$-matrices using (\ref{eq:kdef}),
and the $G$-operators are well-defined for them using (\ref{eq:GYm2})
for $s=1,\dots,m$. These define the $G$-operators $\mathbf{G}^{(s)}$
for $s=1,\dots,2m$.

The operator $\mathbf{K}^{(2m+1)}$ can be deformed using the following
scalar solution of the $\mathfrak{sp}_{2\tilde{n}}$ reflection equation:
\begin{equation}
K(u)=1+u\sum_{j=1}^{\tilde{n}}e_{j,-j},
\end{equation}
where $\tilde{n}=n-2m$. The deformed $K$-matrix is
\begin{equation}
\begin{split}\mathbf{K}_{a,b}^{(2m+1)}(u) & =\sum_{c=-\tilde{n}}^{\tilde{n}}\theta_{a}\theta_{c}\mathbf{L}_{-c,-a}(u-\kappa_{2\tilde{n}})\mathbf{L}_{c,b}(-u),\\
\tilde{\mathbf{K}}_{a,b}^{(1)}(u) & =\mathbf{K}_{a,b}^{(2m+1)}(u)+u\sum_{c=1}^{k}\theta_{a}\theta_{c}\mathbf{L}_{-c,-a}(u-\kappa_{2\tilde{n}})\mathbf{L}_{-c,b}(-u).
\end{split}
\end{equation}
From here, the nested $K$-matrices $\tilde{\mathbf{K}}^{(2)},\dots,\tilde{\mathbf{K}}^{(\tilde{n})}$
can be defined using the recursion (\ref{eq:soNrec0}), and the remaining
$G$-operators are given by
\begin{equation}
\mathbf{G}^{(2m+s)}(u)=\tilde{\mathbf{K}}_{\tilde{n}-s+1,-\tilde{n}+s-1}^{(s)}(u),
\end{equation}
for $s=1,\dots,\tilde{n}$ and
\begin{equation}
\mathbf{G}^{(n+1)}(u)=\tilde{\mathbf{K}}_{-1,1}^{(\tilde{n})}(u)-\tilde{\mathbf{K}}_{-1,-1}^{(\tilde{n})}(u)\left[\tilde{\mathbf{K}}_{1,-1}^{(\tilde{n})}(u)\right]^{-1}\tilde{\mathbf{K}}_{1,1}^{(\tilde{n})}(u).
\end{equation}

\section{Conclusions}

In this paper, we presented the universal formula (\ref{eq:onOV-1})
that provides the overlaps between integrable MPSs and Bethe states.
The formula consists of a ratio of Gaudin determinants and a prefactor.
The prefactor contains the one-particle overlap functions $\mathcal{F}_{\ell}^{(\nu)}(u)$,
which are the eigenvalues of the commuting $F$-operators $\mathbf{F}^{(\nu)}(u)$.
The formula is surprisingly broadly applicable, as the $F$-operators
involved depend only on the quasi-determinants of the $K$-matrix
associated with the MPS.

The main result of the paper is a precise proof of the universal formula
for those MPSs of $\mathfrak{gl}_{N}$ symmetric spin chains where
all quasi-determinants exist. This applies to a large family of possible
integrable MPSs. However, we observed that for certain reflection
algebras, these quasi-determinants are not well-defined because some
operators are non-invertible. In such cases, we showed that through
continuous deformations, these operators can be made invertible, allowing
the quantities in the universal overlap formula to be determined.
We also generalized the definitions of the $F$-operators to orthogonal
and symplectic spin chains.

In the future, it would be worthwhile to extend the proofs to the
orthogonal and symplectic cases as well. This requires knowledge of
recursion, action, and coproduct formulas for the Bethe vectors. With
these formulas in hand, the proofs in the current paper could likely
be applied without difficulty. For $\mathfrak{so}_{2n+1}$ symmetric
spin chains, these formulas are already available \cite{Liashyk:2024hhz,Liashyk:2025oao}.
Additionally, it may be worthwhile to extend the results to graded
and trigonometric spin chains. The necessary formulas for Bethe vectors
are already known in certain cases \cite{Hutsalyuk:2017tcx,Liashyk:2021tth}. 

It is worth noting that the formulas presented in the paper are not
applicable in the case of twisted boundary conditions. In these cases,
integrability does not manifest at the level of Bethe roots (pair
structure), but rather in the full $Q$-system \cite{Gombor:2021uxz}.
In such cases, the overlaps of the tensor product states cannot be
expressed with Gaudin determinants but rather with a determinant involving
$Q$-functions \cite{Ekhammar:2023iph}. Currently, one systematic
method is available for computing MPS overlaps, which is based on
searching for generalized dressing formulas \cite{deLeeuw:2019ebw,Gombor:2024zru}.
The advantage of this method is that it can be applied even in the
presence of twists; the disadvantage is that it must be carried out
separately for each MPS, making it less general than the result of
the present paper. A potential research direction could be the search
for a similarly universal overlap formula under twisted boundary conditions.

\section*{Acknowledgments}

The author would like to thank Vidas Regelskis for the helpful discussions
about reflection algebras. This paper was supported by the NKFIH grant
PD142929 and the János Bolyai Research Scholarship of the Hungarian
Academy of Science.

\appendix

\section{Boundary states for arbitrary representations\label{sec:Boundary-states-gen-rep}}

In this section, we focus on finite-dimensional representations of
the Lie algebra $\mathfrak{gl}_{N}$, meaning that $\Lambda_{k}-\Lambda_{k+1}\in\mathbb{N}$.
Without loss of generality, we assume $\Lambda_{N}=0$, so $\Lambda_{k}\in\mathbb{N}$.
Every finite-dimensional representation corresponds to a Young diagram
$Y(\Lambda)$ , where the $k$-th row contains $\Lambda_{k}$ boxes.
The total number of boxes in the diagram is $n=\sum_{k=1}^{N-1}\Lambda_{k}$.
We introduce an $n$-fold tensor product space $(\mathbb{C}^{N})^{\otimes n}$,
indexed by $1,2,\dots,n$. A box $x\in Y(\Lambda)$ in the Young diagram
is identified by its row $i$ and column $j$, i.e., $x=(i,j)$. Additionally,
we can label the boxes as $x_{j}\in Y(\Lambda)$, where $j=1,\dots,n$.
The label of the box at position $(a,b)$ is given by $\sum_{j=1}^{b-1}w_{j}+a$,
where $w_{j}$ is the height of the $j$-th column. For example, the
Young diagram corresponding to $\Lambda=(3,2,2,1,0)$ can be labeled
in the following way.
\begin{center}
\begin{tabular}{|c|c|c}
\hline 
1 & 5 & \multicolumn{1}{c|}{8}\tabularnewline
\hline 
2 & 6 & \tabularnewline
\cline{1-2} \cline{2-2} 
3 & 7 & \tabularnewline
\cline{1-2} \cline{2-2} 
4 & \multicolumn{1}{c}{} & \tabularnewline
\cline{1-1} 
\end{tabular}
\par\end{center}

Each space also has an associated inhomogeneity $s_{j}=k-l$, if $x_{j}=(k,l)$.
We also introduce a shorthand notation for the $R$-matrices:
\begin{equation}
R_{j,k}\equiv R_{j,k}(u_{j}-u_{k}).
\end{equation}
From these, a projection operator can be constructed
\begin{equation}
\Pi_{1,2,\dots,,n}^{\Lambda}=\frac{1}{A}\lim_{u_{j}\to s_{j}}\left(R_{n-1,n}\right)\left(R_{n-2,n}R_{n-2,n-1}\right)\dots\left(R_{1,n}\dots R_{1,3}R_{1,2}\right),\to(\Pi_{1,2,\dots,,n}^{\Lambda})^{2}=\Pi_{1,2,\dots,,n}^{\Lambda},
\end{equation}
where $A$ is a normalization constant
\begin{equation}
A=\prod_{x\in Y(\Lambda)}\text{hook}(x).
\end{equation}
The subspace of the operator $\Pi_{1,2,\dots,,n}^{\Lambda}$ with
eigenvalue $+1$ is isomorphic to the irreducible representation $\Lambda$.

\subsection{Fusion of the $R$-matrix}

The two types of $R$-matrices (\ref{eq:Rm}), (\ref{eq:Rb}) are
representations of the Yangian algebra $Y(N)$, i.e.,
\begin{equation}
R_{1,2}(u_{1}-u_{2})\bar{R}_{1,3}(u_{1})\bar{R}_{2,3}(u_{2})=\bar{R}_{2,3}(u_{2})\bar{R}_{1,3}(u_{1})R_{1,2}(u_{1}-u_{2}).
\end{equation}
We can define fused $R$-matrices as follows
\begin{equation}
\begin{split}\bar{R}_{I,0}^{\Lambda,\square}(u) & =\left(\bar{R}_{n,0}(u+s_{n})\dots\bar{R}_{2,0}(u+s_{2})\bar{R}_{1,0}(u+s_{1})\right)\Pi_{1,2,\dots,n}^{\Lambda}=\\
 & =\Pi_{1,2,\dots,n}^{\Lambda}\left(\bar{R}_{1,0}(u+s_{1})\bar{R}_{2,0}(u+s_{2})\dots\bar{R}_{n,0}(u+s_{n})\right),
\end{split}
\end{equation}
and
\begin{equation}
\begin{split}\bar{R}_{0,I}^{\square,\Lambda}(u) & =\left(\bar{R}_{0,1}(u-s_{1})\bar{R}_{0,2}(u-s_{2})\dots\bar{R}_{0,n}(u-s_{n})\right)\Pi_{1,2,\dots,n}^{\Lambda}=\\
 & =\Pi_{1,2,\dots,n}^{\Lambda}\left(\bar{R}_{0,n}(u-s_{n})\dots\bar{R}_{0,2}(u-s_{2})\bar{R}_{0,1}(u-s_{1})\right),
\end{split}
\end{equation}
where $I$ denotes the $+1$ eigenspace of the operator $\Pi_{1,2,\dots,,n}^{\Lambda}$.
The matrices $R^{\Lambda,\square}$ and $R^{\square,\Lambda}$ satisfy
the following relations
\begin{equation}
\begin{split}R_{1,2}(u_{1}-u_{2})R_{1,3}^{\square,\Lambda}(u_{1}-u_{3})R_{2,3}^{\square,\Lambda}(u_{2}-u_{3}) & =R_{2,3}^{\square,\Lambda}(u_{2}-u_{3})R_{1,3}^{\square,\Lambda}(u_{1}-u_{3})R_{1,2}(u_{1}-u_{2}),\\
R_{1,2}^{\Lambda,\square}(u_{1}-u_{2})R_{1,3}^{\Lambda,\square}(u_{1}-u_{3})R_{2,3}(u_{2}-u_{3}) & =R_{2,3}(u_{2}-u_{3})R_{1,3}^{\Lambda,\square}(u_{1}-u_{3})R_{1,2}^{\Lambda,\square}(u_{1}-u_{2}),\\
R_{1,2}^{\square,\Lambda}(u_{1}-u_{2})R_{1,3}(u_{1}-u_{3})R_{2,3}^{\Lambda,\square}(u_{2}-u_{3}) & =R_{2,3}^{\Lambda,\square}(u_{2}-u_{3})R_{1,3}(u_{1}-u_{3})R_{1,2}^{\square,\Lambda}(u_{1}-u_{2}),
\end{split}
\end{equation}
and unitarity
\begin{equation}
\begin{split}R_{1,2}^{\square,\Lambda}(u)R_{2,1}^{\Lambda,\square}(-u) & =\frac{u-(N-1)}{u}\prod_{k=1}^{N-1}\frac{u+(\Lambda_{k}-k+1)}{u+(\Lambda_{k}-k)},\\
R_{1,2}^{\Lambda,\square}(u)R_{2,1}^{\square,\Lambda}(-u) & =\frac{u+(N-1)}{u}\prod_{k=1}^{N-1}\frac{u-(\Lambda_{k}-k+1)}{u-(\Lambda_{k}-k)}.
\end{split}
\end{equation}
The matrix $\widehat{R}^{\square,\Lambda}$ is also a representation
of the Yangian algebra:
\begin{equation}
\begin{split}R_{1,2}(u_{1}-u_{2})\widehat{R}_{1,3}^{\square,\Lambda}(u_{1}-u_{3})\widehat{R}_{2,3}^{\square,\Lambda}(u_{2}-u_{3}) & =\widehat{R}_{2,3}^{\square,\Lambda}(u_{2}-u_{3})\widehat{R}_{1,3}^{\square,\Lambda}(u_{1}-u_{3})R_{1,2}(u_{1}-u_{2}),\\
\widehat{R}_{1,2}(u_{1}-u_{2})\widehat{R}_{1,3}^{\square,\Lambda}(u_{1}-u_{3})R_{2,3}^{\square,\Lambda}(u_{2}-u_{3}) & =R_{2,3}^{\square,\Lambda}(u_{2}-u_{3})\widehat{R}_{1,3}^{\square,\Lambda}(u_{1}-u_{3})\widehat{R}_{1,2}(u_{1}-u_{2}).
\end{split}
\end{equation}
The partial transpose of $\widehat{R}^{\square,\Lambda}$ can be expressed
using the matrix $R_{I,0}^{\Lambda,\square}$
\begin{equation}
\left(\widehat{R}_{0,I}^{\square,\Lambda}(u)\right)^{t_{0}}=\left(R_{0,n}(-u+s_{n})\dots R_{0,2}(-u+s_{2})R_{0,1}(-u+s_{1})\right)\Pi_{1,2,\dots,n}^{\Lambda}=R_{I,0}^{\Lambda,\square}(-u),
\end{equation}
i.e.,
\begin{equation}
R_{1,2}^{\square,\Lambda}(u)\left(\widehat{R}_{1,2}^{\square,\Lambda}(u)\right)^{t_{1}}=\frac{u-(N-1)}{u}\prod_{k=1}^{N-1}\frac{u+(\Lambda_{k}-k+1)}{u+(\Lambda_{k}-k)}.
\end{equation}
The fused $R$-matrix $R^{\square,\Lambda}$ coincides with the previously
defined Lax operator (\ref{eq:Lax})
\begin{equation}
L^{\Lambda}(u)=R^{\square,\Lambda}(u).
\end{equation}
This is a highest-weight representation of the Yangian, and the highest-weight
vector is
\begin{equation}
|0_{\Lambda}\rangle=\Pi_{1,2,\dots,,n}^{\Lambda}|1,2,\dots,w_{1},1,2,\dots,w_{2},\dots,1,2,\dots,w_{m}\rangle.
\end{equation}
The highest weights can be obtained from the action of $L_{i,i}^{\Lambda}(u)$:
\begin{equation}
L_{i,i}^{\Lambda}(u)|0_{\Lambda}\rangle=\lambda_{i}^{\Lambda}(u)|0_{\Lambda}\rangle\to\lambda_{i}^{\Lambda}(u)=\frac{u+\Lambda_{i}}{u}.
\end{equation}
From the inversion relation, we get that
\begin{equation}
\widehat{L}^{\Lambda}(u)=\frac{u+\Lambda_{1}}{u}\frac{u-\Lambda_{1}}{u-(N-1)}\prod_{k=1}^{N-1}\frac{u+(\Lambda_{k}-k)}{u+(\Lambda_{k}-k+1)}\widehat{R}^{\square,\Lambda}(u),
\end{equation}
where $\widehat{L}^{\Lambda}$ is defined as in (\ref{eq:Thatdef}).
The vector $|0_{\Lambda}\rangle$ is the lowest-weight vector of the
operator $\widehat{L}^{\Lambda}$, and the lowest weight is given
by (\ref{eq:lamhat})
\begin{equation}
\begin{split}\hat{\lambda}_{i}^{\Lambda}(u) & =\frac{\lambda_{1}^{\Lambda}(u)\lambda_{1}^{\Lambda}(-u)}{\lambda_{i}^{\Lambda}(u-(i-1))}\prod_{k=1}^{i-1}\frac{\lambda_{k}^{\Lambda}(u-k)}{\lambda_{k}^{\Lambda}(u-(k-1))}\\
 & =\frac{u+\Lambda_{1}}{u+(\Lambda_{i}-i+1)}\frac{u-\Lambda_{1}}{u}\prod_{k=1}^{i-1}\frac{u+(\Lambda_{k}-k)}{u+(\Lambda_{k}-k+1)}.
\end{split}
\label{eq:suly}
\end{equation}
For the Lax operator $\left(\widehat{L}_{0,I}^{\Lambda}\right)^{t_{I}}$,
the vector $|0_{\Lambda}\rangle$ becomes the highest-weight vector,
and the highest weight is given by (\ref{eq:suly}).

\subsection{Fusion of the $K$-matrix}

Fusion can also be applied to the reflection equation
\begin{equation}
R_{1,2}(v-u)\mathbf{K}_{1}(u)\bar{R}_{1,2}(-u-v)\mathbf{K}_{2}(v)=\mathbf{K}_{2}(v)\bar{R}_{1,2}(-u-v)\mathbf{K}_{1}(u)R_{1,2}(v-u).
\end{equation}
The fused reflection matrix is defined as
\begin{equation}
\begin{split}\mathbf{K}_{I}^{\Lambda}(u) & =\lim_{u_{j}\to u+s_{j}}\mathbf{K}_{1}\left(\bar{R}_{1,2}\mathbf{K}_{2}\right)\left(\bar{R}_{1,3}\bar{R}_{2,3}\mathbf{K}_{3}\right)\dots\left(\bar{R}_{1,n}\bar{R}_{2,n}\dots\bar{R}_{n-1,n}\mathbf{K}_{n}\right)\Pi_{1,2,\dots,n}^{\Lambda}\\
 & =\Pi_{1,2,\dots,n}^{\Lambda}\left(\mathbf{K}_{n}\bar{R}_{n-1,n}\dots\bar{R}_{2,n}\bar{R}_{1,n}\right)\dots\left(\mathbf{K}_{3}\bar{R}_{2,3}\bar{R}_{1,3}\right)\left(\mathbf{K}_{2}\bar{R}_{1,2}\right)\mathbf{K}_{1},
\end{split}
\end{equation}
where
\[
\mathbf{K}_{j}\equiv\mathbf{K}_{j}(u_{j}),\quad\bar{R}_{j,k}\equiv\bar{R}_{j,k}(-u_{j}-u_{k}).
\]
This $K$-matrix satisfies the reflection equation
\begin{equation}
L_{0,1}^{\Lambda}(u-\theta)\mathbf{K}_{1}^{\Lambda}(\theta)\bar{L}_{0,1}^{\Lambda}(-u-\theta)\mathbf{K}_{0}(u)=\mathbf{K}_{0}(u)\bar{L}_{0,1}^{\Lambda}(-u-\theta)\mathbf{K}_{1}^{\Lambda}(\theta)L_{0,1}^{\Lambda}(u-\theta).
\end{equation}
This equation is equivalent to the relation
\begin{equation}
\mathbf{K}_{0}(u)\psi_{2,1}^{\Lambda}(\theta)\left[\bar{L}_{0,2}^{\Lambda}(-u-\theta)\right]^{t_{2}}L_{0,1}^{\Lambda}(u-\theta)=\psi_{2,1}^{\Lambda}(\theta)\left[L_{0,2}^{\Lambda}(u-\theta)\right]^{t_{2}}\bar{L}_{0,1}^{\Lambda}(-u-\theta)\mathbf{K}_{0}(u),
\end{equation}
where we introduce a matrix-valued two-site state.
\begin{equation}
\psi_{1,2}^{\Lambda}(\theta)=\sum_{i,j=1}^{d_{\Lambda}}\langle i,j|\otimes\mathbf{K}_{i,j}^{\Lambda}(\theta).
\end{equation}

\subsection{Crossed KT-relation}

In the crossed case, the fused $K$-matrix satisfies the following
equation
\begin{equation}
\mathbf{K}_{0}(u)\psi_{2,1}^{\Lambda}(\theta)\left[\widehat{L}_{0,2}^{\Lambda}(-u-\theta)\right]^{t_{2}}L_{0,1}^{\Lambda}(u-\theta)=\psi_{2,1}^{\Lambda}(\theta)\left[L_{0,2}^{\Lambda}(u-\theta)\right]^{t_{2}}\widehat{L}_{0,1}^{\Lambda}(-u-\theta)\mathbf{K}_{0}(u).
\end{equation}
This crossed $KT$-relation applies to a two-site spin chain, where
the monodromy matrices are
\begin{equation}
T_{0}(u)=\left[\widehat{L}_{0,2}^{\Lambda}(-u-\theta)\right]^{t_{2}}L_{0,1}^{\Lambda}(u-\theta),\quad\widehat{T}_{0}(u)=\left[L_{0,2}^{\Lambda}(-u-\theta)\right]^{t_{2}}\widehat{L}_{0,1}^{\Lambda}(u-\theta).
\end{equation}
It is easy to verify that these matrices satisfy the required relations,
namely the $RTT$-relation (\ref{eq:RTT}) and the inversion relation
(\ref{eq:Thatdef}). The pseudo-vacuum eigenvalues are
\begin{equation}
\lambda_{k}(u)=\lambda_{k}^{\Lambda}(u-\theta)\hat{\lambda}_{k}^{\Lambda}(-u-\theta),\quad\hat{\lambda}_{k}(u)=\hat{\lambda}_{k}^{\Lambda}(u-\theta)\lambda_{k}^{\Lambda}(-u-\theta).
\end{equation}
In this two-site quantum space, the pseudo-vacuum is
\begin{equation}
|0\rangle=|0_{\Lambda}\rangle\otimes|0_{\Lambda}\rangle.
\end{equation}
If $\Lambda$ is a fundamental representation, i.e., $\Lambda=\mu^{(k)}$
where
\begin{equation}
\mu_{j}^{(k)}=\begin{cases}
1, & j\leq k,\\
0, & j>k,
\end{cases}
\end{equation}
then the highest weight vector is
\begin{equation}
|0_{\mu^{(k)}}\rangle=\sum_{i_{1},\dots,i_{k}}\epsilon_{i_{1},\dots,i_{k}}|i_{1},\dots,i_{k}\rangle,
\end{equation}
where $\epsilon$ a Levi-Civita tensor. In these cases, the fused
transfer matrices coincide with the Sklyanin minors
\begin{equation}
\mathbf{K}^{\mu^{(k)}}\to\mathbf{k}_{i_{1},i_{2},\dots,i_{k}}^{j_{1},j_{2},\dots,j_{k}},
\end{equation}
which are defined as
\begin{equation}
\Pi_{1,2,\dots,,k}^{\mu^{(k)}}\left(\mathbf{K}_{k}\widehat{R}_{k-1,k}\dots\widehat{R}_{2,k}\widehat{R}_{1,k}\right)\dots\left(\mathbf{K}_{3}\widehat{R}_{2,3}\widehat{R}_{1,3}\right)\left(\mathbf{K}_{2}\widehat{R}_{1,2}\right)\mathbf{K}_{1}=\sum_{a_{i},b_{i}}e_{a_{1},b_{1}}\otimes\dots\otimes e_{a_{k},b_{k}}\otimes\mathbf{k}_{a_{1},\dots,a_{k}}^{b_{1},\dots,b_{k}},\label{eq:SklyaninMin}
\end{equation}
where $\mathbf{k}$ is antisymmetric under the exchanges $i_{k}\leftrightarrow i_{l}$
and $j_{k}\leftrightarrow j_{l}$. The pseudo-vacuum overlap is
\begin{equation}
\psi_{2,1}^{\mu^{(k)}}(\theta)|0\rangle=\mathbf{k}_{1,2,\dots,k}^{1,2,\dots,k}(\theta),
\end{equation}
which is the Sklyanin determinant in the subalgebra generated by $\left\{ \mathbf{K}_{i,j}\right\} _{i,j=1}^{k}$.
For a general representation $\Lambda=\sum_{k=1}^{N-1}d_{k}\mu^{(k)}$,
the pseudo-vacuum overlap is
\begin{equation}
\mathbf{B}^{\Lambda}(\theta)=\psi_{2,1}^{\Lambda}(\theta)|0\rangle=B(\theta)\prod_{k=1}^{N-1}\prod_{l=1}^{d_{k}}\mathbf{k}_{1,2,\dots,k}^{1,2,\dots,k}(\theta-(D_{k}+l-1)),
\end{equation}
where $D_{k}=\sum_{l=k+1}^{N-1}d_{k}$, and $B(\theta)$ is a scalar
depending on $\theta$. Since the normalization of the MPS is arbitrary,
the specific value of $B(\theta)$ is irrelevant.  

Using the coproduct property of boundary states (Lemma \ref{lem:co-prod}),
we can construct more general boundary states
\begin{equation}
\langle\Psi|=\psi_{2J,2J-1}^{\Lambda^{(J)}}(\theta_{J})\dots\psi_{4,3}^{\Lambda^{(2)}}(\theta_{2})\psi_{2,1}^{\Lambda^{(1)}}(\theta_{1}),
\end{equation}
where the $\Lambda^{(j)}$ are arbitrary representations and the $\theta_{j}$
are arbitrary inhomogeneities. The monodromy matrices are
\begin{equation}
\begin{split}T_{0}(u) & =\left[\widehat{L}_{0,2J}^{\Lambda^{(J)}}(-u-\theta_{J})\right]^{t_{2J}}L_{0,2J-1}^{\Lambda^{(J)}}(u-\theta_{J})\dots\left[\widehat{L}_{0,2}^{\Lambda^{(1)}}(-u-\theta_{1})\right]^{t_{2}}L_{0,1}^{\Lambda^{(1)}}(u-\theta_{1}),\\
\widehat{T}_{0}(u) & =\left[L_{0,2J}^{\Lambda^{(J)}}(-u-\theta_{J})\right]^{t_{2J}}\widehat{L}_{0,2J-1}^{\Lambda^{(J)}}(u-\theta_{J})\dots\left[L_{0,2}^{\Lambda^{(1)}}(-u-\theta_{1})\right]^{t_{2}}\widehat{L}_{0,1}^{\Lambda^{(1)}}(u-\theta_{1}).
\end{split}
\end{equation}
The pseudo-vacuum eigenvalues are
\begin{equation}
\lambda_{k}(u)=\prod_{j=1}^{J}\lambda_{k}^{\Lambda^{(j)}}(u-\theta_{j})\hat{\lambda}_{k}^{\Lambda^{(j)}}(-u-\theta_{j}),\quad\hat{\lambda}_{k}(u)=\prod_{j=1}^{J}\hat{\lambda}_{k}^{\Lambda^{(j)}}(u-\theta_{j})\lambda_{k}^{\Lambda^{(j)}}(-u-\theta_{j}).
\end{equation}
We observe a symmetry property
\begin{equation}
\lambda_{k}(u)=\hat{\lambda}_{k}(-u).
\end{equation}
The pseudo-vacuum overlap is
\begin{equation}
\mathbf{B}(\theta)=\langle\Psi|0\rangle=\prod_{j=1}^{J}\mathbf{B}^{\Lambda^{(j)}}(\theta_{j}).\label{eq:vacuumOv}
\end{equation}

\subsection{Uncrossed KT-relation}

In the uncrossed case, the fused $K$-matrix satisfies the following
equation
\begin{equation}
\mathbf{K}_{0}(u)\psi_{2,1}^{\Lambda}(\theta)\left[L_{0,2}^{\Lambda}(-u-\theta)\right]^{t_{2}}L_{0,1}^{\Lambda}(u-\theta).=\psi_{2,1}^{\Lambda}(\theta)\left[L_{0,2}^{\Lambda}(u-\theta)\right]^{t_{2}}L_{0,1}^{\Lambda}(-u-\theta)\mathbf{K}_{0}(u).\label{eq:uncrKT}
\end{equation}
We evaluate the transposed Lax operator.
\begin{align}
\left[L_{1,2}^{\Lambda}(-u)\right]^{t_{2}} & =e_{i,j}\otimes\left(\delta_{i,j}\mathbf{1}-\frac{1}{u}(E_{j,i}^{\Lambda})^{t}\right)=e_{i,j}\otimes\left(\delta_{i,j}\mathbf{1}+\frac{1}{u}E_{j,i}^{\Lambda^{cg}}\right)\nonumber \\
 & =L_{1,2}^{\Lambda^{cg}}(u).
\end{align}
Here we introduced the contra-gradient representation
\begin{equation}
E_{i,j}^{\Lambda^{cg}}:=-(E_{i,j}^{\Lambda})^{t}.
\end{equation}
Let the highest weight vector of the contra-gradient representation
be $|\bar{0}\rangle^{\Lambda}$, for which
\begin{equation}
\begin{split}E_{i,j}^{\Lambda^{cg}}|\bar{0}_{\Lambda}\rangle & =0,\quad\text{for }i<j,\\
E_{i,i}^{\Lambda^{cg}}|\bar{0}_{\Lambda}\rangle & =-\Lambda_{N+1-i}|\bar{0}_{\Lambda}\rangle.
\end{split}
\end{equation}
We see that (\ref{eq:uncrKT}) is an uncrossed $KT$-relation for
a two-site spin chain, where the monodromy matrix is
\begin{equation}
T_{0}(u)=L_{0,2}^{\Lambda^{cg}}(u+\theta)L_{0,1}^{\Lambda}(u-\theta).
\end{equation}
In this two-site quantum space, the pseudo-vacuum is
\begin{equation}
|0\rangle=|0_{\Lambda}\rangle\otimes|\bar{0}_{\Lambda}\rangle.
\end{equation}
The pseudo-vacuum eigenvalue is
\begin{equation}
\lambda_{k}(u)=\frac{u-\theta+\Lambda_{k}}{u-\theta}\frac{u+\theta-\Lambda_{N+1-k}}{u+\theta}.
\end{equation}
For fundamental representations, we can introduce the same notation
for the components of the fused $K$-matrix as before. The highest
weight vectors are
\begin{equation}
\begin{split}|0_{\mu^{(k)}}\rangle & =\sum_{i_{1},\dots,i_{k}=1}^{k}\epsilon_{i_{1},\dots,i_{k}}|i_{1},\dots,i_{k}\rangle,\\
|\bar{0}_{\mu^{(k)}}\rangle & =\sum_{i_{1},\dots,i_{k}=1}^{k}\epsilon_{i_{1},\dots,i_{k}}|\bar{i}_{1},\dots,\bar{i}_{k}\rangle,
\end{split}
\end{equation}
where $\bar{i}=N+1-i$. Since the construction is symmetric under
the exchange $\Lambda\leftrightarrow\Lambda^{cg}$ we can assume without
loss of generality that $k\leq N/2$. The pseudo-vacuum overlap is
\begin{equation}
\psi_{2,1}^{\mu^{(k)}}(\theta)|0\rangle=\mathbf{k}_{\bar{1},\bar{2},\dots,\bar{k}}^{1,2,\dots,k}(\theta),
\end{equation}
where the quantum minor is defined as follows
\begin{equation}
\Pi_{1,2,\dots,,k}^{\mu^{(k)}}\left(\mathbf{K}_{k}R_{k-1,k}\dots R_{2,k}R_{1,k}\right)\dots\left(\mathbf{K}_{3}R_{2,3}R_{1,3}\right)\left(\mathbf{K}_{2}R_{1,2}\right)\mathbf{K}_{1}=\sum_{a_{i},b_{i}}e_{a_{1},b_{1}}\otimes\dots\otimes e_{a_{k},b_{k}}\otimes\mathbf{k}_{a_{1},\dots,a_{k}}^{b_{1},\dots,b_{k}}.
\end{equation}
If $k\leq N/2$, then $\left\{ \mathbf{K}_{\bar{i},j}\right\} _{i,j=1}^{k}$
forms a $Y(k)$ subalgebra, so $\mathbf{k}_{\bar{1},\bar{2},\dots,\bar{k}}^{1,2,\dots,k}(\theta)$
is a quantum determinant in the $Y(k)$ subalgebra. For a general
representation $\Lambda=\sum_{k=1}^{N/2}d_{k}\mu^{(k)}$, the pseudo-vacuum
overlap is
\begin{equation}
\mathbf{B}^{\Lambda}(\theta)=\psi_{2,1}^{\Lambda}(\theta)|0\rangle=B(\theta)\prod_{k=1}^{N/2}\prod_{l=1}^{d_{k}}\mathbf{k}_{\bar{1},\bar{2},\dots,\bar{k}}^{1,2,\dots,k}(\theta-(D_{k}+l-1)),
\end{equation}
where $B(\theta)$ is a scalar depending on $\theta$.

Using the coproduct property of boundary states (\ref{eq:Psi_Factor}),
we can construct more general boundary states
\begin{equation}
\langle\Psi|=\psi_{2J,2J-1}^{\Lambda^{(J)}}(\theta_{J})\dots\psi_{4,3}^{\Lambda^{(2)}}(\theta_{2})\psi_{2,1}^{\Lambda^{(1)}}(\theta_{1}),
\end{equation}
where the $\Lambda^{(j)}$ are arbitrary representations and the $\theta_{j}$
are arbitrary inhomogeneities. The monodromy matrix is
\begin{equation}
T_{0}(u)=L_{0,2J}^{\Lambda^{(J),cg}}(u+\theta_{J})L_{0,2J-1}^{\Lambda^{(J)}}(u-\theta_{J})\dots L_{0,2}^{\Lambda^{(1),cg}}(u+\theta_{1})L_{0,1}^{\Lambda^{(1)}}(u-\theta_{1}).
\end{equation}
The pseudo-vacuum eigenvalues are as follows
\begin{equation}
\lambda_{k}(u)=\prod_{j=1}^{J}\frac{u-\theta_{j}+\Lambda_{k}^{(j)}}{u-\theta_{j}}\frac{u+\theta_{j}-\Lambda_{N+1-k}^{(j)}}{u+\theta_{j}}.
\end{equation}
We observe a symmetry property
\begin{equation}
\lambda_{k}(u)=\lambda_{k}(-u).
\end{equation}
The pseudo-vacuum overlap is
\begin{equation}
\mathbf{B}(\theta)=\langle\Psi|0\rangle=\prod_{j=1}^{J}\mathbf{B}^{\Lambda^{(j)}}(\theta_{j}).
\end{equation}

\section{Recursions for the off-shell Bethe states\label{sec:Reqursions-for-BS}}

In this section, we summarize the formulas for the off-shell Bethe
vectors that are necessary for our purposes. These formulas can be
found in the papers \cite{Hutsalyuk:2017tcx,Hutsalyuk:2020dlw}. The
Bethe vectors satisfy the following recursive equations

\begin{equation}
\mathbb{B}(\{z,\bar{t}^{1}\},\left\{ \bar{t}^{k}\right\} _{k=2}^{N-1})=\sum_{j=2}^{N}\frac{T_{1,j}(z)}{\lambda_{2}(z)}\sum_{\mathrm{part}(\bar{t})}\mathbb{B}(\bar{t}^{1},\left\{ \bar{t}_{\textsc{ii}}^{k}\right\} _{k=2}^{j-1},\left\{ \bar{t}^{k}\right\} _{k=j}^{N-1})\frac{\prod_{\nu=2}^{j-1}\alpha_{\nu}(\bar{t}_{\textsc{i}}^{\nu})g(\bar{t}_{\textsc{i}}^{\nu},\bar{t}_{\textsc{i}}^{\nu-1})f(\bar{t}_{\textsc{ii}}^{\nu},\bar{t}_{\textsc{i}}^{\nu})}{\prod_{\nu=1}^{j-1}f(\bar{t}^{\nu+1},\bar{t}_{\textsc{i}}^{\nu})},
\end{equation}
where the sum goes over the partitions $\bar{t}^{s}\vdash\{\bar{t}_{\textsc{i}}^{s},\bar{t}_{\textsc{ii}}^{s}\}$
for $s=2,\dots,j-1$ such that $\#\bar{t}_{\textsc{i}}^{s}=1$. We
have another recursion
\begin{equation}
\mathbb{B}(\left\{ \bar{t}^{k}\right\} _{k=1}^{N-2},\{z,\bar{t}^{N-1}\})=\sum_{j=1}^{N-1}\frac{T_{j,N}(z)}{\lambda_{N}(z)}\sum_{\mathrm{part}(\bar{t})}\mathbb{B}(\left\{ \bar{t}^{k}\right\} _{k=1}^{j-1},\left\{ \bar{t}_{\textsc{ii}}^{k}\right\} _{k=j}^{N-2},\bar{t}^{N-1})\frac{\prod_{\nu=j}^{N-2}g(\bar{t}_{\textsc{i}}^{\nu+1},\bar{t}_{\textsc{i}}^{\nu})f(\bar{t}_{\textsc{i}}^{\nu},\bar{t}_{\textsc{ii}}^{\nu})}{\prod_{\nu=j}^{N-1}f(\bar{t}_{\textsc{i}}^{\nu},\bar{t}^{\nu-1})},\label{eq:rec2}
\end{equation}
where the sum goes over the partitions $\bar{t}^{s}\vdash\{\bar{t}_{\textsc{i}}^{s},\bar{t}_{\textsc{ii}}^{s}\}$
for $s=j,\dots,N-2$ such that $\#\bar{t}_{\textsc{i}}^{s}=1$.

We also use the following action formula
\begin{multline}
T_{i,j}(z)\mathbb{B}(\bar{t})=\lambda_{N}(z)\sum_{\mathrm{part}(\bar{w})}\mathbb{B}(\bar{w}_{\textsc{ii}})\frac{\prod_{s=j}^{i-1}f(\bar{w}_{\textsc{i}}^{s},\bar{w}_{\textsc{iii}}^{s})}{\prod_{s=j}^{i-2}f(\bar{w}_{\textsc{i}}^{s+1},\bar{w}_{\textsc{iii}}^{s})}\times\\
\prod_{s=1}^{i-1}\frac{f(\bar{w}_{\textsc{i}}^{s},\bar{w}_{\textsc{ii}}^{s})}{h(\bar{w}_{\textsc{i}}^{s},\bar{w}_{\textsc{i}}^{s-1})f(\bar{w}_{\textsc{i}}^{s},\bar{w}_{\textsc{ii}}^{s-1})}\prod_{s=j}^{N-1}\frac{\alpha_{s}(\bar{w}_{\textsc{iii}}^{s})f(\bar{w}_{\textsc{ii}}^{s},\bar{w}_{\textsc{iii}}^{s})}{h(\bar{w}_{\textsc{iii}}^{s+1},\bar{w}_{\textsc{iii}}^{s})f(\bar{w}_{\textsc{ii}}^{s+1},\bar{w}_{\textsc{iii}}^{s})},\label{eq:act}
\end{multline}
where $\bar{w}^{\nu}=\{z,\bar{t}^{\nu}\}$. The sum goes over all
the partitions of $\bar{w}^{\nu}\vdash\left\{ \bar{w}_{\textsc{i}}^{\nu},\bar{w}_{\textsc{ii}}^{\nu},\bar{w}_{\textsc{iii}}^{\nu}\right\} $
where $\#\bar{w}_{\textsc{i}}^{\nu}=\Theta(i-1-\nu)$ and $\#\bar{w}_{\textsc{iii}}^{\nu}=\Theta(\nu-j)$.
We also set $\bar{w}_{\textsc{i}}^{0}=\bar{w}_{\textsc{iii}}^{N}=\{z\}$
and $\bar{w}_{\textsc{ii}}^{0}=\bar{w}_{\textsc{iii}}^{0}=\bar{w}_{\textsc{i}}^{N}=\bar{w}_{\textsc{ii}}^{N}=\emptyset$.
We also used the unit step function 
\[
\Theta(k)=\begin{cases}
1, & k\geq0,\\
0, & k<0.
\end{cases}
\]
We also need the action formula for the crossed monodromy matrix \cite{Liashyk:2018egk,Gombor:2021hmj}
\begin{multline}
\widehat{T}_{i,j}(z)\mathbb{B}(\bar{t})=(-1)^{i-j}\hat{\lambda}_{1}(z)\sum_{\mathrm{part}(\bar{w})}\mathbb{B}(\bar{w}_{\textsc{ii}})\frac{\prod_{s=2}^{N-1}f(\bar{t}^{s-1}-1,\bar{t}^{s})}{\prod_{s=2}^{N-1}f(\bar{w}_{\textsc{ii}}^{s-1}-1,\bar{w}_{\textsc{ii}}^{s})}\frac{\prod_{s=i}^{j-1}f(\bar{w}_{\textsc{i}}^{s},\bar{w}_{\textsc{iii}}^{s})}{\prod_{s=i+1}^{j-1}f(\bar{w}_{\textsc{i}}^{s-1}-1,\bar{w}_{\textsc{iii}}^{s})}\times\\
\prod_{s=i}^{N-1}\frac{f(\bar{w}_{\textsc{i}}^{s},\bar{w}_{\textsc{ii}}^{s})}{h(\bar{w}_{\textsc{i}}^{s},\bar{w}_{\textsc{i}}^{s+1}+1)f(\bar{w}_{\textsc{i}}^{s},\bar{w}_{\textsc{ii}}^{s+1}+1)}\prod_{s=1}^{j-1}\frac{\alpha_{s}(\bar{w}_{\textsc{iii}}^{s})f(\bar{w}_{\textsc{ii}}^{s},\bar{w}_{\textsc{iii}}^{s})}{h(\bar{w}_{\textsc{iii}}^{s-1}-1,\bar{w}_{\textsc{iii}}^{s})f(\bar{w}_{\textsc{ii}}^{s-1}-1,\bar{w}_{\textsc{iii}}^{s})},\label{eq:actTw}
\end{multline}
where $\bar{w}^{\nu}=\{z-\nu,\bar{t}^{\nu}\}$. The sum goes over
all the partitions of $\bar{w}^{\nu}\vdash\left\{ \bar{w}_{\textsc{i}}^{\nu},\bar{w}_{\textsc{ii}}^{\nu},\bar{w}_{\textsc{iii}}^{\nu}\right\} $
where $\#\bar{w}_{\textsc{i}}^{\nu}=\Theta(\nu-i)$, $\#\bar{w}_{\textsc{iii}}^{\nu}=\Theta(j-1-\nu)$.
We also set $\bar{w}_{\textsc{iii}}^{0}=\{z\}$, $\bar{w}_{\textsc{i}}^{N}=\{z-N\}$
and $\bar{w}_{\textsc{i}}^{0}=\bar{w}_{\textsc{ii}}^{0}=\bar{w}_{\textsc{ii}}^{N}=\bar{w}_{\textsc{iii}}^{N}=\emptyset$. 

Finally, we also need the co-product formula of the Bethe states.
Let $\mathcal{H}^{(1)},\mathcal{H}^{(2)}$ be two quantum spaces for
which $\mathcal{H}=\mathcal{H}^{(1)}\otimes\mathcal{H}^{(2)}$ and
the corresponding off-shell states are $\mathbb{B}^{(1)}(\bar{t}),\mathbb{B}^{(2)}(\bar{t})$.
The co-product formula reads as
\begin{equation}
\mathbb{B}(\bar{t})=\sum_{\mathrm{part}(\bar{t})}\frac{\prod_{\nu=1}^{N-1}\alpha_{\nu}^{(2)}(\bar{t}_{\textsc{i}}^{\nu})f(\bar{t}_{\textsc{ii}}^{\nu},\bar{t}_{\textsc{i}}^{\nu})}{\prod_{\nu=1}^{N-2}f(\bar{t}_{\textsc{ii}}^{\nu+1},\bar{t}_{\textsc{i}}^{\nu})}\mathbb{B}^{(1)}(\bar{t}_{\textsc{i}})\mathbb{B}^{(2)}(\bar{t}_{\textsc{ii}}),\label{eq:Bcoprod}
\end{equation}
where the sum goes over the partitions $\bar{t}^{s}\vdash\{\bar{t}_{\textsc{i}}^{s},\bar{t}_{\textsc{ii}}^{s}\}$.

\section{Theorems for the nested $K$-matrices}

In this section, we prove the theorems concerning nested $K$-matrices
from subsection \ref{subsec:Theorems-for-K}.

\subsection{Crossed $K$-matrices}
\begin{lem}
\label{lem:KK}The operator $\mathbf{K}_{1,1}(u)$ satisfies the following
commutation relations
\begin{equation}
\left[\mathbf{K}_{1,1}(u),\mathbf{B}\right]=\left[\mathbf{K}_{1,1}(u),\mathbf{K}_{1,1}(v)\right]=0.\label{eq:comKK}
\end{equation}
\end{lem}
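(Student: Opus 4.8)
The plan is to read off both commutation relations from the two structural equations available for the crossed $K$-matrix: the reflection equation $(\ref{eq:crBY})$ yields $[\mathbf{K}_{1,1}(u),\mathbf{K}_{1,1}(v)]=0$, while the crossed $KT$-relation evaluated on the pseudo-vacuum yields $[\mathbf{K}_{1,1}(u),\mathbf{B}]=0$.

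For the first relation I would specialize $(\ref{eq:crBY})$ to $i=j=k=l=1$. Each of the three terms on the right-hand side then collapses to a scalar kernel times $[\mathbf{K}_{1,1}(u),\mathbf{K}_{1,1}(v)]$, and summing the kernels gives
\[
[\mathbf{K}_{1,1}(u),\mathbf{K}_{1,1}(v)]=\left(\frac{1}{u-v}-\frac{1}{u+v}+\frac{1}{u^{2}-v^{2}}\right)[\mathbf{K}_{1,1}(u),\mathbf{K}_{1,1}(v)]=\frac{2v+1}{u^{2}-v^{2}}\,[\mathbf{K}_{1,1}(u),\mathbf{K}_{1,1}(v)].
\]
Since $1-\frac{2v+1}{u^{2}-v^{2}}=\frac{u^{2}-(v+1)^{2}}{u^{2}-v^{2}}$ does not vanish identically, the commutator must be zero for generic $u,v$, hence identically (by analyticity in the spectral parameters, or order by order in the expansions in $u^{-1},v^{-1}$).

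For the second relation I would take the $(i,j)=(1,1)$ component of the crossed $KT$-relation, $\sum_{k}\mathbf{K}_{1,k}(z)\langle\Psi|T_{k,1}(z)=\sum_{k}\langle\Psi|\widehat{T}_{1,k}(-z)\mathbf{K}_{k,1}(z)$, and apply it to the pseudo-vacuum $|0\rangle$. The highest-weight property of $T$ kills $T_{k,1}(z)|0\rangle$ for $k>1$ and gives $T_{1,1}(z)|0\rangle=\lambda_{1}(z)|0\rangle$, so the left side becomes $\lambda_{1}(z)\mathbf{K}_{1,1}(z)\mathbf{B}$; symmetrically, the lowest-weight property of $\widehat{T}$ kills $\widehat{T}_{1,k}(-z)|0\rangle$ for $k>1$ and gives $\widehat{T}_{1,1}(-z)|0\rangle=\hat{\lambda}_{1}(-z)|0\rangle$, so the right side becomes $\hat{\lambda}_{1}(-z)\mathbf{B}\mathbf{K}_{1,1}(z)$. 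Using $\hat{\lambda}_{1}(u)=\lambda_{1}(-u)$, which follows from $(\ref{eq:Thatdef})$ applied to $|0\rangle$ (equivalently from $(\ref{eq:lamhat})$ at $i=1$), the two scalar prefactors agree, and cancelling $\lambda_{1}(z)\not\equiv 0$ gives $[\mathbf{K}_{1,1}(z),\mathbf{B}]=0$.

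I do not expect a genuine obstacle here; the only points needing care are the bookkeeping of the boundary-space matrix ordering in the $KT$-relation (so that $\mathbf{B}$ lands to the left of $\mathbf{K}_{1,1}$ on one side of the equation and to the right on the other), and the justification for cancelling the common scalar factors, which is legitimate because both $\lambda_{1}(z)$ and $1-\frac{2v+1}{u^{2}-v^{2}}$ are nonzero as rational functions (respectively as formal series in the inverse spectral parameters).
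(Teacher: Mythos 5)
Your proposal is correct and follows essentially the same route as the paper: the $(1,1)$ component of the crossed $KT$-relation applied to the pseudo-vacuum gives $[\mathbf{K}_{1,1}(z),\mathbf{B}]=0$ (the paper invokes the symmetry property $\lambda_{1}(z)=\hat{\lambda}_{1}(-z)$, which is the same identity you extract from (\ref{eq:lamhat}) at $i=1$), and the $(1,1),(1,1)$ component of (\ref{eq:crBY}) gives the self-commutation. Your explicit kernel computation $\frac{1}{u-v}-\frac{1}{u+v}+\frac{1}{u^{2}-v^{2}}=\frac{2v+1}{u^{2}-v^{2}}\neq 1$ correctly fills in the step the paper summarizes as ``after simplifications.''
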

\begin{proof}
Let us get the $(1,1)$ component of the $KT$-relation
\begin{equation}
\sum_{k=1}^{N}\mathbf{K}_{1,k}(z)\langle\Psi|T_{k,1}(z)=\sum_{k=1}^{N}\langle\Psi|\widehat{T}_{1,k}(-z)\mathbf{K}_{k,1}(z).
\end{equation}
Let us apply it on the pseudo-vacuum:
\begin{equation}
\mathbf{K}_{1,1}(z)\langle\Psi|T_{1,1}(z)|0\rangle=\langle\Psi|\widehat{T}_{1,1}(-z)|0\rangle\mathbf{K}_{1,1}(z),
\end{equation}
where we used that $T_{i,j}(z)|0\rangle=\widehat{T}_{j,i}(z)|0\rangle=0$
for $i>j$. The equation above simplifies as
\begin{equation}
\mathbf{K}_{1,1}(z)\mathbf{B}\lambda_{1}(z)=\mathbf{B}\mathbf{K}_{1,1}(z)\hat{\lambda}_{1}(-z).\label{eq:KTN1}
\end{equation}
Using the symmetry property (\ref{eq:lamProp}), we obtain that
\begin{equation}
[\mathbf{K}_{1,1}(z),\mathbf{B}]=0.\label{eq:commKB}
\end{equation}
From the nested $KT$-relations we can derive analogous equations
as (\ref{eq:commKB}). For the nested $K$-matrices we have
\begin{equation}
[\mathbf{K}_{k,k}^{(k)}(z),\mathbf{B}]=0.\label{eq:KB}
\end{equation}

For the other relation, we can use the $(1,1),(1,1)$ component of
the reflection equation (\ref{eq:crBY}). After simplifications
\begin{equation}
\left[\mathbf{K}_{1,1}(u),\mathbf{K}_{1,1}(v)\right]=0.
\end{equation}
\end{proof}
\begin{lem}
\label{lem:KG}The component $\mathbf{K}_{1,1}(u)$ commutes with
the entire nested $K$-matrix, i.e.,
\begin{equation}
\left[\mathbf{K}_{1,1}(v),\mathbf{K}_{a,b}^{(2)}(u)\right]=0,
\end{equation}
for $a,b=2,\dots,N$.
\end{lem}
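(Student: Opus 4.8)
The plan is to exploit the fact that $\mathbf{K}_{1,1}(v)=\mathbf{G}^{(1)}(v)$ is a diagonal entry of a twisted Yangian representation and to read off the relevant commutator directly from the components of the reflection equation (\ref{eq:crBY}). Concretely, I would set $i=j=1$ in (\ref{eq:crBY}) to fix the pair of indices on the left factor, and let $k,l$ range over $2,\dots,N$. The commutator $[\mathbf{K}_{1,1}(v),\mathbf{K}_{k,l}(u)]$ (note the order of spectral parameters; I will rename $u\leftrightarrow v$ to match the statement) then gets expressed through the three structure terms on the right-hand side of (\ref{eq:crBY}).

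First I would write out the $i=j=1$ specialization of (\ref{eq:crBY}) and simplify using the Kronecker deltas: the term $\delta_{j,k}$ contributes only when $k=1$, which is excluded since $k\geq2$, and $\delta_{i,l}$ contributes only when $l=1$, also excluded; the $\delta_{i,j}=\delta_{1,1}=1$ piece survives only in the last line. So for $k,l\geq 2$ the reflection equation gives a relation of the schematic form
\begin{equation}
\left[\mathbf{K}_{1,1}(v),\mathbf{K}_{k,l}(u)\right]=\frac{1}{v-u}\left(\mathbf{K}_{k,1}(v)\mathbf{K}_{1,l}(u)-\mathbf{K}_{k,1}(u)\mathbf{K}_{1,l}(v)\right)+(\text{terms with }\mathbf{K}_{1,k},\mathbf{K}_{k,1},\mathbf{K}_{1,l},\mathbf{K}_{l,1}).
\end{equation}
The next step is to recognize that the right-hand side is built entirely out of the ``border'' entries $\mathbf{K}_{1,m}$ and $\mathbf{K}_{m,1}$ for $m\geq 2$, together with $\mathbf{K}_{1,1}$, and that the nested object $\mathbf{K}^{(2)}_{a,b}(u)=\mathbf{K}_{a,b}(u)-\mathbf{K}_{a,1}(u)\mathbf{K}_{1,1}^{-1}(u)\mathbf{K}_{1,b}(u)$ is precisely the Schur-complement combination designed to kill these cross terms. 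So I would compute $[\mathbf{K}_{1,1}(v),\mathbf{K}^{(2)}_{a,b}(u)]$ by Leibniz, using Lemma \ref{lem:KK} for $[\mathbf{K}_{1,1}(v),\mathbf{K}_{1,1}(u)]=0$ (hence also $[\mathbf{K}_{1,1}(v),\mathbf{K}_{1,1}^{-1}(u)]=0$) and the specialized reflection-equation identities above for the commutators with $\mathbf{K}_{a,b}$, $\mathbf{K}_{a,1}$, $\mathbf{K}_{1,b}$. The cross terms are organized so that everything telescopes: the inhomogeneous pieces generated when $\mathbf{K}_{1,1}(v)$ is commuted past $\mathbf{K}_{a,b}$ are cancelled by those generated when it is commuted past the $\mathbf{K}_{a,1}\mathbf{K}_{1,1}^{-1}\mathbf{K}_{1,b}$ correction.

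The main obstacle I anticipate is the bookkeeping in this last cancellation: one has to also extract from (\ref{eq:crBY}) the auxiliary commutators $[\mathbf{K}_{1,1}(v),\mathbf{K}_{a,1}(u)]$ and $[\mathbf{K}_{1,1}(v),\mathbf{K}_{1,b}(u)]$ (setting appropriate indices, e.g.\ $i=j=1$, $l=1$ or $k=1$), and then verify that the pole terms at $v=u$ and at $v=-u$ (coming respectively from the $1/(v-u)$, $1/(v+u)$ and $1/(v^2-u^2)$ prefactors) combine to zero once the Schur complement is assembled. A cleaner alternative, which I would present if the direct computation gets unwieldy, is the abstract argument: by Theorem \ref{thm:crossed-nested-K} the shifted matrix $\mathbf{K}^{(2)}(u-1/2)$ is a representation of $Y^{+}(N-1)$ built from the generators $\{\mathbf{K}_{a,b}\}_{a,b\geq 2}$ together with $\mathbf{K}_{1,1}^{-1}$ and the border entries, and the first relation of (\ref{eq:felcs}) at $k=1$ is exactly $[\mathbf{G}^{(1)}(v),\mathbf{K}^{(2)}_{a,b}(u)]=0$; so this Lemma is in fact a direct corollary of Theorem \ref{thm:crossed-nested-K}, and the role of the Lemma in the logical order is to be one of the ingredients in proving that theorem. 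Either way, the computation is elementary given Lemma \ref{lem:KK}; the only real work is the algebraic reorganization of the reflection-equation terms into the Schur-complement form.
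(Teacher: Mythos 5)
Your plan is essentially the paper's proof: the paper extracts exactly the $((1,1),(1,b))$, $((1,1),(a,1))$ and $((1,1),(a,b))$ components of (\ref{eq:crBY}), commutes $\mathbf{K}_{1,1}(u)$ through the Schur-complement correction $\mathbf{K}_{a,1}(v)\mathbf{K}_{1,1}^{-1}(v)\mathbf{K}_{1,b}(v)$ using Lemma \ref{lem:KK}, and checks that the result reproduces the right-hand side of the $((1,1),(a,b))$ component, so the commutator with $\mathbf{K}_{a,b}^{(2)}$ telescopes to zero exactly as you describe. Two small cautions: the crossed reflection equation (\ref{eq:crBY}) contains no Kronecker deltas (those appear only in the uncrossed equation), so the simplification you invoke is vacuous here --- the $1/(u-v)$ and $1/(u^{2}-v^{2})$ terms simply combine into the prefactor $\frac{1}{u-v}\frac{u+v+1}{u+v}$ of (\ref{eq:ab}); and your proposed fallback via Theorem \ref{thm:crossed-nested-K} would be circular, since the relations (\ref{eq:felcs}) are themselves proved by applying this lemma to the nested $K$-matrices.
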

\begin{proof}
Let us get the ($(1,1),(1,b)$), ($(1,1),(a,1))$ and $((1,1),(a,b))$
components of the boundary Yang-Baxter equation (\ref{eq:crBY}) ($a,b>1$)
\begin{align}
\frac{u-v-1}{u-v}\frac{u+v+1}{u+v}\mathbf{K}_{1,1}(u)\mathbf{K}_{1,b}(v) & =\mathbf{K}_{1,b}(v)\mathbf{K}_{1,1}(u)-\frac{1}{u-v}\frac{u+v+1}{u+v}\mathbf{K}_{1,1}(v)\mathbf{K}_{1,b}(u)\nonumber \\
 & +\frac{1}{u+v}\mathbf{K}_{1,1}(v)\mathbf{K}_{b,1}(u),\label{eq:Nb}\\
\mathbf{K}_{1,1}(u)\mathbf{K}_{a,1}(v) & =\frac{u-v-1}{u-v}\frac{u+v+1}{u+v}\mathbf{K}_{a,1}(v)\mathbf{K}_{1,1}(u)+\frac{1}{u-v}\frac{u+v+1}{u+v}\mathbf{K}_{a,1}(u)\mathbf{K}_{1,1}(v)\nonumber \\
 & -\frac{1}{u+v}\mathbf{K}_{1,a}(u)\mathbf{K}_{1,1}(v),\label{eq:a1}\\
\left[\mathbf{K}_{1,1}(u),\mathbf{K}_{a,b}(v)\right] & =\frac{1}{u-v}\frac{u+v+1}{u+v}\left(\mathbf{K}_{a,1}(u)\mathbf{K}_{1,b}(v)-\mathbf{K}_{a,1}(v)\mathbf{K}_{1,b}(u)\right)\nonumber \\
 & -\frac{1}{u+v}\left(\mathbf{K}_{1,a}(u)\mathbf{K}_{1,b}(v)-\mathbf{K}_{a,1}(v)\mathbf{K}_{b,1}(u)\right).\label{eq:ab}
\end{align}
Let us apply (\ref{eq:a1}) to obtain
\begin{align}
\mathbf{K}_{1,1}(u)\left(\mathbf{K}_{a,1}(v)\mathbf{K}_{1,1}^{-1}(v)\mathbf{K}_{1,b}(v)\right) & =\frac{u-v-1}{u-v}\frac{u+v+1}{u+v}\mathbf{K}_{a,1}(v)\mathbf{K}_{1,1}(u)\mathbf{K}_{1,1}^{-1}(v)\mathbf{K}_{1,b}(v)+\nonumber \\
 & +\frac{1}{u-v}\frac{u+v+1}{u+v}\mathbf{K}_{a,1}(u)\mathbf{K}_{1,b}(v)-\frac{1}{u+v}\mathbf{K}_{1,a}(u)\mathbf{K}_{1,b}(v).
\end{align}
On the first term on the right-hand side, we use the commutation relations
(\ref{eq:comKK}) and (\ref{eq:Nb})
\begin{multline}
\frac{u-v-1}{u-v}\frac{u+v+1}{u+v}\mathbf{K}_{a,1}(v)\mathbf{K}_{1,1}(u)\mathbf{K}_{1,1}^{-1}(v)\mathbf{K}_{1,b}(v)=\mathbf{K}_{a,1}(v)\mathbf{K}_{1,1}^{-1}(v)\left(\frac{u-v-1}{u-v}\frac{u+v+1}{u+v}\mathbf{K}_{1,1}(u)\mathbf{K}_{1,b}(v)\right)=\\
\left(\mathbf{K}_{a,1}(v)\mathbf{K}_{1,1}^{-1}(v)\mathbf{K}_{1,b}(v)\right)\mathbf{K}_{1,1}(u)-\frac{1}{u-v}\frac{u+v+1}{u+v}\mathbf{K}_{a,1}(v)\mathbf{K}_{1,b}(u)+\frac{1}{u+v}\mathbf{K}_{a,1}(v)\mathbf{K}_{b,1}(u).
\end{multline}
Combining the last two equations we have
\begin{align}
\left[\mathbf{K}_{1,1}(u),\mathbf{K}_{a,1}(v)\mathbf{K}_{1,1}^{-1}(v)\mathbf{K}_{1,b}(v)\right] & =\frac{1}{u-v}\frac{u+v+1}{u+v}\left(\mathbf{K}_{a,1}(u)\mathbf{K}_{1,b}(v)-\mathbf{K}_{a,1}(v)\mathbf{K}_{1,b}(u)\right)\nonumber \\
 & -\frac{1}{u+v}\left(\mathbf{K}_{1,a}(u)\mathbf{K}_{1,b}(v)-\mathbf{K}_{a,1}(v)\mathbf{K}_{b,1}(u)\right).
\end{align}
We can see the r.h.s. agrees with the r.h.s. of the commutation relation
(\ref{eq:ab}) therefore we just proved
\begin{equation}
\left[\mathbf{K}_{1,1}(u_{2}),\mathbf{K}_{a,b}^{(2)}(u_{1})\right]=0,
\end{equation}
for $a,b=2,\dots,N$. 
\end{proof}
Now we turn to the proof of Theorem \ref{thm:crossed-nested-K}.
\begin{proof}
First, we prove that the nested $K$-matrices $\mathbf{K}^{(k)}(u-(k-1)/2)$
form a representation of the algebra $Y^{+}(N+1-k)$. In the section
\ref{subsec:Quasi-determinants}, we saw that the nested $K$-matrices
can be expressed using quasi-determinants. Let us decompose the $K$-matrix
into block form
\begin{equation}
\mathbf{K}=\left(\begin{array}{cc}
\mathbf{A} & \mathbf{B}\\
\mathbf{C} & \mathbf{D}
\end{array}\right),
\end{equation}
where $\mathbf{A}$ is a $(k-1)\times(k-1)$ matrix and $\mathbf{D}$
is an $(N-k+1)\times(N-k+1)$ matrix. The nested $K$-matrices $\mathbf{K}^{(k)}$
can be expressed as follows
\begin{equation}
\mathbf{K}^{(k)}=\left(\begin{array}{cc}
\mathbf{A} & \mathbf{B}\\
\mathbf{C} & \boxed{\mathbf{D}}
\end{array}\right)=\mathbf{D}-\mathbf{C}\mathbf{A}^{-1}\mathbf{B}.\label{eq:nestedK}
\end{equation}
First, we invert the reflection equation
\begin{equation}
R_{1,2}(u-v)\mathbf{K}_{1}^{-1}(u)\widehat{R}_{1,2}(u+v+N)\mathbf{K}_{2}^{-1}(v)=\mathbf{K}_{2}^{-1}(v)\widehat{R}_{1,2}(u+v+N)\mathbf{K}_{1}^{-1}(u)R_{1,2}(u-v),
\end{equation}
where we used the unitarity relations of the $R$-matrices
\begin{equation}
R_{1,2}(u)R_{1,2}(-u)=\frac{u^{2}-1}{u^{2}},\quad\widehat{R}_{1,2}(u)\widehat{R}_{1,2}(-u+N)=1.
\end{equation}
We decompose the inverse matrix into block form
\begin{equation}
\mathbf{K}^{-1}=\left(\begin{array}{cc}
\tilde{\mathbf{A}} & \tilde{\mathbf{B}}\\
\tilde{\mathbf{C}} & \tilde{\mathbf{D}}
\end{array}\right),
\end{equation}
where $\tilde{\mathbf{D}}$ can be expressed in the following form
\begin{equation}
\tilde{\mathbf{D}}=\left(\mathbf{D}-\mathbf{C}\mathbf{A}^{-1}\mathbf{B}\right)^{-1}.\label{eq:Dtilde}
\end{equation}
The matrix $\tilde{\mathbf{D}}$ satisfies the following reflection
equation
\begin{equation}
R_{1,2}(u-v)\tilde{\mathbf{D}}_{1}(u)\widehat{R}_{1,2}(u+v+N)\tilde{\mathbf{D}}_{2}(v)=\tilde{\mathbf{D}}_{2}(v)\widehat{R}_{1,2}(u+v+N)\tilde{\mathbf{D}}_{1}(u)R_{1,2}(u-v),
\end{equation}
where $R$ and $\widehat{R}$ are $\mathfrak{gl}_{N-k+1}$ symmetric
$R$-matrices. We can invert the equation again
\begin{equation}
R_{1,2}(v-u)\tilde{\mathbf{D}}_{1}^{-1}(u)\widehat{R}_{1,2}(-u-v-k+1)\tilde{\mathbf{D}}_{2}^{-1}(v)=\tilde{\mathbf{D}}_{2}^{-1}(v)\widehat{R}_{1,2}(-u-v-k+1)\tilde{\mathbf{D}}_{1}^{-1}(u)R_{1,2}(v-u).\label{eq:BYBDti}
\end{equation}
We see that the operator $\tilde{\mathbf{D}}^{-1}(u-(k-1)/2)$ satisfies
the $\mathfrak{gl}_{N-k+1}$ reflection equation. Combining equations
(\ref{eq:nestedK}), (\ref{eq:Dtilde}) and (\ref{eq:BYBDti}), we
see that the operator $\mathbf{K}^{(k)}(u-(k-1)/2)$ is a representation
of the reflection algebra $Y^{+}(N-k+1)$.

Since the operator $\mathbf{K}^{(k)}$ satisfies the $\mathfrak{gl}_{N-k+1}$
reflection equation, the Lemmas \ref{lem:KK} and \ref{lem:KG} are
also applicable to the operator $\mathbf{K}^{(k)}$, i.e.,
\begin{equation}
\left[\mathbf{K}_{k,k}^{(k)}(v),\mathbf{K}_{k,k}^{(k)}(u)\right]=\left[\mathbf{K}_{k,k}^{(k)}(v),\mathbf{K}_{a,b}^{(k+1)}(u)\right]=0,
\end{equation}
where $a,b=k+1,\dots,N$.
\end{proof}

\subsection{Uncrossed $K$-matrices}
\begin{lem}
\label{lem:KK-1}The operator $\mathbf{K}_{N,1}(u)$ satisfies the
following commutation relations
\begin{equation}
\left[\mathbf{K}_{N,1}(u),\mathbf{B}\right]=\left[\mathbf{K}_{N,1}(u),\mathbf{K}_{N,1}(v)\right]=0.\label{eq:comKK-1}
\end{equation}
\end{lem}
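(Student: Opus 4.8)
The plan is to mirror the proof of Lemma~\ref{lem:KK}, with the distinguished index $1$ replaced by the index pair $(N,1)$ that drives the uncrossed nesting, and using the symmetry of the pseudo-vacuum eigenvalues (\ref{eq:lamProp}) appropriate to the uncrossed case.

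First I would take the $(N,1)$ component of the uncrossed $KT$-relation, namely $\sum_{k=1}^{N}\mathbf{K}_{N,k}(z)\langle\Psi|T_{k,1}(z)=\sum_{k=1}^{N}\langle\Psi|T_{N,k}(-z)\mathbf{K}_{k,1}(z)$, and evaluate it on the pseudo-vacuum $|0\rangle$. By the highest-weight property $T_{i,j}(z)|0\rangle=0$ for $i>j$, the left-hand side collapses to the $k=1$ term and the right-hand side to the $k=N$ term, so one obtains $\lambda_{1}(z)\,\mathbf{K}_{N,1}(z)\mathbf{B}=\lambda_{N}(-z)\,\mathbf{B}\,\mathbf{K}_{N,1}(z)$, where $\mathbf{B}=\langle\Psi|0\rangle$ and all $\lambda$'s are scalars. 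Since the uncrossed symmetry (\ref{eq:lamProp}) gives $\lambda_{1}(z)=\lambda_{N}(-z)$, the scalar prefactors cancel and $[\mathbf{K}_{N,1}(z),\mathbf{B}]=0$ follows. The same manipulation applied to the nested uncrossed $KT$-relations additionally yields $[\mathbf{K}_{N+1-k,k}^{(k)}(z),\mathbf{B}]=0$, which is the form needed in the later steps toward Theorem~\ref{thm:uncrossed-nestedK}.

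For the second commutator I would read off the $((N,1),(N,1))$ component of the uncrossed reflection equation. With $i=k=N$ and $j=l=1$ (so $N\ge 2$ forces $\delta_{j,k}=\delta_{i,l}=\delta_{i,j}=0$), the entire $\tfrac{1}{u+v}$-sum and the $\tfrac{1}{u^{2}-v^{2}}$-sum drop out, leaving only $[\mathbf{K}_{N,1}(u),\mathbf{K}_{N,1}(v)]=\tfrac{1}{u-v}\bigl(\mathbf{K}_{N,1}(u)\mathbf{K}_{N,1}(v)-\mathbf{K}_{N,1}(v)\mathbf{K}_{N,1}(u)\bigr)=\tfrac{1}{u-v}[\mathbf{K}_{N,1}(u),\mathbf{K}_{N,1}(v)]$. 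Multiplying by $(u-v)$ gives $(u-v-1)\,[\mathbf{K}_{N,1}(u),\mathbf{K}_{N,1}(v)]=0$, and since the commutator is a rational function of $u$ and $v$, it must vanish identically.

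I do not expect a genuine obstacle here: the argument is a direct transcription of Lemma~\ref{lem:KK}. The only points that require care are selecting the correct component of the $KT$- and reflection relations, tracking the noncommutativity between $\mathbf{K}_{N,1}(z)$ and $\mathbf{B}$ while the scalar $\lambda$'s move freely, and invoking the uncrossed rather than the crossed form of (\ref{eq:lamProp}); these are precisely the analogues of the steps in the crossed case, so the companion lemma asserting that $\mathbf{K}_{N,1}$ commutes with the full nested $K$-matrix $\mathbf{K}^{(2)}$ should then follow exactly as in Lemma~\ref{lem:KG}.
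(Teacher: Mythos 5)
Your proposal is correct and follows essentially the same route as the paper: the $(N,1)$ component of the uncrossed $KT$-relation evaluated on the pseudo-vacuum together with $\lambda_{1}(z)=\lambda_{N}(-z)$ gives $[\mathbf{K}_{N,1}(z),\mathbf{B}]=0$, and the $((N,1),(N,1))$ component of the uncrossed reflection equation (where all Kronecker deltas vanish) gives the second commutator. Your explicit check that the surviving term yields $(u-v-1)[\mathbf{K}_{N,1}(u),\mathbf{K}_{N,1}(v)]=0$ is exactly the "simplification" the paper leaves implicit.
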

\begin{proof}
Let us get the $(N,1)$ component of the $KT$-relation
\begin{equation}
\sum_{k=1}^{N}\mathbf{K}_{N,k}(z)\langle\Psi|T_{k,1}(z)=\sum_{k=1}^{N}\langle\Psi|T_{N,k}(-z)\mathbf{K}_{k,1}(z).
\end{equation}
Let us apply it on the pseudo-vacuum:
\begin{equation}
\mathbf{K}_{N,1}(z)\langle\Psi|T_{1,1}(z)|0\rangle=\langle\Psi|T_{N,N}(-z)|0\rangle\mathbf{K}_{N,1}(z),
\end{equation}
where we used that $T_{i,j}(z)|0\rangle=0$ for $i>j$. The equation
above simplifies as
\begin{equation}
\mathbf{K}_{N,1}(z)\mathbf{B}\lambda_{1}(z)=\mathbf{B}\mathbf{K}_{N,1}(z)\lambda_{N}(-z).\label{eq:KTN1-1}
\end{equation}
Using the symmetry property $\lambda_{1}(z)=\lambda_{N}(-z)$, we
obtain that
\begin{equation}
[\mathbf{K}_{N,1}(z),\mathbf{B}]=0.\label{eq:commKB-1}
\end{equation}
From the nested $KT$-relations we can derive analogous equations
as (\ref{eq:commKB-1}). For the nested $K$-matrices we have
\begin{equation}
[\mathbf{K}_{N+1-k,k}^{(k)}(z),\mathbf{B}]=0.\label{eq:KB-1}
\end{equation}

For the other relation we can use the $(N,1),(N,1)$ component of
the reflection equation (\ref{eq:crBY}). After simplifications
\[
\left[\mathbf{K}_{N,1}(u),\mathbf{K}_{N,1}(v)\right]=0.
\]
\end{proof}
\begin{lem}
\label{lem:KG-1}The component $\mathbf{K}_{N,1}(u)$ commutes with
the entire nested $K$-matrix, i.e.,
\begin{equation}
\left[\mathbf{K}_{N,1}(v),\mathbf{K}_{a,b}^{(2)}(u)\right]=0,\label{eq:comGK}
\end{equation}
for $a,b=2,\dots,N-1$. 
\end{lem}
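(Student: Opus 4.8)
The plan is to transcribe the proof of the crossed Lemma~\ref{lem:KG} to the uncrossed setting, using the uncrossed form of the reflection equation in place of (\ref{eq:crBY}). Fix $a,b\in\{2,\dots,N-1\}$. First I would extract from the uncrossed reflection equation the three components whose first index pair is $(i,j)=(N,1)$, namely $((N,1),(N,b))$, $((N,1),(a,1))$ and $((N,1),(a,b))$. The crucial point is that $1$ and $N$ are extremal indices: for each of these choices one has $\delta_{j,k}=\delta_{1,k}=0$, $\delta_{i,l}=\delta_{N,l}=0$ (since $b\le N-1$) and $\delta_{i,j}=\delta_{N,1}=0$, so every $\mathbf{Q}$-type contribution (the $\delta$-sums) on the right-hand side of the uncrossed reflection equation drops out. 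Each component therefore collapses to a clean identity with a single $\tfrac{1}{u-v}$ kernel; for instance
\begin{equation}
\left[\mathbf{K}_{N,1}(u),\mathbf{K}_{a,b}(v)\right]=\frac{1}{u-v}\left(\mathbf{K}_{a,1}(u)\mathbf{K}_{N,b}(v)-\mathbf{K}_{a,1}(v)\mathbf{K}_{N,b}(u)\right),
\end{equation}
together with
\begin{equation}
\left[\mathbf{K}_{N,1}(u),\mathbf{K}_{a,1}(v)\right]=\frac{1}{u-v}\left(\mathbf{K}_{a,1}(u)\mathbf{K}_{N,1}(v)-\mathbf{K}_{a,1}(v)\mathbf{K}_{N,1}(u)\right),
\end{equation}
and an analogous relation for $\left[\mathbf{K}_{N,1}(u),\mathbf{K}_{N,b}(v)\right]$. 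This makes the uncrossed computation genuinely lighter than the crossed one, where the transposed $R$-matrix forces extra $\tfrac{1}{u+v}$ and $\tfrac{1}{u^{2}-v^{2}}$ pieces to be carried along.

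Next I would compute $\bigl[\mathbf{K}_{N,1}(u),\,\mathbf{K}_{a,1}(v)\mathbf{K}_{N,1}^{-1}(v)\mathbf{K}_{N,b}(v)\bigr]$ directly, since by the definition of the nested $K$-matrix this equals $\bigl[\mathbf{K}_{N,1}(u),\mathbf{K}_{a,b}(v)\bigr]-\bigl[\mathbf{K}_{N,1}(u),\mathbf{K}_{a,b}^{(2)}(v)\bigr]$. The steps are: use the $((N,1),(a,1))$ relation to move $\mathbf{K}_{N,1}(u)$ past $\mathbf{K}_{a,1}(v)$; then use Lemma~\ref{lem:KK-1}, i.e.\ $[\mathbf{K}_{N,1}(u),\mathbf{K}_{N,1}(v)]=0$, hence $[\mathbf{K}_{N,1}(u),\mathbf{K}_{N,1}^{-1}(v)]=0$, to commute $\mathbf{K}_{N,1}(u)$ through the middle factor $\mathbf{K}_{N,1}^{-1}(v)$; and finally use the $((N,1),(N,b))$ relation on the rightmost factor, together with $\mathbf{K}_{N,1}^{-1}(v)\mathbf{K}_{N,1}(v)=\mathbf{1}$. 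Collecting terms, the two pieces proportional to $\mathbf{K}_{a,1}(v)\mathbf{K}_{N,1}^{-1}(v)\mathbf{K}_{N,b}(v)\mathbf{K}_{N,1}(u)$ cancel against each other, and what survives is precisely $\frac{1}{u-v}\left(\mathbf{K}_{a,1}(u)\mathbf{K}_{N,b}(v)-\mathbf{K}_{a,1}(v)\mathbf{K}_{N,b}(u)\right)$, i.e.\ the right-hand side of the first displayed identity above. Thus $\bigl[\mathbf{K}_{N,1}(u),\mathbf{K}_{a,1}(v)\mathbf{K}_{N,1}^{-1}(v)\mathbf{K}_{N,b}(v)\bigr]=\bigl[\mathbf{K}_{N,1}(u),\mathbf{K}_{a,b}(v)\bigr]$, which gives $\bigl[\mathbf{K}_{N,1}(u),\mathbf{K}_{a,b}^{(2)}(v)\bigr]=0$, and hence (\ref{eq:comGK}) after renaming $u\leftrightarrow v$.

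The only hypotheses needed are the invertibility of $\mathbf{K}_{N,1}(v)$ — which, as discussed before Section~\ref{sec:Derivations-of-the-overlaps}, holds after a suitable rotation as long as $\mathcal{U}_{N,1}\neq0$, i.e.\ as long as $\mathcal{U}$ is not the identity matrix — and Lemma~\ref{lem:KK-1}. I do not expect a conceptual obstacle here: the argument is a direct algebraic transcription of the crossed case, and the only points requiring care are (i) verifying that the $\delta$-sums really vanish for the index pairs in play, which is the one place where extremality of $1$ and $N$ is used, and (ii) treating $\mathbf{K}_{N,1}^{-1}(v)$ consistently as a formal power series in $v^{-1}$, so that the order-by-order commutation with $\mathbf{K}_{N,1}(u)$ coming from Lemma~\ref{lem:KK-1} is legitimate. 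An alternative route, parallel to the proof of the crossed Theorem~\ref{thm:crossed-nested-K}, would be to block-invert the uncrossed reflection equation twice and recognise $\mathbf{K}^{(2)}(u)$ as a $\mathcal{B}(N-2,n-1)$ representation (cf.\ Theorem~\ref{thm:uncrossed-nestedK} and the quasi-determinant description of Section~\ref{subsec:Quasi-determinants}), then invoke Lemma~\ref{lem:KK-1} at that nested level; but for the single step from $\mathbf{K}$ to $\mathbf{K}^{(2)}$ the direct component computation above is the most economical, and it matches the method used for Lemma~\ref{lem:KG} in the crossed case.
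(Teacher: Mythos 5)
Your proposal is correct and follows essentially the same route as the paper: it extracts the $((N,1),(N,b))$, $((N,1),(a,1))$ and $((N,1),(a,b))$ components of the uncrossed reflection equation, then commutes $\mathbf{K}_{N,1}(u)$ through $\mathbf{K}_{a,1}(v)\mathbf{K}_{N,1}^{-1}(v)\mathbf{K}_{N,b}(v)$ using Lemma \ref{lem:KK-1} and matches the result against the $((N,1),(a,b))$ component, exactly as in the paper's proof. The explicit observation that the extremality of the indices $1$ and $N$ kills all the $\delta$-sums is a correct (and slightly more carefully stated) account of why the component relations collapse to single $\tfrac{1}{u-v}$ kernels.
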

\begin{proof}
Let us get the ($(N,1),(N,b)$), ($(N,1),(a,1))$ and $((N,1),(a,b))$
components of the boundary Yang-Baxter equation ($1<a,b<N$)
\begin{equation}
\frac{u-v-1}{u-v}\mathbf{K}_{N,1}(u)\mathbf{K}_{N,b}(v)=\mathbf{K}_{N,b}(v)\mathbf{K}_{N,1}(u)-\frac{1}{u-v}\mathbf{K}_{N,1}(v)\mathbf{K}_{1,b}(u)\label{eq:Nb-1}
\end{equation}
\begin{equation}
\mathbf{K}_{N,1}(u)\mathbf{K}_{a,1}(v)=\frac{u-v-1}{u-v}\mathbf{K}_{a,1}(v)\mathbf{K}_{N,1}(u)+\frac{1}{u-v}\mathbf{K}_{a,1}(u)\mathbf{K}_{N,1}(v)\label{eq:a1-1}
\end{equation}
\begin{equation}
\left[\mathbf{K}_{N,1}(u),\mathbf{K}_{a,b}(v)\right]=\frac{1}{u-v}\left(\mathbf{K}_{a,1}(u)\mathbf{K}_{N,b}(v)-\mathbf{K}_{a,1}(v)\mathbf{K}_{N,b}(u)\right)\label{eq:ab-1}
\end{equation}
Let us apply (\ref{eq:a1-1}) to obtain
\begin{align}
\mathbf{K}_{N,1}(u)\left(\mathbf{K}_{a,1}(v)\mathbf{K}_{N,1}^{-1}(v)\mathbf{K}_{N,b}(v)\right) & =\frac{u-v-1}{u-v}\mathbf{K}_{a,1}(v)\mathbf{K}_{N,1}(u)\mathbf{K}_{N,1}^{-1}(v)\mathbf{K}_{N,b}(v)+\nonumber \\
 & +\frac{1}{u-v}\mathbf{K}_{a,1}(u)\mathbf{K}_{N,b}(v).
\end{align}
On the first term on the right-hand side, we use the commutation relations
(\ref{eq:comKK-1}) and (\ref{eq:Nb-1}):
\begin{multline}
\frac{u-v-1}{u-v}\mathbf{K}_{a,1}(v)\mathbf{K}_{N,1}(u)\mathbf{K}_{N,1}^{-1}(v)\mathbf{K}_{1,b}(v)=\mathbf{K}_{a,1}(v)\mathbf{K}_{N,1}^{-1}(v)\left(\frac{u-v-1}{u-v}\mathbf{K}_{N,1}(u)\mathbf{K}_{N,b}(v)\right)\\
\left(\mathbf{K}_{a,1}(v)\mathbf{K}_{N,1}^{-1}(v)\mathbf{K}_{N,b}(v)\right)\mathbf{K}_{N,1}(u)-\frac{1}{u-v}\mathbf{K}_{a,1}(v)\mathbf{K}_{N,b}(u).
\end{multline}
Combining the last two equations we have
\begin{equation}
\left[\mathbf{K}_{N,1}(u),\mathbf{K}_{a,1}(v)\mathbf{K}_{N,1}^{-1}(v)\mathbf{K}_{N,b}(v)\right]=\frac{1}{u-v}\left(\mathbf{K}_{a,1}(u)\mathbf{K}_{N,b}(v)-\mathbf{K}_{a,1}(v)\mathbf{K}_{N,b}(u)\right).
\end{equation}
We can see the r.h.s. agrees with the r.h.s. of the commutation relation
(\ref{eq:ab-1}) therefore we just proved
\begin{equation}
\left[\mathbf{K}_{N,1}(u),\mathbf{K}_{a,b}^{(2)}(v)\right]=0,
\end{equation}
for $a,b=2,\dots,N-1$. 
\end{proof}
Now we turn to the proof of Theorem \ref{thm:uncrossed-nestedK}.
\begin{proof}
First, let us prove that the nested $K$-matrix $\mathbf{K}^{(2)}(u)$
satisfies the reflection equation. The proof is very similar to the
proof of Theorem 3.1 in \cite{Jing_2018}. We start from the reflection
equation
\begin{align}
\mathbf{k}_{1,2}(u)=R_{1,2}(-1)\mathbf{K}_{1}(u+1)R_{1,2}(-2u-1)\mathbf{K}_{2}(u) & =\mathbf{K}_{2}(u)R_{1,2}(-2u-1)\mathbf{K}_{1}(u+1)R_{1,2}(-1)\nonumber \\
 & =\sum_{a_{i},b_{i}}e_{a_{1},b_{1}}\otimes e_{a_{2},b_{2}}\otimes\mathbf{k}_{a_{1},a_{2}}^{b_{1},b_{2}}(u).\label{eq:qminor}
\end{align}
Since $R_{1,2}(-1)$ is a projection operator onto the antisymmetrized
subspace, the quantum minor $\mathbf{k}_{a_{1},a_{2}}^{b_{1},b_{2}}(u)$
is antisymmetric in both its upper and lower indices. Let us examine
the specific components $\mathbf{k}_{N,a}^{1,b}(u)$:
\begin{equation}
\mathbf{k}_{N,a}^{1,b}(u)=\mathbf{K}_{N,1}(u+1)\mathbf{K}_{a,b}(u)-\mathbf{K}_{a,1}(u+1)\mathbf{K}_{N,b}(u).
\end{equation}
Substitute $(u,v)\to(u+1,u)$ into (\ref{eq:a1-1}):
\begin{equation}
\mathbf{K}_{N,1}(u+1)\mathbf{K}_{a,1}(u)=\mathbf{K}_{a,1}(u+1)\mathbf{K}_{N,1}(u).
\end{equation}
Using this, we find that $\mathbf{k}_{N,a}^{1,b}$ can be expressed
in terms of $\mathbf{K}_{a,b}^{(2)}$:
\begin{equation}
\mathbf{k}_{N,a}^{1,b}(u)=\mathbf{K}_{N,1}(u+1)\mathbf{K}_{a,b}^{(2)}(u).
\end{equation}
We define an $(N-2)\times(N-2)$ matrix as follows:
\begin{equation}
\tilde{\mathbf{k}}_{1}(u)=\sum_{a_{1},b_{1}=2}^{N-1}\tilde{e}_{a_{1},b_{1}}\otimes\mathbf{k}_{N,a_{1}}^{1,b_{1}}(u),
\end{equation}
where $\tilde{e}_{a,b}$ are the unit matrices of size $(N-2)\times(N-2)$.
In this notation,
\begin{equation}
\tilde{\mathbf{k}}_{1}(u)=\mathbf{G}(u+1)\mathbf{K}_{1}^{(2)}(u).\label{eq:kK}
\end{equation}
Now consider four auxiliary spaces and define an $R$-matrix on them
\begin{align}
R_{1,2,3,4}(u) & =A_{1,2}A_{3,4}R_{4,1}(u-1)R_{3,1}(u)R_{4,2}(u)R_{3,2}(u+1)\nonumber \\
 & =R_{3,2}(u+1)R_{3,1}(u)R_{4,2}(u)R_{4,1}(u-1)A_{1,2}A_{3,4}\\
 & =\sum_{a_{i},b_{i}=1}^{N}e_{a_{1},b_{1}}\otimes e_{a_{2},b_{2}}\otimes e_{a_{3},b_{3}}\otimes e_{a_{4},b_{4}}\otimes R_{a_{1},a_{2},a_{3},a_{4}}^{b_{1},b_{2},b_{3},b_{4}}(u).\nonumber 
\end{align}
This $R$-matrix is antisymmetric in the first-second and third-fourth
indices, i.e.,
\begin{equation}
P_{1,2}R_{1,2,3,4}(u)=R_{1,2,3,4}(u)P_{1,2}=P_{3,4}R_{1,2,3,4}(u)=R_{1,2,3,4}(u)P_{3,4}=-R_{1,2,3,4}(u).
\end{equation}
The quantum minor satisfies the following equation:
\begin{equation}
R_{1,2,3,4}(v-u)\mathbf{k}_{1,2}(u)R_{1,2,3,4}(-u-v-1)\mathbf{k}_{3,4}(v)=\mathbf{k}_{3,4}(v)R_{1,2,4,3}(-u-v-1)\mathbf{k}_{1,2}(u)R_{1,2,3,4}(v-u).\label{eq:refl1234}
\end{equation}
We define operators where the components are fixed in the first and
third spaces
\begin{equation}
\begin{split}R_{1,2}^{\circ}(u) & =\sum_{a_{i},b_{i}=2}^{N-1}\tilde{e}_{a_{2},b_{2}}\otimes\tilde{e}_{a_{4},b_{4}}\otimes R_{1,a_{2},1,a_{4}}^{1,b_{2},1,b_{4}}(u),\\
R_{1,2}^{\bullet}(u) & =\sum_{a_{i},b_{i}=2}^{N-1}\tilde{e}_{a_{2},b_{2}}\otimes\tilde{e}_{a_{4},b_{4}}\otimes R_{1,a_{2},N,a_{4}}^{1,b_{2},N,b_{4}}(u),
\end{split}
\end{equation}
again using $\tilde{e}_{a,b}$ a $(N-2)\times(N-2)$ as the unit matrices.
Taking the $(N,1)$ components of equation (\ref{eq:refl1234}) in
the first and third spaces, we obtain a reflection equation for the
$\tilde{\mathbf{k}}$ matrices
\begin{equation}
R_{1,2}^{\circ}(v-u)\tilde{\mathbf{k}}_{1}(u)R_{1,2}^{\bullet}(-u-v-1)\tilde{\mathbf{k}}_{2}(v)=\tilde{\mathbf{k}}_{2}(v)R_{1,2}^{\bullet}(-u-v-1)\tilde{\mathbf{k}}_{1}(u)R_{1,2}^{\circ}(v-u),\label{eq:Rcirc}
\end{equation}
where we used the identities $R_{1,a_{2},1,a_{4}}^{1,b_{2},1,b_{4}}(u)=R_{N,a_{2},N,a_{4}}^{N.b_{2},N,b_{4}}(u)$
and $R_{1,a_{2},N,a_{4}}^{1,b_{2},N,b_{4}}(u)=R_{N,a_{2},1,a_{4}}^{N.b_{2},1,b_{4}}(u)$
and the fact that $R_{a_{1},a_{2},a_{3},a_{4}}^{b_{1},b_{2},b_{3},b_{4}}$
is antisymmetric in the first-second and third-fourth indices. The
matrices $R^{\circ}$ and $R^{\bullet}$ can be expressed using the
$\mathfrak{gl}_{N-2}$ symmetric $R$-matrix as follows:
\begin{equation}
\begin{split}R_{1,2}^{\circ}(u) & =\frac{u+2}{u+1}R_{1,2}^{(N-2)}(u),\\
R_{1,2}^{\bullet}(u) & =R_{1,2}^{(N-2)}(u+1).
\end{split}
\end{equation}
Substituting back into equation (\ref{eq:Rcirc}), we get:
\begin{equation}
R_{1,2}^{(N-2)}(v-u)\mathbf{K}_{1}^{(2)}(u)R_{1,2}^{(N-2)}(-u-v)\mathbf{K}_{2}^{(2)}(v)=\mathbf{K}_{2}^{(2)}(v)R_{1,2}^{(N-2)}(-u-v)\mathbf{K}_{1}^{(2)}(u)R_{1,2}^{(N-2)}(v-u),
\end{equation}
where we used the relation (\ref{eq:kK}) and the commutation relation
(\ref{eq:comGK}). We see that the nested $K$-matrix $\mathbf{K}^{(2)}$
satisfies the $\mathfrak{gl}_{N-2}$ reflection equation. By repeatedly
applying the above method, we see that $\mathbf{K}^{(k)}$ satisfies
the $\mathfrak{gl}_{N-2k+2}$ reflection equation. In \ref{subsec:The-uncrossed-overlaps}
, we saw that the asymptotic limit of the nested $K$-matrices satisfies
equation (\ref{eq:Urec}). Combining this with the reflection equation
just derived, we see that $\mathbf{K}^{(k)}$ is a representation
of the reflection algebra $\mathcal{B}(N-2k+2,M-k+1)$.

Since the operator $\mathbf{K}^{(k)}$ satisfies the $\mathfrak{gl}_{N-2k+2}$
reflection equation, the Lemmas \ref{lem:KK-1} and \ref{lem:KG-1}
are also applicable to the operator $\mathbf{K}^{(k)}$, i.e.,
\begin{equation}
\left[\mathbf{G}^{(k)}(v),\mathbf{G}^{(k)}(u)\right]=\left[\mathbf{G}^{(k)}(v),\mathbf{K}_{a,b}^{(k+1)}(u)\right]=0,
\end{equation}
where $a,b=k+1,\dots,N-k$.
\end{proof}

\section{Proofs for the theorems of overlaps\label{sec:Proofs-for-overlaps}}

In this section, we prove the theorems related to the overlap formula
from sections \ref{subsec:Sum-formulas} and \ref{subsec:On-shell-limit}.

\subsection{Proof for the sum formula\label{subsec:Proof-sum-formula}}
\begin{lem}
\label{lem:sum}The off-shell overlap has the sum formula
\begin{equation}
\mathbf{S}_{\bar{\alpha},\mathbf{B}}(\bar{t})=\sum_{\mathrm{part}(\bar{t})}\mathbf{W}_{\mathbf{B}}(\bar{t}_{\textsc{i}}|\bar{t}_{\textsc{ii}})\prod_{\nu=1}^{N-1}\alpha_{\nu}(\bar{t}_{\textsc{i}}^{\nu}),
\end{equation}
where the sum goes through the partitions $\bar{t}^{\nu}=\bar{t}_{\textsc{i}}^{\nu}\cup\bar{t}_{\textsc{ii}}^{\nu}$,
and \textup{$\mathbf{W}_{\mathbf{B}}$ is a matrix valued function
which depends only on }the variables $\bar{t}_{\textsc{i}},\bar{t}_{\textsc{ii}}$,
components of the $K$-matrices $\mathbf{K}_{i,j}$ and the vacuum
overlap $\mathbf{B}$.
\end{lem}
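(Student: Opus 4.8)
The plan is to prove the sum formula by induction on the total number of Bethe roots $r=\sum_{\nu=1}^{N-1}r_{\nu}$, using only the ingredients already collected in the excerpt: the $KT$-recursion (\ref{eq:crossrec}) in the crossed case and the two recursions of Section \ref{subsec:The-uncrossed-overlaps} in the uncrossed case, the recurrence relations and action formulas (\ref{eq:act}), (\ref{eq:actTw}) for the monodromy and crossed monodromy matrices on off-shell Bethe vectors, and — when convenient — the coproduct formula (\ref{eq:Bcoprod}) together with Proposition \ref{lem:co-prod}. The essential observation is that in every one of these formulas the $\alpha$-functions enter only as \emph{scalar} prefactors of partition sums, while all boundary-space quantities ($K$-matrix entries, the vacuum overlap $\mathbf{B}$) are $\bar{\alpha}$-independent; consequently the whole computation is organized so that $\mathbf{S}_{\bar{\alpha},\mathbf{B}}(\bar{t})$ is at most of degree $1$ (affine) in each independent variable $\alpha_{\nu}(t_{k}^{\nu})$. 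Granting this, a function that is affine in each $\alpha_{\nu}(t_{k}^{\nu})$ is precisely a sum over all ways of selecting, for each node $\nu$, a subset $\bar{t}_{\textsc{i}}^{\nu}\subseteq\bar{t}^{\nu}$ of roots whose $\alpha$ is retained — i.e.\ over partitions $\bar{t}^{\nu}\vdash\{\bar{t}_{\textsc{i}}^{\nu},\bar{t}_{\textsc{ii}}^{\nu}\}$ — with an $\bar{\alpha}$-independent coefficient $\mathbf{W}_{\mathbf{B}}(\bar{t}_{\textsc{i}}|\bar{t}_{\textsc{ii}})$, which is the assertion.

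For the base case $\bar{t}=\emptyset$ one has $\mathbf{S}_{\bar{\alpha},\mathbf{B}}(\emptyset)=\langle\Psi|0\rangle=\mathbf{B}$, which is the claimed formula with the single empty partition. For the inductive step I would add one spectral parameter to the first (resp.\ last) node, use the recurrence relation for the Bethe vector to write $\mathbb{B}(\bar{t})$ as a partition sum of $T_{1,k}$ (resp.\ $T_{k,N}$) acting on shorter Bethe vectors, apply the relevant $KT$-recursion to trade these creation operators for diagonal, annihilation and (crossed) $\widehat{T}$-operators exactly as in Section \ref{subsec:The-uncrossed-overlaps}, then apply the action formulas (\ref{eq:act}), (\ref{eq:actTw}) — whose $\alpha$-content, visible in the products $\prod_{s}\alpha_{s}(\bar{w}_{\textsc{iii}}^{s})$, contributes at most one further power of each $\alpha_{\nu}(t_{k}^{\nu})$ — and finally invoke the inductive hypothesis on the resulting overlaps with fewer roots. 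Collecting terms and relabelling the nested partitions produces a single sum over partitions of $\bar{t}$ with coefficients built only from $\mathbf{K}_{i,j}$, $\mathbf{B}$ and the rational functions of the Bethe roots. Equivalently one can run the induction on the number $J$ of two-site blocks: splitting $\mathcal{H}=\mathcal{H}^{(1)}\otimes\mathcal{H}^{(2)}$, using $\langle\Psi|=\langle\Psi^{(2)}|\langle\Psi^{(1)}|$ and the coproduct formula (\ref{eq:Bcoprod}), and noting $\alpha_{\nu}=\alpha_{\nu}^{(1)}\alpha_{\nu}^{(2)}$, the product structure in the $\alpha$-variables is manifestly preserved, reducing the claim to a single elementary block.

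The step that requires genuine care — and the main obstacle — is the degree count: one must verify that across the whole unrolled recursion no $\alpha_{\nu}(t_{k}^{\nu})$ is ever produced to the second power, i.e.\ that the retained-root factor of the Bethe-vector recurrence and the factors $\alpha_{s}(\bar{w}_{\textsc{iii}}^{s})$ of the action formulas never act on the same root simultaneously. This bookkeeping is exactly the one already carried out in the $d_{B}=1$ case in \cite{Gombor:2021hmj}; the point specific to the present setting is that replacing the scalar coefficients by boundary-space matrices does not affect it, since the $\alpha$-functions remain commuting scalars that factor out of every matrix product. Once the degree count is secured, the remainder $\mathbf{W}_{\mathbf{B}}(\bar{t}_{\textsc{i}}|\bar{t}_{\textsc{ii}})$ is, by construction, a noncommutative polynomial in the $K$-matrix entries $\mathbf{K}_{i,j}$ and in $\mathbf{B}$, depending on the Bethe roots only through the rational prefactors and not through $\bar{\alpha}$, which completes the proof.
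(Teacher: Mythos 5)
Your proposal is correct and follows essentially the same route as the paper: the paper's own proof simply defers to Appendix~C of \cite{Gombor:2021hmj} (induction via the $KT$-recursion and the action formulas, tracking that each $\alpha_{\nu}(t_{k}^{\nu})$ appears to at most first power) and notes, exactly as you do, that promoting the scalar coefficients to boundary-space matrices changes nothing because the $\alpha$-functions remain commuting scalar prefactors.
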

\begin{proof}
The proof is the same as Appendix C in \cite{Gombor:2021hmj}. In
that earlier case, the quantities $\mathbf{K}_{i,j}$, $\mathbf{W}$
and $\mathbf{S}$ were still scalars, but in the current paper, they
are matrices. However, this does not require any changes in the derivation
of the sum formula. Appendix C of \cite{Gombor:2021hmj} applies directly
even in the presence of nontrivial boundary spaces.
\end{proof}
\begin{lem}
\label{lem:weigths}The weights are factorized as 
\begin{equation}
\mathbf{W}_{\mathbf{B}}(\bar{t}_{i}|\bar{t}_{ii})=\frac{\prod_{\nu=1}^{N-1}f(\bar{t}_{\textsc{ii}}^{\nu},\bar{t}_{\textsc{i}}^{\nu})}{\prod_{\nu=1}^{N-2}f(\bar{t}_{\textsc{ii}}^{\nu+1},\bar{t}_{\textsc{i}}^{\nu})}\bar{\mathbf{Z}}(\bar{t}_{\textsc{ii}})\mathbf{B}\mathbf{Z}(\bar{t}_{\textsc{i}}),
\end{equation}
where the highest coefficients (HC) $\mathbf{Z}(\bar{t})$ and $\bar{\mathbf{Z}}(\bar{t})$
depend only on the variables $\bar{t}$ and the entries of the $K$-matrix.
\end{lem}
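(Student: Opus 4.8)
The plan is to follow the inductive scheme used in the proof of the analogous scalar statement in Appendix D of \cite{Gombor:2021hmj}, but to keep track of the boundary-space matrix structure that was trivial there. The starting point is the recursion for $\mathbf{W}_{\mathbf{B}}$ implicit in the proof of Lemma \ref{lem:sum}: peeling off one Bethe root $z$ from a node $\nu$ and using the recurrence relations and action formulas for the off-shell Bethe vectors (\cite{Hutsalyuk:2017tcx,Hutsalyuk:2020dlw}) together with the nested $KT$-relations expresses $\mathbf{W}_{\mathbf{B}}(\bar{t}_{\textsc{i}}|\bar{t}_{\textsc{ii}})$ in terms of the weights for $\bar{t}\setminus\{z\}$, the extra factors being entries of the nested $K$-matrices evaluated at $z$ and rational functions of $z$ and the remaining roots. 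I would induct on the total cardinality $\#\bar{t}$, the base case being the trivial $\mathbf{W}_{\mathbf{B}}(\emptyset|\emptyset)=\mathbf{B}$, which matches the claimed form with the normalization $\mathbf{Z}(\emptyset)=\bar{\mathbf{Z}}(\emptyset)=\mathbf{1}$.

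In the inductive step the removed root $z$ either joins $\bar{t}_{\textsc{i}}^{\nu}$ — the roots that carry the explicit $\alpha_{\nu}$-factors split off in Lemma \ref{lem:sum}, whose creation contributes one further entry of the nested $K$-matrix to $\mathbf{Z}(\bar{t}_{\textsc{i}})$ — or joins $\bar{t}_{\textsc{ii}}^{\nu}$, contributing to $\bar{\mathbf{Z}}(\bar{t}_{\textsc{ii}})$. Three points must then be checked: (i) the cross terms between the two halves are purely scalar and collapse to exactly $\prod_{\nu=1}^{N-1}f(\bar{t}_{\textsc{ii}}^{\nu},\bar{t}_{\textsc{i}}^{\nu})/\prod_{\nu=1}^{N-2}f(\bar{t}_{\textsc{ii}}^{\nu+1},\bar{t}_{\textsc{i}}^{\nu})$, which reduces to the same rational identities for the coefficients in the recurrence and action formulas that were already used in \cite{Gombor:2021hmj}; (ii) all $\bar{t}_{\textsc{i}}$-dependent operators accumulate on the right of $\mathbf{B}$ and all $\bar{t}_{\textsc{ii}}$-dependent ones on the left; and (iii) a newly produced $\mathbf{Z}$-factor can be commuted all the way to the far right past the block already standing between it and $\mathbf{B}$, and symmetrically for $\bar{\mathbf{Z}}$. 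Points (ii) and (iii) are where the commutation relations of the nested $K$-matrices enter: when eliminating the roots of node $k$ one only ever meets the entries $\mathbf{K}_{a,b}^{(k)}$ of the $k$-th nested $K$-matrix and the operators $\mathbf{G}^{(k)}$, and by Theorems \ref{thm:crossed-nested-K} and \ref{thm:uncrossed-nestedK} the $\mathbf{G}^{(k)}$ commute with $\mathbf{B}$ and with the higher nested $K$-matrices, so the two towers of boundary-space factors genuinely decouple into a $\bar{\mathbf{Z}}\,\mathbf{B}\,\mathbf{Z}$ sandwich. (The companion relation \eqref{eq:ZZBcr} between $\mathbf{Z}$ and $\bar{\mathbf{Z}}$, needed for Theorem \ref{thm:The-sum-formula}, would then follow separately by computing the same highest coefficient in the two ways the $KT$-relation allows, transporting the boundary state through $T(z)$ versus $\bar{T}(-z)$, and using the symmetry properties \eqref{eq:lamProp}, \eqref{eq:symProp}.)

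I expect the genuine obstacle to be point (iii). In the scalar case $d_{B}=1$ it is vacuous because everything commutes, whereas here one must verify at every stage of the nesting that the freshly generated boundary-space operator commutes with whatever block currently sits between it and its destination. This is exactly what the Cartan-subalgebra statement for the $\mathbf{G}^{(k)}$ and the ``$\mathbf{G}^{(k)}$ commutes with $\mathbf{K}^{(k+1)}$'' clauses of Theorems \ref{thm:crossed-nested-K} and \ref{thm:uncrossed-nestedK} were established to provide, so with those in hand the proof should reduce to a careful, but essentially mechanical, re-run of the bookkeeping of \cite{Gombor:2021hmj}, with no new identities required beyond those theorems.
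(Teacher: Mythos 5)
Your route is genuinely different from the paper's, and it breaks down precisely at the point you yourself flag as the obstacle. The paper does not prove this lemma by tracking the recursion at all: it splits the quantum space as $\mathcal{H}=\mathcal{H}^{(1)}\otimes\mathcal{H}^{(2)}$, combines the co-product formula for the Bethe vectors with the co-product property of the boundary state (Proposition \ref{lem:co-prod}, which gives $\mathbf{B}=\mathbf{B}^{(2)}\mathbf{B}^{(1)}$ in that order), and then chooses degenerate highest weights, $\lambda^{(1)}_{\nu+1}(u)=0$ for $u\in\bar{t}_{\textsc{i}}^{\nu}$ and $\lambda^{(2)}_{\nu}(u)=0$ for $u\in\bar{t}_{\textsc{ii}}^{\nu}$, so that exactly one term survives in both the sum formula and the co-product expansion. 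This yields $\mathbf{W}_{\mathbf{B}}(\bar{t}_{\textsc{i}}|\bar{t}_{\textsc{ii}})$ as a scalar prefactor times $\mathbf{W}_{\mathbf{B}^{(2)}}(\emptyset|\bar{t}_{\textsc{ii}})\,\mathbf{W}_{\mathbf{B}^{(1)}}(\bar{t}_{\textsc{i}}|\emptyset)$, and since the splitting of $\mathcal{H}$ is arbitrary, the $\mathbf{B}$-dependence of $\mathbf{W}_{\mathbf{B}}(\bar{t}|\emptyset)$ and $\mathbf{W}_{\mathbf{B}}(\emptyset|\bar{t})$ is forced to be $\mathbf{B}\mathbf{Z}(\bar{t})$ and $\bar{\mathbf{Z}}(\bar{t})\mathbf{B}$ respectively. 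No commutator bookkeeping is needed anywhere.

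The gap in your plan is point (iii). The boundary-space factors generated at each step of the nesting are not only the $\mathbf{G}^{(k)}$: they are general combinations such as $\mathbf{K}_{1,1}^{-1}(z)\mathbf{K}_{1,i}(z)$ in the crossed case or $\mathbf{K}_{N,1}^{-1}(z)\mathbf{K}_{N,i}(z)$ in the uncrossed case, and these do \emph{not} commute with $\mathbf{B}$ --- only the $\mathbf{G}^{(k)}$ do, by (\ref{eq:felcs}) and (\ref{eq:felcsUcr}) --- nor with the general entries $\mathbf{K}_{a,b}^{(k)}$ at other spectral parameters. Theorems \ref{thm:crossed-nested-K} and \ref{thm:uncrossed-nestedK} therefore do not supply the interchanges your step (iii) requires; in the paper they are used for the pole structure of the HCs (Theorem \ref{thm:HC-poles}) and the recursions (\ref{eq:rect1}), (\ref{eq:rect1-1}) for $\bar{\mathbf{Z}}$ alone, which are only established \emph{after} the factorized form $\bar{\mathbf{Z}}\,\mathbf{B}\,\mathbf{Z}$ is already known. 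If you insist on a recursion-based proof you would have to control the ordering of non-commuting factors by hand at every nesting step, with no algebraic identity available to do so; the co-product argument exists precisely to bypass this, and I would switch to it.
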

\begin{proof}
The proof is similar than Appendix D in \cite{Gombor:2021hmj}. The
derivation is based on co-product formula of the Bethe states (\ref{eq:Bcoprod}).
Let us renormalize the Bethe states and overlaps as
\begin{equation}
\tilde{\mathbb{B}}(\bar{t})=\prod_{\nu=1}^{N-1}\lambda_{\nu+1}(\bar{t}^{\nu})\mathbb{B}(\bar{t}),\quad\tilde{\mathbf{S}}_{\bar{\lambda},\mathbf{B}}(\bar{t})=\prod_{\nu=1}^{N-1}\lambda_{\nu+1}(\bar{t}^{\nu})\mathbf{S}_{\bar{\alpha},\mathbf{B}}(\bar{t}).
\end{equation}
The sum formula (Lemma \ref{lem:sum}) of the renormalized overlap
is 
\begin{equation}
\tilde{\mathbf{S}}_{\bar{\lambda},\mathbf{B}}(\bar{t})=\sum_{\mathrm{part}(\bar{t})}\mathbf{W}_{\mathbf{B}}(\bar{t}_{\textsc{i}}|\bar{t}_{\textsc{ii}})\prod_{\nu=1}^{N-1}\lambda_{\nu}(\bar{t}_{\textsc{i}}^{\nu})\lambda_{\nu+1}(\bar{t}_{\textsc{ii}}^{\nu}).\label{eq:rensum}
\end{equation}
Let $\mathcal{H}^{(1)},\mathcal{H}^{(2)}$ be two subsystems of the
quantum space for which $\mathcal{H}=\mathcal{H}^{(1)}\otimes\mathcal{H}^{(2)}$
and the renormalized co-product formula is
\begin{equation}
\tilde{\mathbb{B}}(\bar{t})=\sum\frac{\prod_{\nu=1}^{N-1}\lambda_{\nu}^{(2)}(\bar{t}_{\textsc{i}}^{\nu})\lambda_{\nu+1}^{(1)}(\bar{t}_{\textsc{ii}}^{\nu})f(\bar{t}_{\textsc{ii}}^{\nu},\bar{t}_{\textsc{i}}^{\nu})}{\prod_{\nu=1}^{N-2}f(\bar{t}_{\textsc{ii}}^{\nu+1},\bar{t}_{\textsc{i}}^{\nu})}\tilde{\mathbb{B}}^{(1)}(\bar{t}_{\textsc{i}})\tilde{\mathbb{B}}^{(2)}(\bar{t}_{\textsc{ii}}),
\end{equation}
where $\tilde{\mathbb{B}}^{(1/2)}$ and $\lambda_{\nu}^{(1/2)}$ are
the off-shell Bethe vectors and pseudo-vacuum eigenvalues on the subsystems
$\mathcal{H}^{(1/2)}$. We can also use the co-product property of
the boundary state, Lemma \ref{lem:co-prod}. Combining the co-product
formulas of the boundary state and the Bethe states, the of-shell
overlaps can be written as
\begin{equation}
\tilde{\mathbf{S}}_{\bar{\lambda},\mathbf{B}}(\bar{t})=\sum\frac{\prod_{\nu=1}^{N-1}\lambda_{\nu}^{(2)}(\bar{t}_{\textsc{i}}^{\nu})\lambda_{\nu+1}^{(1)}(\bar{t}_{\textsc{ii}}^{\nu})f(\bar{t}_{\textsc{ii}}^{\nu},\bar{t}_{\textsc{i}}^{\nu})}{\prod_{\nu=1}^{N-2}f(\bar{t}_{\textsc{ii}}^{\nu+1},\bar{t}_{\textsc{i}}^{\nu})}\tilde{\mathbf{S}}_{\bar{\lambda}^{(2)},\mathbf{B}^{(2)}}(\bar{t}_{\textsc{ii}})\tilde{\mathbf{S}}_{\bar{\lambda}^{(1)},\mathbf{B}^{(1)}}(\bar{t}_{\textsc{i}}),\label{eq:renco}
\end{equation}
where $\mathbf{B}^{(1/2)}$ are the pseudo-vacuum overlaps. 

Now let us fix a particular partition $\bar{t}=\bar{t}_{\textsc{i}}\cup\bar{t}_{\textsc{ii}}$
and choose the highest weights as
\begin{equation}
\begin{split}\lambda_{\nu+1}^{(1)}(u) & =0,\qquad\text{for all }u\in\bar{t}_{\textsc{i}}^{\nu},\\
\lambda_{\nu}^{(2)}(u) & =0,\qquad\text{for all }u\in\bar{t}_{\textsc{ii}}^{\nu},
\end{split}
\end{equation}
therefore in the sum rule of the overlap (\ref{eq:rensum}) and the
co-product formula (\ref{eq:renco}) there is only one non-vanishing
term:
\begin{equation}
\begin{split}\tilde{\mathbf{S}}_{\bar{\lambda},\mathbf{B}}(\bar{t}) & =\mathbf{W}_{\mathbf{B}}(\bar{t}_{\textsc{i}}|\bar{t}_{\textsc{ii}})\prod_{\nu=1}^{N-1}\lambda_{\nu}(\bar{t}_{\textsc{i}}^{\nu})\lambda_{\nu+1}(\bar{t}_{\textsc{ii}}^{\nu}),\\
\tilde{\mathbf{S}}_{\bar{\lambda}^{(1)},\mathbf{B}^{(1)}}(\bar{t}_{\textsc{i}}) & =\mathbf{W}_{\mathbf{B}^{(1)}}(\bar{t}_{\textsc{i}}|\emptyset)\prod_{\nu=1}^{N-1}\lambda_{\nu}^{(1)}(\bar{t}_{\textsc{i}}^{\nu}),\\
\tilde{\mathbf{S}}_{\bar{\lambda}^{(2)},\mathbf{B}^{(2)}}(\bar{t}_{\textsc{ii}}) & =\mathbf{W}_{\mathbf{B}^{(2)}}(\emptyset|\bar{t}_{\textsc{ii}})\prod_{\nu=1}^{N-1}\lambda_{\nu+1}^{(2)}(\bar{t}_{\textsc{ii}}^{\nu}).
\end{split}
\end{equation}
Substituting back, we obtain that
\begin{equation}
\mathbf{W}_{\mathbf{B}}(\bar{t}_{\textsc{i}}|\bar{t}_{\textsc{ii}})=\frac{\prod_{\nu=1}^{N-1}f(\bar{t}_{\textsc{ii}}^{\nu},\bar{t}_{\textsc{i}}^{\nu})}{\prod_{\nu=1}^{N-2}f(\bar{t}_{\textsc{ii}}^{\nu+1},\bar{t}_{\textsc{i}}^{\nu})}\mathbf{W}_{\mathbf{B}^{(2)}}(\emptyset|\bar{t}_{\textsc{ii}})\mathbf{W}_{\mathbf{B}^{(1)}}(\bar{t}_{\textsc{i}}|\emptyset).
\end{equation}
Applying the last equation for $\bar{t}_{\textsc{ii}}=\emptyset$
we have
\begin{equation}
\mathbf{W}_{\mathbf{B}}(\bar{t}|\emptyset)=\mathbf{W}_{\mathbf{B}^{(2)}}(\emptyset|\emptyset)\mathbf{W}_{\mathbf{B}^{(1)}}(\bar{t}|\emptyset).
\end{equation}
Since $\mathbf{W}_{\mathbf{B}^{(2)}}(\emptyset|\emptyset)$ is just
the vacuum overlap we have
\begin{equation}
\mathbf{W}_{\mathbf{B}}(\bar{t}|\emptyset)=\mathbf{B}^{(2)}\mathbf{W}_{\mathbf{B}^{(1)}}(\bar{t}|\emptyset).\label{eq:Wdec}
\end{equation}
From the co-product property of the boundary state (Lemma \ref{lem:co-prod})
we know that $\mathbf{B}=\mathbf{B}^{(2)}\mathbf{B}^{(1)}.$ Since
the equation (\ref{eq:Wdec}) is true for any decomposition of the
quantum space, the $\mathbf{B}$-dependence of the weights $\mathbf{W}_{\mathbf{B}}(\bar{t}|\emptyset)$
should be 
\begin{equation}
\mathbf{W}_{\mathbf{B}}(\bar{t}|\emptyset)=\mathbf{B}\mathbf{Z}(\bar{t}).
\end{equation}
Analogous way, for $\bar{t}_{i}=\emptyset$ we have
\begin{equation}
\mathbf{W}_{\mathbf{B}}(\emptyset|\bar{t})=\mathbf{W}_{\mathbf{B}^{(2)}}(\emptyset|\bar{t})\mathbf{B}^{(1)},
\end{equation}
therefore we can introduce another HC as
\begin{equation}
\mathbf{W}_{\mathbf{B}}(\emptyset|\bar{t})=\bar{\mathbf{Z}}(\bar{t})\mathbf{B}.
\end{equation}
Substituting back we obtain that
\begin{equation}
\mathbf{W}_{\mathbf{B}}(\bar{t}_{\textsc{i}}|\bar{t}_{\textsc{ii}})=\frac{\prod_{\nu=1}^{N-1}f(\bar{t}_{\textsc{ii}}^{\nu},\bar{t}_{\textsc{i}}^{\nu})}{\prod_{\nu=1}^{N-2}f(\bar{t}_{\textsc{ii}}^{\nu+1},\bar{t}_{\textsc{i}}^{\nu})}\bar{\mathbf{Z}}(\bar{t}_{\textsc{ii}})\mathbf{B}\mathbf{Z}(\bar{t}_{\textsc{i}}).
\end{equation}
\end{proof}
\begin{lem}
\label{lem:BtB}Let us introduce the following modified monodromy
matrix
\begin{equation}
\tilde{T}(u)=VT^{t}(-u)V,
\end{equation}
which also satisfies the $RTT$-relation. Denote the Bethe vector
defined with the components of $\tilde{T}$ as $\tilde{\mathbb{B}}(\bar{t})$.
The new Bethe vectors can be expressed in terms of the original Bethe
vectors.
\begin{equation}
\frac{1}{\prod_{\nu=1}^{N-1}\tilde{\alpha}_{\nu}(\bar{t}^{\nu})}\tilde{\mathbb{B}}(\bar{t})=\mathbb{B}(\pi^{a}(\bar{t})).
\end{equation}
\end{lem}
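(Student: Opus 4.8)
The plan is to treat this as a statement purely about the nested Bethe Ansatz construction, proved by comparing the two recursion relations of Appendix~\ref{sec:Reqursions-for-BS}. First I would record the explicit form of the twisted monodromy matrix: since $V$ is the antidiagonal permutation, $V^{2}=1$ and $\tilde T(u)=VT^{t}(-u)V$ has entries $\tilde T_{i,j}(u)=T_{N+1-j,N+1-i}(-u)$. Because $T_{a,b}(u)|0\rangle=0$ for $a>b$ and $T_{a,a}(u)|0\rangle=\lambda_{a}(u)|0\rangle$, the same vector $|0\rangle$ is the pseudo-vacuum of $\tilde T$, now with eigenvalues $\tilde\lambda_{i}(u)=\lambda_{N+1-i}(-u)$ and hence $\tilde\alpha_{i}(u)=\tilde\lambda_{i}(u)/\tilde\lambda_{i+1}(u)=1/\alpha_{N-i}(-u)$. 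Reindexing $\mu=N-\nu$ then gives $\prod_{\nu=1}^{N-1}\tilde\alpha_{\nu}(\bar t^{\nu})=\bigl(\prod_{\mu=1}^{N-1}\alpha_{\mu}((\pi^{a}(\bar t))^{\mu})\bigr)^{-1}$, so the prefactor in the claim is the natural normalization attached to the reflection $\bar t\mapsto\pi^{a}(\bar t)$ of (\ref{eq:chpair-1}).

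The key structural point is that the two recursion relations of Appendix~\ref{sec:Reqursions-for-BS} are interchanged by this twist: the top creation operators of $\tilde{\mathbb{B}}$ in the first recursion are $\tilde T_{1,j}(z)=T_{N+1-j,N}(-z)$, which are precisely the operators $T_{j',N}(-z)$ of the second recursion (\ref{eq:rec2}) of $\mathbb{B}$ under $j'=N+1-j$, and symmetrically the top operators $\tilde T_{j,N}(z)=T_{1,N+1-j}(-z)$ of the second recursion of $\tilde{\mathbb{B}}$ are the top operators of the first recursion of $\mathbb{B}$. Accordingly I would induct on $N$ and on the total number $\#\bar t$ of Bethe roots. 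The base case $\bar t^{\nu}=\emptyset$ for all $\nu$ is $\tilde{\mathbb{B}}(\emptyset)=|0\rangle=\mathbb{B}(\emptyset)$ with trivial prefactor; if $\bar t^{1}=\bar t^{N-1}=\emptyset$ but not all sets vanish, the vectors are built inside the $\{2,\dots,N-1\}$ block, a rank-$(N-2)$ instance covered by the induction on $N$. If $\bar t^{1}\neq\emptyset$, peel off one root $z\in\bar t^{1}$ with the first recursion applied to $\tilde{\mathbb{B}}$, insert the induction hypothesis into each smaller $\tilde{\mathbb{B}}$ on the right, observe that $\pi^{a}$ sends the configuration $(\bar t^{1},\{\bar t_{\textsc{ii}}^{k}\}_{k=2}^{j-1},\{\bar t^{k}\}_{k=j}^{N-1})$ to exactly the configuration occurring in the second recursion of $\mathbb{B}(\pi^{a}(\{z,\bar t^{1}\},\bar t^{2},\dots,\bar t^{N-1}))$ with the partition relabelled by $\nu\mapsto N-\nu$, and recognize the sum over $j'=N+1-j$ as that second recursion; the case $\bar t^{1}=\emptyset$, $\bar t^{N-1}\neq\emptyset$ is the mirror argument using the second recursion of $\tilde{\mathbb{B}}$.

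What remains at each step is to check that the scalar weights agree term by term, and this is where essentially all the work sits, although it is purely mechanical. One uses $f(-u,-v)=f(v,u)$ and the analogous relations for $g$ and $h$, so that negating the roots and reversing the node labels turns the numerator $\prod g(\bar t_{\textsc{i}}^{\nu},\bar t_{\textsc{i}}^{\nu-1})f(\bar t_{\textsc{ii}}^{\nu},\bar t_{\textsc{i}}^{\nu})$ and denominator $\prod f(\bar t^{\nu+1},\bar t_{\textsc{i}}^{\nu})$ of the first recursion into the products $\prod g(\bar s_{\textsc{i}}^{\nu+1},\bar s_{\textsc{i}}^{\nu})f(\bar s_{\textsc{i}}^{\nu},\bar s_{\textsc{ii}}^{\nu})$ and $\prod f(\bar s_{\textsc{i}}^{\nu},\bar s^{\nu-1})$ of the second; the extra factor $\prod_{\nu}\tilde\alpha_{\nu}(\bar t_{\textsc{i}}^{\nu})$ carried by the first recursion of $\tilde{\mathbb{B}}$ is absorbed by the change of $\prod_{\nu}\tilde\alpha_{\nu}$ between $\{z,\bar t^{1}\}$ and the smaller configurations; and the mismatch between the normalizations $\tilde\lambda_{2}(z)=\lambda_{N-1}(-z)$ and $\lambda_{N}(-z)$ is exactly $\tilde\alpha_{1}(z)=1/\alpha_{N-1}(-z)$, the factor by which $\prod_{\nu}\tilde\alpha_{\nu}$ grows when $z$ is adjoined at node $1$. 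I expect the main obstacle to be precisely this bookkeeping --- keeping the bijection between the partitions in the two recursions straight through the reversal and confirming that every product over nodes lines up --- rather than any conceptual difficulty; the statement is the $\pi^{a}$-counterpart of the identity used for $\pi^{c}$ in \cite{Gombor:2021hmj}, and once the weights are matched the lemma follows by induction.
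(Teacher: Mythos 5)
Your proposal is correct and follows essentially the same route as the paper's proof: compute $\tilde T_{i,j}(u)=T_{N+1-j,N+1-i}(-u)$ and $\tilde\lambda_i(u)=\lambda_{N+1-i}(-u)$, then induct by peeling a root off $\bar t^{1}$ with the first recursion for $\tilde{\mathbb{B}}$ and recognizing, after the reindexing $j\mapsto N+1-j$ and the identity $f(-u,-v)=f(v,u)$, the second recursion (\ref{eq:rec2}) for $\mathbb{B}(\pi^{a}(\bar t))$, with the normalization mismatch absorbed by $\tilde\alpha_{1}(z)=1/\alpha_{N-1}(-z)$. Your handling of the base cases (inducting also on $N$ and treating $\bar t^{1}=\bar t^{N-1}=\emptyset$ via the rank-$(N-2)$ block, plus the mirror case via the second recursion) is in fact slightly more explicit than the paper's induction on $r_{1}$ alone, but the core argument is identical.
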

\begin{proof}
We can prove the statement by induction. The pseudo-vacuum is also
the highest weight state for $\tilde{T}$, since
\begin{equation}
\tilde{T}_{i,j}(u)|0\rangle=T_{N+1-j,N+1-i}(-u)|0\rangle=0,
\end{equation}
for $i>j$ and 
\begin{equation}
\tilde{T}_{i,i}(u)|0\rangle=T_{N+1-i,N+1-i}(-u)|0\rangle=\lambda_{N+1-i}(-u)|0\rangle,
\end{equation}
therefore
\begin{equation}
\tilde{\lambda}_{i}(u)=\lambda_{N+1-i}(-u).\label{eq:lamh}
\end{equation}
This proves the statement for $r_{j}=0$ for $j=1,\dots,N-1$.

Now assume the statement holds when the number of Bethe roots of the
first type is at most $r_{1}$. Now use the recursive equation for
$r_{1}+1$:
\begin{multline}
\frac{1}{\tilde{\alpha}_{1}(z)\prod_{\nu=1}^{N-1}\tilde{\alpha}_{\nu}(\bar{t}^{\nu})}\tilde{\mathbb{B}}(\{z,\bar{t}^{1}\},\left\{ \bar{t}^{k}\right\} _{k=2}^{N-1})=\\
\frac{1}{\tilde{\alpha}_{1}(z)\prod_{\nu=1}^{N-1}\tilde{\alpha}_{\nu}(\bar{t}^{\nu})}\sum_{j=2}^{N}\frac{\tilde{T}_{1,j}(z)}{\tilde{\lambda}_{2}(z)}\sum_{\mathrm{part}(\bar{t})}\tilde{\mathbb{B}}(\bar{t}^{1},\left\{ \bar{t}_{\textsc{ii}}^{k}\right\} _{k=2}^{j-1},\left\{ \bar{t}^{k}\right\} _{k=j}^{N-1})\frac{\prod_{\nu=2}^{j-1}\tilde{\alpha}_{\nu}(\bar{t}_{\textsc{i}}^{\nu})g(\bar{t}_{\textsc{i}}^{\nu},\bar{t}_{\textsc{i}}^{\nu-1})f(\bar{t}_{\textsc{ii}}^{\nu},\bar{t}_{\textsc{i}}^{\nu})}{\prod_{\nu=1}^{j-1}f(\bar{t}^{\nu+1},\bar{t}_{\textsc{i}}^{\nu})}.
\end{multline}
Apply the induction hypothesis to the right-hand side.
\begin{multline}
\frac{1}{\tilde{\alpha}_{1}(z)\prod_{\nu=1}^{N-1}\tilde{\alpha}_{\nu}(\bar{t}^{\nu})}\tilde{\mathbb{B}}(\{z,\bar{t}^{1}\},\left\{ \bar{t}^{k}\right\} _{k=2}^{N-1})=\\
\sum_{j=2}^{N}\frac{\tilde{T}_{1,j}(z)}{\tilde{\lambda}_{1}(z)}\sum_{\mathrm{part}(\bar{t})}\mathbb{B}(\left\{ -\bar{t}^{N-k}\right\} _{k=1}^{N-j},\left\{ -\bar{t}_{\textsc{ii}}^{N-k}\right\} _{k=N-j+1}^{N-2},-\bar{t}^{1})\frac{\prod_{\nu=2}^{j-1}g(\bar{t}_{\textsc{i}}^{\nu},\bar{t}_{\textsc{i}}^{\nu-1})f(\bar{t}_{\textsc{ii}}^{\nu},\bar{t}_{\textsc{i}}^{\nu})}{\prod_{\nu=1}^{j-1}f(\bar{t}^{\nu+1},\bar{t}_{\textsc{i}}^{\nu})}.
\end{multline}
Now we can express $\tilde{T}_{1,j}$ and $\tilde{\lambda}_{1}$ in
terms of the original quantities.
\begin{multline}
\frac{1}{\tilde{\alpha}_{1}(z)\prod_{\nu=1}^{N-1}\tilde{\alpha}_{\nu}(\bar{t}^{\nu})}\tilde{\mathbb{B}}(\{z,\bar{t}^{1}\},\left\{ \bar{t}^{k}\right\} _{k=2}^{N-1})=\\
\sum_{j=1}^{N-1}\frac{T_{j,N}(-z)}{\lambda_{N}(-z)}\sum_{\mathrm{part}(\bar{t})}\mathbb{B}(\left\{ -\bar{t}^{N-k}\right\} _{k=1}^{j-1},\left\{ -\bar{t}_{\textsc{ii}}^{N-k}\right\} _{k=j}^{N-2},-\bar{t}^{1})\frac{\prod_{\nu=2}^{N-j}g(\bar{t}_{\textsc{i}}^{\nu},\bar{t}_{\textsc{i}}^{\nu-1})f(\bar{t}_{\textsc{ii}}^{\nu},\bar{t}_{\textsc{i}}^{\nu})}{\prod_{\nu=1}^{N-j}f(\bar{t}^{\nu+1},\bar{t}_{\textsc{i}}^{\nu})}.
\end{multline}
After rearrangements, we get
\begin{multline}
\frac{1}{\tilde{\alpha}_{1}(z)\prod_{\nu=1}^{N-1}\tilde{\alpha}_{\nu}(\bar{t}^{\nu})}\tilde{\mathbb{B}}(\{z,\bar{t}^{1}\},\left\{ \bar{t}^{k}\right\} _{k=2}^{N-1})=\\
\sum_{j=1}^{N-1}\frac{T_{j,N}(-z)}{\lambda_{N}(-z)}\sum_{\mathrm{part}(\bar{t})}\mathbb{B}(\left\{ -\bar{t}^{N-k}\right\} _{k=1}^{j-1},\left\{ -\bar{t}_{\textsc{ii}}^{N-k}\right\} _{k=j}^{N-2},-\bar{t}^{1})\frac{\prod_{\nu=j}^{N-2}g(-\bar{t}_{\textsc{i}}^{N-\nu-1},-\bar{t}_{\textsc{i}}^{N-\nu})f(-\bar{t}_{\textsc{i}}^{N-\nu},-\bar{t}_{\textsc{ii}}^{N-\nu})}{\prod_{\nu=1}^{N-j}f(-\bar{t}_{\textsc{i}}^{N-\nu},-\bar{t}^{N-\nu+1})}.
\end{multline}
We see that the right-hand side matches the right-hand side of the
recursive equation (\ref{eq:rec2}), i.e.,
\begin{equation}
\frac{1}{\tilde{\alpha}_{1}(z)\prod_{\nu=1}^{N-1}\tilde{\alpha}_{\nu}(\bar{t}^{\nu})}\tilde{\mathbb{B}}(\{z,\bar{t}^{1}\},\left\{ \bar{t}^{k}\right\} _{k=2}^{N-1})=\mathbb{B}(\left\{ -\bar{t}^{N-k}\right\} _{k=1}^{N-2},\{-z,-\bar{t}^{1}\}),
\end{equation}
which proves the inductive step.
\end{proof}
\begin{lem}
Let us introduce the following modified monodromy matrix
\begin{equation}
\tilde{T}(u)=\widehat{T}^{t}(-u),
\end{equation}
which also satisfies the $RTT$-relation. Denote the Bethe vector
defined with the components of $\tilde{T}$ as $\tilde{\mathbb{B}}(\bar{t})$.
The new Bethe vectors can be expressed in terms of the original Bethe
vectors
\begin{equation}
\frac{1}{\prod_{\nu=1}^{N-1}\tilde{\alpha}_{\nu}(\bar{t}^{\nu})}\tilde{\mathbb{B}}(\bar{t})=A(\bar{t})\mathbb{B}(\pi^{c}(\bar{t})),
\end{equation}
where
\begin{equation}
A(\bar{t})=(-1)^{\#\bar{t}}\left(\prod_{s=1}^{N-2}f(\bar{t}^{s+1},\bar{t}^{s})\right)^{-1}.\label{eq:At}
\end{equation}
\end{lem}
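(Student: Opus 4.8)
The plan is to imitate the proof of Lemma \ref{lem:BtB} almost verbatim, with the matrix $V$ there replaced by the transposition $t$ and the reflection map $\pi^{a}$ replaced by $\pi^{c}$. Concretely: an induction on the number of Bethe roots, with the base case coming from the highest-weight structure of $\tilde T(u)=\widehat T^{t}(-u)$ and the inductive step from the recurrence for Bethe vectors that adds a root to $\bar t^{1}$, combined with a re-identification of the resulting sum as a recurrence for $\mathbb B$ built from the crossed monodromy matrix, evaluated at the $\pi^{c}$-shifted roots.

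For the base case, note that $\tilde T_{i,j}(u)=\widehat T_{j,i}(-u)$, so the lowest-weight property $\widehat T_{i,j}(u)|0\rangle=0$ for $i<j$ gives $\tilde T_{i,j}(u)|0\rangle=0$ for $i>j$, while $\tilde T_{i,i}(u)|0\rangle=\widehat T_{i,i}(-u)|0\rangle=\hat\lambda_{i}(-u)|0\rangle$. Hence $|0\rangle$ is the highest-weight vector of $\tilde T$ with $\tilde\lambda_{i}(u)=\hat\lambda_{i}(-u)$, and therefore $\tilde\alpha_{i}(u)=\hat\alpha_{i}(-u)=1/\alpha_{i}(-u-i)$ by the identities of Section \ref{sec:Definitions}. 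For $\bar t=\emptyset$ both sides of the claimed identity equal $|0\rangle$ and $A(\emptyset)=1$, which establishes the statement when all $r_{\nu}=0$.

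For the inductive step I would apply the recurrence that adds a root to $\bar t^{1}$ to $\tilde{\mathbb B}(\{z,\bar t^{1}\},\{\bar t^{k}\}_{k=2}^{N-1})$, written through $\tilde T_{1,j}(z)=\widehat T_{j,1}(-z)$ and $\tilde\lambda_{2}(z)=\hat\lambda_{2}(-z)$, then insert the induction hypothesis into each $\tilde{\mathbb B}$ on the right-hand side and rewrite all $\tilde T$-, $\tilde\lambda$- and $\tilde\alpha$-quantities in terms of the original $\widehat T$, $\hat\lambda$, $\alpha$. What remains is to recognize the resulting double sum as the $\pi^{c}$-shifted form of the recurrence for $\mathbb B$ that follows from the crossed action formula (\ref{eq:actTw}): one reads off the term of (\ref{eq:actTw}) for $\widehat T_{j,1}$ that raises the number of first-type roots by one, matches the partition structure, and performs the substitution $\bar t^{\nu}\mapsto-\bar t^{\nu}-\nu$ node by node. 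A useful sanity check is that $\pi^{c}$ preserves the node labels (whereas $\pi^{a}$ reverses them), so here the node-$1$ data of $\tilde{\mathbb B}$ must be compared with the node-$1$ data of $\mathbb B$, not with node $N-1$ as in Lemma \ref{lem:BtB}.

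The main obstacle is the bookkeeping of the scalar prefactor $A(\bar t)=(-1)^{\#\bar t}\big(\prod_{s=1}^{N-2}f(\bar t^{s+1},\bar t^{s})\big)^{-1}$, which is trivial ($A\equiv 1$) in Lemma \ref{lem:BtB} but not here. The sign $(-1)^{\#\bar t}$ comes from the $(-1)^{i-j}$ prefactors in (\ref{eq:actTw}); the product $\big(\prod_{s}f(\bar t^{s+1},\bar t^{s})\big)^{-1}$ comes from two sources that must be tracked carefully: first, the relation (\ref{eq:lamhat}) between $\hat\lambda_{i}$ and $\lambda_{i}$ involves products over $k$, unlike the simple $\tilde\lambda_{i}(u)=\lambda_{N+1-i}(-u)$ of Lemma \ref{lem:BtB}; and second, the functions $f$, $g$, $h$ transform nontrivially under the node-dependent shift $u\mapsto-u-\nu$ (a difference $u-v$ of arguments in adjacent nodes is sent to $-(u-v)-1$), producing exactly the $f$-factors above when matched against the $g(\bar t_{\textsc{i}}^{\nu},\bar t_{\textsc{i}}^{\nu-1})$, $f(\bar t_{\textsc{ii}}^{\nu},\bar t_{\textsc{i}}^{\nu})$ and $f(\bar t^{\nu+1},\bar t_{\textsc{i}}^{\nu})$ terms of the recurrence. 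Verifying that these combine precisely to $A(\bar t)$, with no leftover $z$-dependence, is the only genuine computation; the rest parallels Lemma \ref{lem:BtB}.
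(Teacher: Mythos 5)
Your overall strategy (induction on the number of Bethe roots, mimicking Lemma \ref{lem:BtB}) is genuinely different from the paper's, which is a two-line composition: the paper writes $\tilde{T}(u)=V\check{T}^{t}(-u)V$ with $\check{T}(u)=V\widehat{T}(u)V$, applies the already-proven Lemma \ref{lem:BtB} to get $\tilde{\mathbb{B}}(\bar{t})=\check{\mathbb{B}}(\pi^{a}(\bar{t}))\prod_{\nu}\tilde{\alpha}_{\nu}(\bar{t}^{\nu})$, and then invokes the known identity $\check{\mathbb{B}}(\bar{t})=A(\bar{t})\mathbb{B}(\mu(\bar{t}))$ from \cite{Liashyk:2018egk}; the only remaining checks are $A(\pi^{a}(\bar{t}))=A(\bar{t})$ and $\mu(\pi^{a}(\bar{t}))=\pi^{c}(\bar{t})$. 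Your base case is correct ($\tilde{\lambda}_{i}(u)=\hat{\lambda}_{i}(-u)$, hence $\tilde{\alpha}_{i}(u)=1/\alpha_{i}(-u-i)$), and your observation that $\pi^{c}$ preserves node labels is right.

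The genuine gap is in the inductive step. In Lemma \ref{lem:BtB} the induction closes because $\tilde{T}_{1,j}(z)$ becomes $T_{j,N}(-z)$ and the resulting sum is \emph{literally} the other known recurrence (\ref{eq:rec2}) for $\mathbb{B}$ — no action formula is needed and no cancellations occur. Here $\tilde{T}_{1,j}(z)=\widehat{T}_{j,1}(-z)$, and there is no stated recurrence for $\mathbb{B}$ in terms of the crossed entries $\widehat{T}_{j,1}$ to match against. What you propose instead — applying the action formula (\ref{eq:actTw}) to $\widehat{T}_{j,1}(-z)\,\mathbb{B}(\pi^{c}(\bar{\tau}))$ and showing the resulting double sum over partitions collapses to a single Bethe vector $A(\bar{t})\mathbb{B}(\pi^{c}(\bar{t}))\prod\tilde{\alpha}$ — is not a bookkeeping exercise: each action produces Bethe vectors on root sets $\{z-\nu,\bar{\tau}^{\nu}\}$ with one element removed for $\nu\geq j$, and proving that only the desired term survives (and that the surviving coefficients assemble into exactly $A(\bar{t})$ with no residual $z$-dependence) is essentially equivalent to re-deriving the identity $\check{\mathbb{B}}(\bar{t})=A(\bar{t})\mathbb{B}(\mu(\bar{t}))$ of \cite{Liashyk:2018egk} from scratch. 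That identity is the entire content of the lemma, so deferring it as "the only genuine computation" leaves the proof without its central step. Either carry out that computation explicitly or, more economically, reduce to Lemma \ref{lem:BtB} plus the cited result as the paper does.
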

\begin{proof}
The matrix $\tilde{T}$ can be expressed in two steps
\begin{equation}
\tilde{T}(u)=V\check{T}^{t}(-u)V.
\end{equation}
This is the relation that appeared in Lemma \ref{lem:BtB}, i.e.,
$\tilde{\mathbb{B}}$ can be expressed in terms of $\check{\mathbb{B}}$
as follows
\begin{equation}
\frac{1}{\prod_{\nu=1}^{N-1}\tilde{\alpha}_{\nu}(\bar{t}^{\nu})}\tilde{\mathbb{B}}(\bar{t})=\check{\mathbb{B}}(\pi^{a}(\bar{t})).
\end{equation}
At the same time, $\check{\mathbb{B}}$ can be expressed in terms
of the original Bethe vector as shown in \cite{Liashyk:2018egk}
\begin{equation}
\hat{\mathbb{B}}(\bar{t})=A(\bar{t})\mathbb{B}(\mu(\bar{t})),
\end{equation}
where
\begin{equation}
\mu(\bar{t})=\{\bar{t}^{N-1}-1,\bar{t}^{N-2}-2,\dots,\bar{t}^{1}-(N-1)\}.
\end{equation}
Substituting back, we get
\begin{equation}
\frac{1}{\prod_{\nu=1}^{N-1}\tilde{\alpha}_{\nu}(\bar{t}^{\nu})}\tilde{\mathbb{B}}(\bar{t})=A(\pi^{a}(\bar{t}))\mathbb{B}(\mu(\pi^{a}(\bar{t}))).
\end{equation}
It is easy to show that $A(\pi^{a}(\bar{t}))=A(\bar{t})$ and $\mu(\pi^{a}(\bar{t}))=\pi^{c}(\bar{t})$,
which completes the proof.
\end{proof}
\begin{lem}
The uncrossed HC-s have the following property\textup{
\begin{equation}
\mathbf{Z}^{\mathbf{K}}(\bar{t})=\left[\bar{\mathbf{Z}}^{\mathbf{K}^{\Pi}}(\pi^{a}(\bar{t}))\right]^{t_{B}},\label{eq:uncrZZbar}
\end{equation}
where we have indicated the $K$-matrix dependence of the HCs and
\begin{equation}
\mathbf{K}^{\Pi}(u)=V\mathbf{K}^{T}(u)V,
\end{equation}
where $^{T}$ denotes the transposition on the auxiliary and boundary
spaces.}
\end{lem}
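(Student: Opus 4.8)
The plan is to re-use the ``modified monodromy'' strategy that produced the crossed HC relation, now feeding in Lemma~\ref{lem:BtB}. First I would establish that the \emph{transposed} boundary co-vector $\langle\Psi|^{t_{B}}$ (transposition on the boundary space only) satisfies an uncrossed $KT$-relation with the modified monodromy $\tilde{T}(u)=VT^{t}(-u)V$ and the modified $K$-matrix $\mathbf{K}^{\Pi}(u)=V\mathbf{K}^{T}(u)V$, where $V_{i,j}=\delta_{i,N+1-j}$ and $T$ is transposition on auxiliary and boundary spaces. This is pure index bookkeeping on $\mathbf{K}_{0}(z)\langle\Psi|T_{0}(z)=\langle\Psi|T_{0}(-z)\mathbf{K}_{0}(z)$: one writes $T_{k,j}(z)=\tilde{T}_{N+1-j,N+1-k}(-z)$, relabels the auxiliary indices through $V$ (the $V$-conjugation is needed because $\tilde{T}$ reverses the Yangian indices), and transposes the whole relation on the boundary space; the same computation identifies the resulting boundary state as $\langle\Psi|^{t_{B}}$ and shows that $|0\rangle$ is still highest weight for $\tilde{T}$ with $\tilde{\lambda}_{i}(u)=\lambda_{N+1-i}(-u)$, hence $\tilde{\alpha}_{i}(u)=1/\alpha_{N-i}(-u)$, and with pseudo-vacuum overlap $\langle\Psi|^{t_{B}}|0\rangle=\mathbf{B}^{t_{B}}$. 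This transposed vacuum overlap is the sole source of the $t_{B}$ in the statement.

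Second, I would compute $\langle\Psi|^{t_{B}}\tilde{\mathbb{B}}(\bar{t})$ in two ways. On one side, the sum formula (Theorem~\ref{thm:The-sum-formula}) applied to the modified $KT$-data gives $\langle\Psi|^{t_{B}}\tilde{\mathbb{B}}(\bar{t})=\mathbf{S}^{\mathbf{K}^{\Pi}}_{\tilde{\bar{\alpha}},\mathbf{B}^{t_{B}}}(\bar{t})$, whose partition term with every Bethe root in the subset $\textsc{i}$ is $\mathbf{B}^{t_{B}}\,\mathbf{Z}^{\mathbf{K}^{\Pi}}(\bar{t})\prod_{\nu}\tilde{\alpha}_{\nu}(\bar{t}^{\nu})$. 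On the other side, Lemma~\ref{lem:BtB} gives $\tilde{\mathbb{B}}(\bar{t})=\prod_{\nu}\tilde{\alpha}_{\nu}(\bar{t}^{\nu})\,\mathbb{B}(\pi^{a}(\bar{t}))$, so $\langle\Psi|^{t_{B}}\tilde{\mathbb{B}}(\bar{t})=\prod_{\nu}\tilde{\alpha}_{\nu}(\bar{t}^{\nu})\,[\mathbf{S}^{\mathbf{K}}_{\bar{\alpha},\mathbf{B}}(\pi^{a}(\bar{t}))]^{t_{B}}$; expanding $\mathbf{S}^{\mathbf{K}}_{\bar{\alpha},\mathbf{B}}$ by the sum formula and transposing, the partition term with every root in the subset $\textsc{ii}$ is $\prod_{\nu}\tilde{\alpha}_{\nu}(\bar{t}^{\nu})\,\mathbf{B}^{t_{B}}[\bar{\mathbf{Z}}^{\mathbf{K}}(\pi^{a}(\bar{t}))]^{t_{B}}$.

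Third, I would match the two expansions term by term. In the generalized model the $\alpha$-functions (hence the $\tilde{\alpha}$) are independent, and the dictionary $\alpha_{\nu}(-u)=1/\tilde{\alpha}_{N-\nu}(u)$ converts the $\alpha$-content of the second expansion into $\tilde{\alpha}$-content; the two terms isolated above are exactly the ones carrying the maximal monomial $\prod_{\nu,k}\tilde{\alpha}_{\nu}(t^{\nu}_{k})$. Equating their coefficients yields $\mathbf{B}^{t_{B}}\mathbf{Z}^{\mathbf{K}^{\Pi}}(\bar{t})=\mathbf{B}^{t_{B}}[\bar{\mathbf{Z}}^{\mathbf{K}}(\pi^{a}(\bar{t}))]^{t_{B}}$. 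Since, by Lemma~\ref{lem:weigths}, the highest coefficients $\mathbf{Z}$ and $\bar{\mathbf{Z}}$ depend only on the entries of the elementary $K$-matrix and not on the boundary state, one may choose a co-product boundary state for which $\mathbf{B}$ — hence $\mathbf{B}^{t_{B}}$ — is invertible and cancel it, obtaining $\mathbf{Z}^{\mathbf{K}^{\Pi}}(\bar{t})=[\bar{\mathbf{Z}}^{\mathbf{K}}(\pi^{a}(\bar{t}))]^{t_{B}}$. Finally, replacing $\mathbf{K}\to\mathbf{K}^{\Pi}$ and using the involutivity $(\mathbf{K}^{\Pi})^{\Pi}=\mathbf{K}$ together with $\pi^{a}\circ\pi^{a}=\mathrm{id}$ gives the claimed $\mathbf{Z}^{\mathbf{K}}(\bar{t})=[\bar{\mathbf{Z}}^{\mathbf{K}^{\Pi}}(\pi^{a}(\bar{t}))]^{t_{B}}$.

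The main obstacle is the first step: placing every transposition and the $V$-conjugation correctly so that the modified $KT$-relation holds for $\langle\Psi|^{t_{B}}$ with vacuum overlap $\mathbf{B}^{t_{B}}$ rather than $\mathbf{B}$, since this transposed overlap is precisely what carries the $t_{B}$ through to the final identity. The remaining work — matching the partition sums via the $\tilde{\alpha}\leftrightarrow\alpha$ dictionary and stripping off the (boundary-state independent) prefactor $\mathbf{B}^{t_{B}}$ — is routine, and the argument parallels the crossed case treated just above (equation~\eqref{eq:ZZBcr}), the only structural difference being the absence of the $A(\bar{t})$ factor because the uncrossed Bethe-vector map $\pi^{a}$ carries no such normalization.
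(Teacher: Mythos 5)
Your proposal is correct and follows essentially the same route as the paper: transpose the $KT$-relation on the boundary space to show $\langle\Psi|^{t_{B}}$ satisfies the uncrossed $KT$-relation for $\tilde{T}(u)=VT^{t}(-u)V$ with $K$-matrix $\mathbf{K}^{\Pi}$, invoke Lemma~\ref{lem:BtB} to relate $\tilde{\mathbb{B}}(\bar{t})$ to $\mathbb{B}(\pi^{a}(\bar{t}))$, and compare the two sum-formula expansions using the independence of the $\alpha$'s (the symmetry property $\tilde{\alpha}_{\nu}(u)=1/\alpha_{N-\nu}(-u)=\alpha_{\nu}(u)$ reconciling the two $\alpha$-monomials). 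The only cosmetic difference is that you isolate the extreme partition terms and cancel $\mathbf{B}^{t_{B}}$, whereas the paper equates all weights $\bigl(\mathbf{W}_{\mathbf{B}}^{\mathbf{K}}(\pi^{a}(\bar{t}_{\textsc{i}})|\pi^{a}(\bar{t}_{\textsc{ii}}))\bigr)^{t_{B}}=\mathbf{W}_{\mathbf{B}^{t_{B}}}^{\mathbf{K}^{\Pi}}(\bar{t}_{\textsc{ii}}|\bar{t}_{\textsc{i}})$ and then reads off the HC identity from their explicit factorized form.
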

\begin{proof}
By rearranging the $KT$-relation, we obtain:
\begin{equation}
\mathbf{K}^{\Pi}(u)\langle\Psi^{t_{B}}|\left[VT^{t}(-u)V\right]=\langle\Psi^{t_{B}}|\left[VT^{t_{0}}(u)V\right]\mathbf{K}^{\Pi}(u),
\end{equation}
where $^{t_{B}}$ denotes the transposition on the boundary space.
Let us introduce the following modified monodromy matrix:
\begin{equation}
\tilde{T}(u)=VT^{t}(-u)V.
\end{equation}
Denote the Bethe vectors defined by the components of $\tilde{T}$
as $\tilde{\mathbb{B}}(\bar{t})$. These Bethe vectors can be expressed
in terms of the original Bethe vectors, according to Lemma \ref{lem:BtB}:
\begin{equation}
\tilde{\mathbb{B}}(\bar{t})=\mathbb{B}(\pi^{a}(\bar{t}))\prod_{\nu=1}^{N-1}\alpha_{\nu}(\bar{t}^{\nu}).
\end{equation}
Since the modified monodromy matrix satisfies the $KT$-relation
\begin{equation}
\mathbf{K}^{\Pi}(u)\langle\Psi^{t_{B}}|\tilde{T}(u)=\langle\Psi^{t_{B}}|\tilde{T}(-u)\mathbf{K}^{\Pi}(u),
\end{equation}
the off-shell overlap satisfies the sum formula
\begin{equation}
\langle\Psi^{t_{B}}|\tilde{\mathbb{B}}(\bar{t})=\sum_{\mathrm{part}(\bar{t})}\mathbf{W}_{\mathbf{B}^{t_{B}}}^{\mathbf{K}^{\Pi}}(\bar{t}_{\textsc{i}}|\bar{t}_{\textsc{ii}})\prod_{\nu=1}^{N-1}\alpha_{\nu}(\bar{t}_{\textsc{i}}^{\nu}),
\end{equation}
where we used (\ref{eq:symProp}) and the symmetry property (\ref{eq:symProp}).
Now we can use the relation between Bethe overlaps, from which we
obtain:
\begin{equation}
\langle\Psi^{t_{B}}|\tilde{\mathbb{B}}(\bar{t})=\left(\langle\Psi|\mathbb{B}(\pi^{a}(\bar{t}))\right)^{t_{B}}\prod_{\nu=1}^{N-1}\alpha_{\nu}(\bar{t}^{\nu})=\sum_{\mathrm{part}(\bar{t})}\left(\mathbf{W}_{\mathbf{B}}^{\mathbf{K}}(\pi^{a}(\bar{t}_{\textsc{i}})|\pi^{a}(\bar{t}_{\textsc{ii}}))\right)^{t_{B}}\prod_{\nu=1}^{N-1}\alpha_{\nu}(\bar{t}_{\textsc{ii}}^{\nu}),
\end{equation}
that is,
\begin{equation}
\left(\mathbf{W}_{\mathbf{B}}^{\mathbf{K}}(\pi^{a}(\bar{t}_{\textsc{i}})|\pi^{a}(\bar{t}_{\textsc{ii}}))\right)^{t_{B}}=\mathbf{W}_{\mathbf{B}^{t_{B}}}^{\mathbf{K}^{\Pi}}(\bar{t}_{\textsc{ii}}|\bar{t}_{\textsc{i}}).
\end{equation}
Using the explicit form of the weights, we obtain the desired formula
(\ref{eq:uncrZZbar}).
\end{proof}
\begin{lem}
The crossed HC-s have the following property\textup{
\begin{equation}
\mathbf{Z}^{\mathbf{K}}(\bar{t})=(-1)^{\#\bar{t}}\left(\prod_{s=1}^{N-2}f(\bar{t}^{s+1},\bar{t}^{s})\right)^{-1}\left[\bar{\mathbf{Z}}^{\mathbf{K}^{\Pi}}(\pi^{c}(\bar{t}))\right]^{t_{B}},\label{eq:crZZbar}
\end{equation}
where we have indicated the $K$-matrix dependence of the HCs and
\begin{equation}
\mathbf{K}^{\Pi}(u)=\mathbf{K}^{T}(u),
\end{equation}
where $^{T}$ denotes the transposition on the auxiliary and boundary
spaces.}
\end{lem}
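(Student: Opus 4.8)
The plan is to mirror, essentially line for line, the proof of the uncrossed HC identity (\ref{eq:uncrZZbar}), with the map $\pi^{a}$ replaced by $\pi^{c}$ and an extra scalar $A(\bar{t})$ of (\ref{eq:At}) appearing because the reflection in the crossed case is implemented by $\widehat{T}$ rather than $VT^{t}V$. First I would take the crossed $KT$-relation $\mathbf{K}_{0}(z)\langle\Psi|T_{0}(z)=\langle\Psi|\widehat{T}_{0}(-z)\mathbf{K}_{0}(z)$ and transpose it on the auxiliary and boundary spaces. Introducing the modified monodromy matrix $\tilde{T}(u)=\widehat{T}^{t}(-u)$ (transposition on the auxiliary space), which satisfies the $RTT$-relation since $\widehat{T}^{t}(u)T(u)$ is central by (\ref{eq:Thatdef}), one checks directly that the transposed relation becomes a genuine $KT$-relation $\mathbf{K}^{\Pi}(u)\langle\Psi^{t_{B}}|\tilde{T}(u)=\langle\Psi^{t_{B}}|\tilde{T}(-u)\mathbf{K}^{\Pi}(u)$ with $\mathbf{K}^{\Pi}(u)=\mathbf{K}^{T}(u)$, using the symmetry property (\ref{eq:symProp}) of the pseudo-vacuum eigenvalues.

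Second, I would apply the sum formula of Lemma \ref{lem:sum} to the off-shell overlap $\langle\Psi^{t_{B}}|\tilde{\mathbb{B}}(\bar{t})$ built from $\tilde{T}$, obtaining a sum over partitions with weights $\mathbf{W}^{\mathbf{K}^{\Pi}}_{\mathbf{B}^{t_{B}}}$. On the other hand, the preceding lemma expresses $\tilde{\mathbb{B}}(\bar{t})$ through $\mathbb{B}(\pi^{c}(\bar{t}))$ up to the factor $A(\bar{t})\prod_{\nu}\tilde{\alpha}_{\nu}(\bar{t}^{\nu})$, so that $\langle\Psi^{t_{B}}|\tilde{\mathbb{B}}(\bar{t})=A(\bar{t})\big(\langle\Psi|\mathbb{B}(\pi^{c}(\bar{t}))\big)^{t_{B}}\prod_{\nu}\tilde{\alpha}_{\nu}(\bar{t}^{\nu})$; the right-hand side is then re-expanded by the original sum formula. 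Treating the $\alpha$-functions as independent variables (valid for sufficiently many inhomogeneities), I would match the two partition expansions term by term, arriving at an identity of the schematic form $\big(\mathbf{W}^{\mathbf{K}}_{\mathbf{B}}(\pi^{c}(\bar{t}_{\textsc{i}})|\pi^{c}(\bar{t}_{\textsc{ii}}))\big)^{t_{B}}=(\text{scalar})\cdot\mathbf{W}^{\mathbf{K}^{\Pi}}_{\mathbf{B}^{t_{B}}}(\bar{t}_{\textsc{ii}}|\bar{t}_{\textsc{i}})$, where the scalar collects $A(\bar{t})$ together with the ratios of $f$-functions produced by the relabelling $\pi^{c}$ and the interchange $\bar{t}_{\textsc{i}}\leftrightarrow\bar{t}_{\textsc{ii}}$.

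Finally, I would substitute the factorized weights of Lemma \ref{lem:weigths}, $\mathbf{W}_{\mathbf{B}}(\bar{t}_{\textsc{i}}|\bar{t}_{\textsc{ii}})=\frac{\prod_{\nu}f(\bar{t}_{\textsc{ii}}^{\nu},\bar{t}_{\textsc{i}}^{\nu})}{\prod_{\nu}f(\bar{t}_{\textsc{ii}}^{\nu+1},\bar{t}_{\textsc{i}}^{\nu})}\bar{\mathbf{Z}}(\bar{t}_{\textsc{ii}})\mathbf{B}\mathbf{Z}(\bar{t}_{\textsc{i}})$, isolate the $\mathbf{Z}$ and $\bar{\mathbf{Z}}$ pieces, and cancel the $f$-prefactors against the scalar from the previous step; the residue is exactly $(-1)^{\#\bar{t}}\big(\prod_{s=1}^{N-2}f(\bar{t}^{s+1},\bar{t}^{s})\big)^{-1}$, which gives (\ref{eq:crZZbar}). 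The main obstacle is precisely this last bookkeeping: one must verify that the $f$-function prefactors in $\mathbf{W}$ under the argument shift $\bar{t}\mapsto\pi^{c}(\bar{t})$, combined with $A(\bar{t})$ of (\ref{eq:At}) and with $\bar{t}_{\textsc{i}}\leftrightarrow\bar{t}_{\textsc{ii}}$, reorganize so that all partition-dependent factors drop out and only the stated $\bar{t}$-dependent scalar survives. The shifts hidden inside $\pi^{c}$ (each $\bar{t}^{\nu}$ is moved to $-\bar{t}^{\nu}-\nu$) and the parity identities for $f$ and $g$ make this the most error-prone point, exactly as in the uncrossed computation but with the additional sign and product factor to carry through.
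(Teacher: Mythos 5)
Your proposal is correct and follows essentially the same route as the paper: transpose the crossed $KT$-relation to obtain a $KT$-relation for $\tilde{T}(u)=\widehat{T}^{t}(-u)$ with $K$-matrix $\mathbf{K}^{T}$, use the lemma expressing $\tilde{\mathbb{B}}(\bar{t})$ as $A(\bar{t})\mathbb{B}(\pi^{c}(\bar{t}))\prod_{\nu}\alpha_{\nu}(\bar{t}^{\nu})$, compare the two sum formulas term by term, and substitute the factorized weights. The final bookkeeping you flag is handled in the paper via the single identity $A(\pi^{c}(\bar{t}))^{-1}=A(\bar{t})$, which follows from $f(-a-(s+1),-b-s)=1/f(a,b)$.
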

\begin{proof}
By rearranging the $KT$-relation, we obtain:
\begin{equation}
\mathbf{K}^{\Pi}(u)\langle\Psi^{t_{B}}|\left[\widehat{T}^{t}(-u)\right]=\langle\Psi^{t_{B}}|\left[T^{t}(u)\right]\mathbf{K}^{\Pi}(u).
\end{equation}
Let us introduce the following modified monodromy matrix:
\begin{equation}
\tilde{T}(u)=\widehat{T}^{t}(-u).
\end{equation}
Denote the Bethe vectors defined by the components of $\tilde{T}$
as $\tilde{\mathbb{B}}(\bar{t})$. These Bethe vectors can be expressed
in terms of the original Bethe vectors, according to Lemma \ref{lem:BtB}.
\begin{equation}
\tilde{\mathbb{B}}(\bar{t})=A(\bar{t})\mathbb{B}(\pi^{c}(\bar{t}))\prod_{\nu=1}^{N-1}\alpha_{\nu}(\bar{t}^{\nu}).
\end{equation}
Since the modified monodromy matrix satisfies the $KT$-relation,
the off-shell overlap satisfies the sum formula
\begin{equation}
\langle\Psi^{t_{B}}|\tilde{\mathbb{B}}(\bar{t})=\sum_{\mathrm{part}(\bar{t})}\mathbf{W}_{\mathbf{B}^{t_{B}}}^{\mathbf{K}^{\Pi}}(\bar{t}_{\textsc{i}}|\bar{t}_{\textsc{ii}})\prod_{\nu=1}^{N-1}\alpha_{\nu}(\bar{t}_{\textsc{i}}^{\nu}).
\end{equation}
Now we can use the relation between Bethe overlaps, from which we
obtain:
\begin{equation}
\langle\Psi^{t_{B}}|\tilde{\mathbb{B}}(\bar{t})=A(\bar{t})\left(\langle\Psi|\mathbb{B}(\pi^{c}(\bar{t}))\right)^{t_{B}}\prod_{\nu=1}^{N-1}\alpha_{\nu}(\bar{t}^{\nu})=A(\bar{t})\sum_{\mathrm{part}(\bar{t})}\left(\mathbf{W}_{\mathbf{B}}^{\mathbf{K}}(\pi^{c}(\bar{t}_{\textsc{i}})|\pi^{c}(\bar{t}_{\textsc{ii}}))\right)^{t_{B}}\prod_{\nu=1}^{N-1}\alpha_{\nu}(\bar{t}_{\textsc{ii}}^{\nu}),
\end{equation}
that is,
\begin{equation}
A(\bar{t})\left(\mathbf{W}_{\mathbf{B}}^{\mathbf{K}}(\pi^{c}(\bar{t}_{\textsc{i}})|\pi^{c}(\bar{t}_{\textsc{ii}}))\right)^{t_{B}}=\mathbf{W}_{\mathbf{B}^{t_{B}}}^{\mathbf{K}^{\Pi}}(\bar{t}_{\textsc{ii}}|\bar{t}_{\textsc{i}}).
\end{equation}
Using the explicit form of the weights and the identity $A(\pi^{c}(\bar{t}))^{-1}=A(\bar{t})$,
we obtain the desired formula (\ref{eq:crZZbar}).
\end{proof}
Combining the lemmas from this section, we complete the proof of Theorem
\ref{thm:The-sum-formula}.

\subsection{Poles of the HC-s}

In this subsection, we prove two lemmas that provide the HC $\bar{\mathbf{Z}}(\bar{t})$
recursively. The crossed and uncrossed cases must be treated separately.
In the proof, we follow the calculations in Appendices E and F of
\cite{Gombor:2021hmj}. There, the case $d_{B}=1$ was derived, i.e.,
when the quantities $\bar{\mathbf{Z}}$ and $\mathbf{K}_{i,j}$ are
scalars. Now, these calculations must be generalized to matrix-valued
quantities. The computation is essentially the same; we just need
to pay attention to the order of the matrices.

From the sum formula, we see that the off-shell overlap $\langle\Psi|\mathbb{B}(\bar{t})$
is simply equal to the HC $\bar{\mathbf{Z}}(\bar{t})$ when all $\alpha_{\nu}(t_{k}^{\nu})$
are set to zero. That is, in computing $\bar{\mathbf{Z}}(\bar{t})$,
it is sufficient to consider only the terms independent of $\alpha$.
We introduce the notation $\cong$ to mean that two quantities are
equal in the limit $\alpha_{\nu}(t_{k}^{\nu})\to0$, i.e.,
\begin{equation}
\langle\Psi|\mathbb{B}(\bar{t})\cong\bar{\mathbf{Z}}(\bar{t})\mathbf{B}.
\end{equation}

\begin{lem}
The crossed HC has the following recursion
\begin{multline}
\bar{\mathbf{Z}}(\{z,\bar{t}^{1}\},\left\{ \bar{t}^{\nu}\right\} _{\nu=2}^{N-1})=-\sum_{\mathrm{part}}\mathbf{K}_{1,1}^{-1}(z)\bar{\mathbf{Z}}(\bar{t}_{\textsc{ii}}^{1},\{\bar{t}^{\nu}\}_{\nu=2}^{N-1})\mathbf{G}^{(2)}(z)\frac{f(\bar{t}^{1},-z-1)}{f(\bar{t}^{2},-z-1)f(\bar{t}^{2},z)}\frac{f(\bar{t}_{\textsc{i}}^{1},\bar{t}_{\textsc{ii}}^{1})}{h(\bar{t}_{\textsc{i}}^{1},-z-1)}-\\
-\sum_{i=2}^{N}\mathbf{K}_{1,1}^{-1}(z)\mathbf{K}_{1,i}(z)\sum_{\mathrm{part}}\bar{\mathbf{Z}}(\bar{w}_{\textsc{ii}}^{1},\{\bar{t}_{\textsc{ii}}^{\nu}\}_{\nu=2}^{i-1},\{\bar{t}^{\nu}\}_{\nu=i}^{N-1})\frac{f(\bar{w}_{\textsc{i}}^{1},\bar{w}_{\textsc{ii}}^{1})}{h(\bar{w}_{\textsc{i}}^{1},z)}\frac{f(\bar{t}_{\textsc{i}}^{2},\bar{t}_{\textsc{ii}}^{2})}{h(\bar{t}_{\textsc{i}}^{2},\bar{w}_{\textsc{i}}^{1})f(\bar{t}_{\textsc{i}}^{2},\bar{w}_{\textsc{ii}}^{1})}\prod_{s=3}^{i-1}\frac{f(\bar{t}_{\textsc{i}}^{s},\bar{t}_{\textsc{ii}}^{s})}{h(\bar{t}_{\textsc{i}}^{s},\bar{t}_{\textsc{i}}^{s-1})f(\bar{t}_{\textsc{i}}^{s},\bar{t}_{\textsc{ii}}^{s-1})},\label{eq:rect1}
\end{multline}
In the first line the sum goes over all the partitions of $\bar{t}^{1}\vdash\left\{ \bar{t}_{\textsc{i}}^{1},\bar{t}_{\textsc{ii}}^{1}\right\} $where
$\#\bar{t}_{\textsc{i}}^{1}=1$. In the second line $\bar{w}^{1}=\{z,\bar{t}^{1}\}$
and the sum goes over all the partitions of $\bar{w}^{1}\vdash\left\{ \bar{w}_{\textsc{i}}^{1},\bar{w}_{\textsc{ii}}^{1}\right\} $,
$\bar{t}^{\nu}\vdash\left\{ \bar{t}_{\textsc{i}}^{\nu},\bar{t}_{\textsc{ii}}^{\nu}\right\} $
where $\#\bar{w}_{\textsc{i}}^{1}=\#\bar{t}_{\textsc{i}}^{\nu}=1$
and $\nu=2,\dots,i-1$. 
\end{lem}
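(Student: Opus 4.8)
The plan is to reproduce the derivation of Appendices E and F of \cite{Gombor:2021hmj}, now carrying the boundary-space matrix indices, using throughout that by Lemma~\ref{lem:weigths} (evaluated at $\bar{t}_{\textsc{i}}=\emptyset$) the computation of $\bar{\mathbf{Z}}$ reduces to evaluating $\langle\Psi|\mathbb{B}(\bar{t})\cong\bar{\mathbf{Z}}(\bar{t})\mathbf{B}$ in the limit $\alpha_{\nu}(t_{k}^{\nu})\to 0$, so that $\mathbf{B}$ enters only as a common right factor which cancels from the recursion. First I would strip the root $z$ off the first node by means of the Bethe-vector recurrence $\mathbb{B}(\{z,\bar{t}^{1}\},\dots)=\sum_{j=2}^{N}T_{1,j}(z)\lambda_{2}(z)^{-1}\sum_{\mathrm{part}}\mathbb{B}(\dots)(\dots)$ recalled in Appendix~\ref{sec:Reqursions-for-BS}. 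Since every $j\ge 3$ term carries the factor $\prod_{\nu=2}^{j-1}\alpha_{\nu}(\bar{t}_{\textsc{i}}^{\nu})$ with each $\#\bar{t}_{\textsc{i}}^{\nu}=1$, all of these vanish under $\cong$, and only the $j=2$ term remains: $\langle\Psi|\mathbb{B}(\{z,\bar{t}^{1}\},\{\bar{t}^{\nu}\}_{\nu\ge 2})\cong\lambda_{2}(z)^{-1}f(\bar{t}^{2},\bar{t}^{1})^{-1}\,\langle\Psi|T_{1,2}(z)\,\mathbb{B}(\bar{t}^{1},\{\bar{t}^{\nu}\}_{\nu\ge 2})$.

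Next I would insert the crossed $KT$-relation~(\ref{eq:crossrec}) for $\langle\Psi|T_{1,2}(z)$, which splits the right-hand side into the $\widehat{T}$-family $\sum_{j'}\mathbf{K}_{1,1}^{-1}(z)\,\langle\Psi|\widehat{T}_{1,j'}(-z)\,\mathbf{K}_{j',2}(z)$ and the $T$-family $-\sum_{i=2}^{N}\mathbf{K}_{1,1}^{-1}(z)\mathbf{K}_{1,i}(z)\,\langle\Psi|T_{i,2}(z)$, and then apply the action formulas~(\ref{eq:actTw}) and~(\ref{eq:act}) to the inner Bethe vectors. Taking the $\cong$ limit once more discards every partition-sum term that still carries an $\alpha$; the surviving terms reassemble, via $\langle\Psi|\mathbb{B}(\cdot)\cong\bar{\mathbf{Z}}(\cdot)\mathbf{B}$, into $\bar{\mathbf{Z}}$ of the reduced configurations dressed by the $\mathbf{K}$-factors and the scalar $f$-, $g$-, $h$-weights. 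In the $\widehat{T}$-family the $j'$-summation of the $\mathbf{K}_{j',2}(z)$ blocks, combined with the weights produced by~(\ref{eq:actTw}), telescopes into $\mathbf{K}_{2,2}(z)-\mathbf{K}_{2,1}(z)\mathbf{K}_{1,1}^{-1}(z)\mathbf{K}_{1,2}(z)=\mathbf{G}^{(2)}(z)$, which ends up to the right of $\bar{\mathbf{Z}}$; this is the first line of~(\ref{eq:rect1}). The $T$-family reproduces directly the $i$-sum of the second line, with $\mathbf{K}_{1,1}^{-1}(z)\mathbf{K}_{1,i}(z)$ a left prefactor of $\bar{\mathbf{Z}}$ and the nested partition structure inherited from~(\ref{eq:act}).

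The one genuinely new point relative to the scalar case of \cite{Gombor:2021hmj} is noncommutativity in the boundary space: the bookkeeping must respect that $\mathbf{K}_{1,1}^{-1}(z)$ and $\mathbf{K}_{j',2}(z)$ appearing in the $KT$-relation multiply $\langle\Psi|$ from the left and from the right respectively, and therefore end up on the left and on the right of $\bar{\mathbf{Z}}$. The telescoping into $\mathbf{G}^{(2)}(z)$ in particular uses the commutativity relations $[\mathbf{K}_{1,1}(z),\mathbf{K}_{1,1}(w)]=0$ and $[\mathbf{K}_{1,1}(w),\mathbf{K}_{a,b}^{(2)}(z)]=0$ of Lemmas~\ref{lem:KK} and~\ref{lem:KG} to move the $\mathbf{K}_{1,1}^{-1}(z)$ factors past the emerging $\mathbf{K}^{(2)}$-entries exactly as scalars would be moved. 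The hard part will be purely combinatorial: matching the products of $f$-, $g$- and $h$-weights generated by the Bethe recurrence and by the two action formulas so that, after the $\alpha\to 0$ truncation, the surviving terms regroup into the precise coefficients displayed in~(\ref{eq:rect1}) — this is the content of Appendices E and F of \cite{Gombor:2021hmj}, to be carried out once more while tracking the order of the boundary-space matrices.
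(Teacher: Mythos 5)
Your proposal follows essentially the same route as the paper's proof: strip the root $z$ via the Bethe recurrence (only the $j=2$ term surviving the $\alpha\to0$ truncation), insert the crossed $KT$-relation for $\langle\Psi|T_{1,2}(z)$, apply the action formulas (\ref{eq:actTw}), (\ref{eq:act}), assemble $\mathbf{G}^{(2)}(z)$ from the $\widehat{T}_{1,1}$ and $\widehat{T}_{1,2}$ contributions (which in the paper requires explicitly evaluating the one-root HC $\mathbf{Z}(\{-z-1\},\emptyset^{\times N-2})=\mathbf{K}_{2,1}(z)\mathbf{K}_{1,1}^{-1}(z)$ from the one-excitation overlap), and use the commutation relations of Lemmas~\ref{lem:KK} and~\ref{lem:KG} together with $[\mathbf{G}^{(2)}(z),\mathbf{B}]=0$ to strip the right factor $\mathbf{B}$. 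The only blemish is the intermediate factor $f(\bar{t}^{2},\bar{t}^{1})^{-1}$, which should read $f(\bar{t}^{2},z)^{-1}$; otherwise the argument matches the paper's.
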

\begin{proof}
We can see that this formula matches equation (4.14) from the paper
\cite{Gombor:2021hmj} for scalar quantities. We will follow the proof
described in section E.2 of \cite{Gombor:2021hmj}.

Let us start with the recurrence equation
\begin{equation}
\left\langle \Psi\right|\mathbb{B}(\{z,\bar{t}^{1}\},\left\{ \bar{t}^{\nu}\right\} _{\nu}^{N-1})\cong\left\langle \Psi\right|\frac{T_{1,2}(z)}{\lambda_{2}(z)f(\bar{t}^{2},z)}\mathbb{B}(\bar{t}),\label{eq:recZb}
\end{equation}
and the crossed $KT$-relation 
\begin{equation}
\mathbf{K}_{1,1}(z)\left\langle \Psi\right|T_{1,2}(z)=\sum_{j=1}^{N}\left\langle \Psi\right|\widehat{T}_{1,j}(-z)\mathbf{K}_{j,2}(z)-\sum_{i=2}^{N}\mathbf{K}_{1,i}(z)\left\langle \Psi\right|T_{i,2}(z).\label{eq:KTtw}
\end{equation}
The $T_{1,2}(z)$ term in the r.h.s. of (\ref{eq:recZb}) can be expressed
with $\widehat{T}_{1,k}(-z)$ and $T_{j,2}(z)$ where $k=1,\dots,N$
and $j=2,\dots,N$ therefore
\begin{multline}
\left\langle \Psi\right|\mathbb{B}(\{z,\bar{t}^{1}\},\left\{ \bar{t}^{\nu}\right\} _{\nu=2}^{N-1})\cong\\
\frac{1}{f(\bar{t}^{2},z)}\left[\sum_{j=1}^{N}\mathbf{K}_{1,1}^{-1}(z)\left\langle \Psi\right|\frac{\widehat{T}_{1,k}(-z)}{\lambda_{2}(z)}\mathbb{B}(\bar{t})\mathbf{K}_{k,2}(z)-\sum_{i=2}^{N}\mathbf{K}_{1,1}^{-1}(z)\mathbf{K}_{1,i}(z)\left\langle \Psi\right|\frac{T_{i,2}(z)}{\lambda_{2}(z)}\mathbb{B}(\bar{t})\right].\label{eq:temp3}
\end{multline}
The next step is to apply the action formulas to the expressions
\begin{equation}
\left\langle \Psi\right|\frac{\widehat{T}_{1,k}(-z)}{\lambda_{2}(z)}\mathbb{B}(\bar{t})\quad\text{and}\quad\left\langle \Psi\right|\frac{T_{i,2}(z)}{\lambda_{2}(z)}\mathbb{B}(\bar{t}).
\end{equation}
According to section E.2 of \cite{Gombor:2021hmj} only the terms
with $k=1,2$ contain $\alpha$-independent contributions, and these
are the following:
\begin{equation}
\begin{split}\left\langle \Psi\right|\frac{\widehat{T}_{1,1}(-z)}{\lambda_{2}(z)}\mathbb{B}(\bar{t}) & =\sum_{\mathrm{part}}\bar{\mathbf{Z}}(\bar{t}_{\textsc{ii}}^{1},\{\bar{t}^{\nu}\}_{\nu=2}^{N-1})\mathbf{B}\mathbf{Z}(\{-z-1\},\emptyset^{\times N-2})\frac{f(\bar{t}^{1},-z-1)}{f(\bar{t}^{2},-z-1)}\frac{f(\bar{t}_{\textsc{i}}^{1},\bar{t}_{\textsc{ii}}^{1})}{h(\bar{t}_{\textsc{i}}^{1},-z-1)},\\
\left\langle \Psi\right|\frac{\widehat{T}_{1,2}(-z)}{\lambda_{2}(z)}\mathbb{B}(\bar{t}) & =(-1)\sum_{\mathrm{part}}\bar{\mathbf{Z}}(\bar{t}_{\textsc{ii}}^{1},\{\bar{t}^{\nu}\}_{\nu=2}^{N-1})\mathbf{B}\frac{f(\bar{t}^{1},-z-1)}{f(\bar{t}^{2},-z-1)}\frac{f(\bar{t}_{\textsc{i}}^{1},\bar{t}_{\textsc{ii}}^{1})}{h(\bar{t}_{\textsc{i}}^{1},-z-c)},
\end{split}
\end{equation}
see equations (E.27), (E.29), (E.33), and (E.34) in \cite{Gombor:2021hmj}.
The sums goes over all the partitions of $\bar{t}^{1}\vdash\left\{ \bar{t}_{\textsc{i}}^{1},\bar{t}_{\textsc{ii}}^{1}\right\} $where
$\#\bar{t}_{\textsc{i}}^{1}=1$. The first formula contains the HC
$\mathbf{Z}(\{-z-1\},\emptyset^{\times N-2})$. We can calculate it
from the one-excitation overlap $\langle\Psi|\mathbb{B}(\{z\},\emptyset^{\times N-2})$.
First, we use the recursive formula for the Bethe vectors, then the
$KT$-relation (\ref{eq:KTtw})
\begin{equation}
\langle\Psi|\mathbb{B}(\{z\},\emptyset^{\times N-2})=\frac{1}{\lambda_{2}(z)}\left\langle \Psi\right|T_{1,2}(z)|0\rangle=\frac{\mathbf{K}_{1,1}^{-1}(z)}{\lambda_{2}(z)}\left(\hat{\lambda}_{1}(-z)\mathbf{B}\mathbf{K}_{1,2}(z)-\lambda_{2}(z)\mathbf{K}_{1,2}(z)\mathbf{B}\right).
\end{equation}
Here we used the fact that the pseudo-vacuum is a highest weight state.
Exploiting the symmetry property (\ref{eq:symProp}) and the exchange
relation (\ref{eq:felcs}), we obtain:
\begin{equation}
\langle\Psi|\mathbb{B}(\{z\},\emptyset^{\times N-2})=\alpha_{1}(z)\mathbf{B}\mathbf{K}_{1,1}^{-1}(z)\mathbf{K}_{1,2}(z)-\mathbf{K}_{1,1}^{-1}(z)\mathbf{K}_{1,2}(z)\mathbf{B}.\label{eq:oneroot}
\end{equation}
From this, the highest coefficients (HCs) can be expressed:
\begin{equation}
\bar{\mathbf{Z}}(\{z\},\emptyset^{\times N-2})=-\mathbf{K}_{1,1}^{-1}(z)\mathbf{K}_{1,2}(z),\quad\mathbf{Z}(\{z\},\emptyset^{\times N-2})=\mathbf{K}_{1,1}^{-1}(z)\mathbf{K}_{1,2}(z).
\end{equation}
Using the previously derived identity between the HCs (\ref{eq:ZZBcr}),
we obtain an equivalent formula for $\mathbf{Z}(\{z\},\emptyset^{\times N-2})$:
\begin{equation}
\mathbf{Z}(\{z\},\emptyset^{\times N-2})=-\left[\bar{\mathbf{Z}}^{\mathbf{K}^{T}}(\{-z-1\},\emptyset^{\times N-2})\right]^{t_{B}}=\mathbf{K}_{2,1}(-z-1)\mathbf{K}_{1,1}^{-1}(-z-1).
\end{equation}
Substituting back, we get:
\begin{multline}
\left\langle \Psi\right|\frac{\widehat{T}_{1,1}}{\lambda_{2}(z)}(-z)\mathbb{B}(\bar{t})\mathbf{K}_{1,2}(z)+\left\langle \Psi\right|\frac{\widehat{T}_{1,2}}{\lambda_{2}(z)}(-z)\mathbb{B}(\bar{t})\mathbf{K}_{2,2}(z)=\\
\sum_{\mathrm{part}}\bar{\mathbf{Z}}(\bar{t}_{\textsc{ii}}^{1},\{\bar{t}^{\nu}\}_{\nu=2}^{N-1})\mathbf{B}\left(\mathbf{K}_{2,1}(z)\mathbf{K}_{1,1}^{-1}(z)\mathbf{K}_{1,2}(z)-\mathbf{K}_{2,2}(z)\right)\frac{f(\bar{t}^{1},-z-1)}{f(\bar{t}^{2},-z-1)}\frac{f(\bar{t}_{\textsc{i}}^{1},\bar{t}_{\textsc{ii}}^{1})}{h(\bar{t}_{\textsc{i}}^{1},-z-1)}.
\end{multline}
We observe that on the right-hand side, the quantity $\mathbf{K}_{2,2}^{(2)}(z)=\mathbf{G}^{(2)}(z)$
appears, which commutes with $\mathbf{B}$.

Let us continue with the action of $T_{i,2}$. According to section
E.2 of \cite{Gombor:2021hmj}, the $\alpha$-independent terms are:
\begin{equation}
\left\langle \Psi\right|\frac{T_{i,2}(z)}{\lambda_{2}(z)}\mathbb{B}(\bar{t})\cong\sum_{\mathrm{part}}\bar{\mathbf{Z}}(\bar{w}_{\textsc{ii}}^{1},\{\bar{t}_{\textsc{ii}}^{\nu}\}_{\nu=2}^{i-1},\{\bar{t}^{\nu}\}_{\nu=i}^{N-1})\mathbf{B}\frac{f(\bar{w}_{\textsc{i}}^{1},\bar{w}_{\textsc{ii}}^{1})}{h(\bar{w}_{\textsc{i}}^{1},z)}\frac{f(\bar{t}_{\textsc{i}}^{2},\bar{t}_{\textsc{ii}}^{2})f(\bar{t}^{2},z)}{h(\bar{t}_{\textsc{i}}^{2},\bar{w}_{\textsc{i}}^{1})f(\bar{t}_{\textsc{i}}^{2},\bar{w}_{\textsc{ii}}^{1})}\prod_{s=3}^{i-1}\frac{f(\bar{t}_{\textsc{i}}^{s},\bar{t}_{\textsc{ii}}^{s})}{h(\bar{t}_{\textsc{i}}^{s},\bar{t}_{\textsc{i}}^{s-1})f(\bar{t}_{\textsc{i}}^{s},\bar{t}_{\textsc{ii}}^{s-1})},
\end{equation}
see equation (E.35) in \cite{Gombor:2021hmj}. The sum goes over all
the partitions of $\bar{w}^{1}\vdash\left\{ \bar{w}_{\textsc{i}}^{1},\bar{w}_{\textsc{ii}}^{1}\right\} $,
$\bar{t}^{\nu}\vdash\left\{ \bar{t}_{\textsc{i}}^{\nu},\bar{t}_{\textsc{ii}}^{\nu}\right\} $
where $\bar{w}^{1}=\{z,\bar{t}^{1}\}$ and $\#\bar{w}_{\textsc{i}}^{1}=\#\bar{t}_{\textsc{i}}^{\nu}=1$
for $\nu=2,\dots,i-1$. Substituting back to (\ref{eq:temp3}) we
obtain a recurrence formula (\ref{eq:rect1}) for the HC after the
simplification with $\mathbf{B}$.
\end{proof}
\begin{lem}
The uncrossed HC has the following recursions. The recursion for the
set $\bar{t}^{1}$ is
\begin{multline}
\bar{\mathbf{Z}}(\{z,\bar{t}^{1}\},\left\{ \bar{t}^{\nu}\right\} _{\nu=2}^{N-1})=\\
\sum_{\mathrm{part}}\mathbf{K}_{N,1}^{-1}(z)\bar{\mathbf{Z}}(\{\bar{\omega}_{\textsc{ii}}^{\nu}\}_{\nu=1}^{N-2},\bar{t}_{\textsc{ii}}^{N-1})\mathbf{G}^{(2)}(z)\prod_{s=1}^{N-2}\frac{f(\omega_{\textsc{i}}^{s},\bar{\omega}_{\textsc{ii}}^{s})}{h(\omega_{\textsc{i}}^{s},\omega_{\textsc{i}}^{s-1})f(\omega_{\textsc{i}}^{s},\bar{\omega}_{\textsc{ii}}^{s-1})}\frac{f(\bar{t}_{\textsc{i}}^{N-1},\bar{t}_{\textsc{ii}}^{N-1})f(\bar{t}^{N-1},-z)}{h(\bar{t}_{\textsc{i}}^{N-1},\omega_{\textsc{i}}^{N-2})f(\bar{t}_{\textsc{i}}^{N-1},\bar{\omega}_{\textsc{ii}}^{N-2})f(\bar{t}^{2},z)}-\\
-\sum_{i=2}^{N}\mathbf{K}_{N,1}^{-1}(z)\mathbf{K}_{N,i}(z)\sum_{\mathrm{part}}\bar{\mathbf{Z}}(\bar{w}_{\textsc{ii}}^{1},\{\bar{t}_{\textsc{ii}}^{\nu}\}_{\nu=2}^{i-1},\{\bar{t}^{\nu}\}_{\nu=i}^{N-1})\frac{f(\bar{w}_{\textsc{i}}^{1},\bar{w}_{\textsc{ii}}^{1})}{h(\bar{w}_{\textsc{i}}^{1},z)}\frac{f(\bar{t}_{\textsc{i}}^{2},\bar{t}_{\textsc{ii}}^{2})}{h(\bar{t}_{\textsc{i}}^{2},\bar{w}_{\textsc{i}}^{1})f(\bar{t}_{\textsc{i}}^{2},\bar{w}_{\textsc{ii}}^{1})}\prod_{s=3}^{i-1}\frac{f(\bar{t}_{\textsc{i}}^{s},\bar{t}_{\textsc{ii}}^{s})}{h(\bar{t}_{\textsc{i}}^{s},\bar{t}_{\textsc{i}}^{s-1})f(\bar{t}_{\textsc{i}}^{s},\bar{t}_{\textsc{ii}}^{s-1})},\label{eq:rect1-1}
\end{multline}
In the second line the sum goes for the partitions $\bar{t}^{N-1}\vdash\bar{t}_{\textsc{i}}^{N-1}\cup\bar{t}_{\textsc{ii}}^{N-1}$,$\bar{\omega}^{\nu}\vdash\bar{\omega}_{\textsc{i}}^{\nu}\cup\bar{\omega}_{\textsc{ii}}^{\nu}$,
for $\nu=1,\dots,N-2$ where $\bar{\omega}^{\nu}=\{-z,\bar{t}^{\nu}\}$
and $\#\bar{\omega}_{\textsc{i}}^{\nu}=\#\bar{t}_{\textsc{i}}^{N-1}=1$.
We also set $\bar{\omega}_{\textsc{i}}^{0}=\{-z\}$ and $\bar{\omega}_{\textsc{i}}^{0}=\emptyset$.
In the second line $\bar{w}^{1}=\{z,\bar{t}^{1}\}$ and the sum goes
over all the partitions of $\bar{w}^{1}\vdash\left\{ \bar{w}_{\textsc{i}}^{1},\bar{w}_{\textsc{ii}}^{1}\right\} $,
$\bar{t}^{\nu}\vdash\left\{ \bar{t}_{\textsc{i}}^{\nu},\bar{t}_{\textsc{ii}}^{\nu}\right\} $
where $\#\bar{w}_{\textsc{i}}^{1}=\#\bar{t}_{\textsc{i}}^{\nu}=1$
and $\nu=2,\dots,i-1$. 

The recursion for the set $\bar{t}^{N-1}$ is
\begin{multline}
\bar{\mathbf{Z}}(\left\{ \bar{t}^{s}\right\} _{s=1}^{N-2},\{z,\bar{t}^{N-1}\})=\sum_{i=1}^{N}\sum_{j=1}^{N-1}f(\bar{t}^{1},-z)\sum_{\mathrm{part}}\mathbf{K}_{j,i}(-z)\bar{\mathbf{Z}}(\left\{ \bar{t}_{\textsc{ii}}^{s}\right\} _{s=1}^{N-2},\bar{t}^{N-1})\mathbf{K}_{N,1}^{-1}(-z)\times\\
\prod_{s=1}^{i-1}\frac{f(\bar{t}_{\textsc{i}}^{s},\bar{t}_{\textsc{ii}}^{s})}{h(\bar{t}_{\textsc{i}}^{s},\bar{t}_{\textsc{i}}^{s-1})f(\bar{t}_{\textsc{i}}^{s},\bar{t}_{\textsc{ii}}^{s-1})}\frac{\prod_{\nu=j}^{N-2}g(\bar{t}_{\textsc{iii}}^{\nu+1},\bar{t}_{\textsc{iii}}^{\nu})f(\bar{t}_{\textsc{iii}}^{\nu},\bar{t}_{\textsc{ii}}^{\nu})f(\bar{t}_{\textsc{iii}}^{\nu},\bar{t}_{\textsc{i}}^{\nu})}{\prod_{\nu=j}^{N-1}f(\bar{t}_{\textsc{iii}}^{\nu},\bar{t}^{\nu-1})},\label{eq:recUTWN}
\end{multline}
where we sum up to the partitions $\bar{t}^{\nu}\vdash\bar{t}_{\textsc{i}}^{\nu}\cup\bar{t}_{\textsc{ii}}^{\nu}\cup\bar{t}_{\textsc{iii}}^{\nu}$
for $\nu=1,\dots,N-2$ where $\#\bar{t}_{\textsc{i}}^{\nu}=\Theta(i-1-\nu)$,
$\#\bar{t}_{\textsc{iii}}^{\nu}=\Theta(\nu-j)$ and $\bar{t}_{\textsc{iii}}^{N-1}=\{z\},\bar{t}_{\textsc{ii}}^{N-1}=\bar{t}^{N-1}$
and $\bar{t}_{\textsc{i}}^{0}=\{-z\}$. 
\end{lem}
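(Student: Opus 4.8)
The plan is to transplant, to the matrix-valued setting, the derivation of Appendices E and F of \cite{Gombor:2021hmj}, where the $d_B=1$ versions of \eqref{eq:rect1-1} and \eqref{eq:recUTWN} were obtained; the only substantively new feature is that $\bar{\mathbf{Z}}$, the entries $\mathbf{K}_{i,j}$, the $\mathbf{G}^{(k)}$ and $\mathbf{B}$ are now operators on $\mathcal{H}_B$, so at every step the factors must be kept in the order in which they arise. Throughout, ``$\cong$'' means equality in the limit $\alpha_\nu(t^\nu_k)\to0$, so by the sum formula (Lemmas \ref{lem:sum}, \ref{lem:weigths}) it suffices to retain only the $\alpha$-independent contributions; at the end one reads off $\bar{\mathbf{Z}}(\bar t)$ from $\langle\Psi|\mathbb{B}(\bar t)\cong\bar{\mathbf{Z}}(\bar t)\mathbf{B}$ and cancels the common right factor $\mathbf{B}$, which is legitimate because all of $\mathbf{K}_{N,1}^{\pm1}(z)$ and the $\mathbf{G}^{(k)}(z)$ commute with $\mathbf{B}$ by Lemma \ref{lem:KK-1} and Theorem \ref{thm:uncrossed-nestedK}.

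For \eqref{eq:rect1-1} I peel a single root $z$ off the first node with the $T_{1,j}$-recurrence for Bethe vectors; in the $\alpha\to0$ limit only the $j=2$, all-$\bar t^\nu_{\textsc{i}}$-empty term survives, giving $\langle\Psi|\mathbb{B}(\{z,\bar t^1\},\dots)\cong\langle\Psi|\frac{T_{1,2}(z)}{\lambda_2(z)f(\bar t^2,z)}\mathbb{B}(\bar t)$. I then insert the $(N,2)$-component of the uncrossed $KT$-relation to replace $\langle\Psi|T_{1,2}(z)$ by $\sum_{j}\mathbf{K}_{N,1}^{-1}(z)\langle\Psi|T_{N,j}(-z)\mathbf{K}_{j,2}(z)$ minus $\sum_{i\ge2}\mathbf{K}_{N,1}^{-1}(z)\mathbf{K}_{N,i}(z)\langle\Psi|T_{i,2}(z)$, and apply the action formula \eqref{eq:act} to each product hitting $\mathbb{B}(\bar t)$. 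Exactly as in the scalar case only $j=N$ and $j=N-1$ contribute $\alpha$-independent terms among the $T_{N,j}(-z)$: after evaluating the required seed one-excitation HC from the one-root overlap $\langle\Psi|\mathbb{B}(\emptyset^{\times N-2},\{z\})$ (via the $KT$-relation, the symmetry \eqref{eq:symProp}, the $\mathbf{Z}\leftrightarrow\bar{\mathbf{Z}}$ identity \eqref{eq:uncrZZbar}, and $[\mathbf{G}^{(k)}(z),\mathbf{B}]=0$) and combining the $j=N,N-1$ pieces, the difference $\mathbf{K}_{N-1,2}(z)-\mathbf{K}_{N-1,1}(z)\mathbf{K}_{N,1}^{-1}(z)\mathbf{K}_{N,2}(z)=\mathbf{K}^{(2)}_{N-1,2}(z)=\mathbf{G}^{(2)}(z)$ materialises, giving the first line of \eqref{eq:rect1-1}; the $T_{i,2}(z)$ terms, expanded via \eqref{eq:act}, give the second line, and the $\bar\omega^\nu=\{-z,\bar t^\nu\}$ partitions are precisely the ones produced by \eqref{eq:act} with $i=N$, $j=N-1$.

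For \eqref{eq:recUTWN} I instead peel $z$ off the last node with the recurrence \eqref{eq:rec2} (no $\alpha\to0$ collapse is needed here), which writes the overlap through $\langle\Psi|T_{j,N}(z)$, $j=1,\dots,N-1$, and then use the $(k,1)$-component of the uncrossed $KT$-relation at spectral parameter $-z$ to rewrite each $\langle\Psi|T_{j,N}(z)$ as $\sum_{i}\mathbf{K}_{j,i}(-z)\langle\Psi|T_{i,1}(-z)\mathbf{K}_{N,1}^{-1}(-z)$ plus a remainder built from $\langle\Psi|T_{j,j'}(z)$, $j'<N$. Applying \eqref{eq:act} to $T_{i,1}(-z)\mathbb{B}(\bar t)$ and keeping the $\alpha$-independent terms produces exactly the three-fold partitions $\bar t^\nu\vdash\bar t^\nu_{\textsc{i}}\cup\bar t^\nu_{\textsc{ii}}\cup\bar t^\nu_{\textsc{iii}}$ of \eqref{eq:recUTWN} with the stated cardinalities; one verifies that the remainder is either $\alpha$-dependent or reorganises into the same double sum, and $[\mathbf{K}_{N,1}(z),\mathbf{B}]=0$ lets $\mathbf{K}_{N,1}^{-1}(-z)$ be placed to the right of $\bar{\mathbf{Z}}$. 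Cancelling $\mathbf{B}$ gives \eqref{eq:recUTWN}.

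The main obstacle is the combinatorial bookkeeping that the $d_B=1$ proof hides: on one hand, picking out which of the many terms generated by \eqref{eq:act} (three subset-types per node, several choices of the summation index, and the nested sub-sums) are $\alpha$-independent; on the other hand, certifying that every operator rearrangement used — moving $\mathbf{K}_{N,1}^{-1}$, assembling and relocating $\mathbf{G}^{(2)}(z)$, and passing the seed HCs across $\mathbf{B}$ — is justified by the commutation relations \eqref{eq:comKK-1}, \eqref{eq:comGK} and Theorem \ref{thm:uncrossed-nestedK}. A useful consistency check is that the matrix-ordered recursions collapse to the scalar formulas of \cite{Gombor:2021hmj} when $d_B=1$, which fixes the placement of each factor unambiguously.
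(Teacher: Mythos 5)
Your proposal is correct and follows essentially the same route as the paper: the paper's own proof simply refers the reader to the scalar derivation of Appendix E of \cite{Gombor:2021hmj}, noting that the only new ingredients are the identification $\mathbf{K}_{N-1,2}(z)-\mathbf{K}_{N-1,1}(z)\mathbf{K}_{N,1}^{-1}(z)\mathbf{K}_{N,2}(z)=\mathbf{G}^{(2)}(z)$ and the commutation relation $[\mathbf{B},\mathbf{G}^{(2)}(z)]=0$, which are exactly the points you single out. Your write-up is in fact more explicit than the paper's about which components of the $KT$-relation and which terms of the action formulas survive the $\alpha\to0$ limit, but the underlying argument is the same.
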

\begin{proof}
We can see that this formula matches equations (4.11) and (4.12) from
the paper \cite{Gombor:2021hmj} for scalar quantities. In the lemma
concerning the crossed highest coefficient (HC), we saw that the proof
can be generalized to matrix-valued quantities with minimal modifications.
This also holds true in the uncrossed case. If we follow the derivation
in section E.1 of \cite{Gombor:2021hmj} and apply the substitution
\begin{equation}
\mathbf{G}^{(2)}(z)=\mathbf{K}_{N-1,2}^{(2)}(z)=\mathbf{K}_{N-1,2}(z)-\mathbf{K}_{N-1,1}(z)\mathbf{K}_{N,1}^{-1}(z)\mathbf{K}_{N,2}(z)
\end{equation}
using the commutation relation $[\mathbf{B},\mathbf{G}^{(2)}(z)]=0$,
we obtain the proof of the lemma.
\end{proof}
\begin{lem}
\label{lem:HC-pole-cr}The crossed HC has poles at $t_{l}^{1}\to-t_{k}^{1}-1$:
\begin{equation}
\bar{\mathbf{Z}}(\bar{t})\to\frac{1}{t_{l}^{1}+t_{k}^{1}+1}\frac{f(\bar{\tau}^{1},t_{l}^{1})f(\bar{\tau}^{1},t_{k}^{1})}{f(\bar{t}^{2},t_{l}^{1})f(\bar{t}^{2},t_{k}^{1})}\bar{\mathbf{Z}}(\bar{\tau})\mathbf{F}^{(1)}(t_{k}^{1})+reg.\label{eq:poleHC}
\end{equation}
where $\bar{\tau}=\bar{t}\backslash\{t_{k}^{1},t_{l}^{1}\}$ .
\end{lem}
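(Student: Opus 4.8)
The statement is the boundary-space–valued version of the scalar pole formula proved in Appendix~E of \cite{Gombor:2021hmj}, so the plan is to re-run that argument while keeping track of the order of the (now non-commuting) factors. The tool is the crossed-HC recursion \eqref{eq:rect1}. I would apply it with the auxiliary parameter $z$ set equal to one of the two colliding roots of the first node — say $z=t_k^1$, so that the set $\bar t^1$ of \eqref{eq:rect1} is $\{t_l^1\}\cup\bar\tau^1$ — and then read off the singular part of the resulting expression as $t_l^1\to -z-1$. The whole argument is organised as an induction on $\#\bar t^1$ (equivalently on $\#\bar t$), the base case being the one-root overlap already computed in the proof of \eqref{eq:rect1}, namely $\bar{\mathbf Z}(\{z\},\emptyset^{\times N-2})=-\mathbf K_{1,1}^{-1}(z)\mathbf K_{1,2}(z)$.

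The first step is to locate the singular contributions. The parameter $-z-1$ enters \eqref{eq:rect1} only through the first family of terms, and there only via $f(\bar t^1,-z-1)$, $f(\bar t^2,-z-1)$ and $1/h(\bar t_{\textsc{i}}^1,-z-1)$; since $\#\bar t_{\textsc{i}}^1=1$, the only partition that can produce a pole at $t_l^1\to -z-1$ is the one with $\bar t_{\textsc{i}}^1=\{t_l^1\}$ and $\bar t_{\textsc{ii}}^1=\bar\tau^1$. Collecting that term gives the operator structure $-\mathbf K_{1,1}^{-1}(z)\,\bar{\mathbf Z}(\bar\tau)\,\mathbf G^{(2)}(z)$ times an explicit product of $f$-, $g$- and $h$-functions; using the elementary relations between these functions together with $z=t_k^1$ and the pair relation imposed inside the $f$'s, that product collapses to $\dfrac{1}{t_l^1+t_k^1+1}\,\dfrac{f(\bar\tau^1,t_l^1)f(\bar\tau^1,t_k^1)}{f(\bar t^2,t_l^1)f(\bar t^2,t_k^1)}$, which is the scalar prefactor in \eqref{eq:poleHC}. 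The genuinely new point relative to \cite{Gombor:2021hmj} is to convert $-\mathbf K_{1,1}^{-1}(t_k^1)\,\bar{\mathbf Z}(\bar\tau)\,\mathbf G^{(2)}(t_k^1)$ into $\bar{\mathbf Z}(\bar\tau)\,\mathbf F^{(1)}(t_k^1)$, with $\mathbf F^{(1)}(z)=-[\mathbf G^{(1)}(z)]^{-1}\mathbf G^{(2)}(z)$ and $\mathbf G^{(1)}=\mathbf K_{1,1}$ (recall $\tilde c=1$ in the crossed case). This is where the commutativity relations \eqref{eq:felcs} of Theorem~\ref{thm:crossed-nested-K} are used: the $\mathbf G$-operators form a commutative family commuting with the full nested matrix $\mathbf K^{(2)}$ and with $\mathbf B$, and the inductive structure of \eqref{eq:rect1} lets one present $\bar{\mathbf Z}(\bar\tau)$ in a form on which $[\mathbf G^{(1)}(t_k^1)]^{-1}$ can be carried to the right and fused with $\mathbf G^{(2)}(t_k^1)$ into $\mathbf F^{(1)}(t_k^1)$.

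The main obstacle is not the residue of this distinguished term but the claim that \emph{everything else} in \eqref{eq:rect1} is regular at $t_l^1\to -t_k^1-1$: the second family of terms, and all the other partitions of the first family, naively develop spurious higher-order poles from colliding $f$-functions, and one must show that these cancel. This is precisely the combinatorial identity among the $f$-, $g$- and $h$-functions that underlies the corresponding step in \cite{Gombor:2021hmj}; porting it here is routine but lengthy, the only new feature being the need to keep the boundary-space matrices in the correct order throughout. A minor additional check is that the collapsed prefactor is the manifestly $t_k^1\leftrightarrow t_l^1$–symmetric expression displayed in \eqref{eq:poleHC}, so that the residue does not depend on which of the two colliding roots was taken as $z$ in the recursion; this follows once the substitution $t_l^1=-t_k^1-1$ is made inside the $f$-functions.
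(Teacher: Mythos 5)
Your strategy of feeding one of the colliding roots into the recursion as the distinguished variable $z=t_k^1$ creates a genuine gap that the paper's proof is specifically designed to avoid. First, your localization of the singularity is wrong: the factor $f(\bar t^1,-z-1)$ in the first family of \eqref{eq:rect1} is \emph{partition-independent} and contains $f(t_l^1,-z-1)=(t_l^1+z+2)/(t_l^1+z+1)$, so \emph{every} partition of the first family is singular at $t_l^1\to-z-1$, not only the one with $\bar t_{\textsc{i}}^1=\{t_l^1\}$; in addition, the second family is singular whenever $z$ and $t_l^1$ both land in $\bar w_{\textsc{ii}}^1$, because the inner HC then has a pole at $t_l^1\to-z-1$ by the very statement being proved (available inductively). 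Second, even the residue of your distinguished term does not match \eqref{eq:poleHC}: it is proportional to $-\mathbf K_{1,1}^{-1}(t_k^1)\,\bar{\mathbf Z}(\bar\tau)\,\mathbf G^{(2)}(t_k^1)$ with scalar factor $f(\bar\tau^1,t_l^1)f(t_l^1,\bar\tau^1)$, whereas the target is $-\bar{\mathbf Z}(\bar\tau)\,\mathbf K_{1,1}^{-1}(t_k^1)\,\mathbf G^{(2)}(t_k^1)$ with $f(\bar\tau^1,t_l^1)f(\bar\tau^1,t_k^1)$. When $\bar\tau^1\neq\emptyset$ you cannot commute $\mathbf K_{1,1}^{-1}(t_k^1)$ through $\bar{\mathbf Z}(\bar\tau)$: that HC is built from $\mathbf K_{1,i}$ and $\mathbf K_{1,1}^{-1}$ at other arguments, and $[\mathbf K_{1,1}(u),\mathbf K_{1,b}(v)]\neq0$; the relations \eqref{eq:felcs} only give commutativity with the components of $\mathbf K^{(2)}$ and with $\mathbf B$. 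So the remaining singular terms do not cancel --- they must supply exactly the missing commutator and the scalar correction --- and organizing that is a substantially harder computation than the one you describe as routine.

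The paper sidesteps all of this by inducting on $r_1$ with the colliding pair $(t_k^1,t_l^1)$ taken \emph{inside} $\bar t^1$, disjoint from the recursion variable $z$. Then none of the explicit rational prefactors in \eqref{eq:rect1} is singular at $t_k^1+t_l^1+1=0$; the pole comes exclusively from the inner HCs, the induction hypothesis applies to each of them, the factor $\mathbf F^{(1)}(t_k^1)$ is pulled to the far right using only $[\mathbf F^{(1)}(u),\mathbf G^{(2)}(v)]=0$, and the resulting sum of residues is recognized verbatim as the right-hand side of the recursion \eqref{eq:rect1} evaluated on $\{z,\bar\tau^1\}$, i.e.\ as $\bar{\mathbf Z}(\{z,\bar\tau^1\},\{\bar t^\nu\}_{\nu=2}^{N-1})\mathbf F^{(1)}(t_k^1)$. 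The only place where a $z$-involving collision must be faced is the base case $r_1=2$, and there your extraction does work --- precisely because $\bar\tau^1=\emptyset$, so $\bar{\mathbf Z}(\bar\tau)$ is built from $\mathbf K^{(2)}$ alone and commutes with $\mathbf K_{1,1}^{-1}(z)$. You should restructure the induction along these lines.
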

\begin{proof}
We follow section F.1 $(\nu=1)$ of \cite{Gombor:2021hmj}, which
concerns the poles of the scalar HC. The proof proceeds by induction
on $r_{1}$. 

Let us calculate the pole for $r_{1}=2$. Substitute into (\ref{eq:rect1}).
At the pole $t^{1}+z+1\to0$, only the first term on the right-hand
side contains a contribution.
\begin{equation}
\bar{\mathbf{Z}}(\{z,t^{1}\},\left\{ \bar{t}^{\nu}\right\} _{\nu=2}^{N-1})\to-\frac{1}{t^{1}+z+1}\frac{1}{f(\bar{t}^{2},-z-1)f(\bar{t}^{2},z)}\mathbf{K}_{1,1}^{-1}(z)\bar{\mathbf{Z}}(\emptyset,\{\bar{t}^{\nu}\}_{\nu=2}^{N-1})\mathbf{G}^{(2)}(z)+reg.
\end{equation}
Since the HC $\bar{\mathbf{Z}}(\emptyset,\{\bar{t}^{\nu}\}_{\nu=2}^{N-1})$
is built from the components of second nested $K$-mátrix $\mathbf{K}^{(2)}$
which commute with $\mathbf{K}_{1,1}^{-1}$, we have
\begin{equation}
\left[\mathbf{K}_{1,1}^{-1}(z),\bar{\mathbf{Z}}(\emptyset,\{\bar{t}^{\nu}\}_{\nu=2}^{N-1})\right]=0,
\end{equation}
therefore we just proved (\ref{eq:poleHC}) for $r_{2}=2$. 

Now assume the statement holds for $r_{1}$ or fewer Bethe roots of
type $\bar{t}^{1}$. Use the recursive equation (\ref{eq:rect1})
for $r_{1}+1$ first-type roots:
\begin{multline}
\bar{\mathbf{Z}}(\{z,\bar{t}^{1}\},\left\{ \bar{t}^{\nu}\right\} _{\nu=2}^{N-1})=-\sum_{\mathrm{part}}\mathbf{K}_{1,1}^{-1}(z)\bar{\mathbf{Z}}(\bar{t}_{\textsc{ii}}^{1},\{\bar{t}^{\nu}\}_{\nu=2}^{N-1})\mathbf{G}^{(2)}(z)\frac{f(\bar{t}^{1},-z-1)}{f(\bar{t}^{2},-z-1)f(\bar{t}^{2},z)}\frac{f(\bar{t}_{\textsc{i}}^{1},\bar{t}_{\textsc{ii}}^{1})}{h(\bar{t}_{\textsc{i}}^{1},-z-1)}-\\
-\sum_{i=2}^{N}\mathbf{K}_{1,1}^{-1}(z)\mathbf{K}_{1,i}(z)\sum_{\mathrm{part}}\bar{\mathbf{Z}}(\bar{w}_{\textsc{ii}}^{1},\{\bar{t}_{\textsc{ii}}^{\nu}\}_{\nu=2}^{i-1},\{\bar{t}^{\nu}\}_{\nu=i}^{N-1})\frac{f(\bar{w}_{\textsc{i}}^{1},\bar{w}_{\textsc{ii}}^{1})}{h(\bar{w}_{\textsc{i}}^{1},z)}\frac{f(\bar{t}_{\textsc{i}}^{2},\bar{t}_{\textsc{ii}}^{2})}{h(\bar{t}_{\textsc{i}}^{2},\bar{w}_{\textsc{i}}^{1})f(\bar{t}_{\textsc{i}}^{2},\bar{w}_{\textsc{ii}}^{1})}\prod_{s=3}^{i-1}\frac{f(\bar{t}_{\textsc{i}}^{s},\bar{t}_{\textsc{ii}}^{s})}{h(\bar{t}_{\textsc{i}}^{s},\bar{t}_{\textsc{i}}^{s-1})f(\bar{t}_{\textsc{i}}^{s},\bar{t}_{\textsc{ii}}^{s-1})}.\label{eq:rect1-2}
\end{multline}
Now examine the pole at $t_{k}^{1}+t_{l}^{1}+1=0$. Since the HC is
a symmetric function of the first-type roots, it is sufficient to
prove the statement for this pair. On the right-hand side of the equation,
only the HCs contain such a pole, but in those HCs, there are $r_{1}$
or fewer first-type Bethe roots. We apply the induction hypothesis
to those. The pole appears in the first line only if $t_{k}^{1},t_{l}^{1}\in\bar{t}_{\textsc{ii}}^{1}$,
and in the second line if $t_{k}^{1},t_{l}^{1}\in\bar{w}_{\textsc{ii}}^{1}$.
The pole can be written in the following form
\begin{equation}
\bar{\mathbf{Z}}(\{z,\bar{t}^{1}\},\left\{ \bar{t}^{\nu}\right\} _{\nu=2}^{N-1})\to\frac{1}{t_{l}^{1}+t_{k}^{1}+1}\frac{f(\bar{\mathrm{w}}^{1},t_{l}^{1})f(\bar{\mathrm{w}}^{1},t_{k}^{1})}{f(\bar{t}^{2},t_{l}^{1})f(\bar{t}^{2},t_{k}^{1})}\mathbf{Q}(\bar{\mathrm{w}}^{1},\left\{ \bar{t}^{\nu}\right\} _{\nu=2}^{N-1})\mathbf{F}^{(1)}(t_{k}^{1})+reg.\label{eq:Zbw}
\end{equation}
where $\bar{\mathrm{w}}^{1}=\bar{w}^{1}\backslash\{t_{k}^{1},t_{l}^{1}\}=\{z,\bar{\tau}^{1}\}$.
Additionally, we introduce the notation $\bar{\mathrm{w}}_{\textsc{ii}}^{1}=\bar{w}_{\textsc{ii}}^{1}\backslash\{t_{k}^{1},t_{l}^{1}\}$.
Using the induction hypothesis, the right-hand side of (\ref{eq:rect1-2})
for each HC gives a $Q$-operator that can be expressed as follows
\begin{multline}
\mathbf{Q}(\{z,\bar{\tau}^{1}\},\left\{ \bar{t}^{\nu}\right\} _{\nu=2}^{N-1})=-\sum_{\mathrm{part}}\mathbf{K}_{1,1}^{-1}(z)\bar{\mathbf{Z}}(\bar{\tau}_{\textsc{ii}}^{1},\{\bar{t}^{\nu}\}_{\nu=2}^{N-1})\mathbf{G}^{(2)}(z)\frac{f(\bar{\tau}^{1},-z-1)}{f(\bar{t}^{2},-z-1)f(\bar{t}^{2},z)}\frac{f(\bar{\tau}_{\textsc{i}}^{1},\bar{\tau}_{\textsc{ii}}^{1})}{h(\bar{\tau}_{\textsc{i}}^{1},-z-1)}-\\
-\sum_{i=2}^{N}\mathbf{K}_{1,1}^{-1}(z)\mathbf{K}_{1,i}(z)\sum_{\mathrm{part}}\bar{\mathbf{Z}}(\bar{\mathrm{w}}_{\textsc{ii}}^{1},\{\bar{t}_{\textsc{ii}}^{\nu}\}_{\nu=2}^{i-1},\{\bar{t}^{\nu}\}_{\nu=i}^{N-1})\frac{f(\bar{\mathrm{w}}_{\textsc{i}}^{1},\bar{\mathrm{w}}_{\textsc{ii}}^{1})}{h(\bar{\mathrm{w}}_{\textsc{i}}^{1},z)}\frac{f(\bar{t}_{\textsc{i}}^{2},\bar{t}_{\textsc{ii}}^{2})}{h(\bar{t}_{\textsc{i}}^{2},\bar{\mathrm{w}}_{\textsc{i}}^{1})f(\bar{t}_{\textsc{i}}^{2},\bar{\mathrm{w}}_{\textsc{ii}}^{1})}\prod_{s=3}^{i-1}\frac{f(\bar{t}_{\textsc{i}}^{s},\bar{t}_{\textsc{ii}}^{s})}{h(\bar{t}_{\textsc{i}}^{s},\bar{t}_{\textsc{i}}^{s-1})f(\bar{t}_{\textsc{i}}^{s},\bar{t}_{\textsc{ii}}^{s-1})},\label{eq:Qeq}
\end{multline}
where we used the commutation relation
\begin{equation}
[\mathbf{F}^{(1)}(u),\mathbf{G}^{(2)}(v)]=0.
\end{equation}
We observe that the right-hand side of (\ref{eq:Qeq}) is exactly
the right-hand side of the recursive equation (\ref{eq:rect1}), i.e.,
\begin{equation}
\mathbf{Q}(\{z,\bar{\tau}^{1}\},\left\{ \bar{t}^{\nu}\right\} _{\nu=2}^{N-1})=\bar{\mathbf{Z}}(\{z,\bar{\tau}^{1}\},\left\{ \bar{t}^{\nu}\right\} _{\nu=2}^{N-1}).
\end{equation}
Substituting back into (\ref{eq:Zbw}), the lemma holds for $r_{1}+1$
first-type Bethe roots.
\end{proof}
\begin{lem}
\label{lem:HC-pole-uncr}The uncrossed HC-s have a pole at $t_{l}^{N-1}\to-t_{k}^{1}$:
\begin{equation}
\bar{\mathbf{Z}}(\bar{t})\to\frac{1}{t_{l}^{N-1}+t_{k}^{1}}\frac{f(\bar{\tau}^{1},t_{k}^{1})f(\bar{\tau}^{N-1},t_{l}^{N-1})}{f(\bar{t}^{2},t_{k}^{1})}\bar{\mathbf{Z}}(\bar{\tau})\mathbf{F}^{(1)}(t_{k}^{1})+reg.\label{eq:poleHC-2}
\end{equation}
where $\bar{\tau}=\bar{t}\backslash\{t_{k}^{1},t_{l}^{N-1}\}$.
\end{lem}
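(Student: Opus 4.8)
The plan is to mirror the proof of Lemma \ref{lem:HC-pole-cr} (the crossed case), which is carried out by induction on $r_1$, now using the uncrossed recursions (\ref{eq:rect1-1}) and (\ref{eq:recUTWN}) for $\bar{\mathbf{Z}}$. The pole we are after is the one where a root $t_l^{N-1}$ of the last type collides with $-t_k^1$, a root of the first type, which is precisely the achiral pair-structure limit at node $\nu=1$. The base case is $r_1 = r_{N-1} = 1$: apply (\ref{eq:rect1-1}) with a single root $z=t_k^1$ in the first node and a single root in the $(N-1)$-th node; at the pole $t_l^{N-1}+z\to 0$ only the first term survives (the one carrying $\mathbf{G}^{(2)}(z)$ and the explicit $f(\bar t^{N-1},-z)$ factor), and after extracting the residue the prefactor collapses to the claimed shape with $\mathbf{F}^{(1)}(t_k^1) = [\mathbf{G}^{(1)}(t_k^1)]^{-1}\mathbf{G}^{(2)}(t_k^1)$, using that $\mathbf{K}_{N,1}^{-1}(z) = [\mathbf{G}^{(1)}(z)]^{-1}$ and that $\mathbf{G}^{(1)}$ commutes with $\bar{\mathbf{Z}}(\emptyset,\dots)$ (built from $\mathbf{K}^{(2)}$ components) by Lemma \ref{lem:KG-1}.

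For the inductive step I would assume (\ref{eq:poleHC-2}) holds whenever there are at most $r_1$ roots of the first type, then write (\ref{eq:rect1-1}) for $r_1+1$ first-type roots. Since $\bar{\mathbf{Z}}$ is symmetric in each set of roots it suffices to exhibit the pole in the chosen pair $\{t_k^1,t_l^{N-1}\}$. On the right-hand side of (\ref{eq:rect1-1}) the pole at $t_k^1 + t_l^{N-1}\to 0$ can only come from the inner $\bar{\mathbf{Z}}$ factors, which by hypothesis each contribute a residue proportional to some $\mathbf{Q}(\dots)\mathbf{F}^{(1)}(t_k^1)$; one then checks that $\mathbf{F}^{(1)}$ is pulled to the far right through everything (it commutes with $\mathbf{G}^{(2)}$, with the $\mathbf{K}^{(2)}$-block components, and with $\mathbf{B}$, by the corollary to Theorem \ref{thm:uncrossed-nestedK} and Lemma \ref{lem:KG-1}), so that the leftover kernel $\mathbf{Q}(\{z,\bar\tau^1\},\dots)$ satisfies exactly the recursion (\ref{eq:rect1-1}) with $\bar\tau$ in place of $\bar t$, hence equals $\bar{\mathbf{Z}}(\bar\tau)$. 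Reassembling the combinatorial $f$- and $h$-factors — taking care that the partitions of $\bar t^{N-1}$ split so that the colliding root sits in the appropriate subset — reproduces the claimed coefficient $\tfrac{1}{t_l^{N-1}+t_k^1}\tfrac{f(\bar\tau^1,t_k^1)f(\bar\tau^{N-1},t_l^{N-1})}{f(\bar t^2,t_k^1)}$. The same argument, or rather its mirror image via the $\bar t^{N-1}$-recursion (\ref{eq:recUTWN}), handles the case where $r_{N-1}$ rather than $r_1$ is incremented, and the two are compatible by the symmetry of $\bar{\mathbf{Z}}$.

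The main obstacle, as in the $d_B=1$ case of \cite{Gombor:2021hmj}, is bookkeeping the product of scalar $f$, $g$, $h$ factors over the many partitions so that everything not involving the colliding pair cancels and one is left precisely with the stated residue; the genuinely new point compared to the scalar derivation is that $\bar{\mathbf{Z}}$, $\mathbf{G}^{(k)}$ and $\mathbf{K}_{i,j}$ are now matrices in the boundary space, so at each step one must justify that the $\mathbf{F}^{(1)}$-factor commutes past the other matrix factors — this is exactly what the commutation relations in Theorem \ref{thm:uncrossed-nestedK} (and its corollary) and Lemma \ref{lem:KG-1} provide, so the ordering issue is controlled rather than an obstruction. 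Thus the proof is essentially a transcription of the crossed argument with the replacements $\mathbf{K}_{1,1}\to\mathbf{K}_{N,1}$, $-z-1\to -z$, $\pi^c\to\pi^a$ at node $1$, and citing the uncrossed analogues of the auxiliary lemmas.
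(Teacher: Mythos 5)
Your proposal is correct and follows essentially the same route as the paper: the base case $r_{1}=r_{N-1}=1$ is handled via the recursion (\ref{eq:rect1-1}), where the pole comes from the explicit $f(\bar{t}^{N-1},-z)$ factor in the first term (the paper additionally notes that the denominator factors $f(-z,\bar{\omega}_{\textsc{ii}}^{s-1})$ force a single surviving partition $\bar{\omega}_{\textsc{i}}^{s}=\{-z\}$), and $\mathbf{K}_{N,1}^{-1}(z)=[\mathbf{G}^{(1)}(z)]^{-1}$ is commuted past $\bar{\mathbf{Z}}(\emptyset,\dots)$ exactly as you describe. The inductive step in both $r_{1}$ and $r_{N-1}$ is likewise reduced, as in your plan, to the scalar bookkeeping of \cite{Gombor:2021hmj} together with the commutation relations $[\mathbf{F}^{(1)}(u),\mathbf{G}^{(1)}(v)]=[\mathbf{F}^{(1)}(u),\mathbf{G}^{(2)}(v)]=0$.
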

\begin{proof}
We follow section F.2 $(\nu=1)$ of \cite{Gombor:2021hmj}, which
concerns the poles of the scalar highest coefficient (HC). The proof
proceeds by induction on both $r_{1}$ and $r_{N-1}$. 

Let us calculate the pole for $r_{1}=r_{N-1}=1$:
\begin{multline}
\bar{\mathbf{Z}}(\{z\},\left\{ \bar{t}^{\nu}\right\} _{\nu=2}^{N-2},\{t^{N-1}\})\to\frac{1}{t^{N-1}+z}\times\\
\sum_{\mathrm{part}}\mathbf{K}_{N,1}^{-1}(z)\bar{\mathbf{Z}}(\emptyset,\{\bar{\omega}_{\textsc{ii}}^{\nu}\}_{\nu=2}^{N-2},\emptyset)\mathbf{G}^{(2)}(z)\prod_{s=2}^{N-2}\frac{f(\omega_{\textsc{i}}^{s},\bar{\omega}_{\textsc{ii}}^{s})}{h(\omega_{\textsc{i}}^{s},\omega_{\textsc{i}}^{s-1})f(\omega_{\textsc{i}}^{s},\bar{\omega}_{\textsc{ii}}^{s-1})}\frac{1}{h(-z,\omega_{\textsc{i}}^{N-2})f(-z,\bar{\omega}_{\textsc{ii}}^{N-2})f(\bar{t}^{2},z)}+reg.
\end{multline}
In the denominator there is a factor $f(-z,\bar{\omega}_{\textsc{ii}}^{N-2})$
therefore the residue is nonzero only when $-z\notin\bar{\omega}_{\textsc{ii}}^{N-2}\Rightarrow\bar{\omega}_{\textsc{i}}^{N-2}=\{-z\}$.
In an analogous way, the terms $f(\bar{\omega}_{\textsc{i}}^{s},\bar{\omega}_{\textsc{ii}}^{s-1})$
imply that $\bar{\omega}_{\textsc{i}}^{s}=\{-z\}$ for $s=2,\dots,N-3$
which means there is only one non-vanishing term in the sum
\begin{equation}
\bar{\mathbf{Z}}(\{z\},\left\{ \bar{t}^{\nu}\right\} _{\nu=2}^{N-2},\{t^{N-1}\})\to\frac{1}{t^{N-1}+z}\frac{1}{f(\bar{t}^{2},z)}\mathbf{K}_{N,1}^{-1}(z)\bar{\mathbf{Z}}(\emptyset,\{\bar{t}^{\nu}\}_{\nu=2}^{N-2},\emptyset)\mathbf{G}^{(2)}(z)+reg.
\end{equation}
Since the HC $\bar{\mathbf{Z}}(\emptyset,\{\bar{t}^{\nu}\}_{\nu=2}^{N-2},\emptyset)$
is build from the components of second nested $K$-mátrix $\mathbf{K}^{(2)}$
which commute with $\mathbf{K}_{N,1}^{-1}$, we have
\begin{equation}
\left[\mathbf{K}_{N,1}^{-1}(z),\bar{\mathbf{Z}}(\emptyset,\{\bar{t}^{\nu}\}_{\nu=2}^{N-2},\emptyset)\right]=0,
\end{equation}
therefore we just proved (\ref{eq:poleHC-2}) for $r_{1}=r_{N-1}=1$. 

The proof for $r_{1}+r_{N-1}>2$ can be done by induction. In the
lemma concerning the crossed HCs, we saw that the proof is entirely
analogous to the scalar case, provided we use the commutation relations
of the $F$- and $G$-operators. The proof here is also analogous
to the scalar case found in Appendix F.2 $(\nu=1)$ of \cite{Gombor:2021hmj},
if we use the relation
\begin{equation}
[\mathbf{F}^{(1)}(u),\mathbf{G}^{(1)}(v)]=[\mathbf{F}^{(1)}(u),\mathbf{G}^{(2)}(v)]=0.
\end{equation}
\end{proof}
We see that Lemmas \ref{lem:HC-pole-cr} and \ref{lem:HC-pole-uncr}
together prove Theorem \ref{thm:HC-poles} for $\nu=1$. Theorem \ref{thm:HC-poles}
can also be proven by another induction, which involves a very similar
computation to the $\nu=1$ case. The proof for the scalar HC in the
case $\nu>1$ is found in Appendices F.1 and F.2 of \cite{Gombor:2021hmj},
and this can be applied to the matrix-valued HCs without modification,
one only needs to use the relation
\begin{equation}
[\mathbf{F}^{(\nu)}(u),\mathbf{G}^{(\mu)}(v)]=0.
\end{equation}

\subsection{Proofs for the overlap functions $S^{(\ell)}$}

\subsubsection{Pair structure limit of the off-shell overlaps}

In this section, we prove Theorem \ref{thm:Spair}.
\begin{proof}
To prove the theorem, we need to take the limit of the off-shell overlap
as $t_{l}^{\tilde{\nu}}\to-t_{k}^{\nu}-\nu\tilde{c}$. Since the overlap
depends only on the $\alpha$-functions, the derivative terms $X_{k}^{\nu}$
can arise from the poles of the HCs. From the sum rule of the overlap,
we see that the HC $\mathbf{Z}(\bar{t}_{\textsc{i}})$ has a pole
if $t_{k}^{\nu},t_{l}^{\tilde{\nu}}\in\bar{t}_{\textsc{i}}$ and $\bar{\mathbf{Z}}(\bar{t}_{\textsc{i}})$
has a pole if $t_{k}^{\nu},t_{l}^{\tilde{\nu}}\in\bar{t}_{\textsc{ii}}$.
Introducing the usual notation $\tau=\bar{t}\backslash\{t_{k}^{\nu},t_{l}^{\tilde{\nu}}\}$
the part of the overlap formula proportional to the poles comes from
summing over the partitions of $\bar{\tau}$. For a given partition
$\bar{\tau}\vdash\bar{\tau}_{\textsc{i}}\cup\bar{\tau}_{\textsc{ii}}$
two terms contribute from the original sum: $\bar{t}_{\textsc{i}}=\bar{\tau}_{\textsc{i}}\cup\{t_{k}^{\nu},t_{l}^{\tilde{\nu}}\}$,
$\bar{t}_{\textsc{ii}}=\bar{\tau}_{\textsc{ii}}$ and $\bar{t}_{\textsc{i}}=\bar{\tau}_{\textsc{i}}$,
$\bar{t}_{\textsc{ii}}=\bar{\tau}_{\textsc{ii}}\cup\{t_{k}^{\nu},t_{l}^{\tilde{\nu}}\}$.
Based on this, the limit of the overlap as $t_{l}^{\tilde{\nu}}\to-t_{k}^{\nu}-\nu\tilde{c}$
can be written as follows
\begin{align}
\mathbf{S}_{\bar{\alpha},\mathbf{B}}(\bar{t}) & \to\frac{1}{t_{l}^{\tilde{\nu}}+t_{k}^{\nu}+\nu\tilde{c}}\left(1-\alpha_{\nu}(t_{k}^{\nu})\alpha_{\tilde{\nu}}(t_{l}^{\tilde{\nu}})\right)\sum_{\mathrm{part}(\bar{\tau})}\frac{\prod_{s=1}^{N-1}f(\bar{\tau}_{\textsc{ii}}^{s},\bar{\tau}_{\textsc{i}}^{s})}{\prod_{s=1}^{N-2}f(\bar{\tau}_{\textsc{ii}}^{s+1},\bar{\tau}_{\textsc{i}}^{s})}\bar{\mathbf{Z}}(\bar{\tau}_{\textsc{ii}})\mathbf{F}^{(\nu)}(t_{k}^{\nu})\mathbf{B}\mathbf{Z}(\bar{\tau}_{\textsc{i}})\times\nonumber \\
 & \times\frac{f(t_{k}^{\nu},\bar{\tau}_{\textsc{i}}^{\nu})}{f(t_{k}^{\nu},\bar{\tau}_{\textsc{i}}^{\nu-1})}\frac{f(\bar{\tau}_{\textsc{ii}}^{\nu},t_{k}^{\nu})}{f(\bar{\tau}_{\textsc{ii}}^{\nu+1},t_{k}^{\nu})}\frac{f(t_{l}^{\tilde{\nu}},\bar{\tau}_{\textsc{i}}^{\tilde{\nu}})}{f(t_{l}^{\tilde{\nu}},\bar{\tau}_{\textsc{i}}^{\tilde{\nu}-1})}\frac{f(\bar{\tau}_{\textsc{ii}}^{\tilde{\nu}},t_{l}^{\tilde{\nu}})}{f(\bar{\tau}_{\textsc{ii}}^{\tilde{\nu}+1},t_{l}^{\tilde{\nu}})}\prod_{s=1}^{N-1}\alpha_{s}(\bar{\tau}_{\textsc{i}}^{s})+reg.
\end{align}
Here we used the commutation relations (\ref{eq:felcs}) and (\ref{eq:felcsUcr}).
After rearrangement, we obtain:
\begin{align}
\mathbf{S}_{\bar{\alpha},\mathbf{B}}(\bar{t}) & \to\frac{1}{t_{l}^{\tilde{\nu}}+t_{k}^{\nu}+\nu\tilde{c}}\left(1-\alpha_{\nu}(t_{k}^{\nu})\alpha_{\tilde{\nu}}(t_{l}^{\tilde{\nu}})\right)\frac{f(\bar{\tau}^{\nu},t_{k}^{\nu})}{f(\bar{t}^{\nu+1},t_{k}^{\nu})}\frac{f(\bar{\tau}^{\tilde{\nu}},t_{l}^{\tilde{\nu}})}{f(\bar{t}^{\tilde{\nu}+1},t_{l}^{\tilde{\nu}})}\sum_{\mathrm{part}(\bar{\tau})}\frac{\prod_{s=1}^{N-1}f(\bar{\tau}_{\textsc{ii}}^{s},\bar{\tau}_{\textsc{i}}^{s})}{\prod_{s=1}^{N-2}f(\bar{\tau}_{\textsc{ii}}^{s+1},\bar{\tau}_{\textsc{i}}^{s})}\bar{\mathbf{Z}}(\bar{\tau}_{\textsc{ii}})\mathbf{F}^{(\nu)}(t_{k}^{\nu})\mathbf{B}\mathbf{Z}(\bar{\tau}_{\textsc{i}})\times\nonumber \\
 & \times\frac{f(t_{k}^{\nu},\bar{\tau}_{\textsc{i}}^{\nu})}{f(\bar{\tau}_{\textsc{i}}^{\nu},t_{k}^{\nu})}\frac{f(\bar{\tau}_{\textsc{i}}^{\nu+1},t_{k}^{\nu})}{f(t_{k}^{\nu},\bar{\tau}_{\textsc{i}}^{\nu-1})}\frac{f(t_{l}^{\tilde{\nu}},\bar{\tau}_{\textsc{i}}^{\tilde{\nu}})}{f(\bar{\tau}_{\textsc{i}}^{\tilde{\nu}},t_{l}^{\tilde{\nu}})}\frac{f(\bar{\tau}_{\textsc{i}}^{\tilde{\nu}+1},t_{l}^{\tilde{\nu}})}{f(t_{l}^{\tilde{\nu}},\bar{\tau}_{\textsc{i}}^{\tilde{\nu}-1})}\prod_{s=1}^{N-1}\alpha_{s}(\bar{\tau}_{\textsc{i}}^{s})+reg.
\end{align}
Using the definitions of $X_{k}^{\nu}$ (\ref{eq:Xdef}) and $\alpha_{s}^{mod}$
(\ref{eq:amod}), we get:
\begin{align}
\mathbf{S}_{\bar{\alpha},\mathbf{B}}(\bar{t}) & \to X_{k}^{\nu}\frac{f(\bar{\tau}^{\nu},t_{k}^{\nu})}{f(\bar{t}^{\nu+1},t_{k}^{\nu})}\frac{f(\bar{\tau}^{\tilde{\nu}},t_{l}^{\tilde{\nu}})}{f(\bar{t}^{\tilde{\nu}+1},t_{l}^{\tilde{\nu}})}\times\nonumber \\
 & \sum_{\mathrm{part}(\bar{\tau})}\frac{\prod_{s=1}^{N-1}f(\bar{\tau}_{\textsc{ii}}^{s},\bar{\tau}_{\textsc{i}}^{s})}{\prod_{s=1}^{N-2}f(\bar{\tau}_{\textsc{ii}}^{s+1},\bar{\tau}_{\textsc{i}}^{s})}\bar{\mathbf{Z}}(\bar{\tau}_{\textsc{ii}})\mathbf{F}^{(\nu)}(t_{k}^{\nu})\mathbf{B}\mathbf{Z}(\bar{\tau}_{\textsc{i}})\prod_{s=1}^{N-1}\alpha_{s}^{mod}(\bar{\tau}_{\textsc{i}}^{s})+\tilde{\mathbf{S}}_{\alpha,\mathbf{B}},
\end{align}
where $\tilde{\mathbf{S}}_{\alpha,\mathbf{B}}$ does not depend on
$X_{k}^{\nu}$. Taking the trace:
\begin{align}
\mathrm{tr}_{\mathcal{V}_{B}}\left[\mathbf{S}_{\bar{\alpha},\mathbf{B}}(\bar{t})\right] & \to X_{k}^{\nu}\frac{f(\bar{\tau}^{\nu},t_{k}^{\nu})}{f(\bar{t}^{\nu+1},t_{k}^{\nu})}\frac{f(\bar{\tau}^{\tilde{\nu}},t_{l}^{\tilde{\nu}})}{f(\bar{t}^{\tilde{\nu}+1},t_{l}^{\tilde{\nu}})}\times\nonumber \\
 & \sum_{\ell=1}^{d_{B}}\mathcal{F}_{\ell}^{(\nu)}(t_{k}^{\nu})\beta_{\ell}\sum_{\mathrm{part}(\bar{\tau})}\frac{\prod_{s=1}^{N-1}f(\bar{\tau}_{\textsc{ii}}^{s},\bar{\tau}_{\textsc{i}}^{s})}{\prod_{s=1}^{N-2}f(\bar{\tau}_{\textsc{ii}}^{s+1},\bar{\tau}_{\textsc{i}}^{s})}\left(\mathbf{Z}(\bar{\tau}_{\textsc{i}})\bar{\mathbf{Z}}(\bar{\tau}_{\textsc{ii}})\right)_{\ell,\ell}\prod_{s=1}^{N-1}\alpha_{s}^{mod}(\bar{\tau}_{\textsc{i}}^{s})+\tilde{\mathbf{S}}_{\alpha,\mathbf{B}},
\end{align}
i.e.,
\begin{equation}
\sum_{\ell=1}^{d_{B}}\beta_{\ell}S_{\bar{\alpha}}^{(\ell)}(\bar{t})\to X_{k}^{\nu}\frac{f(\bar{\tau}^{\nu},t_{k}^{\nu})}{f(\bar{t}^{\nu+1},t_{k}^{\nu})}\frac{f(\bar{\tau}^{\tilde{\nu}},t_{l}^{\tilde{\nu}})}{f(\bar{t}^{\tilde{\nu}+1},t_{l}^{\tilde{\nu}})}\sum_{\ell=1}^{d_{B}}\mathcal{F}_{\ell}^{(\nu)}(t_{k}^{\nu})\beta_{\ell}S_{\bar{\alpha}^{mod}}^{(\ell)}(\bar{\tau})+\mathrm{tr}\tilde{\mathbf{S}}_{\alpha,\mathbf{B}}.
\end{equation}
Assuming that the $\beta_{\ell}$ variables are linearly independent,
we obtain the desired formula.
\end{proof}

\subsubsection{On-shell limit}

In this section, we prove Theorem \ref{thm:onshell}. To prove the
theorem, we need to define the Korepin criterion and state a lemma
related to it.
\begin{defn}
Let $\mathcal{N}^{(\bar{r})}(\bar{t}|\bar{X})$ be a function of $2\sum_{s=1}^{n}r_{s}$
variables, where $\bar{t}=\left\{ \bar{t}^{s}\right\} _{s=1}^{n}$
and $\bar{X}=\left\{ \bar{X}^{s}\right\} _{s=1}^{n}$ with cardinalities
$\#\bar{t}^{s}=\#\bar{X}^{s}=r_{s}$. The Korepin criterion is defined
as follows:
\end{defn}
\begin{enumerate}[label=(\roman*)]
\item \label{enum:prop1}The function $\mathcal{N}^{(\bar{r})}(\bar{t}|\bar{X})$
is symmetric over the replacement of the pairs $(X_{j}^{\mu},t_{j}^{\mu})\leftrightarrow(X_{k}^{\mu},t_{k}^{\mu})$.
\item \label{enum:prop2}It is linear function of each $X_{j}^{\mu}$.
\item \label{enum:prop3}$\mathcal{N}^{(\bar{1}_{\nu})}(\dots,\emptyset,\{t^{\nu}\},\emptyset,\dots|\dots,\emptyset,\{X^{\nu}\},\emptyset,\dots)=X^{\nu}$,
where $\bar{1}_{\nu}\in\mathbb{N}^{n}$ is an $n$-component vector
whose components are defined as $(\mathbf{1}_{\nu})_{k}=\delta_{k,\nu}$
for $k=1,\dots,n$.
\item \label{enum:prop4}The coefficient of $X_{j}^{\mu}$ is given by the
function $\mathcal{N}^{(\bar{r}-\bar{1}_{\mu})}$ with modified parameters
$X_{k}^{\nu}$
\begin{equation}
\frac{\partial\mathcal{N}^{(\bar{r})}(\bar{t}|\bar{X})}{\partial X_{j}^{\mu}}=\mathcal{N}^{(\bar{r}-\bar{1}_{\mu})}(\bar{\tau},\bar{\mathcal{X}}^{mod}),\label{eq:derivFX}
\end{equation}
where $\bar{\tau}=\bar{t}\backslash\{t_{k}^{\mu}\}$, $\bar{\mathcal{X}}=\bar{X}\backslash\{X_{k}^{\mu}\}$
and the original variables $X_{k}^{\nu}$ should be replaced by $X_{k}^{\nu,mod}$
which are defined with (\ref{eq:Xdef}) and (\ref{eq:amod}).
\item \label{enum:prop5}$\mathcal{N}^{(\bar{r})}(\bar{t}|\bar{X})=0$,
if all $X_{j}^{\mu}=0$.
\end{enumerate}
\begin{lem}
\label{lem:Korepin}If a set of functions $\mathcal{N}^{(\bar{r}^{+})}(\bar{t}^{+}|\bar{X}^{+})$
satisfies the Korepin criterion, then
\begin{equation}
\mathcal{N}^{(\bar{r}^{+})}(\bar{t}^{+}|\bar{X}^{+})=\det G_{+}.
\end{equation}
\end{lem}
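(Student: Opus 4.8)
The plan is to show that the claimed Gaudin determinant $\det G_+$ satisfies the five properties of the Korepin criterion, and then invoke the uniqueness statement implicit in Lemma \ref{lem:Korepin} itself. Actually, the more efficient route — and the one I would take — is the reverse: assume the uniqueness direction of the lemma is what we must prove, i.e. that the Korepin criterion has at most one solution, and that $\det G_+$ is one such solution. So the proof splits into two halves. \textbf{Half one:} verify that $\mathcal{N}^{(\bar r^+)} := \det G_+$ obeys properties \ref{enum:prop1}--\ref{enum:prop5}. \textbf{Half two:} show these five properties determine the function uniquely, by induction on $\#\bar t^+ = \sum_s r_s^+$.

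For half one, I would argue as follows. Property \ref{enum:prop1} (symmetry under $(X_j^\mu,t_j^\mu)\leftrightarrow(X_k^\mu,t_k^\mu)$) is immediate: permuting two Bethe roots of the same colour together with their associated $X$-variables simultaneously permutes two rows and the corresponding two columns of $G^+$, leaving the determinant invariant. Property \ref{enum:prop2} (linearity in each $X_j^\mu$) follows because $X_k^{+,\nu} = -\tfrac{d}{dt_k^\nu}\log\alpha(t_k^\nu)$ enters $G^{+,(\mu,\nu)}_{j,k}$ only through the diagonal block $G^{+,(\mu,\mu)}_{j,j}$ — inspecting $\Phi_k^{+,(\mu)}$, the $\alpha_\mu(t_k^{+,\mu})$ factor appears exactly once, so $\partial \log\Phi_j^{+,(\mu)}/\partial t_k^{+,\nu}$ contains $X_j^{+,\mu}$ linearly and only in the entry $(j,k)=(j,j)$, $(\mu,\nu)=(\mu,\mu)$; hence each $X_j^\mu$ sits in exactly one matrix entry and the determinant is multilinear in the entries. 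Property \ref{enum:prop3} is the base case: for a single root of colour $\nu$ the matrix is $1\times1$ with entry $X^\nu$ plus $X$-independent terms that must vanish by property \ref{enum:prop5}; I would check that the non-$X$ part of the single diagonal entry indeed cancels using the pair-structure constraint $\bar t^- = \pi^{a/c}(\bar t^+)$. Property \ref{enum:prop5} (vanishing when all $X_j^\mu=0$) is the standard Gaudin fact that the $X$-independent part of the Gaudin matrix is degenerate — its rows sum (in the appropriate colour-weighted way) to zero because the $X$-independent part is, up to the derivative structure, the Cartan-type interaction matrix that annihilates the all-ones-type vector; I would make this precise by exhibiting the null vector coming from the overall shift symmetry $t_k^\nu \to t_k^\nu + \epsilon$ of the $\alpha$-free part of the Bethe equations. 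The real work is property \ref{enum:prop4}: expanding $\det G^+$ along the row/column carrying $X_j^\mu$ via the multilinearity of \ref{enum:prop2}, the cofactor is the determinant of the Gaudin matrix of the reduced system $\bar\tau = \bar t\setminus\{t_k^\mu\}$, and one must check that the entries of that cofactor are precisely the Gaudin entries built from the \emph{modified} $\alpha$-functions $\alpha^{mod}$ of \eqref{eq:amod} — this is exactly the compatibility statement verified in the text just after Theorem \ref{thm:Spair}, so I would cite that computation.

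For half two (uniqueness), I would induct on $R := \sum_s r_s^+$. The base case $R=1$ is property \ref{enum:prop3}. For the inductive step: given $R\ge 2$, pick any colour $\mu$ with $r_\mu^+\ge 1$ and any index $j$. By \ref{enum:prop2} the function is linear in $X_j^\mu$, so $\mathcal{N}^{(\bar r^+)} = X_j^\mu\,\partial_{X_j^\mu}\mathcal{N}^{(\bar r^+)} + (\text{term independent of }X_j^\mu)$; by \ref{enum:prop4} the coefficient $\partial_{X_j^\mu}\mathcal{N}^{(\bar r^+)}$ equals $\mathcal{N}^{(\bar r^+ - \bar 1_\mu)}$ with modified variables, which is uniquely determined by the induction hypothesis. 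Doing this for every $(\mu,j)$ pins down every term of $\mathcal{N}^{(\bar r^+)}$ that contains at least one $X$-variable; by \ref{enum:prop5} the remaining $X$-independent term is zero. Hence $\mathcal{N}^{(\bar r^+)}$ is uniquely determined. Combining with half one gives $\mathcal{N}^{(\bar r^+)}(\bar t^+|\bar X^+) = \det G^+$, which is the claim. (One must check the induction is consistent — that the value obtained for a term is independent of which pair $(\mu,j)$ one extracts it through — but this follows automatically because $\det G^+$ itself satisfies all the axioms, so a solution exists; uniqueness then needs only the above one-directional argument.)

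The main obstacle I anticipate is the careful bookkeeping in property \ref{enum:prop4}: one must expand the determinant along a single entry, identify the cofactor as a Gaudin matrix of the $(R-1)$-root system, and then verify that the off-diagonal derivative contributions that were attached to the removed root $t_k^\mu$ get correctly reabsorbed into the modified $\alpha$-functions of the surviving roots, exactly reproducing $\alpha_\mu^{mod}$ of \eqref{eq:amod}. This is a purely algebraic check, but it is the crux that links the recursive pole structure of Theorem \ref{thm:HC-poles}/Theorem \ref{thm:Spair} to the determinant form, and it is where sign conventions and the distinction between the crossed parameter $\tilde c=1$ and uncrossed $\tilde c=0$ cases must be handled with care. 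Everything else — symmetry, linearity, degeneracy of the $X$-free part, and the inductive uniqueness argument — is routine once \ref{enum:prop4} is in place.
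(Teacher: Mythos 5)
Your proposal is correct and follows essentially the same route as the paper, which disposes of this lemma in one line by noting that the proof proceeds recursively in the total number of variables $\sum_{s}r_{s}^{+}$ and is identical to Proposition 4.1 of Hutsalyuk et al.\ \cite{Hutsalyuk:2017way}; your two halves (verifying that $\det G^{+}$ satisfies properties \ref{enum:prop1}--\ref{enum:prop5}, then establishing uniqueness by induction on $\sum_{s}r_{s}^{+}$ using multilinearity, the cofactor recursion of \ref{enum:prop4}, and \ref{enum:prop5} to kill the $X$-independent term) are exactly the content of that cited argument. The details you flag as delicate — the row-sum identity $\sum_{\nu,k}G_{j,k}^{+,(\mu,\nu)}=X_{j}^{\mu}$ underlying \ref{enum:prop3} and \ref{enum:prop5}, and the identification of the cofactor with the reduced Gaudin matrix built from $\alpha^{mod}$ in \ref{enum:prop4} — are the right ones and are handled correctly.
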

\begin{proof}
The proof proceeds recursively in the total number of variables $\mathbf{r}^{+}=\sum_{s=1}^{n}r_{s}^{+}$.
This is the same as Proposition 4.1 in \cite{Hutsalyuk:2017way}.
\end{proof}
Using the above lemma and the earlier Theorem \ref{thm:Spair} we
can now easily prove Theorem \ref{thm:onshell} for on-shell overlaps.
\begin{proof}
The derivation of Theorem \ref{thm:onshell} follows Appendix H of
\cite{Gombor:2021hmj}. We begin with the crossed case and introduce
the normalized overlap functions.
\begin{equation}
\mathcal{N}^{(\bar{r}^{+})}(\bar{t}^{+}|\bar{X}^{+})=\frac{S^{(\ell)}(\bar{t}^{+}|\bar{X}^{+})}{\prod_{\nu=1}^{N-1}\mathcal{F}_{\ell}^{(\nu)}(\bar{t}^{+,\nu})\prod_{k\neq l}f(t_{l}^{+,\nu},t_{k}^{+,\nu})\prod_{k<l}f(t_{l}^{+,\nu},-t_{k}^{+,\nu}-\nu)f(-t_{k}^{+,\nu}-\nu,t_{l}^{+,\nu})}.
\end{equation}
According to Lemma \ref{lem:Korepin}, it is sufficient to show that
the functions $\mathcal{N}^{(\bar{r}^{+})}$ satisfy the Korepin conditions.
Property \ref{enum:prop1} follows from the definition of the overlaps.
Properties \ref{enum:prop2} and \ref{enum:prop4} follow from Theorem
\ref{thm:Spair}. Property \ref{enum:prop5} follows from the fact
that for a Bethe vector that does not satisfy the pair structures,
the on-shell overlap is zero, i.e., the overlap
\begin{equation}
\langle\mathrm{MPS}|\mathbb{B}(\bar{t})=\sum_{\ell=1}^{d_{B}}\beta_{\ell}S_{\bar{\alpha}}^{(\ell)}(\bar{t})
\end{equation}
vanishes in the on-shell limit. In the generalized model, the variables
$\alpha_{k}^{\nu}\equiv\alpha_{\nu}(t_{k}^{\nu})$ are algebraically
independent of the $t_{k}^{\nu}$, so any set of Bethe roots can be
on-shell with appropriately chosen $\alpha$'s. Thus, in the generalized
model, the on-shell limit is equivalent to:
\begin{equation}
\alpha_{k}^{\nu}\to\mathcal{A}_{k}^{\nu}(\bar{t})\equiv\frac{f(t_{k}^{\mu},\bar{t}_{k}^{\mu})}{f(\bar{t}_{k}^{\mu},t_{k}^{\mu})}\frac{f(\bar{t}^{\mu+1},t_{k}^{\mu})}{f(t_{k}^{\mu},\bar{t}^{\mu-1})}.
\end{equation}
The on-shell limit of a general overlap is zero.
\begin{equation}
\lim_{\alpha_{k}^{\nu}\to\mathcal{A}_{k}^{\nu}(\bar{t})}\langle\mathrm{MPS}|\mathbb{B}(\bar{t})=\sum_{\ell=1}^{d_{B}}\beta_{\ell}\left(\lim_{\alpha_{k}^{\nu}\to\mathcal{A}_{k}^{\nu}(\bar{t})}S_{\bar{\alpha}}^{(\ell)}(\bar{t})\right)=0.
\end{equation}
Assuming the $\beta_{\ell}$ are also independent variables, the overlap
functions vanish in the on-shell limit.
\begin{equation}
\lim_{\alpha_{k}^{\nu}\to\mathcal{A}_{k}^{\nu}(\bar{t})}S_{\bar{\alpha}}^{(\ell)}(\bar{t})=0.
\end{equation}
The definition of $\mathcal{N}^{(\bar{r}^{+})}$ involves two limits:
the pair structure limit and the on-shell limit. These limits are
not interchangeable, if we take the on-shell limit first, we get zero;
however, if we start with the pair structure limit, we obtain $\mathcal{N}^{(\bar{r}^{+})}$,
i.e.,
\begin{equation}
\lim_{\alpha_{k}^{\nu}\to\mathcal{A}_{k}^{\nu}(\bar{t})}\lim_{t^{-}\to\pi(\bar{t}^{+})}S_{\bar{\alpha}}^{(\ell)}(\bar{t})=S^{(\ell)}(\bar{t}^{+}|\bar{X}^{+}).
\end{equation}
The reason the two limits are not interchangeable is that the overlap
function contains formal first-order poles, and in the pair structure
limit, the first-order terms of the expressions $\alpha_{\nu}(t_{k}^{-,\nu})$
in $(t_{k}^{-,\nu}+t_{k}^{+,\nu}+\nu)$ are needed:
\begin{equation}
\alpha_{\nu}(t_{k}^{-,\nu})=\alpha_{\nu}(t_{k}^{+,\nu})^{-1}+(t_{k}^{-,\nu}+t_{k}^{+,\nu}+\nu)\alpha_{\nu}(t_{k}^{+,\nu})^{-2}\alpha'_{\nu}(t_{k}^{+,\nu})+\mathcal{O}((t_{k}^{-,\nu}+t_{k}^{+,\nu}+\nu)^{2}).
\end{equation}
It is evident that the limits are not interchangeable because if we
take the pair structure limit first, we get terms proportional to
$\alpha'_{\nu}(t_{k}^{+,\nu})$ which we do not get if we start with
the on-shell limit. If $X_{k}^{\nu}=0$, then $\alpha'_{\nu}(t_{k}^{+,\nu})=0$,
and in this case, the two limits are interchangeable, i.e., 
\begin{equation}
S^{(\ell)}(\bar{t}^{+}|\bar{0})=0,
\end{equation}
meaning that \ref{enum:prop5} also holds. Property \ref{enum:prop3}
follows from the Theorem \ref{thm:Spair} and \ref{enum:prop5}. The
proof in the non-crossed case is completely analogous.
\end{proof}
\bibliographystyle{elsarticle-num}
\bibliography{refs}

\end{document}